\setlist{nolistsep}
\setlist[enumerate]{nosep}
\definecolor{mygreen}{rgb}{0,0.6,0}
\definecolor{mygray}{rgb}{0.5,0.5,0.5}
\definecolor{mymauve}{rgb}{0.58,0,0.82}
\ttfamily\color{mygreen}\bfseries,
\newcommand{\ignore}[1]{}
\newcommand{\N}{\mathbb N} 
\newcommand{\D}{\mathbf D} 
\newcommand{\revone}[1]{{\color{blue} {#1}}}
\newcommand{\revth}[1]{{\color{purple} {#1}}}
\newcommand{\A}{\texttt{A}}
\newcommand{\sch}{{\mathcal{S}}}
\newcommand{\real}{{\mathbb{R}}}
\newcommand{\true}{{\tt true}}
\newcommand*{\rom}[1]{\expandafter\@slowromancap\romannumeral #1@}
\newcommand{\RNum}[1]{\uppercase\expandafter{\romannumeral #1\relax}}
\newcommand{\sel}[1]{{\sigma}}
\newcommand{\cut}[1]{}
\newcommand{\eat}[1]{}
\newcommand{\setof}[2]{\{{#1}\mid{#2}\}}        
\def\set#1{\mathord{\{#1\}}}
\def\eqdef{\mathrel{\stackrel{\textsf{\tiny def}}{=}}}
\def\A{\mathcal{A}}
\def\I{\mathcal{I}}
\def\H{\mathcal{H}}
\def\D{\mathcal{D}}
\def\e#1{\emph{#1}}
\newenvironment{citedtheorem}[1]
{\begin{theorem}{\it\e{(#1)}}\,\,}
	{\end{theorem}}
\newenvironment{citedlemma}[1]
{\begin{lemma}{\it\e{(#1)}}\,\,}
	{\end{lemma}}
\newenvironment{citeddefinition}[1]
{\begin{definition}{\it\e{(#1)}}\,\,}
	{\end{definition}}
\def\implies{\Rightarrow}
\newenvironment{repeatresult}[2]
{\vskip0.5em\par\textsc{#1} #2.\em}
{\vskip1em}
\newenvironment{repproposition}[1]{\begin{repeatresult}{Proposition}{#1}}{\end{repeatresult}}
\newenvironment{reptheorem}[1]{\begin{repeatresult}{Theorem}{#1}}{\end{repeatresult}}
\newenvironment{replemma}[1]{\begin{repeatresult}{Lemma}{#1}}{\end{repeatresult}}
\def\appendix{\par
	\section*{APPENDIX}
	\setcounter{section}{0}
	\setcounter{subsection}{0}
	\def\thesection{\Alph{section}} }
\def\eqdef{\mathrel{\stackrel{\textsf{\tiny def}}{=}}}
\def\I{\mathcal{I}}
\def\B{\mathcal{B}}
\def\e#1{\emph{#1}}
\newcommand{\algname}[1]{{\sf #1}}
\def\myrulewidth{2.80in}
\def\therule{\rule{\myrulewidth}{0.2pt}}
\def\myrulewidthwide{4in}
\def\therulewide{\rule{\myrulewidthwide}{0.2pt}}
\newenvironment{algseries}[2]
{\centering\begin{figure}[#1]\begin{center}\def\thecaption{\caption{#2}}
			\begin{tabular}{p{\myrulewidth}}\therule\end{tabular}
		}
		{\thecaption\end{center}\end{figure}}
\newenvironment{algserieswide}[2]
{\centering\begin{figure}[#1]\begin{center}\def\thecaption{\caption{#2}}
			\begin{tabular}{p{\myrulewidthwide}}\therulewide\end{tabular}\vskip0.2em}
		{\thecaption\end{center}\end{figure}}
\newenvironment{insidealg}[2]
{\normalsize
	\begin{insidecode}{#1}{#2}{Algorithm}}
	{\end{insidecode}
}
\newenvironment{insidealgwide}[2]
{\normalsize\begin{insidecodewide}{#1}{#2}{Algorithm}}
	{\end{insidecodewide}}
\newenvironment{insidecode}[3]
{
	\begin{tabular}{p{\myrulewidth}}
		\multicolumn{1}{c}{\rule{0mm}{3mm}{\bf #3} $\algname{#1}(\mbox{#2})$\vspace{-0.6em}}\\
		\therule\vskip-0.8em\therule
		\vspace{-1em}
		\begin{algorithmic}[1]}
		{\end{algorithmic}
		\vskip-0.4em\therule
\end{tabular}}
\newenvironment{insidecodewide}[3]
{
	\begin{tabular}{p{\myrulewidthwide}}
		\multicolumn{1}{c}{\rule{0mm}{3mm}{\bf #3} $\algname{#1}(\mbox{#2})$\vspace{-0.6em}}\\
		\therulewide\vskip-0.8em\therulewide
		\vspace{-1em}
		\begin{algorithmic}[1]}
		{\end{algorithmic}
		\vskip-0.3em\therulewide
\end{tabular}}
\newcommand{\T}{{\mathcal{T}}}
\newcommand{\jointreeMapFunction}{\chi}
\def\M{\mathcal{M}}
\newcommand{\edges}{\texttt{E}}
\newcommand{\Hedges}{\mathcal{E}}
\newcommand{\nodes}{\texttt{V}}
\newcommand{\parent}{\texttt{pa}}
\newcommand{\child}{\texttt{ch}}
\newcommand{\leftchild}{\texttt{leftCh}}
\newcommand{\rightchild}{\texttt{rightCh}}
\newcommand{\branch}{\texttt{br}}
\DeclareMathSymbol{\mlq}{\mathord}{operators}{``}
\DeclareMathSymbol{\mrq}{\mathord}{operators}{`'}
\newcommand{\tw}{{\tt \mathrm{tw}}}
\newcommand{\efftw}{{\tt \mathrm{eff{-}tw}}}
\newcommand{\bTheta}{\boldsymbol{\Theta}}
\def\FA{\varphi}
\newcommand{\FAs}[1]{F^{\jointreeMapFunction(#1)}}
\definecolor{mygreen}{rgb}{0,0.6,0}
\definecolor{mygray}{rgb}{0.5,0.5,0.5}
\definecolor{mymauve}{rgb}{0.58,0,0.82}
\definecolor{cadmiumgreen}{rgb}{0.0, 0.42, 0.24}
\def\gq1{{\geq}1}
\newcommand{\second}{{\tt second}}
\newenvironment{hproof}{%
	\proof}{\endproof}
\declaretheoremstyle[%
		spaceabove=-2pt,%
		spacebelow=2pt,%
		headfont=\normalfont\itshape,%
		postheadspace=1em,%
		qed=\qedsymbol%
		]{mystyle} 
\declaretheorem[name={Proof},style=mystyle,unnumbered]{prf}
\title{Enumeration of Minimal Hitting Sets Parameterized by Treewidth}	
\author{Batya Kenig}{Technion, Israel Institute of Technology}{batyak@technion.ac.il}{https://orcid.org/0000-0001-6349-128X}{}
\author{Dan Shlomo Mizrahi}{Technion, Israel Institute of Technology}{mizrahid@campus.technion.ac.il}{}{}
\authorrunning{Batya Kenig and Dan Shlomo Mizrahi} 
\keywords{Enumeration, Hitting sets} 
\begin{document}
	
		\maketitle
		
\begin{abstract}
Enumerating the minimal hitting sets of a hypergraph is a problem which arises in many data management applications that include constraint mining, discovering unique column combinations, and enumerating database repairs.
Previously, Eiter et al.~\cite{DBLP:journals/siamcomp/EiterGM03} showed that 
the minimal hitting sets of an $n$-vertex hypergraph, with treewidth $w$, can be enumerated with delay $O^*(n^{w})$ (ignoring polynomial factors), with space requirements that scale with the output size. We improve this to fixed-parameter-linear delay, following an FPT preprocessing phase. The memory consumption of our algorithm is exponential with respect to the treewidth of the hypergraph.
\eat{
For hypergraphs with bounded treewidth, Eiter et al.~\shortcite{DBLP:journals/siamcomp/EiterGM03} gave an algorithm
	that runs in polynomial-delay, \eat{and whose delay is $\tilde{O}(n^{w+1})$ for an $n$-vertex hypergraph whose treewidth is $w$.} and whose space requirement scales with the output size. 
	We improve this to fixed-parameter-linear delay, and polynomial space, following an FPT preprocessing phase.
}
	
\end{abstract}

\newcommand{\blue}[1]{\textcolor{blue}{#1}}

\section{Introduction}
A \e{hypergraph} $\H$ is a finite collection $\Hedges(\H)$ of sets called \e{hyperedges} over a finite set $\nodes(\H)$ of \e{nodes}. A
\e{transversal} (or \e{hitting set}) of $\H$ is a set $\sch \subseteq \nodes(\H)$ that has a non-empty intersection
with every hyperedge of $\H$. A transversal $\sch$ is minimal if no proper subset of
$\sch$ is a transversal of $\H$. The problem of enumerating the set of minimal transversals of a hypergraph (called \textsc{Trans-Enum} for short) is a fundamental combinatorial problem with a wide range of applications in Boolean algebra~\cite{DBLP:journals/dam/EiterMG08},\eat{model-based diagnosis~\cite{DBLP:journals/ai/Reiter87}} computational biology~\cite{vera-licona_ocsana_2013}, drug-discovery~\cite{DBLP:journals/bmcsb/Vazquez09}, and many more~(see surveys~\cite{DBLP:conf/jelia/EiterG02,DBLP:journals/siamdm/Gainer-DewarV17,DBLP:journals/siamcomp/EiterG95}).\eat{ )\cite{DBLP:journals/tcs/CosmadakisKP20,DBLP:journals/jgaa/KavvadiasS05,DBLP:journals/jcss/BlasiusFLMS22,DBLP:journals/siamdm/Gainer-DewarV17}).} 
\eat{
. . Enumerating the minimal transversals of a hypergraph has applications in various domains, making it a widely studied problem. It has applications in Boolean switching theory , model-based diagnosis, computational biology, and data management.

 is of interest in a wide variety of domains, and has been extensively studied in a a wide range of fields that include 
~\cite{DBLP:journals/tcs/CosmadakisKP20,DBLP:journals/jgaa/KavvadiasS05,DBLP:journals/jcss/BlasiusFLMS22,DBLP:journals/siamdm/Gainer-DewarV17}. Many important problems in various domains can be translated to \textsc{Trans-Enum}. Some examples include \e{Monotone dualization}, which receives as input a monotone Boolean formula in Conjunctive Normal Form (CNF), and converts it to Disjunctive Normal Form (DNF)~\cite{DBLP:journals/dam/EiterMG08}. Fault diagnostic procedures for complicated systems are also reduced to \textsc{Trans-Enum}~\cite{DBLP:journals/ai/Reiter87}. Algorithms for \textsc{Trans-Enum} have also been studied and applied in the field of computational biology~\cite{vera-licona_ocsana_2013}, and for the purpose of drug-discovery~\cite{DBLP:journals/bmcsb/Vazquez09}. 
Another example is listing the \e{Minimal Unsatisfiable Subsets} (MUS) of clauses in a CNF~\cite{DBLP:journals/constraints/LiffitonPMM16,DBLP:conf/ismvl/Silva10}, which are used to provide human-understandable explanations to solutions of Constraint Satisfaction Problems (CSP)~\cite{DBLP:conf/ecai/0001GCG20,DBLP:conf/ijcai/Gamba0G21}, planning decisions~\cite{DBLP:conf/aaai/VasileiouP021}, and to reason about inconsistent formulas and their repairs~\cite{DBLP:conf/ijcai/0001M20}.}

The role that \textsc{Trans-Enum} in data management applications can be traced back to the 1987 paper by Mannila and R{\"{a}}ih{\"{a}}~\cite{DBLP:conf/vldb/MannilaR87}. This stems from the fact that
finding minimal or maximal solutions (with respect to some property) is an essential task in many data management applications. In many cases, such tasks are easily translated, or even equivalent, to \textsc{Trans-Enum}.
These include, for example, the discovery of inclusion-wise minimal Unique Column Combinations (UCCs), which are natural candidates for primary keys~\cite{DBLP:journals/pvldb/BirnickBFNPS20,DBLP:journals/jcss/BlasiusFLMS22}, and play an important role in data profiling and query optimization (see surveys by Papenbrock, Naumann, and
coauthors~\cite{DBLP:series/synthesis/2018Abedjan},~\cite{DBLP:journals/vldb/KossmannPN22}). Eiter and Gottlob have proved that UCC discovery and \textsc{Trans-Enum} are, in fact, equivalent~\cite{DBLP:journals/siamcomp/EiterG95}. Other data profiling tasks that can be reduced to \textsc{trans-enum} include discovering maximal frequent itemsets~\cite{DBLP:journals/tods/GunopulosKMSTS03}, functional dependencies~\cite{DBLP:conf/vldb/MannilaR87,9835531,DBLP:journals/pvldb/0001N18}, and \e{multi-valued dependencies}~\cite{DBLP:conf/sigmod/KenigMPSS20,DBLP:conf/pods/KenigW23}.
\eat{
Algorithms for discovering functional dependencies proceed by finding inclusion-wise minimal sets of attributes that functionally determine a given attribute~\cite{DBLP:conf/vldb/MannilaR87,9835531,DBLP:journals/pvldb/0001N18}. Algorithms for discovering maximal frequent itemsets proceed by finding maximal sets of items whose frequency in the data is above some threshold~\cite{DBLP:journals/tods/GunopulosKMSTS03}. The ideas of~\cite{DBLP:journals/tods/GunopulosKMSTS03} were recently applied to the task of mining \e{multi-valued dependencies}~\cite{DBLP:conf/sigmod/KenigMPSS20}. Common to all of these tasks, and many others, is that they can be reduced to \textsc{Trans-Enum}.
}Another data-management task that directly translates to \textsc{Trans-Enum} is that of enumerating \e{database repairs}, which are maximal sets of tuples that exclude some combinations of tuples based on logical patterns (e.g., functional dependencies, denial constraints etc.)~\cite{DBLP:conf/cikm/ChomickiMS04,DBLP:series/synthesis/2011Bertossi,DBLP:journals/tcs/KimelfeldLP20,DBLP:journals/pvldb/LivshitsHIK20}. The forbidden combinations of tuples are represented as the hyperedges in a conflict hypergraph~\cite{DBLP:journals/iandc/ChomickiM05,DBLP:conf/cikm/ChomickiMS04}, and the
problem of enumerating the repairs is reduced to that of enumerating the minimal transversals
of the conflict hypergraph~\cite{DBLP:journals/pvldb/LivshitsHIK20}.  

An \e{edge cover} of a hypergraph $\H$ is a subset of hyperedges $\D\subseteq \Hedges(\H)$ such that every vertex $v\in \nodes(\H)$ is included in at least one hyperedge in $\D$ \eat{(i.e., $\nodes(\H)=\bigcup_{e\in \D}e$)}. An edge cover $\D$ is minimal if no proper subset of $\D$ is an edge cover. Minimal edge covers are used to generate \e{covers of query results}, which are succinct, lossless representations of query results~\cite{DBLP:conf/icdt/KaraO18}. Our results for \textsc{trans-enum} carry over to the problem of enumerating minimal edge-covers (\textsc{cover-enum} for short).

\eat{
It is hard to overstate the role that \textsc{Trans-Enum} plays in data management applications, which can be traced back to the 1987 paper by Mannila and R{\"{a}}ih{\"{a}} and~\cite{DBLP:conf/vldb/MannilaR87}. 
Finding minimal or maximal solutions (with respect to some property) is an essential task in many data management applications. These include, for example, the discovery of inclusion-wise minimal Unique Column Combinations (UCCs), which are natural candidates for primary keys~\cite{DBLP:journals/pvldb/BirnickBFNPS20,DBLP:journals/jcss/BlasiusFLMS22}, \blue{and play an important role in data profiling and query optimization (see surveys by Papenbrock, Naumann, and
	coauthors )}. Eiter and Gottlob have proved that the UCC discovery and \textsc{Trans-Enum} are, in fact, equivalent~\cite{DBLP:journals/siamcomp/EiterG95}.
Algorithms for discovering functional dependencies proceed by finding inclusion-wise minimal sets of attributes that functionally determine a given attribute~\cite{DBLP:conf/vldb/MannilaR87,9835531,DBLP:journals/pvldb/0001N18}, and are therefore translated to \textsc{Trans-Enum}.

Eiter and Gottlob [EG95] showed that the unique column combinations of
a database can be enumerated in output-polynomial time if and only if this is
possible for the hitting sets of a hypergraph.

 Common to all of these tasks, and many others, is that they can be reduced to the \e{transversal hypergraph} or \e{hitting set} generation problem that has been extensively studied~\cite{DBLP:journals/tcs/CosmadakisKP20,DBLP:journals/jgaa/KavvadiasS05,DBLP:journals/jcss/BlasiusFLMS22,DBLP:journals/siamdm/Gainer-DewarV17}. 
}

\eat{
Finding minimal or maximal solutions (with respect to some property) is an essential task in many areas of Artificial Intelligence and Logic~\cite{DBLP:journals/amai/Ben-EliyahuD96}.
Most famous is the problem of \textsc{Dualization} that receives as input the prime CNF of a monotone Boolean function, and returns all of its prime implicants~\cite{DBLP:journals/dam/EiterMG08}. 
Another example is that of listing the \e{Minimal Unsatisfiable Subsets} (MUS) of clauses in a CNF~\cite{DBLP:journals/constraints/LiffitonPMM16,DBLP:conf/ismvl/Silva10}, which are used to provide human-understandable explanations to solutions of Constraint Satisfaction Problems (CSP)~\cite{DBLP:conf/ecai/0001GCG20,DBLP:conf/ijcai/Gamba0G21}, planning decisions~\cite{DBLP:conf/aaai/VasileiouP021}, and to reason about inconsistent formulas and their repairs~\cite{DBLP:conf/ijcai/0001M20}.
These problems, and many more, can be reduced to the \e{transversal hypergraph} or \e{hitting-set} generation problem that has been extensively studied~\cite{DBLP:journals/tcs/CosmadakisKP20,DBLP:journals/jgaa/KavvadiasS05,DBLP:journals/jcss/BlasiusFLMS22,DBLP:journals/siamdm/Gainer-DewarV17}.  
}

\eat{
A \e{hypergraph} $\H$ is a finite collection $\Hedges(\H)$ of sets called \e{hyperedges} over a finite set $\nodes(\H)$ of \e{nodes}. A
\e{transversal} (or \e{hitting set}) of $\H$ is a set $\sch \subseteq \nodes(\H)$ that has a non-empty intersection
with every hyperedge of $\H$. A transversal $\sch$ is minimal if no proper subset of
$\sch$ is a hitting set of $\H$. The problem of enumerating the set of minimal transversals of a hypergraph is called \textsc{Trans-Enum} for short.
}

A hypergraph can have exponentially many minimal
hitting sets, ruling out any polynomial-time algorithm. Hence, the yardstick used to measure the efficiency of an algorithm for \textsc{Trans-Enum} is the \e{delay} between the output of successive solutions. An enumeration algorithm runs in \e{polynomial delay} if the delay is polynomial in the size of the input.
Weaker notions are \e{incremental polynomial delay} if the delay between the $i$th and the $(i{+}1)$st solutions is bounded by a polynomial of the size of the input plus $i$, and \e{polynomial total time} if the total computation time is polynomial in the sizes of both the input and output.\eat{
\revone{It is well-known that the \eat{In a seminal paper, Fredman and Khachiyan~\cite{DBLP:journals/jal/FredmanK96} proved that} minimal hitting sets can be enumerated in \e{quasi-polynomial total time}~\cite{DBLP:journals/jal/FredmanK96}}.} While there exists a \e{quasi-polynomial total time} algorithm~\cite{DBLP:journals/jal/FredmanK96}, whether \textsc{Trans-Enum} has a polynomial total time algorithm is a major open question~\cite{DBLP:journals/dam/EiterMG08}.
In the absence of a tractable algorithm for general inputs, previous work has focused on special classes of hypergraphs. In particular, \textsc{Trans-Enum} admits a polynomial total time algorithm when restricted to \e{acyclic} hypergraphs\eat{and hypergraphs of bounded degeneracy~\cite{DBLP:journals/siamcomp/EiterGM03}}, and hypergraphs with bounded hyperedge size~\cite{DBLP:journals/ppl/BorosEGK00}.\eat{, and those whose hitting sets have bounded cardinality~\cite{DBLP:journals/jcss/BlasiusFLMS22}.}

In this paper, we consider the \textsc{trans-enum} problem parameterized by the \e{treewidth} of the input hypergraph. The \e{treewidth} of a graph is an important graph complexity measure that occurs as a parameter in many \e{Fixed Parameter Tractable} (FPT) algorithms~\cite{DBLP:series/txtcs/FlumG06}. An FPT algorithm for a parameterized problem solves the problem in time $O(f(k)n^c)$, where $n$ is the size of the input, $f(k)$ is a computable function that depends only on the parameter $k$, and $c$ is a fixed constant.
The treewidth of a hypergraph is defined to be the treewidth of its associated node-hyperedge \e{incidence graph}~\cite{DBLP:journals/siamcomp/EiterGM03} (formal definitions in Section~\ref{sec:Preliminaries}). Our main result is that for a hypergraph $\H$, with $n$ vertices, $m$ hyperedges, and treewidth $w$, we can enumerate the minimal hitting sets of $\H$ with fixed-parameter-linear-delay $O((n+m)w)$, following an FPT preprocessing-phase that takes time $O((n{+}m)k5^k)$, where $w\leq k\leq 2w$. The memory requirement of our algorithm is in $O((n{+}m)5^k)$.

Previously, Eiter et al.~\cite{DBLP:journals/siamcomp/EiterGM03} showed that 
the minimal transversals of an $n$-vertex hypergraph $\H$ with treewidth $w$ can be enumerated with delay $O(||\H||n^{w+1})$, where $||\H||$ is the size of the hypergraph (formal definition in Section~\ref{sec:Preliminaries}), and $w$ its treewidth. \eat{Their algorithm takes advantage of the fact that the \e{degeneracy}{\footnote{A graph $G$ is $k$-degenerate if there is an ordering $v_1,\dots,v_n$ of its vertices such that for every $i\in [1,n]$, vertex $v_i$ has at most $k$ neighbors in $\set{v_1,\dots,v_{i-1}}$.}of a hypergraph is less than or equal to its treewidth.}}\eat{However, bounded treewidth implies bounded degeneracy, and is thus a stronger notion.}When parameterized by the treewidth of the input hypergraph, our result provides a significant improvement: the minimal transversals can be enumerated in fixed-parameter-linear delay $O(||\H||\cdot w)$, following an FPT preprocessing phase.\eat{, thus taking full advantage of the hypergraph's bounded-treewidth property.} Ours also improves on the memory consumption which, for the algorithm of Eiter et al.~\cite{DBLP:journals/siamcomp/EiterGM03}, scales with the output size, while ours is polynomial for bounded-treewidth hypergraphs.
\paragraph*{Overview of approach, and new techniques.}
We build on the well known relationship between the problem of enumerating minimal hypergraph transversals (\textsc{trans-enum}) and  enumerating minimal dominating sets in graphs. A subset of vertices $S \subseteq \nodes(G)$ is a \e{dominating set} of a graph $G$ if every vertex in $\nodes(G){\setminus} S$ has a neighbor in $S$. A dominating set is minimal if no proper subset of $S$ is a dominating set of $G$. We denote by $\textsc{Dom-Enum}$ the problem of enumerating the minimal dominating sets of $G$. There is a tight connection between \textsc{Dom-Enum} and \textsc{trans-enum}; Kant\'{e} et al.~\cite{DBLP:journals/siamdm/KanteLMN14} showed that \textsc{Dom-Enum} can be solved in polynomial-total-time if and only if \textsc{trans-enum} can as well. 
Building on this result, we show in Section~\ref{sec:TransEnumReduction} that \textsc{trans-enum} over a hypergraph with treewidth $w$ can be reduced to \textsc{Dom-Enum} over a graph with treewidth $w'\in \set{w,w+1}$. Therefore, we focus on an algorithm for enumerating minimal dominating sets, where the delay is parameterized by the treewidth. In that vein, we mention the recent work by Rote~\cite{DBLP:journals/corr/abs-1903-04517}, that presented an algorithm for enumerating the minimal dominating sets of trees.
\eat{
A known characterization of minimal dominating sets is that every vertex $v$ in a minimal dominating set $S$ of a graph $G$ has a \e{private neighbor}; a vertex $u \in \nodes(G)\setminus S$ such that $u$ is a neighbor of $v$, but of no other vertex in $S$. Therefore, we can view every minimal dominating set $S$ as partitioning the vertex set of $G$ into three sets: $S$, vertices outside $S$ that have a single neighbor in $S$ (i.e., private neighbors), and  vertices outside $S$ that have at least two neighbors in $S$. 
}

The enumeration algorithm has two phases: preprocessing and enumeration. In preprocessing, we construct a data structure that allows us to efficiently determine whether graph $G$ has a minimal dominating set $S$ meeting specified constraints. During enumeration, the algorithm queries this structure repeatedly, backtracking on a negative answer. This method of enumeration is known as \e{Backtrack Search}~\cite{DBLP:journals/networks/ReadT75}. Our algorithm leverages the treewidth parameter $w$ in two ways: constructing the data structure in fixed-parameter-tractable time and space dependent on $w$, and querying it such that each query answers in time $O(w)$. Overall, for an $n$-vertex graph, this yields a delay of $O(nw)$.
\eat{
The enumeration algorithm has two phases: preprocessing and enumeration. In the preprocessing phase, we construct a data structure that allows us to efficiently determine whether the graph $G$ has a minimal dominating set $S$ that meets certain constraints. During enumeration, the algorithm repeatedly queries this structure, and backtracks if the answer is negative. This method of enumeration is known as \e{Backtrack Search}~\cite{DBLP:journals/networks/ReadT75}. 
The idea behind our algorithm is to take advantage of the treewidth parameter $w$ on two levels: first, to build this data structure in time and space whose dependence on the treewidth is fixed-parameter-tractable. Second, the enumeration algorithm queries this structure in a way that ensures that the time to answer each query is in $O(w)$. Overall, for an $n$-vertex-graph, this leads to a delay of $O(nw)$. 
}
Central to our approach is a novel data structure based on the \e{disjoint branch tree decomposition}. 
\eat{Intuitively, this is a tree decomposition with disjoint branches~\cite{DBLP:journals/ipl/Duris12}.}Graphs with such a tree decomposition were first characterized by Gavril, and are called \e{rooted directed path graphs}~\cite{DBLP:journals/dm/Gavril75}. The class of hypergraphs that have a disjoint branch tree decomposition strictly include the class of \e{$\gamma$-acyclic} hypergraphs, and are strictly included in the class of \e{$\beta$-acyclic} hypergraphs~\cite{DBLP:journals/ipl/Duris12}. Hypergraphs that have a disjoint-branch tree decomposition have proved beneficial for both schema design in relational databases~\cite{DBLP:journals/jacm/Fagin83,DBLP:journals/ipl/Duris12}, and query processing in probabilistic databases~\cite{DBLP:conf/pods/BeameBGS15,DBLP:conf/sum/KenigGS13,DBLP:conf/sum/KenigG15,DBLP:conf/pods/KenigKPS17}. To our knowledge, ours is the first to make use of this structure for enumeration. 
\eat{
\revth{Dynamic programming over tree decompositions proceeds by breaking down a complex problem into smaller subproblems, each associated with a node (or bag) of the tree decomposition. These subproblems are solved independently and then combined according to the tree structure, typically in a bottom-up manner. The primary challenge, therefore, lies in merging the solutions of multiple subtrees when processing each node, which can be complex and computationally intensive. \e{Disjoint-branch tree decompositions} simplify this process by ensuring that the children of every node in the tree decomposition have pairwise disjoint sets of vertices.
This structure offers distinct advantages in handling problems with specific neighborhood constraints on vertices, such as the $\textsc{Dom-Enum}$ problem. For example, if a vertex $v$ in a bag $u$ of the tree decomposition is configured to have at least two neighbors in the dominating set (i.e., the solution), then enforcing this constraint becomes much easier under the (disjoint-branch) assumption that all of $v$'s neighbors belong to a distinct subtree rooted at one of $u$'s children. For precise details on how we apply this property in the context of enumerating minimal dominating sets, see Section~\ref{sec:convertToDBJT} and Example~\ref{eq:nonDBJT}.} 
}
Dynamic programming over tree decompositions breaks down complex problems into smaller subproblems associated with the nodes (or bags) of the tree decomposition. These subproblems are solved independently and combined in a bottom-up manner according to the tree structure. The challenge lies in merging solutions of multiple subtrees at each node, which can be complex and computationally intensive. \e{Disjoint-branch tree decompositions} simplify this by ensuring each node's children have disjoint vertex sets, easing the enforcement of specific constraints placed on the vertices.
This is especially important in the case of $\textsc{Dom-Enum}$ where the constraints imposed on the vertices have the form of ``vertex $v$ has at least two neighbors in the solution set''. Intuitively, it is easier and more efficient to ensure such constraints hold when the structure partitions $v$'s neighbors into disjoint sets.\eat{
, such as in the $\textsc{Dom-Enum}$ problem. For instance, if vertex $v$ in bag $u$ has at least two distinct neighbors in the dominating set, this constraint is simplified assuming every neighbor of $v$ belongs to a distinct subtree rooted at $u$'s children.} For the detailed application of this property for $\textsc{Dom-Enum}$ see Section~\ref{sec:convertToDBJT} and Example~\ref{eq:nonDBJT}.

Our enumeration algorithm does not assume that the (incidence graph) of the input hypergraph has a disjoint branch tree decomposition~\cite{DBLP:journals/ipl/Duris12}. \eat{In particular, we do not assume that the input hypergraph belongs to the class of hypergraphs that have a disjoint branch tree decomposition~\cite{DBLP:journals/ipl/Duris12}.}
Therefore, as part of the preprocessing phase, we construct a disjoint branch tree decomposition directly from the tree decomposition of the input graph. This new structure contains ``vertices'' that are not in the original input graph, and whose sole purpose is to facilitate the enumeration. In Section~\ref{sec:convertToDBJT} we show that
given a tree decomposition (that is not disjoint-branch) whose width is $w$, we can construct the required disjoint-branch data-structure in time $O(nw^2)$, with the guarantee that the width of the resulting structure is, effectively, at most $2w$. The main results of this paper are the following.

\eat{
The set of minimal dominating sets of a graph $G$ are subsets of its vertices. We can think of these subsets of vertices as corresponding to the leaves of a binary tree where each level $i$ corresponds to the choice of selection of vertex $v_i$ in the solution. All the solutions in the subtree rooted at $v_i$'s left child exclude $v_i$, while all the solutions in the subtree rooted at $v_i$'s right child include $v_i$. Traversing this tree depth-first allows the enumeration of all the solutions. In full generality, the delay of this algorithm can be exponential. To obtain polynomial delay, no branch should be searched in vain, that is, any explored internal node must be the prefix of at least one final solution. This method of enumeration is known as \e{Backtrack Search}.

Going through such a graph by a depth-first search allows the enumeration of all the solutions.
In full generality, the delay of this algorithm can be exponential. In order to obtain a polynomial delay, it
is important that no branch is searched in vain, that is, any explored internal node must be the prefix of
at least one final solution. T

Such a tree has an exponential number of leaves, a large portion of which are not minimal dominating sets. To use this method for the enumeration 
}
\eat{
\begin{theorem}
	Let $G$ be a graph with $n$ vertices and treewidth $w$, let $\sigma,\rho \subseteq \N$ be finite or cofinite sets, comprised of a continuous range of integers, and let $s$ be the fixed number of states required to
	represent partial solutions of the specific $[\sigma,\rho]$-domination problem.
	Following a preprocessing phase that takes time $O(nws^{2w})$, the minimal $[\sigma,\rho]$-dominating sets can be enumerated with delay $O(nw)$ and total space $O(ns^{2w})$. Ranked enumeration can be performed with delay $O(n^2)$ and total space $O(2^n)$. \eat{For Top-$K$ algorithms that return the top-$K$ solutions in ranked order, the space required is $O(K)+O(ns^{2w})$.}
\end{theorem}
}
\begin{theorem}
	\label{thm:DomEnumResult}
	Let $G$ be an $n$-vertex graph whose treewidth is $w$.\eat{
	If $G$ is a rooted directed path graph (Definition~\ref{def:disjointBranchTD}) we let $k=w+1$; otherwise, $k=2w$.} Following a preprocessing phase that takes time $O(nw8^{2w})$, the minimal dominating sets of $G$ can be enumerated with delay $O(nw)$ and total space $O(n8^{2w})$.
\end{theorem}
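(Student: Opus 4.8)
The plan is to prove the theorem by combining three components: a structural preprocessing that replaces an arbitrary width-$w$ tree decomposition of $G$ by a \emph{disjoint-branch} tree decomposition of width at most $2w$ (Section~\ref{sec:convertToDBJT}); a dynamic-programming index over that decomposition which records, for every bag and every local ``signature'' of a partial minimal dominating set, whether the signature is realizable inside the subtree below the bag and in the graph above it; and a backtrack-search enumeration that consults the index to decide, in $O(w)$ time per vertex, whether a partially committed in/out assignment extends to a minimal dominating set, and hence emits successive solutions with delay $O(nw)$.

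First I would take a tree decomposition of $G$ of width $w$ (computed during preprocessing or given as input) and apply the construction of Section~\ref{sec:convertToDBJT} to turn it into a disjoint-branch tree decomposition $\calT$ with bags of size at most $2w+O(1)$; by the cited bound this costs $O(nw^2)$ time, which is absorbed into the $O(nw8^{2w})$ budget. The value of the disjoint-branch property is that the children of every node of $\calT$ span pairwise vertex-disjoint subtrees, so a requirement of the form ``vertex $v$ has at least two neighbors in the solution $S$'' can be met by charging distinct witnesses of $v$ to distinct child-subtrees; this is exactly what keeps the combination step below from having to enumerate exponentially many ways of splitting shared witnesses, and it is the phenomenon illustrated in Example~\ref{eq:nonDBJT}.

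Next I would set up the DP on $\calT$. Each vertex is assigned one of a constant number of states (eight suffice for minimal domination) that together encode a partial minimal dominating set: whether the vertex is in $S$, how many of its already-processed neighbors are in $S$ (tracked only up to the threshold needed to detect a private neighbor), and whether a private neighbor has already been committed for it. For each node $t$ with bag $B_t$ and each signature $c\colon B_t\to[8]$, I would precompute a bit $\mathsf{bot}_t[c]$ asserting that some assignment on the subtree rooted at $t$ agrees with $c$ on $B_t$ and violates no domination or minimality requirement strictly inside that subtree, and symmetrically a bit $\mathsf{top}_t[c]$ for the complement. These tables are filled by a bottom-up pass for $\mathsf{bot}$ and a top-down pass for $\mathsf{top}$; since branches are disjoint, at a node with children $t_1,\dots,t_d$ the children can be merged one at a time without double-counting neighbors, so the work per node is $O(8^{|B_t|})$ up to polynomial factors, for total time $O(nw\,8^{2w})$ and space $O(n\,8^{2w})$. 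The correctness obligation --- and the technical core of this step --- is that the global condition ``every vertex of $S$ has a private neighbor'' is captured faithfully by the finite state set and its transitions, and in particular that the ``$\geq 2$ neighbors in $S$'' accounting is sound under the disjoint-branch reduction.

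Finally, for the enumeration I would run Backtrack Search~\cite{DBLP:journals/networks/ReadT75}, committing vertices in an order compatible with $\calT$ (a vertex is fixed when its bag is first visited and stays fixed until forgotten) and branching at each vertex on ``$v\in S$'' versus ``$v\notin S$''. Maintaining incrementally the restriction of the current signature to the current bag, the feasibility test ``does the committed partial assignment extend to a minimal dominating set?'' amounts to checking whether some completion of that restriction is simultaneously flagged in $\mathsf{bot}$ for the processed subtree and in $\mathsf{top}$ for the rest; indexing the tables by the already-frozen coordinates makes each such test run in $O(w)$ time rather than $O(8^{2w})$. Because no branch is explored unless it is a prefix of some solution and the search tree has depth $n$, two consecutive solutions are separated by $O(n)$ feasibility tests, giving delay $O(nw)$, while the only storage beyond the precomputed tables is the $O(n)$ current path. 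The principal obstacle I anticipate is precisely this last part: arranging the precomputed tables and the running bookkeeping so that every feasibility test costs $O(w)$ while the backtrack search remains complete --- it is here that the disjoint-branch structure and the careful vertex-introduction order do the real work.
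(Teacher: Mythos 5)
Your outline follows the paper's architecture quite closely: the same disjoint-branch conversion of Section~\ref{sec:convertToDBJT}, the same eight-state encoding of partial minimal dominating sets, trie-indexed tables for $O(w)$ lookups, and backtrack search over a vertex order tied to a DFS of the decomposition. The one structural deviation is your pair of tables $\mathsf{bot}_t$ and $\mathsf{top}_t$. The paper computes only the ``bottom'' factors $\M_u$ (Definition~\ref{def:induceLabelssubtree}, Lemma~\ref{lem:DPCorrectnessLemma}) and dispenses with any top-down pass: correctness of a single bottom-table lookup rests on the inductive invariant that \algname{EnumDS} only ever recurses on extendable prefixes, together with a stitching argument (Theorem~\ref{thm:mainCorrectness}) that splices the old global witness outside $\nodes_{B(v_i)}$ with the new subtree witness $D_i$, the two interacting only through the bag $B(v_i)$. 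Given that invariant, your $\mathsf{top}$ table is redundant (though not wrong); if instead you mean to rely on $\mathsf{bot}\wedge\mathsf{top}$ alone, you must additionally argue that any bottom witness and any top witness agreeing on the bag signature compose into a global minimal dominating set consistent with \emph{all} previously committed labels, which is essentially the same stitching argument in disguise and needs the complementary counter semantics on the two tables to be set up carefully.

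The step you rightly flag as the principal obstacle is also the one place where your sketch is genuinely under-specified. The paper's $O(w)$ test is a single \emph{full} root-to-leaf trie walk, not a search over completions of a partially frozen signature: by Lemma~\ref{lem:BisInjective} (which uses niceness of the DBTD), every vertex of $B(v_i)$ other than $v_i$ is already frozen when $v_i$ is introduced, so the bag signature is completely determined; by Lemma~\ref{lem:orderNbr}, all already-processed neighbors of $v_i$ lie inside $B(v_i)$, so \algname{IncrementLabeling} touches only $O(w)$ labels; and by Lemma~\ref{lem:useDBTD} (which is where the disjoint-branch property is actually used), every not-yet-processed neighbor of every vertex of $B(v_i)$ lies in the subtree below $B(v_i)$, so querying that single bag accounts for all outstanding neighborhood constraints of the frozen vertices. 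None of these three facts appears in your proposal, and without them it is not clear why checking one bag suffices or why no existential search over unfrozen coordinates is needed. A smaller imprecision: the conversion does not keep bags of size $2w+O(1)$ --- they can grow to about $3w$ --- but the number of \emph{admissible} signatures per bag is bounded by $8^{2w}$ via the local constraints and the effective-width notion of~\eqref{eq:efftw}, which is what actually delivers the $O(nw8^{2w})$ preprocessing bound.
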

The translation from the \textsc{Trans-Enum} (\textsc{cover-enum}) problem in the hypergraph $\H$ to the \textsc{Dom-Enum} problem in the bipartite incidence-graph of $\H$ (see Section~\ref{sec:TransEnumReduction}) imposes certain restrictions on the set of vertices that can belong to the solution set (i.e., the minimal transversal or minimal edge-cover). As a corollary of Theorem~\ref{thm:DomEnumResult}, we show the following (in Section~\ref{sec:encodingTransEnum}).
\begin{corollary}
	\label{corr:TansToDomEnum}
	Let $\H$ be a hypergraph with $n$ vertices, $m$ hyperedges, and whose treewidth is $w$. Following a preprocessing phase that takes time $O((m+n)w5^{2w})$, the minimal transversals (minimal edge-covers) of $\H$ can be enumerated with delay $O((m+n)w)$ and total space $O((m+n)5^{2w})$.
\end{corollary}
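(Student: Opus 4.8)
The plan is to compose the reduction of Section~\ref{sec:TransEnumReduction} with Theorem~\ref{thm:DomEnumResult}, and then to cut the exponential factor from $8^{2w}$ down to $5^{2w}$ by using the restrictions that the reduction places on the solution. First I would invoke Section~\ref{sec:TransEnumReduction}: from $\H$ (with $n$ vertices, $m$ hyperedges and treewidth $w$) it produces, in time $O(m+n)$, a graph $G$ built from the bipartite incidence graph of $\H$, with $O(n+m)$ vertices and treewidth $w'\in\set{w,w+1}$, together with a set $F\subseteq\nodes(G)$ (determined by whether we are enumerating transversals or edge covers) such that the minimal dominating sets of $G$ that avoid $F$ are in a bijection with the minimal transversals (resp.\ minimal edge covers) of $\H$; moreover a minimal dominating set of $G$ and its pre-image can be computed from one another in time linear in their size. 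Thus it suffices to enumerate, with delay $O((m+n)w)$ and total space $O((m+n)5^{2w})$, the minimal dominating sets of $G$ that avoid $F$.

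Next I would rerun the construction underlying Theorem~\ref{thm:DomEnumResult} on $G$, with one change. Recall that its preprocessing builds a tree decomposition of $G$, applies the conversion of Section~\ref{sec:convertToDBJT} to obtain a disjoint-branch structure of effective width at most $2w$, and then runs a bottom-up dynamic program whose table at each node assigns to every vertex of the bag one of $s$ \emph{states}, with $s=8$ for unrestricted \textsc{Dom-Enum}; this is exactly what yields the $8^{2w}$ bound. The key point is that a vertex in $F$ is barred from the solution, so it ranges over only the states of an out-of-solution vertex (not yet dominated / dominated exactly once / dominated at least twice), while a vertex that may appear outside $F$ in a bag of $G$ needs, beyond its membership bit, only one further bit --- whether a private neighbour has already been secured --- so every vertex of every bag ranges over at most five states. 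Hence each table has at most $5^{2w}$ entries, the constraint ``avoid $F$'' is imposed at no extra cost (by never materialising the forbidden states), the preprocessing runs in time $O((m+n)w5^{2w})$ and uses $O((m+n)5^{2w})$ space, and the query routine of Theorem~\ref{thm:DomEnumResult} still answers in time $O(w)$. Running the resulting Backtrack-Search enumeration and pushing each output through the linear-time bijection of the first step produces the minimal transversals (minimal edge covers) of $\H$ within the stated bounds.

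The routine parts are checking that the bijection of Section~\ref{sec:TransEnumReduction} preserves minimality (the private-neighbour condition in $G$ must correspond to irredundancy of the transversal, resp.\ to minimality of the edge cover), and that each auxiliary vertex introduced by the disjoint-branch conversion inherits the cheap, forbidden-style behaviour and hence contributes at most five --- in fact at most three --- states. I expect the main obstacle to be making the state reduction precise: pinning down which of the eight \textsc{Dom-Enum} states survive under the $F$-restriction and under the bipartite incidence structure, and verifying that applying the conversion of Section~\ref{sec:convertToDBJT} to the width-$w$ incidence decomposition keeps the effective width at $2w$, so that the exponent remains $2w$ (rather than $2(w+1)$) while the base drops from $8$ to $5$.

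\ifranked
For the ranked statement one plugs the inclusion/exclusion-constrained optimal-solution oracle provided by Theorem~\ref{thm:DomEnumResult} into the Lawler-Murty scheme~\cite{Lawler1972,DBLP:journals/siamcomp/LawlerLK80}, exactly as in the dominating-set case, obtaining delay $O((m+n)^2)$, total space $O(2^{n+m})$, and preprocessing time $O((m+n)w8^{2w})$.
\fi
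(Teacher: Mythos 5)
Your overall route is the paper's: compose the reduction of Section~\ref{sec:TransEnumReduction} with the dominating-set machinery of Theorem~\ref{thm:DomEnumResult}, and shrink the per-vertex label set from $8$ to $5$ using the restrictions that the tripartite construction imposes. The place where your plan does not yet hold up is exactly the step you flag as the main obstacle: the state accounting. You give a forbidden vertex only three states (``not yet dominated / dominated exactly once / dominated at least twice''), but the minimality bookkeeping needs all five labels of $F_\omega\cup F_\rho$ for an out-of-solution vertex --- $[0]_\omega$ versus $[1]_\omega$ record whether the unique dominating neighbour has already been processed, and $[0]_\rho,[1]_\rho,[2]_\rho$ count how many dominators are still owed. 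Conversely, your ``membership bit plus one private-neighbour bit'' for the unrestricted vertices yields four states and omits the domination counters altogether, so it neither matches the count nor explains why the $F_\rho$ labels can be dropped for those vertices.

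The paper's actual accounting is structural, per vertex class of $B(\H)$: the vertices $\set{y_e : e\in\Hedges(\H)}$ are excluded from every admissible solution, so they lose $F_\sigma$ and keep $F_\omega\cup F_\rho$ ($5$ labels); each $w\in\nodes(\H)$ has all of its neighbours in $\set{v}\cup\set{y_e : e\in\Hedges(\H)}$, and since the $y_e$'s are excluded, $w$ can have at most one dominating neighbour (the apex $v$) and therefore never needs $F_\rho$, keeping $F_\sigma\cup F_\omega$ ($5$ labels); the apex $v$ must belong to the solution and keeps only $F_\sigma$ ($3$ labels). The maximum over all vertices is $s=5$, which is what feeds into Lemma~\ref{lem:DPCorrectnessLemma}. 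Your remaining worry about the exponent becoming $2(w+1)$ rather than $2w$ is harmless: the base is a constant, so $5^{2(w+1)}=O(5^{2w})$ and the stated bounds follow. With the state reduction repaired along these lines, your argument coincides with the paper's proof.
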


\eat{
\noindent\textbf{Other related work.} Courcelle’s meta-theorem~\cite{DBLP:books/el/leeuwen90/Courcelle90,DBLP:journals/jal/ArnborgLS91} states that every decision and optimization graph problem definable in Monadic Second Order Logic (MSO) is FPT
(and even linear in the input size) when parameterized by the treewidth of the input
structure. That is, can be solved in time $O(n\cdot f(||\phi||,w))$ where $\phi$ is a formula in MSO of size $||\phi||$, and $G$ is an $n$-vertex graph with treewidth $w$.\eat{Since the minimal $[\sigma,\rho]$-domination problem is expressible in monadic second-order logic, the meta-theorem applies.}\eat{Arnborg et al.~\cite{ARNBORG1991308} later extended this result to counting problems, and Flum et al. [11] to enumeration problems. Furthermore,} Bagan~\cite{DBLP:conf/csl/Bagan06}, Courcelle~\cite{DBLP:journals/dam/Courcelle09}, and Amarilli et al.~\cite{DBLP:conf/icalp/AmarilliBJM17} extend this result to enumeration problems, and showed that the delay between successive solutions to problems defined in MSO logic is fixed-parameter linear, when parameterized by the treewidth of the input structure.
The problem with translating Courcelle's meta-theorem (and its extensions) to algorithms, is that the constants hidden in the $O$-notation for the running-time bound cannot be bounded by an elementary function\footnote{A function is elementary if it can be written in a ``closed'' form over the complex numbers, the elementary functors $-,+,\times$, and a constant number of exponentials and logarithms~\cite{DBLP:journals/csr/LangerRRS14}.}. In fact, Frick and Grohe~\cite{FRICK20043} prove that $f$ is non-elementary even for trees (i.e., when $w=1$).
Langer et al.~\cite{DBLP:journals/csr/LangerRRS14} present concrete experiments where this problem surfaces.
To summarize, it was already stated in~\cite{Grohe99,ParameterizedAlgs} that while approaches based on Courcelle's Theorem are very useful for classifying \e{which} problems are FPT, ``designing efficient dynamic programming algorithms on tree-decompositions requires constructing the algorithm explicitly''~\cite{ParameterizedAlgs}. Similar statements can be found in the text-book by Niedermeier~\cite{DBLP:books/ox/Niedermeier06}, and in other papers~\cite{DBLP:journals/tocl/GottlobPW10,Pichler2010,DBLP:journals/csr/LangerRRS14}. 
}

\eat{
, and the non-ranked enumeration algorithm in \cref{sec:nonRankedEnumeration}. In \cref{sec:rankedEnum}, we briefly describe the ranked enumeration algorithm, which builds on the ideas presented in the paper, but is more involved. Due to space constraints, we defer the details, pseudo-code, and proofs of this algorithm to \cref{app:rankedEnum}.
}

\eat{
It has been shown that counting maximal independent sets is FPT for bounded-tw graphs. 
Van Rooij et al.~\cite{DBLP:journals/corr/abs-1806-01667,DBLP:conf/esa/RooijBR09} showed an algorithm for counting dominating sets that runs in time $O(n^33^w)$ (Theorem 3.3). This algorithm effectively computes the number of dominating sets of every size $k\in[1,n]$. Alber and Niedermeier presented an algorithm for counting dominating sets that runs in time $O(n4^w)$ ~\cite{JochenNiedermeier2002}. 
The counting algorithm proceeds as follows. Each \e{bag} in the \e{tree decomposition} holds a set of configurations associated with its vertices. The algorithm computes the number of MISs associated with these configurations by dynamic programming, or message passing, that proceeds from the leaves towards the root of the tree decomposition. The counts associated with the root configurations are then combined to get the overall count of the MISs in the graph. 
}

\eat{

A transversal, or hitting set, of a hypergraph $\H(\nodes(\H),\mathcal{E}(\H))$  is a set $A \subseteq \nodes(\H)$ such that for every $e \in \mathcal{E}(\H)$ it holds that $A\cap e \neq \emptyset$. A \e{minimal transversal} does not contain any other transversal as a subset.\eat{An algorithm for enumerating the minimal transversals of a hypergraph $\H$ runs in polynomial delay if it outputs the minimal transversals of $\H$ without repetitions and the delay between the two outputs is in $O(p(||\H||))$ where $p$ is a polynomial function, and $||\H||=|\nodes(\H)|+\sum_{e \in \mathcal{E}(\H)}|e|$~\cite{DBLP:conf/fct/KanteLMN11}.}
The problem of enumerating the minimal transversals of a hypergraph is well studied~\cite{DBLP:conf/fct/KanteLMN11,DBLP:journals/siamcomp/EiterGM03}, but the existence of a polynomial-total-time algorithm remains an open question.\eat{
The complexity of enumerating minimal transversals remains open to this day. The best known algorithm, by Fredman and
Khachiyan~\cite{DBLP:journals/jal/FredmanK96}, runs in time $O(M^{\log M})$ where $M$ is the size of the input $||\H||$ plus output.}
The problem of enumerating the minimal dominating sets of a graph can be reduced to that of enumerating the minimal transversals of a hypergraph~\cite{DBLP:conf/fct/KanteLMN11}. For the latter problem, Eiter et al.~\cite{DBLP:journals/siamcomp/EiterGM03} presented an enumeration algorithm whose delay is $O(kn^{k+1})$, where $n$ is the number of vertices in the hypergraph $\H$, and
$k$ is the degeneracy of the \e{incidence-graph} of $\H$.\eat{
A graph $G$ is $k$-degenerate (has degeneracy $k$) if every subgraph of $G$ has a vertex of degree at most $k$.  It is easily seen that a graph of bounded-tw $w$ is $w$-degenerate (i.e., since it has an \e{elimination order} where every vertex removed has at most $w$ neighbors).}
Since every bounded-tw graph has bounded degeneracy, the result of Eiter et al.~\cite{DBLP:journals/siamcomp/EiterGM03} can be translated to the task of enumerating the minimal dominating sets of a bounded-tw, $n$-vertex graph, with the same delay $O(wn^{w+1})$~\cite{DBLP:conf/fct/KanteLMN11} (see Appendix~\ref{sec:introApp} for the details regarding this connection). For bounded-tw graphs, our result provides a significant improvement: we show that minimal dominating sets can be enumerated with linear delay $O(nw)$ following an FPT preprocessing phase.
Furthermore, our approach generalizes to $[\sigma,\rho]$-dominating sets and allows to enumerate the minimal solutions in ranked order. 
}
\eat{
 
Van Rooij et al.~\cite{DBLP:conf/birthday/Rooij20,DBLP:journals/corr/abs-1806-01667} present dynamic-programming algorithms over tree decompositions that solve the existence, optimization, and counting variant of the $[\sigma,\rho]$-domination problem over bounded-treewidth graphs. They do not consider the minimal variant of these problems. The algorithms of Van Rooij et al.~\cite{DBLP:conf/birthday/Rooij20,DBLP:journals/corr/abs-1806-01667} achieve significant improvement in runtime by incorporating algebraic transformations in the dynamic programming algorithm. In this work, we build on the encoding strategy used in~\cite{DBLP:conf/birthday/Rooij20} to address the problem of enumerating minimal $[\sigma,\rho]$-dominating sets in both ranked and arbitrary order. In~\cref{sec:CountingApp} we show how a variation of our approach can be used for counting minimal $[\sigma,\rho]$-dominating sets.
}

\noindent {\bf Relation to the meta-theorem of Courcelle~\cite{DBLP:books/el/leeuwen90/Courcelle90}.}
The seminal meta-theorem by Courcelle~\cite{DBLP:books/el/leeuwen90/Courcelle90,DBLP:journals/jal/ArnborgLS91} states that every decision and optimization graph problem definable in Monadic Second Order Logic (MSO) is FPT
(and even linear in the input size) when parameterized by the treewidth of the input
structure. MSO is the extension of First Order logic that allows quantification (i.e., $\forall,\exists$) over sets. Bagan~\cite{DBLP:conf/csl/Bagan06}, Courcelle~\cite{DBLP:journals/dam/Courcelle09}, and Amarilli et al.~\cite{DBLP:conf/icalp/AmarilliBJM17} extended this result to enumeration problems, and showed that the delay between successive solutions to problems defined in MSO logic is fixed-parameter linear, when parameterized by the treewidth of the input structure.

Courcelle-based algorithms for model-checking, counting, and enumerating the assignments satisfying an MSO formula $\varphi$ have the following form. Given a TD $(\T,\jointreeMapFunction)$ of a graph $G$, it is first converted to a labeled binary tree $T'$ over a fixed alphabet that depends on the width of the TD. 
The binary tree $T'$ has the property that $G\models \varphi$ if and only if $T' \models \varphi'$, where $\varphi'$ is derived from $\varphi$ (e.g., by the algorithm in~\cite{DBLP:journals/jacm/FlumFG02}). Then, the formula $\varphi'$ is converted to a finite, deterministic tree automaton (FTA) using standard techniques~\cite{DBLP:journals/mst/ThatcherW68,DBLP:journals/jcss/Doner70}. The generated FTA accepts the binary tree $T'$ if and only if $T'\models \varphi'$. Since simulating the deterministic FTA on $T'$ takes linear time, this proves that Courcelle-based algorithms run in time $O(n\cdot f(||\varphi||, w))$, where $n=|\nodes(G)|$, $w$ is the width of the TD $(\T,\jointreeMapFunction)$, and $||\varphi||$ is the length of $\varphi$.

The main problem with Courcelle-based algorithms is that the function $f(||\varphi||, w)$ is an iterated exponential of height $O(||\varphi||)$. The problem stems from the fact that every quantifier alternation in the MSO formula (i.e., $\exists \forall$ and $\forall \exists$) induces a power-set construction during the generation of the FTA. Therefore, an MSO formula with $k$ quantifier alternations, will result in a deterministic FTA whose size is a $k$-times iterated exponential. Various lower-bound results have shown that this blowup is unavoidable~\cite{DBLP:conf/dagstuhl/Reinhardt01} even for model-checking over trees. According to Gottlob, Pichler and Wei~\cite{DBLP:journals/ai/GottlobPW10}, the problem is even more severe on bounded treewidth structures because the original (possibly simple) MSO-formula $\varphi$ is first transformed
into an equivalent MSO-formula $\varphi'$ over trees, which is 
more complex than the original formula $\varphi$, and in general will have additional quantifier alternations.

MSO-logic is highly expressive, and can express complex graph properties. In particular, the condition that
$X\subseteq \nodes(G)$ is a minimal dominating set in an undirected graph $G$ can be specified using MSO logic. You can see the exact formulation (based on Proposition~\ref{prop:DSMin}) in Section~\ref{sec:courcelle}.
Specifically, the MSO formula for a minimal dominating set has two quantifier alternations. By the previous, the size of an FTA representing this formula will be at least doubly exponential. Previous analysis and experiments that attempted to craft the FTA directly (i.e., without going through the generic MSO-to-FTA conversion) have shown that this blowup materializes even for very 
simple MSO formulas~\cite{DBLP:journals/csr/LangerRRS14,DBLP:journals/entcs/KneisL09,DBLP:journals/ai/GottlobPW10,WeyerThesis}. Moreover, the result by Frick and Grohe~\cite{FRICK20043} (see Theorem~\ref{thm:FrickAndGrohe} in Section~\ref{sec:courcelle}) essentially proves that no general algorithm can perform better than the FTA-based approach. 

Despite evidence pointing to the contrary, it may be the case that with some considerable effort our algorithm can be formulated using an FTA with comparable runtime guarantees. The only instance we are aware of where a generic approach achieves running times comparable to those of a specialized algorithm is the algorithm of Gottlob, Pichler, and Wei~\cite{DBLP:journals/tocl/GottlobPW10} that translates an MSO formula to a \e{monadic Datalog} program (not an FTA). However, there too, the monadic Datalog programs were crafted manually, and used to directly encode the dynamic programming over the TD~\cite{DBLP:journals/tocl/GottlobPW10,DBLP:journals/csr/LangerRRS14}.

Because of these limitations, practical algorithms over bounded-treewidth structures often avoid the MSO-to-FTA conversion and directly encode dynamic programming strategies on the tree decomposition~\cite{DBLP:journals/ai/GottlobPW10,DBLP:journals/siamcomp/LokshtanovMS18}. These specialized algorithms are not only more efficient, but are also transparent, allowing for exact runtime analysis, and enable proving optimality guarantees by optimizing the dependence on treewidth~\cite{DBLP:journals/siamcomp/LokshtanovMS18,DBLP:conf/soda/BasteST20,DBLP:journals/tcs/DubloisLP22}. See Section~\ref{sec:courcelle} for further details.

\eat{

***********************

The algorithm derived from Courcelle's meta-theorem translates the MSO formula to a finite tree
automaton (FTA) that acts on the tree decomposition of the graph.

While the generality and theoretical guarantees of Courcelle-based approaches are appealing, the size of the resulting FTA is not bounded by an elementary function of the MSO formula. Frick and Grohe~\cite{FRICK20043} prove that unless $P=NP$, the complexity of testing MSO properties on trees is 
not bounded by any elementary function of the size of the MSO formula, making algorithms based on Courcelle's Theorem generally impractical~\cite{DBLP:conf/esa/Lampis10}. 
The problem stems from the state-explosion of the FTA generated by the procedure of converting the MSO formula to an FTA, and has been observed in practice even for simple MSO formulae~\cite{DBLP:journals/tocl/GottlobPW10}.\eat{
	In fact, Frick and Grohe~\cite{FRICK20043} prove that this tower of exponentials is unavoidable unless $P = NP$,
	and holds even for trees.} 
The impracticality of algorithms based on Courcelle's meta-theorem have been discussed in textbooks in parameterized algorithms~\cite{ParameterizedAlgs,Grohe99}, stating that approaches based on Courcelle’s Theorem are restricted to classifying \e{which} problems are FPT, and cannot be used for designing efficient algorithms. Similar statements
can be found in the text-book by Niedermeier~\cite{DBLP:books/ox/Niedermeier06}, and in other papers~\cite{DBLP:journals/tocl/GottlobPW10,DBLP:journals/csr/LangerRRS14,Pichler2010}. Importantly, optimizing the dependence on treewidth (i.e., obtaining a small-growing function $f(w)$) is an active area of research in parameterized complexity~\cite{DBLP:conf/icalp/BodlaenderCKN13,DBLP:conf/soda/BasteST20,DBLP:journals/tcs/DubloisLP22,DBLP:conf/focs/CyganNPPRW11} (see also Section~\ref{sec:lowerBounds} of the Appendix). 

To summarize, previous work has established that Courcelle's meta-theorem is an important theoretical tool for classifying which problems are FPT when parameterized by treewidth. 
In particular, our algorithm for enumerating minimal hitting sets in fixed-parameter-linear delay, following an FPT-preprocessing phase, cannot be derived from known meta-theorems, and is provably superior in both runtime and space complexity to current algorithms.
}
\noindent {\bf Organization.} Following preliminaries in Section~\ref{sec:Preliminaries}, we show in Section~\ref{sec:TransEnumReduction} how \textsc{trans-enum} and \textsc{cover-enum} over a hypergraph with treewidth $w$ can be reduced to \textsc{dom-enum} over a graph with treewidth $w'\leq w+1$. In Section~\ref{sec:overview} we present an overview of the algorithm.
In Section~\ref{sec:PreprocessingForEnumeration}, we describe the FPT preprocessing-phase, and in Section~\ref{sec:enuemration} prove correctness, and establish the FPL-delay guarantee of the enumeration algorithm. We conclude in Section~\ref{sec:conclusion}. Due to space constraints, we defer some of the technical details and proofs to the Appendix. In Section~\ref{sec:lowerBounds} of the Appendix  we discuss lower bounds, and other related work.

\section{Preliminaries and Notation}
\label{sec:Preliminaries}
Let $G$ be an undirected graph with vertices $\nodes(G)$ and edges $\edges(G)$, where $n=|\nodes(G)|$, and $m=|\edges(G)|$. We assume that $G$ does not have self-loops.
Let $v\in \nodes(G)$. We let $N_G(v)\eqdef\set{u\in\nodes(G): (u,v)\in \edges(G)}$ denote the neighborhood of $v$, dropping the subscript $G$ when it is clear from the context. We denote by $G[T]$ the subgraph of $G$ induced by a subset of vertices $T\subseteq \nodes(G)$. Formally, $\nodes(G[T])=T$, and $\edges(G[T])=\{(u,v)\in \edges(G):$$ \set{u,v}\subseteq T\}$. 

Let $U\subseteq \nodes(G)$, and let $F$ be a finite set of labels. We denote by $\FA:U\rightarrow F$ an assignment of labels to the vertices $U$. For example, let $U=\set{v_1,v_2}$, $F=\set{\alpha,\beta}$, and $\FA$ be an assignment where $\FA(v_1)=\alpha, \FA(v_2)=\alpha$. If $t\in \nodes(G){\setminus}U$, we denote by $\FA'\eqdef (\FA,t\gets \beta)$ the assignment that results from $\FA$ by extending it with the assignment of the label $\beta$ to vertex $t$. That is, $\FA'(v)=\FA(v)$ for every $v\in U$, and $\FA'(t)=\beta$. Finally, for $U'\subseteq U$, we denote by $\FA[U']: U'\rightarrow F$ the assignment where $\FA[U'](w)=\FA(w)$ for all $w\in U'$.
\begin{definition}
	\label{def:DS}
A subset of vertices $D\subseteq \nodes(G)$ is a \e{dominating set} of $G$ if every vertex  $v\in \nodes(G){\setminus} D$ has a neighbor in $D$. A dominating set $D$ is \e{minimal} if $D'$ is no longer a dominating set for every $D'\subset D$. We denote by $\sch(G)$ the minimal dominating sets of $G$.
\end{definition}
\eat{
\def\propDSMin{
		A dominating set $D\subseteq \nodes(G)$ is minimal if and only if for every $u\in D$, one of the following holds: (i) $N(u)\cap D=\emptyset$, or (ii) there exists a vertex $v\in N(u){\setminus} D$, such that $N(v)\cap D=\set{u}$.
}
\begin{proposition}
	\label{prop:DSMin}
	\propDSMin
\end{proposition}
}
\eat{
\begin{proof}
	Let $D$ be a minimal dominating set, and let $u\in D$. Since $D'\eqdef D{\setminus}\set{u}$ is no longer dominating, there is a vertex $v\in \nodes(G){\setminus} D'$ such that $N(v)\cap D'=\emptyset$. If $v=u$, then $N(u)\cap D'= N(u)\cap D=\emptyset$. If $v\neq u$, then $v\in \nodes(G){\setminus} D$. Since $N(v)\cap D'=\emptyset$, and $N(v)\cap D\neq \emptyset$, we conclude that $N(v)\cap D=\set{u}$.
	
	Now, let $D\subseteq \nodes(G)$ be a dominating set for which the conditions of the proposition hold, and let $D'\eqdef D{\setminus} \set{u}$ for a vertex $u\in D$. If $D'$ is a dominating set, then $N(u)\cap D'=N(u)\cap D\neq \emptyset$. Also, $|N(v)\cap D'|\geq 1$ for every $v\in N(u){\setminus}D'$. In particular, $|N(v)\cap D|\geq 2$ for every $v\in N(u){\setminus}D$. But this brings us to a contradiction.  
\end{proof}
}
\eat{
\begin{citeddefinition}{\cite{DBLP:journals/dam/MurakamiU14}}\textsc{(Private Neighbor)}
	\label{def:pn}
Let $D\subseteq \nodes(G)$.
A vertex $v\in \nodes(G){
\setminus} D$ is a \e{private neighbor} of $D$ if there exists a vertex $u\in D$ such that $N(v)\cap D =\set{u}$. In this case we say that $v$ is $u$'s \e{private neighbor}.
\end{citeddefinition}
Another way to state Proposition~\ref{prop:DSMin} is to say that $D$ is a minimal dominating set if and only if $D$ is dominating, and for every $u\in D$ that does not have a private neighbor it holds that $N(u)\cap D=\emptyset$.\eat{
for every $u\in D$ it holds that $N(u)\cap D=\emptyset$ or $u$ has a private neighbor in $\nodes(G){\setminus} D$. When $N(u)\cap D=\emptyset$, we say that $u$ is its own private neighbor. When $u$ has a private neighbor in $\nodes(G){\setminus}D$, we say that $u$ has an \e{external private neighbor}.}
\ifranked We assume a weight function $w:\nodes(G)\rightarrow \real$ on the vertices of the graph, and for a subset $D\subseteq\nodes(G)$ denote $w(D)\eqdef \sum_{v\in D}w(v)$.
\fi
}
A \e{hypergraph} $\H$ is a pair $(\nodes(\H), \Hedges(\H))$, where $\nodes(\H)$, called the vertices of $\H$, is a finite set and $\Hedges(\H)\subseteq 2^{\nodes(\H)}{\setminus} \set{\emptyset}$ are called the hyperedges of $\H$. The \e{dual} of a hypergraph $\H$, denoted $\H^d$, is the hypergraph with vertices $\nodes(\H^d)\eqdef \set{y_e: e\in \Hedges(\H)}$, and hyperedges $\Hedges(\H^d)\eqdef \set{f_v: v\in \nodes(\H)}$, where $f_v\eqdef\set{y_e \in \nodes(\H^d) : v \in e}$.
\begin{citeddefinition}{\cite{DBLP:journals/siamcomp/EiterGM03}}\textsc{(Incidence Graph)}
	\label{def:incidenceGraph}
	The \e{incidence graph} of a hypergraph $\H$, denoted $I(\H)$, is the undirected, bipartite graph with vertex
	set $\nodes(I(\H))\eqdef \nodes(\H)\cup \set{y_e : e \in \Hedges(\H)}$ and edge set $\edges(I(\H))\eqdef \set{(x,y_e) : x \in \nodes(\H), e \in \Hedges(\H), \text{ and } x \in e}$. 
\end{citeddefinition}
Observe that a hypergraph $\H$ and its dual $\H^d$ have the same incidence graph.
The size of a hypergraph $\H$, denoted $||\H||$, is $|\nodes(\H)|+\sum_{e \in \Hedges(\H)}|e|$. 
\eat{
A \e{dominating set} of a graph $G$ is a subset of vertices $D \subseteq \nodes(G)$, such that every vertex not in $D$ is adjacent to at least one member in $D$. A dominating set $D$ is minimal if no proper subset of $D$ is a dominating set.} A \e{transversal} of a hypergraph $\H$ is a subset  $\M \subseteq \nodes(\H)$ that intersects every hyperedge $e\in \Hedges(\H)$. That is, $\M \cap e \neq \emptyset$ for every $e \in \Hedges(\H)$. It is a \e{minimal transversal} if it does not contain any transversal as a proper subset.
It is well known that a transversal $\M \subseteq \nodes(\H)$ is minimal if and only if for every $u \in \M$, there exists a hyperedge $e_u \in \Hedges(\H)$, such that $e_u\cap \M=\set{u}$~\cite{DBLP:journals/dam/MurakamiU14}. An \e{edge cover} of a hypergraph $\H$ is a subset $\D\subseteq \Hedges(\H)$ such that $\bigcup_{e\in \D}e=\nodes(\H)$.  It is a \e{minimal edge cover} if it does not contain any edge cover as a proper subset. The following shows that \textsc{trans-enum} and \textsc{cover-enum} are equivalent. 
\def\transversalCover{
	A subset $\D\subseteq \Hedges(\H)$ is a minimal edge cover of $\H$ if and only if the set $\set{y_e: e\in \D}$ is a minimal transversal of $\H^d$.
}
\begin{proposition}
	\label{prop:transversalCover}
	\transversalCover
\end{proposition}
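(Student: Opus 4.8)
The plan is to unwind the definition of the dual hypergraph $\H^d$ and of (minimal) transversal, and to observe that the map $e\mapsto y_e$ is a bijection between $\Hedges(\H)$ and $\nodes(\H^d)$ under which ``edge cover'' corresponds exactly to ``transversal'', and the two minimality conditions coincide. Throughout, for $\D\subseteq\Hedges(\H)$ write $Y_\D\eqdef\set{y_e : e\in\D}\subseteq\nodes(\H^d)$; since $e\mapsto y_e$ is injective, $\D\mapsto Y_\D$ is a bijection between subsets of $\Hedges(\H)$ and subsets of $\nodes(\H^d)$.

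First I would prove the ``plain'' correspondence. By definition $f_v=\set{y_e\in\nodes(\H^d) : v\in e}$, so for any $v\in\nodes(\H)$ we have $Y_\D\cap f_v\neq\emptyset$ if and only if there is some $e\in\D$ with $v\in e$. Hence $Y_\D$ meets every hyperedge of $\H^d$ — i.e., $Y_\D$ is a transversal of $\H^d$ — if and only if every $v\in\nodes(\H)$ belongs to some $e\in\D$, which is exactly the condition $\bigcup_{e\in\D}e=\nodes(\H)$ that $\D$ be an edge cover of $\H$.

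Next I would upgrade this to minimality using the cited characterization: a transversal $\M$ of a hypergraph is minimal if and only if for every $u\in\M$ there is a hyperedge containing $u$ but no other element of $\M$. Applying this to $\H^d$ with $\M=Y_\D$: the set $Y_\D$ is a minimal transversal of $\H^d$ if and only if, in addition to being a transversal, for every $e\in\D$ there is a vertex $v\in\nodes(\H)$ with $f_v\cap Y_\D=\set{y_e}$. Unwinding the definition of $f_v$ (and using injectivity of $e\mapsto y_e$), the condition $f_v\cap Y_\D=\set{y_e}$ says precisely that $v\in e$ and $v\notin e'$ for all $e'\in\D\setminus\set{e}$; that is, $v$ is covered by $e$ but by no other member of $\D$, so $\D\setminus\set{e}$ fails to cover $v$ and is not an edge cover. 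Thus ``for every $e\in\D$ such a $v$ exists'' is exactly the statement that no proper subset of $\D$ is an edge cover, i.e., that $\D$ is a minimal edge cover. Combining the two paragraphs yields both implications at once.

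I do not expect a genuine obstacle here; the only points requiring care are that $e\mapsto y_e$ is truly a bijection (so that set identities in $\H^d$ translate back faithfully to statements about $\D$), and the degenerate case in which some $v\in\nodes(\H)$ lies in no hyperedge of $\H$ — then $\H$ has no edge cover while $f_v=\emptyset$, so $\H^d$ has no transversal, and the equivalence holds vacuously.
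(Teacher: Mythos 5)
Your proof is correct and follows essentially the same route as the paper's: both rest on the observation that, under the bijection $e\mapsto y_e$, the condition $Y_\D\cap f_v\neq\emptyset$ is exactly the condition that $\D$ covers $v$, and then transfer minimality across this correspondence. The only (cosmetic) difference is that you handle minimality via the private-witness characterization of minimal transversals, whereas the paper argues directly that deleting an element from one side yields a cover/transversal on the other, contradicting minimality; both are equally valid, and your closing remark about the degenerate case of an uncovered vertex is a harmless extra.
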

\eat{
	\begin{proof}
		Recall that $\nodes(\H^d)\eqdef\set{y_e: e\in \Hedges(\H)}$, and $\nodes(\H^d)\eqdef \set{f_v: v\in \nodes(\H)}$, where $f_v \eqdef \set{y_e\in \nodes(\H^d): v\in e}$.
		Let $f_v\in \Hedges(\H^d)$. Since $\D$ is an edge cover of $\H$, then there exists a hyperedge $e\in \D$ such that $v\in e$. By definition, this means that $y_e\in f_v$, and hence $y_e\in f_v\cap \D$. Therefore, $\D$ is a transversal of $\H^d$. We now show that $\D$ is a minimal transversal of $\H^d$. Suppose that it is not, and that $\D'\eqdef \D{\setminus}\set{y_e}$ is a transversal of $\H^d$, where $y_e\in \D$. That is, $f_v\cap \D'\neq \emptyset$ for all $v\in \nodes(\H)$. But this means that $\bigcup_{y_e\in \D'}e=\nodes(\H)$, and hence $\D'\subset \D$ is an edge cover; a contradiction.
		
		For the other direction, we first show that $\D$ is an edge cover of $\H$. Let $v\in \nodes(\H)$. Since $\D$ is a transversal of $\H^d$, then $f_v \cap \D \neq \emptyset$ for all $v\in \nodes(\H)$. Therefore, there exists a vertex $y_e\in \D$ such that $y_e\in f_v$. By construction, this means that $v\in e$. In particular, $v\in \bigcup_{y_e\in \D}e$, and hence $\D$ is an edge cover of $\H$. If $\D$ is not minimal, then there exists a $\D'\subset \D$ that is an edge cover of $\H$. By the previous, this means that $\D'\subset \D$ is a transversal of $\H^d$. But this is a contradiction to the minimality of $\D$.
	\end{proof}
}

\begin{definition}[Tree Decomposition (TD)]\label{def:TreeDecomposition}
	A \e{Tree Decomposition (TD)} of a graph $G$ is a pair $\left(\T,\jointreeMapFunction\right)$
	where $\T$ is an undirected tree with nodes $\nodes(\T)$, edges $\edges(\T)$,
	and $\jointreeMapFunction$ is a
	function that maps each $u \in \nodes(\T)$ to a
	set of vertices $\jointreeMapFunction(u) \subseteq \nodes(G)$, called a \e{bag}, such
	that:
	\begin{enumerate}
		\item $\bigcup_{u \in \nodes(\T)}\jointreeMapFunction(u)=\nodes(G)$.
		\item For every $(s,t){\in} \edges(G)$ there is a node $u{\in} \nodes(\T)$ such that $\set{s,t} {\subseteq} \jointreeMapFunction(u)$.
		\item For every $v\in \nodes(G)$ the set $\setof{u\in \nodes(\T)}{v \in \jointreeMapFunction(u)}$ is connected in $\T$; this is called the \e{running intersection property}.
	\end{enumerate}
	The \e{width} of $\left(\T,\jointreeMapFunction\right)$ is the size of the largest bag minus one (i.e., $\max\set{|\jointreeMapFunction(u)|: u \in \nodes(\T)}-1$). The \e{tree-width} of $G$, denoted $\tw(G)$, is the minimal width of a TD for $G$.
\end{definition}
\eat{
A TD $(\T,\jointreeMapFunction)$ is \e{simple} if $\jointreeMapFunction(u_1) \not\subseteq \jointreeMapFunction(u_2)$ for any two distinct nodes $u_1,u_2 \in \nodes(\T)$.}
There is more than one way to define the treewidth of a hypergraph~\cite{DBLP:journals/siamcomp/EiterGM03}. In this work, the treewidth of a hypergraph $\H$ is defined to be the treewidth of its incidence graph $I(\H)$~\cite{DBLP:journals/siamcomp/EiterGM03}.

Let $\left(\T,\jointreeMapFunction\right)$ be a TD of a graph $G$. Root the tree at node $r\in \nodes(\T)$ by directing the edges  $\edges(\T)$ from $r$ to the leaves. We say that such a TD is \e{rooted}; in notation $(\T_r,\jointreeMapFunction)$.
For a node $u\in \nodes(\T_r)$, we denote by $\parent(u)$ its parent in $\T_r$ (where $\parent(u)\eqdef\texttt{nil}$ if $u$ is the root node), by $\T_u$ the subtree of $\T_r$ rooted at node $u$, by $V_u\eqdef\bigcup_{t \in \nodes(\T_u)}\jointreeMapFunction(t)$, and by $G_u$ the graph induced by $V_u$ (i.e., $G_u\eqdef G[V_u]$). By this definition, $G_r=G$. We will use `vertices' to refer to the vertices of the graph $G$, and `nodes' to refer to the vertices of the TD.

\begin{citeddefinition}{Disjoint-Branch TD (DBTD)~\cite{DBLP:journals/ipl/Duris12,DBLP:journals/dm/Gavril75}}
	\label{def:disjointBranchTD}
	A rooted TD $(\T_r,\jointreeMapFunction)$ is \e{disjoint branch} if, for every $u\in \nodes(\T_r)$, with children $u_1,\dots,u_k$, it holds that $\jointreeMapFunction(u_i)\cap \jointreeMapFunction(u_j)=\emptyset$ for all $i\neq j$. A TD $(\T,\jointreeMapFunction)$ is \e{disjoint branch} if it has a node $r \in \nodes(T)$ such that $(\T_r,\jointreeMapFunction)$ is disjoint branch.
\end{citeddefinition}
First characterized by Gavril~\cite{DBLP:journals/dm/Gavril75}, 
graphs that have a disjoint branch TD are called \e{rooted directed path graphs}. In~\cite{DBLP:journals/dam/Courcelle12}, disjoint branch TDs are called \e{special tree decompositions}.
For the purpose of enumeration, our algorithm constructs a disjoint-branch TD whose width is at most twice the treewidth of the input graph $G$.
It is common practice to formulate dynamic programming algorithms over TDs that are \e{nice}~\cite{DBLP:books/sp/Kloks94}.
\begin{citeddefinition}{Nice Tree Decomposition~\cite{DBLP:books/sp/Kloks94}}
	\label{def:niceTD}
	A nice TD is a rooted TD $(\T,\jointreeMapFunction)$ with root node $r$\eat{where $\jointreeMapFunction(r)=\emptyset$}, in which each node $u\in \nodes(\T)$ is one of the following types:
	\begin{itemize}
		\item Leaf node: a leaf of $\T$ where $\jointreeMapFunction(u)=\emptyset$.
		\item Introduce node: has one child node $u'$ where $\jointreeMapFunction(u)=\jointreeMapFunction(u')\cup \set{v}$, and $v\notin \jointreeMapFunction(u')$.
		\item Forget node: has one child node $u'$ where $\jointreeMapFunction(u)=\jointreeMapFunction(u')\setminus \set{v}$, and $v\in\jointreeMapFunction(u')$.
		\item Join node: has two child nodes $u_1$, $u_2$ where $\jointreeMapFunction(u)=\jointreeMapFunction(u_1)=\jointreeMapFunction(u_2)$. 
	\end{itemize}
\end{citeddefinition}
Given a TD (Definition~\ref{def:TreeDecomposition}) of a graph $G$ of width $w$, one can in time $O(w^2\cdot |\nodes(G)|)$ compute a nice tree decomposition of $G$ of width $w$ that has at most $O(w|\nodes(G)|)$ nodes~\cite{ParameterizedAlgs,DBLP:books/sp/Kloks94}.

\eat{
\subsection{Enumeration Algorithms}
\label{subsec:prelimsEnumeration}
An \e{enumeration problem} arises when the set of solutions to a problem is too large (e.g., exponential in that of the input) to wait for it to be computed and returned in its entirety. An \e{enumeration algorithm} lists all solutions to the problem without repetitions. Very often, solutions are \e{ranked} by their estimated desirability, and then we require \e{ranked enumeration}, which means that the solutions are returned in the order of their ranks (e.g., by increasing weight).\eat{A \e{Top-$K$} algorithm is a ranked enumeration algorithm that returns the $K$ highest-ranking solutions.}
The yardstick used to measure the efficiency of enumeration algorithms is the \e{delay} between the output of successive solutions. An enumeration algorithm runs in \e{polynomial delay} if the delay is polynomial in the size of the input.
Weaker notions are \e{incremental polynomial delay} if the delay between the $i$th and the $(i+1)$st solutions is bounded by a polynomial of the size of the input plus $i$, and \e{polynomial total time} if the total computation time is polynomial in the sizes of both the input and the output.

In this paper, we view the treewidth $w$ of a graph as a parameter, and present an algorithm for enumerating the minimal dominating sets that, following an FPT-preprocessing phase that takes time $O(nw3^{2w})$, enumerates the minimal dominating sets in fixed-parameter-linear delay $O(nw)$. Importantly, our enumeration algorithm can enumerate the minimal dominating sets under \e{inclusion} and \e{exclusion constraints}. That is, for a pair of disjoint subsets $I,U \subseteq \nodes(G)$, we can enumerate only the minimal dominating sets of $G$ that include the vertices of $I$, and exclude the vertices of $U$, with the same complexity guarantees.

}

\section{From \textsc{Trans-Enum} to \textsc{Dom-Enum}}
\label{sec:TransEnumReduction}
\eat{
Let $\mathcal{A}$ denote an algorithm that receives as input a graph $G$, and enumerates the minimal $[\sigma,\rho]$-dominating sets of $G$.\eat{We let $\texttt{setup}(\mathcal{A})$ and $\texttt{delay}(\mathcal{A})$ denote the runtime complexities of the set up phase and delay of $\A$ respectively.}
We assume that $\mathcal{A}$ has the additional property that any vertex $v \in \nodes(G)$ can be designated to belong to (be excluded from) the solution set. That is, for a fixed subset $X \subseteq \nodes(G)$, algorithm $\mathcal{A}$ can enumerate all minimal $[\sigma,\rho]$-dominating sets $D$ where $D \cap X=\emptyset$ ($D \supseteq X$) without affecting its setup or delay runtime bounds. We call such an enumeration algorithm \e{constraint-aware}. We prove the following.
\begin{theorem}
\label{thm:enumH}
Let $\mathcal{A}$ be an algorithm that enumerates the minimal dominating sets of a graph $G$ with delay $O(|G|\cdot f(w))$, following a preprocessing phase that takes time $O(|G|\cdot g(w))$, where $f$ and $g$ are computable functions, and $w=\tw(G)$.
If $\A$ is constraint-aware, then it enumerates the minimal transversals of a hypergraph $\H$ with delay $O(||\H||\cdot f(k+1))$, following a preprocessing phase that takes time $O(||\H||\cdot g(k+1))$ where $k=\tw(I(\H))$.
\end{theorem}
}

In this section, we show how the \textsc{trans-enum} problem over a hypergraph $\H$ with treewidth $w$ is translated to the \textsc{dom-enum} problem over a graph $G$ with treewidth $w'\leq w+1$. Due to the equivalence between \textsc{trans-enum} and $\textsc{cover-enum}$ (Proposition~\ref{prop:transversalCover}), the translation also carries over to the $\textsc{cover-enum}$ problem.
We associate with the hypergraph $\H$ a tripartite graph $B(\H)$ that is obtained from the incidence graph $I(\H)$ (Definition~\ref{def:incidenceGraph}) by adding a new vertex that is made adjacent to all vertices in $\nodes(\H)$. Formally, the nodes of $B(\H)$ are defined $\nodes(B(\H))\eqdef \nodes(I(\H))\cup \set{v}$ where $v\notin \nodes(I(\H))$, and $\edges(B(\H))\eqdef \edges(I(\H))\cup \set{(u,v):u\in \nodes(\H)}$. Note that for every vertex in $\set{y_e : e \in \Hedges(\H)}$, it holds that $N_{B(\H)}(y_e)=e$, $N_{B(\H)}(v)=\nodes(\H)$, and $v\notin N_{B(\H)}(y_e)$, for every $e\in \Hedges(\H)$. The construction of $B(\H)$ takes time $O(||\H||)$, and is clearly polynomial. 
\def\thmdomenum{
		A set $\M {\subset} \nodes(\H)$ is a minimal transversal of $\H$ if and only if $\M {\cup} \set{v}$ is a minimal dominating set of $B(\H)$.
}
\begin{theorem}
	\label{thm:domenum}
\thmdomenum
\end{theorem}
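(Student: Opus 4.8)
The plan is to prove the biconditional directly from the combinatorial characterizations of minimal transversals and minimal dominating sets, exploiting the very restricted neighborhood structure of $B(\H)$. Before splitting into the two directions I would record three facts about $B(\H)$: (a) $N_{B(\H)}(v)=\nodes(\H)$; (b) for every hyperedge $e$, $N_{B(\H)}(y_e)=e\subseteq\nodes(\H)$, and in particular $v\notin N_{B(\H)}(y_e)$; and (c) for $x\in\nodes(\H)$, $N_{B(\H)}(x)=\set{v}\cup\set{y_e:x\in e}$. I would also fix the notation $D\eqdef\M\cup\set{v}$ and observe once and for all that, since $\M\subseteq\nodes(\H)$ while $v\notin\nodes(\H)$, the set $D$ contains $v$ but no vertex of the form $y_e$; hence $N_{B(\H)}(y_e)\cap D=e\cap\M$ for every $e$, so the only way $y_e$ can be dominated by $D$ is through a vertex of $e\cap\M$.

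For the ``only if'' direction, assume $\M$ is a minimal transversal of $\H$ (and recall that $\M$ is a proper subset of $\nodes(\H)$ by the statement of the theorem). I would first verify that $D$ dominates $B(\H)$: the vertices outside $D$ are exactly the $y_e$'s together with the vertices of $\nodes(\H)\setminus\M$; each $y_e$ has a neighbor in $\M\subseteq D$ because $\M\cap e\neq\emptyset$, and each vertex of $\nodes(\H)\setminus\M$ is adjacent to $v\in D$. For minimality I would show that $D\setminus\set{u}$ fails to dominate for every $u\in D$. If $u=x\in\M$, apply the characterization of minimal transversals recalled in Section~\ref{sec:Preliminaries} to get a hyperedge $e_x$ with $e_x\cap\M=\set{x}$; then, using $v\notin e_x$, one computes $N_{B(\H)}(y_{e_x})\cap(D\setminus\set{x})=(e_x\cap\M)\setminus\set{x}=\emptyset$, so $y_{e_x}$ is left undominated. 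If $u=v$, then because $\M$ is a \emph{proper} subset of $\nodes(\H)$ there is a vertex $w\in\nodes(\H)\setminus\M$, and $N_{B(\H)}(w)\cap(D\setminus\set{v})=N_{B(\H)}(w)\cap\M=\emptyset$ (as $v\notin\M$ and $\M$ contains no $y_e$), so $w$ is left undominated. Hence $D$ is a minimal dominating set.

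For the ``if'' direction, assume $D=\M\cup\set{v}$ is a minimal dominating set of $B(\H)$. Since every $y_e$ lies outside $D$ and is not adjacent to $v$, domination forces $e\cap\M\neq\emptyset$; as this holds for all $e$, $\M$ is a transversal of $\H$. To prove minimality, suppose toward a contradiction that some $x\in\M$ can be removed with $\M\setminus\set{x}$ still a transversal. I would then check that $D'\eqdef D\setminus\set{x}=(\M\setminus\set{x})\cup\set{v}$ still dominates $B(\H)$: each $y_e$ keeps a neighbor in $D'$ because $(\M\setminus\set{x})\cap e\neq\emptyset$, and every vertex of $\nodes(\H)\setminus(\M\setminus\set{x})$ --- which now includes $x$ --- is adjacent to $v\in D'$. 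This contradicts the minimality of $D$, which is the desired contradiction.

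I do not anticipate a real obstacle: the argument is essentially a translation through the neighborhood dictionary of $B(\H)$. The two points that need attention are, first, keeping track that $D$ never contains a $y_e$-vertex, which is exactly what makes the identities $N_{B(\H)}(y_e)\cap D=e\cap\M$ valid and lets the non-adjacency of $v$ to the $y_e$'s carry the ``only if'' minimality argument; and second, the genuine use of the hypothesis that $\M$ is a \emph{proper} subset of $\nodes(\H)$ when showing that $v$ cannot be dropped from $D$. Without properness the statement would fail --- for instance, if every hyperedge of $\H$ is a singleton then $\nodes(\H)$ is its unique minimal transversal, yet $\nodes(\H)\cup\set{v}$ is not a minimal dominating set of $B(\H)$, since $\nodes(\H)$ alone already dominates it.
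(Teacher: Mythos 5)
Your proof is correct and follows essentially the same route as the paper's: verify domination directly from the neighborhood structure of $B(\H)$, use the private-hyperedge characterization of minimal transversals for minimality in one direction, and argue minimality by contradiction in the other. If anything, yours is slightly more complete, since you explicitly handle the removal of $v$ (where properness of $\M\subset\nodes(\H)$ is genuinely needed) and explicitly verify that $\M$ is a transversal in the converse direction, two steps the paper's proof leaves implicit.
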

\eat{
\begin{proof}	
	If $\M$ is a transversal, then 
	every vertex in $\set{y_e : e \in \Hedges(\H)}$ is adjacent to at least one vertex in $\M$. By the assumption of the Theorem, $\nodes(\H)\setminus \M\neq \emptyset$. Since the vertices in $\nodes(\H)\setminus \M$ are dominated by $v$, we conclude that $\M \cup \set{v}$ is a dominating set of $B(\H)$. Since $\M$ is a minimal transversal, then for any $\M' \subset \M$, it holds that $e \cap \M'=\emptyset$ for some hyperedge $e \in \Hedges(\H)$. But this means that $N(y_e) \cap \M'=\emptyset$ for $y_e \in \nodes(B(\H))$. Hence, $\M\cup \set{v}$ is a minimal dominating set of $\B(\H)$.
	
	If $\M\cup \set{v}$ is a minimal dominating set of $B(\H)$, then $\M$ must be a minimal transversal of $\H$. If not, then let $\M'\subset \M$ be a transversal of $\H$. But then, by the previous, $\M' \cup \set{v}\subset \M\cup\set{v}$ is a dominating set of $B(\H)$; a contradiction.
\end{proof}
}
\eat{
\begin{lemma}
	\label{lem:transIsDS}
	Let $\M$ be a transversal of $\H$. Then $\M \cup \set{v}$ is a dominating set of $B(\H)$.
\end{lemma}
\begin{proof}
Since $\M$ is a transversal, then every vertex in $\set{y_e:e\in \Hedges(\H)}$ is adjacent to at least one vertex in $\M$. Every vertex in $\nodes(\H)\setminus \M$ is dominated by $v$.
\end{proof}
\begin{lemma}
	\label{lem:reductionToEnumDS}
	Let $\H$ be a hypergraph and let $D$ be a minimal dominating set of $B(\H)$, such that $D \subseteq \nodes(\H)\cup \set{v}$. Then $D\setminus \set{v}$ is a minimal transversal of $\H$.
\end{lemma}
\begin{proof}
	Since $D \cap \set{y_e \mid e\in \Hedges(\H)}=\emptyset$, every vertex $y_e$ must be incident to a vertex in $D$. Hence, $D\setminus \set{v}$ is a transversal of $\H$. If $D$ is a minimal dominating set, then by Lemma~\ref{lem:transIsDS} we conclude that $D\setminus \set{v}$ is a minimal transversal.
\end{proof}
An immediate consequence of Lemmas~\ref{lem:transIsDS} and~\ref{lem:reductionToEnumDS} is that a subset $\M \subset \nodes(\H)$ is a minimal transversal of $\H$ if and only if $\M\cup \set{v}$ is a minimal dominating set of $B(\H)$.
}
\eat{
\begin{corollary}
	If $\nodes(\H)$ is not a minimal transversal of $\H$, then $\sch(B(\H))=\nodes(\H) \cup \set{\M{\cup}\set{v}: \M \text{ is a minimal transversal of }\H}$. Otherwise, $\sch(B(\H))=\nodes(\H)$.
\end{corollary}
\begin{proof}
First, observe that $\nodes(\H)$ is a minimal dominating set of $B(\H)$ (i.e., $\nodes(\H)\in \sch(B(\H))$).
If $\nodes(\H)$ is not a minimal transversal of $\H$, then $\M\subsetneq \nodes(\H)$ for every minimal transversal $\M$ of $\H$. The first part of the corollary then immediately follows from Theorem~\ref{thm:domenum}. Otherwise, $\nodes(\H)$ is the unique minimal transversal of $\H$. But this means that every minimal dominating $D\in \sch(B(\H))$ must include $\nodes(\H)$ (i.e., $D\supseteq \nodes(H)$). By minimality, we have that $D=\nodes(\H)$, which makes it the unique minimal dominating set of $B(\H)$.
\end{proof}
}
If $\M=\nodes(\H)$ is a minimal transversal of $\H$, then it is the unique minimal transversal of $\H$. Consequently, $\nodes(\H)$ is the unique minimal dominating set of $B(\H)$ that is contained in $\nodes(\H)\cup \set{v}$.
Hence, the problem of enumerating the minimal transversals of $\H$ is reduced to that of enumerating the minimal dominating sets of $B(\H)$ that exclude the vertex-set $\set{y_e : e\in \Hedges(\H)}$, and include the vertex $v$. Symmetrically, the problem of enumerating the minimal edge covers of $\H$ is reduced to enumerating the minimal dominating sets of the tripartite graph $C(\H)$ that results from $I(\H)$ by adding a new vertex $v\notin \nodes(\I(\H))$ as a neighbor to all vertices $\set{y_e:e\in \Hedges(\H)}$, and enumerating the minimal dominating sets of $C(\H)$ that exclude the vertex-set $\set{u : u \in  \nodes(\H)}$, and include the vertex $v$. 
In Section~\ref{sec:encodingTransEnum}, we show how these restrictions are encoded.

Suppose that $\H$ is a hypergraph with treewidth $k$. That is, the treewidth of the incidence graph $\tw(I(\H))=k$~\cite{DBLP:journals/siamcomp/EiterGM03}. 
Next, we show that the treewidth of $B(\H)$ and $C(\H)$ is at most $k+1$. This claim is directly implied by the following lemma.
\def\lemmaaddvertexwidth{
		Let $(\T,\jointreeMapFunction)$ be a tree-decomposition for the graph $G$, and let $U\subseteq \nodes(G)$. Let $G'$ be the graph that results from $G$ by adding a vertex $v\notin \nodes(G)$ as a neighbor to all vertices $U$. That is, $\nodes(G')=\nodes(G)\cup \set{v}$ and $\edges(G')=\edges(G)\cup \set{(v,w): w\in U}$. 
	If the width of $(\T,\jointreeMapFunction)$ is $k$, then $G'$ has a tree-decomposition whose width is $k'\leq k+1$.
}
\begin{lemma}
	\label{lemma:addvertexwidth}
	\lemmaaddvertexwidth
\end{lemma}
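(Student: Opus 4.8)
The plan is the obvious one: reuse the tree $\T$ unchanged and place the new vertex $v$ into every bag. Concretely, I would define a new bag function $\jointreeMapFunction'$ on $\nodes(\T)$ by $\jointreeMapFunction'(u)\eqdef \jointreeMapFunction(u)\cup\set{v}$ for every $u\in\nodes(\T)$, and then show that $(\T,\jointreeMapFunction')$ is a tree decomposition of $G'$ whose width is at most $k+1$.

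To establish this I would verify the three conditions of Definition~\ref{def:TreeDecomposition} in turn. For coverage, $\bigcup_{u\in\nodes(\T)}\jointreeMapFunction'(u)=\left(\bigcup_{u\in\nodes(\T)}\jointreeMapFunction(u)\right)\cup\set{v}=\nodes(G)\cup\set{v}=\nodes(G')$. For the edge condition, an old edge $(s,t)\in\edges(G)$ is still contained in some bag $\jointreeMapFunction(u)\subseteq\jointreeMapFunction'(u)$ by assumption on $(\T,\jointreeMapFunction)$; a new edge $(v,w)$ with $w\in U$ is handled by picking any node $u$ with $w\in\jointreeMapFunction(u)$ (such $u$ exists because $w\in\nodes(G)$ is covered by the original decomposition), and then $\set{v,w}\subseteq\jointreeMapFunction'(u)$. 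For the running intersection property, note that for any $w\in\nodes(G)$ the set $\setof{u\in\nodes(\T)}{w\in\jointreeMapFunction'(u)}$ equals $\setof{u\in\nodes(\T)}{w\in\jointreeMapFunction(u)}$, which is connected by hypothesis; and for $v$ itself the set $\setof{u\in\nodes(\T)}{v\in\jointreeMapFunction'(u)}$ is all of $\nodes(\T)$, which is trivially connected.

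For the width bound: since $v\notin\nodes(G)$ we have $v\notin\jointreeMapFunction(u)$ for every $u$, so $|\jointreeMapFunction'(u)|=|\jointreeMapFunction(u)|+1\leq (k+1)+1=k+2$, and hence the width of $(\T,\jointreeMapFunction')$ is at most $k+1$, proving the lemma. In particular, instantiating $(\T,\jointreeMapFunction)$ as an optimal TD of $I(\H)$ and taking $U=\nodes(\H)$ (resp.\ $U=\set{y_e:e\in\Hedges(\H)}$) yields the stated consequence $\tw(B(\H))\leq k+1$ (resp.\ $\tw(C(\H))\leq k+1$).

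There is essentially no real obstacle here; the only points that warrant a moment's care are the running intersection property for $v$ (immediate, since $v$ occupies every bag and the whole tree is connected) and the observation that the bag-size increase is exactly one — which relies on $v$ being a genuinely new vertex not already present in any bag of the original decomposition.
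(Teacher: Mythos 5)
Your proposal is correct and is essentially identical to the paper's own proof: both add $v$ to every bag of the given decomposition and observe that all three conditions of Definition~\ref{def:TreeDecomposition} are preserved while each bag grows by exactly one. Your write-up just spells out the verification of the three conditions in more detail than the paper does.
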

\eat{
\blue{
	\begin{proof}
		Let $(\T,\jointreeMapFunction')$ be the tree that results from $(\T,\jointreeMapFunction)$ by adding vertex $v$ to all bags of $(\T,\jointreeMapFunction)$. In other words, for every $u\in \nodes(\T)$, we let $\jointreeMapFunction'(u)\eqdef\jointreeMapFunction(u)\cup \set{v}$. Since $(\T,\jointreeMapFunction)$ is a tree-decomposition for $G$, then clearly $(\T,\jointreeMapFunction')$ meets the three conditions in Definition~\ref{def:TreeDecomposition} with respect to $G'$. Therefore, $(\T,\jointreeMapFunction')$ is a tree-decomposition for $G'$. Since the size of every bag increased by exactly one vertex, the width of $(\T,\jointreeMapFunction')$ is $k+1$.
	\end{proof}
}
}
\eat{
\blue{
\begin{citedlemma}{\cite{DBLP:books/daglib/0023091}}
	Let $(\T,\jointreeMapFunction)$ be a tree-decomposition for the graph $G$, and let $x,y\in \nodes(G)$ be non-adjacent vertices of $G$. Let $G'$ be the graph that results from $G$ by adding the edge $(x,y)$. If the width of $(\T,\jointreeMapFunction)$ is $k$, then $G'$ has a tree-decomposition whose width is $k'\leq k+1$.
\end{citedlemma}}\blue{
\begin{proof}
	We consider two cases. First, suppose that there is a bag $u\in \nodes(\T)$ such that $x,y \in \jointreeMapFunction(u)$. In this case, $(\T,\jointreeMapFunction)$ is also a tree-decomposition for $G'$, because for every edge of $G'$, and for the edge $(x,y)\in \edges(G')$ in particular, there exists a node in $\nodes(\T)$ that contains its endpoints.
	Otherwise, let $u,w\in \nodes(\T)$ such that $x\in \jointreeMapFunction(u)$, and $y\in \jointreeMapFunction(w)$. By item (1) of Definition~\ref{def:TreeDecomposition}, such nodes must exist, and by our assumption they are distinct. Add $y$ to the bag $\jointreeMapFunction(u)$. This addition may have violated the running intersection property (item (3) in Definition~\ref{def:TreeDecomposition}). To fix this, add the vertex $y$ to all nodes on the (single) path between $u$ and $w$ in $\T$.
	The resulting tree meets all conditions of Definition~\ref{def:TreeDecomposition} with respect to the graph $G'$, and its width has increased by at most one.
\end{proof}
}
}

The reduction from \textsc{Trans-Enum} to \textsc{Dom-Enum} of Kant\'{e} et al.~\cite{DBLP:journals/siamdm/KanteLMN14} does not preserve the bounded-treewidth property, and generates a graph whose treewidth is $O(n)$, regardless of the treewidth of the hypergraph, namely $\tw(I(\H))$. Our reduction is similar to theirs, but guarantees that the increase to the treewidth is at most one. That is, in our reduction $\tw(B(\H))\leq \tw(I(\H))+1$ (and $\tw(C(\H))\leq \tw(I(\H))+1$), hence preserving the bounded treewidth property. \eat{Additionally, our reduction relies on the fact that we can enumerate the minimal dominating sets of a graph under exclusion constraints.}

\section{Overview of Algorithm}
\label{sec:overview}
In this section, we provide a high-level view of the algorithm for enumerating the minimal dominating sets of a graph $G$, which applies the following simple characterization.
\def\propDSMin{
	A dominating set $D\subseteq \nodes(G)$ is minimal if and only if for every $u\in D$, one of the following holds: (i) $N(u)\cap D=\emptyset$, or (ii) there exists a vertex $v\in N(u){\setminus} D$, such that $N(v)\cap D=\set{u}$.
}
\begin{proposition}
	\label{prop:DSMin}
	\propDSMin
\end{proposition}
\eat{
	\begin{proof}
		Let $D$ be a minimal dominating set, and let $u\in D$. Since $D'\eqdef D{\setminus}\set{u}$ is no longer dominating, there is a vertex $v\in \nodes(G){\setminus} D'$ such that $N(v)\cap D'=\emptyset$. If $v=u$, then $N(u)\cap D'= N(u)\cap D=\emptyset$. If $v\neq u$, then $v\in \nodes(G){\setminus} D$. Since $N(v)\cap D'=\emptyset$, and $N(v)\cap D\neq \emptyset$, we conclude that $N(v)\cap D=\set{u}$.
		
		Now, let $D\subseteq \nodes(G)$ be a dominating set for which the conditions of the proposition hold, and let $D'\eqdef D{\setminus} \set{u}$ for a vertex $u\in D$. If $D'$ is a dominating set, then $N(u)\cap D'=N(u)\cap D\neq \emptyset$. Also, $|N(v)\cap D'|\geq 1$ for every $v\in N(u){\setminus}D'$. In particular, $|N(v)\cap D|\geq 2$ for every $v\in N(u){\setminus}D$. But this brings us to a contradiction.  
	\end{proof}
}
\begin{citeddefinition}{\cite{DBLP:journals/dam/MurakamiU14}}\textsc{(Private Neighbor)}
	\label{def:pn}
	Let $D\subseteq \nodes(G)$.
	A vertex $v\in \nodes(G){
		\setminus} D$ is a \e{private neighbor} of $D$ if there exists a vertex $u\in D$ such that $N(v)\cap D =\set{u}$. In this case we say that $v$ is $u$'s \e{private neighbor}. By \e{private neighbors of $D$} we refer to the vertices in $\nodes(G){\setminus}D$ that are private neighbors of $D$.
\end{citeddefinition}
Another way to state Proposition~\ref{prop:DSMin} is to say that $D$ is a minimal dominating set if and only if $D$ is dominating, and for every $u\in D$ that does not have a private neighbor, it holds that $N(u)\cap D=\emptyset$.
In other words, if $D\in \sch(G)$, then every vertex $u\in D$ either has a private neighbor $v\in \nodes(G){\setminus}D$ where $N(v)\cap D=\set{u}$, or $N(u)\cap D=\emptyset$. Therefore, the minimal dominating set $D\in \sch(G)$ partitions the vertices $\nodes(G)$ into four categories (or labels): 
\begin{enumerate}
	\item Vertices in $D$ that have a private neighbor in $\nodes(G){\setminus}D$ are labeled $[1]_\sigma$.
	\item Vertices $v\in D$ that have no private neighbor in $\nodes(G){\setminus}D$, (i.e., $N(v)\cap D=\emptyset$), are labeled $\sigma_I$.
	\item Vertices $v\in \nodes(G){\setminus} D$ that are private neighbors of $D$, (i.e., $|N(v)\cap D|=1$), are labeled $[1]_\omega$.
	\item Vertices $v\in \nodes(G){\setminus} D$ that are not private neighbors of $D$, (i.e., $|N(v)\cap D|\geq 2$), are labeled $[2]_\rho$.
\end{enumerate}
It is easy to see that distinct minimal dominating sets induce distinct labelings to the vertices of $G$. Therefore, we view $D$ as assigning labels to vertices according to their category.
\begin{example}
	Consider the graph $G$ in Figure~\ref{fig:G}, and the minimal dominating set $D=\set{d,f,i}$. Vertices $a,b,c,e$, and $h$ have exactly one neighbor in $D$, and hence are assigned the label $[1]_\omega$; since $N(g)\cap D=\set{f,i}$, then $g$ is assigned the label $[2]_\rho$. The private neighbors of $d$ are $\set{b,c,e}$, and the private neighbors of $f$ are $\set{a,h}$. Therefore, vertices $d$ and $f$ are assigned label $[1]_\sigma$. Finally, $i$ is assigned label $\sigma_I$ because it has no private neighbors (i.e., $|N(g)\cap D|\geq 2$), and $N(i)\cap D=\emptyset$. 
\end{example}
Next, we refine our labeling as follows. Let $\nodes(G)=\set{v_1,\dots,v_n}$, where the indices represent a complete ordering of $\nodes(G)$ (in Section~\ref{sec:ordering} we discuss this ordering and its properties). We let $V_i\eqdef \set{v_1,\dots,v_i}$ denote the first $i\leq n$ vertices in the ordering. 
\begin{definition}[Labels Induced by minimal dominating set]
	\label{def:induceLabels}
	Let $D\in \sch(G)$ be a minimal dominating set. We define the \e{labels induced by $D$ on $V_i\eqdef \set{v_1,\dots,v_i}$} as follows.
	\begin{enumerate}
\eat{	\item \sout{	\e{Vertices in $D\cap V_i$ that have a private neighbor in $\nodes(G){\setminus}D$.}
		Vertex $v\in V_i$ is labeled $[0]_\sigma$ if and only if $v\in D$, and $v$ has a private neighbor in $V_i{\setminus}D$; it is labeled $[1]_\sigma$ if and only if $v\in D$, it has a private neighbor in $\nodes(G){\setminus}(V_i\cup D)$, and no private neighbor in $V_i{\setminus}D$.} }
		\item \e{Vertices in $D\cap V_i$ that have a private neighbor in $\nodes(G){\setminus}D$.} Vertex $v\in V_i$ is labeled $[0]_\sigma$ if and only if $v\in D$, $v$ has a private neighbor in $V_i{\setminus}D$, and no private neighbors in $\nodes(G){\setminus}V_i$; it is labeled $[1]_\sigma$ if and only if $v\in D$, and it has a private neighbor in $\nodes(G){\setminus}V_i$.
		\item A vertex $v\in D\cap V_i$ is labeled $\sigma_I$ if and only if $v$ has no private neighbors in $\nodes(G){\setminus}D$, and $N(v)\cap D=\emptyset$.
		\item \e{Vertices $v\in V_i{\setminus} D$ that are private neighbors of $D$, and hence $|N(v)\cap D|=1$.} 
		Vertex $v\in V_i{\setminus} D$ is labeled $[0]_\omega$ if and only if $v$ has a single neighbor in $V_i\cap D$.
		Otherwise, $v$ has a single neighbor in $(\nodes(G){\setminus}V_i)\cap D$, and is labeled $[1]_\omega$. 
		\item \e{Vertices $v\in V_i{\setminus} D$ where $|N(v)\cap D|\geq 2$.}
		Let $j\eqdef \min\set{2,|N(v)\cap V_i \cap D|}$. Then $v$ has at least $2-j$ neighbors in $D\cap (\nodes(G){\setminus}V_i)$, and is labeled $[2-j]_\rho$.
	\end{enumerate}
\end{definition}
Overall, we have eight labels that we partition to three groups: $F_\sigma \eqdef \set{[0]_\sigma,[1]_\sigma,\sigma_I}$, $F_\omega\eqdef  \set{[0]_\omega,[1]_\omega}$, and $F_\rho\eqdef \set{[0]_\rho,[1]_\rho,[2]_\rho}$. We let $F\eqdef F_\sigma \cup F_\omega \cup F_\rho$. 
\begin{definition}[Minimal Dominating set consistent with assignment; extendable assignment]
	\label{def:consistentExtendable}
Let $\theta^i: V_i \rightarrow F$ denote a labeling to $V_i$. A minimal dominating set $D\in \sch(G)$ \e{is consistent with} $\theta^i$ if, for every $v\in V_i$, the label that $D$ induces on $v$ is $\theta^i(v)$ (Definition~\ref{def:induceLabels}). We say that the labeling $\theta^i$ is \e{extendable} if there exists a minimal dominating set $D\in \sch(G)$ that is consistent with $\theta^i$.
\end{definition}
\begin{example}
	Consider the path $v_1-v_2-v_3$. The labeling $\theta^1:\set{ v_1 {\gets} \sigma_I}$ is extendable because the minimal dominating set $D_1=\set{v_1,v_3}$ is consistent with $\theta^1$; vertex $v_1 \in D_1$, $v_1$ does not have any private neighbors in $\nodes(G){\setminus}D_1=\set{v_2}$ (because $v_2$ has two neighbors in $D_1$), and $N(v_1)\cap D_1=\emptyset$.
	On the other hand, the assignment $\theta^2: \set{ v_1 {\gets} [0]_\sigma, v_2 {\gets} [0]_\omega}$ is not extendable because the only minimal dominating set that contains $v_1$, and excludes $v_2$, must contain $v_3$, thus violating the constraint that $v_2$ have no dominating neighbors among $\nodes(G){\setminus}V_2=\set{v_3}$, or a single neighbor in the minimal dominating set (item (3) of Definition~\ref{def:induceLabels}).
	Also $\theta^1: \set{v_1 {\gets} [1]_\sigma}$ is not extendable for the same reason; $v_1$ must contain a private neighbor, which can only be $v_2$. If $v_2$ is not part of the solution, then $v_3$ must be included in it. But then, $v_2$ is no longer a private neighbor because both its neighbors (i.e., $v_1$ and $v_3$) are dominating, thus violating the constraint $\theta^1(v_1)=[1]_\sigma$.\eat{ (item (1) of Definition~\ref{def:induceLabels}).}
\end{example}
We denote by $\theta^0$ the empty assignment, which is vacuously extendable.
For every $i\in [1,n]$, we denote by $\bTheta^i$ the set of extendable assignments $\theta^i:\set{v_1,\dots,v_i}\rightarrow F$. Formally:
\begin{align}
	\label{eq:extendable}
	\bTheta^i \eqdef \set{\theta^i: \set{v_1,\dots,v_i}\rightarrow F : \theta^i \text{ is extendable}} && \forall i\in [0,n]
\end{align}
Let $\theta^i:V_i \rightarrow F$ be an assignment. For brevity, we denote by $\nodes_\sigma(\theta^i)$, and $\nodes_\omega(\theta^i)$ the vertices in $V_i$ that are assigned a label in $F_\sigma$ and $F_\omega$, respectively. Formally, $\nodes_\sigma(\theta^i)\eqdef \set{v\in V_i: \theta^i(v) \in F_\sigma}$, $\nodes_\omega(\theta^i)\eqdef \set{v\in V_i: \theta^i(v) \in F_\omega}$, and $\nodes_\rho(\theta^i)\eqdef \set{v\in V_i: \theta^i(v) \in F_\rho}$.

\begin{lemma}
	\label{lem:bijection}
	The following holds: $\sch(G)=\set{\nodes_{\sigma}(\theta^n) : \theta^n \in \bTheta^n}$.
\end{lemma}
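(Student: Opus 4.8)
The plan is to identify every minimal dominating set $D\in\sch(G)$ with the labeling $\theta^n_D$ it induces on $V_n=\nodes(G)$ (Definition~\ref{def:induceLabels}), to observe that $\bTheta^n$ is exactly the collection of these labelings, and to show that the operator $\nodes_\sigma(\cdot)$ recovers $D$ from $\theta^n_D$. Combining these facts yields $\set{\nodes_\sigma(\theta^n):\theta^n\in\bTheta^n}=\set{\nodes_\sigma(\theta^n_D):D\in\sch(G)}=\sch(G)$.

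First I would unwind the definitions at the top level $i=n$. Since $V_n=\nodes(G)$, Definition~\ref{def:consistentExtendable} says that $D\in\sch(G)$ is consistent with an assignment $\theta^n\colon V_n\rightarrow F$ exactly when $\theta^n(v)$ equals the label $D$ induces on $v$ for every $v\in\nodes(G)$, that is, exactly when $\theta^n=\theta^n_D$. Hence $\theta^n$ is extendable if and only if $\theta^n=\theta^n_D$ for some $D\in\sch(G)$, so $\bTheta^n=\set{\theta^n_D:D\in\sch(G)}$. This reduces the lemma to the single claim that $\nodes_\sigma(\theta^n_D)=D$ for every $D\in\sch(G)$.

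To establish this claim I would run the case analysis of Definition~\ref{def:induceLabels} with $i=n$, using that $\nodes(G)\setminus V_n=\emptyset$ and that the eight labels split into the pairwise disjoint groups $F_\sigma$, $F_\omega$, $F_\rho$. For $v\in D$: by Proposition~\ref{prop:DSMin}, either $v$ has a private neighbor in $\nodes(G)\setminus D$ --- then item~(1) applies and, since no private neighbor can lie in the empty set $\nodes(G)\setminus V_n$, the label is $[0]_\sigma$ --- or $v$ has no private neighbor and $N(v)\cap D=\emptyset$, and then item~(2) applies with label $\sigma_I$; in both cases $\theta^n_D(v)\in F_\sigma$. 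For $v\notin D$: since $D$ is dominating, $\abs{N(v)\cap D}\geq 1$, so $v$ falls under item~(3) (if $\abs{N(v)\cap D}=1$, receiving $[0]_\omega$) or item~(4) (if $\abs{N(v)\cap D}\geq 2$, receiving a label in $F_\rho$); in both cases $\theta^n_D(v)\in F_\omega\cup F_\rho$, hence $\theta^n_D(v)\notin F_\sigma$. In particular every vertex receives exactly one label, so $\theta^n_D$ is a well-defined assignment $V_n\rightarrow F$, and $\nodes_\sigma(\theta^n_D)=\set{v\in\nodes(G):\theta^n_D(v)\in F_\sigma}=D$.

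Finally I would assemble the two inclusions. For the inclusion $\subseteq$: given $D\in\sch(G)$, $D$ itself is consistent with $\theta^n_D$, so $\theta^n_D\in\bTheta^n$, and by the claim $D=\nodes_\sigma(\theta^n_D)$. For the inclusion $\supseteq$: given $\theta^n\in\bTheta^n$, the first step gives $\theta^n=\theta^n_D$ for some $D\in\sch(G)$, so $\nodes_\sigma(\theta^n)=D\in\sch(G)$. The only step requiring real care is the case analysis proving $\nodes_\sigma(\theta^n_D)=D$: it is there that minimality of $D$ enters, through Proposition~\ref{prop:DSMin}, to guarantee that every $v\in D$ is covered by item~(1) or item~(2) of Definition~\ref{def:induceLabels}, and the dominating property enters to guarantee that every $v\notin D$ is covered by item~(3) or item~(4); the rest is a direct transcription of the definitions.
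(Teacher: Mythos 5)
Your proposal is correct and follows essentially the same route as the paper's proof: both directions reduce to the observations that a minimal dominating set consistent with $\theta^n$ must equal $\nodes_\sigma(\theta^n)$, and that every $D\in\sch(G)$ induces (via Proposition~\ref{prop:DSMin} and the dominating property) a well-defined labeling in $\bTheta^n$ whose $F_\sigma$-part is $D$. Your packaging through the map $D\mapsto\theta^n_D$ is a slightly tidier presentation of the same argument, not a different one.
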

\begin{proof}
	By bidirectional inclusion.
	Let $\theta^n\in \bTheta^n$ be an extendable labeling. Then there exists a $D\in \sch(G)$ that is consistent with $\theta^n$. By Definition~\ref{def:induceLabels} (items (1) and (2)), this means that $D\supseteq \nodes_\sigma(\theta^n)$. If $v\notin \nodes_\sigma(\theta^n)$, then by the consistency of $D$, it holds that $v\notin D$. Therefore, $D\subseteq \nodes_\sigma(\theta^n)$, and thus $D=\nodes_\sigma(\theta^n)$. Hence, $\nodes_\sigma(\theta^n)\in \sch(G)$.
	
	Now, let $D\in \sch(G)$. We build the labeling $\theta^n: \nodes(G)\rightarrow F$ as follows. For every $v\in \nodes(G){\setminus}D$, such that $|N(v)\cap D|\geq 2$ we assign the label $[0]_\rho$. For every $v\in \nodes(G){\setminus}D$, such that $|N(v)\cap D|= 1$ we assign the label $[0]_\omega$. Since $D$ is dominating, then the the remaining (unlabeled) vertices must belong to $D$. We assign the label $[0]_\sigma$ to every $v\in D$ that has a private neighbor in $\nodes(G){\setminus}D$ (i.e., $v$ has a neighbor assigned $[0]_\omega$). By Proposition~\ref{prop:DSMin}, the remaining vertices in $D$ do not have any neighbors in $D$, and hence are assigned $\sigma_I$. By construction, $D$ is consistent with $\theta^n$, and hence $\theta^n \in \bTheta^n$. Also by construction, $\nodes_\sigma(\theta^n)=D$. Therefore, $D\in \set{\nodes_\sigma(\theta^n):\theta^n \in \bTheta^n}$.
\end{proof}
According to Lemma~\ref{lem:bijection}, we describe an algorithm for enumerating the set $\set{\nodes_\sigma(\theta^n): \theta^n\in \bTheta^n}$.
We fix a complete order $Q=\langle v_1,v_2,\dots,v_n\rangle$ over $\nodes(G)$.
In the preprocessing phase, the algorithm constructs a data structure $\M$ that receives as input a labeling $\theta^i:V_i\rightarrow F$, and in time $O(\tw(G))$ returns $\true$ if and only if $\theta^i$ is extendable\eat{(i.e., $\theta^i \in \bTheta^i$, Definition~\ref{def:consistentExtendable})}. Here, $V_i=\set{v_1,\dots,v_i}$ are the first $i$ vertices of $Q$. \eat{We make the assumption that $\theta^i$, the input to $\M$, is the result of augmenting an extendable labeling $\theta^{i-1}:V_{i-1}\rightarrow F^{i-1}$ with a label $c_i\in F$ for $v_i$. The precise details of this procedure are described in procedure \algname{IncrementLabeling} (Figure~\ref{fig:IncrementLabeling}) that is deferred to the Appendix.}

The outline of the enumeration algorithm is presented in Figure~\ref{fig:EnumDSOutline} (ignore the algorithm in Figure~\ref{fig:IsExtendible} for now). It is initially called with the empty labeling $\theta^0$, which is vacuously extendable, and the first vertex in the order $v_1$. If $i=n+1$, then the set $\nodes_\sigma(\theta^n)$ is printed in line~\ref{line:printS}. The algorithm then iterates over all eight labels in $F$, and for each label $c_i\in F$ attempts to \e{increment} the extendable labeling $\theta^{i-1}:V_{i-1}\rightarrow F$, to one or more labelings $\theta^i:V_i\rightarrow F$, where $\theta^i(v_i)=c_i$. Assigning label $c_i$ to $v_i$ triggers an update to the labels of $v_i$'s neighbors in $V_{i-1}$ (i.e., $N(v_i)\cap V_i$). The details of this update are specified in the procedure \algname{IncrementLabeling} described in Section~\ref{sec:incrementLabeling}.\eat{
This process has the side affect of updating the label of $v_i$'s neighbors in $V_{i-1}$ (i.e., $N(v_i)\cap V_i$). This process is quite straightforward, and the technical details are presented in Algorithm  \algname{IncrementLabeling} (Figure~\ref{fig:IncrementLabeling}) in the Appendix.} The enumeration algorithm then utilizes the data structure $\M$, built in the preprocessing phase. In line~\ref{line:isExtendable}, the algorithm queries the data structure $\M$, whether the resulting labeling $\theta^i$ is extendable. If and only if this is the case, does the algorithm recurs with the pair $(v_{i+1},\theta^i)$ in line~\ref{line:recurseShort}. 
\eat{Algorithm \algname{EnumDS} belongs to the family of \e{backtracking} enumeration algorithms~\cite{DBLP:journals/networks/ReadT75}.} Since the algorithm is initially called with the empty labeling, which is extendable, it follows that it only receives as input pairs $(\theta^{i-1},v_i)$ where $\theta^{i-1}\in \bTheta^{i-1}$ is extendable. From this observation, we prove the following in the Appendix.
\def\correctnessthm{
	For all $i\in \set{1,\dots,n+1}$, \algname{EnumDS} is called with the pair $(\theta^{i-1},v_i)$ if and only if $\theta^{i-1}\in \bTheta^{i-1}$ is extendable.
}
\begin{theorem}
	\label{thm:correctness}
	\correctnessthm
\end{theorem}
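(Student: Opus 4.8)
The plan is to prove the biconditional by induction on $i$, using two facts established elsewhere in the paper: the specification of the preprocessing data structure $\M$ (on input a labeling $\theta^i:V_i\to F$ it returns $\true$ exactly when $\theta^i$ is extendable), and a correctness property of \algname{IncrementLabeling} (Section~\ref{sec:incrementLabeling}), namely that if a minimal dominating set $D$ induces the labeling $\theta^{i-1}$ on $V_{i-1}$ and the labeling $\theta^i$ on $V_i$, then $\theta^i$ is among the labelings returned by \algname{IncrementLabeling}$(\theta^{i-1},\theta^i(v_i))$. The base case $i=1$ is immediate: a pair whose second component is $v_1$ is passed to \algname{EnumDS} only by the initial call, whose first component is the empty labeling $\theta^0$; since $V_0=\emptyset$ this is the only labeling on $V_0$, and it is vacuously extendable, so $\theta^0\in\bTheta^0$ and the biconditional holds.

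For the inductive step, assume the statement for index $i$ (with $1\le i\le n$) and prove it for index $i+1$, that is: \algname{EnumDS} is called with $(\theta^i,v_{i+1})$ if and only if $\theta^i\in\bTheta^i$. For the ``only if'' direction, suppose \algname{EnumDS} is invoked with $(\theta^i,v_{i+1})$. By inspection of the pseudocode (Figure~\ref{fig:EnumDSOutline}), this call is issued only from within some invocation \algname{EnumDS}$(\theta^{i-1},v_i)$, in which $\theta^i$ is one of the labelings produced by \algname{IncrementLabeling}$(\theta^{i-1},c_i)$ for some $c_i\in F$ in the iteration over $F$, and in which the query in line~\ref{line:isExtendable} returned $\true$. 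By the induction hypothesis $\theta^{i-1}\in\bTheta^{i-1}$, so by the specification of $\M$ the answer $\true$ to that query means exactly that $\theta^i$ is extendable, i.e.\ $\theta^i\in\bTheta^i$.

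For the ``if'' direction, let $\theta^i\in\bTheta^i$ and fix a minimal dominating set $D\in\sch(G)$ consistent with $\theta^i$ (Definition~\ref{def:consistentExtendable}). Let $\theta^{i-1}$ be the labeling that $D$ induces on $V_{i-1}$ (Definition~\ref{def:induceLabels}); then $D$ is consistent with $\theta^{i-1}$ by construction, so $\theta^{i-1}\in\bTheta^{i-1}$, and by the induction hypothesis \algname{EnumDS} is invoked with $(\theta^{i-1},v_i)$. That invocation iterates over all labels in $F$; for the iteration $c_i\eqdef\theta^i(v_i)$, the correctness property of \algname{IncrementLabeling} guarantees that $\theta^i$ (the labeling $D$ induces on $V_i$) is among the labelings it produces. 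Since $\theta^i$ is extendable, the query to $\M$ in line~\ref{line:isExtendable} returns $\true$, so the algorithm recurses with $(\theta^i,v_{i+1})$, completing the induction.

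I expect the only genuinely delicate point to be the correctness property of \algname{IncrementLabeling} relied upon in the ``if'' direction. Its proof requires a case analysis over the eight labels of $F$ and the update rules implicit in Definition~\ref{def:induceLabels}: one must verify that, when $v_i$ is introduced, the label changes that the presence of $D$ forces on the neighbors $N(v_i)\cap V_{i-1}$ are precisely those performed by some execution branch of \algname{IncrementLabeling}, so that the labeling $D$ induces on $V_i$ is neither skipped nor rejected by a local-consistency test. Everything else reduces to bookkeeping about which recursive calls \algname{EnumDS} issues, together with the already-available specification of $\M$. Combined with Lemma~\ref{lem:bijection}, the case $i=n+1$ of this statement (together with the print step in line~\ref{line:printS}, executed when \algname{EnumDS} is called with second component $v_{n+1}$) will give that the algorithm outputs each minimal dominating set of $G$ exactly once.
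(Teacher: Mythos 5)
Your proof is correct and follows essentially the same route as the paper's: induction on $i$, with the ``only if'' direction discharged by the specification of $\M$ (Theorem~\ref{thm:mainCorrectness}) and the ``if'' direction by letting $D$ induce $\theta^{i-1}$ on $V_{i-1}$ and invoking the guarantee that \algname{IncrementLabeling}$(\theta^{i-1},\theta^i(v_i))$ produces $\theta^i$. The ``delicate point'' you flag is exactly where the paper also leans on an auxiliary claim (the uniqueness of the induced labelings together with Lemma~\ref{lem:incrementLabeling}), so your deferral matches the paper's structure.
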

A corollary of Theorem~\ref{thm:correctness} is that \algname{EnumDS} is called with the pair $(\theta^n,v_{n+1})$ if and only if $\theta^n\in \bTheta^n$. From Lemma~\ref{lem:bijection}, we get that  \algname{EnumDS} prints a subset $D\subseteq \nodes(G)$ if and only if $D\in \sch(G)$.
\begin{proposition}
	\label{prop:general}
	Let $Q=\langle v_1,\dots, v_n \rangle$ be a complete order over $\nodes(G)$ such that the following holds for every $i\in \set{1,\dots,n}$:
	\begin{enumerate}
		\item \algname{IncrementLabeling} runs in time $O(\tw(G))$ for every pair $(\theta^{i-1},v_i)$ where $\theta^{i-1}:V_{i-1}\rightarrow F$, and returns a constant number of labelings $\theta^i: V_i \rightarrow F$.
		\item \algname{IsExtendable} runs in time $O(\tw(G))$.
	\end{enumerate}
	Then the delay between the output of consecutive items in algorithm \algname{EnumDS} is $O(n\cdot \tw(G))$.
\end{proposition}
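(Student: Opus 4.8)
The plan is to regard \algname{EnumDS} (Figure~\ref{fig:EnumDSOutline}) as a depth-first traversal of its recursion tree and to convert a bound on the work done at each node into a bound on the delay. First I would set up this tree: its root is the initial call $(\theta^0,v_1)$; a call $(\theta^{i-1},v_i)$ with $i\le n$ has as children the recursive calls $(\theta^i,v_{i+1})$ it spawns; and a call $(\theta^n,v_{n+1})$ is a leaf that prints $\nodes_\sigma(\theta^n)$ and returns. By Theorem~\ref{thm:correctness}, the call $(\theta^{i-1},v_i)$ is made exactly when $\theta^{i-1}\in\bTheta^{i-1}$; hence every node of the tree carries an extendable labeling, the tree has depth at most $n+1$, and by Lemma~\ref{lem:bijection} its leaves are in bijection with $\sch(G)$, each leaf printing the corresponding minimal dominating set, and no two leaves printing the same one.

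The crux is the structural claim that every internal node has a leaf among its descendants, so that no branch is ever searched in vain. If $(\theta^{i-1},v_i)$ with $i\le n$ is reached, then $\theta^{i-1}\in\bTheta^{i-1}$, so some $D\in\sch(G)$ is consistent with $\theta^{i-1}$; writing $\theta^{j}$ for the labeling that $D$ induces on $V_j$ (Definition~\ref{def:induceLabels}) for $i\le j\le n$, each $\theta^{j}$ is extendable (witnessed by $D$), and by the correctness of \algname{IncrementLabeling} (Section~\ref{sec:incrementLabeling}) the labeling $\theta^{j}$ is among the constantly many labelings produced from $(\theta^{j-1},v_j)$. Thus the calls $(\theta^{i-1},v_i),(\theta^i,v_{i+1}),\dots,(\theta^n,v_{n+1})$ form a chain in the tree ending at a leaf that prints $\nodes_\sigma(\theta^n)=D$.

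Next I would account for the work performed at a single node, excluding the recursive calls it makes. At an internal node the algorithm loops over the $|F|=8$ labels; for each label \algname{IncrementLabeling} produces a constant number of candidate labelings in time $O(\tw(G))$ by hypothesis~(1), and each candidate is then tested by \algname{IsExtendable} in time $O(\tw(G))$ by hypothesis~(2), so the per-node cost is $O(\tw(G))$. At a leaf the only additional cost is scanning $\theta^n$ to output $\nodes_\sigma(\theta^n)$, which is $O(n)$; assuming $\tw(G)\ge 1$ (otherwise $G$ is edgeless and has a unique minimal dominating set, so the delay is vacuous), this is within $O(n\cdot\tw(G))$.

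Finally I would invoke the standard backtracking-delay argument. Since the execution is a DFS, the computation that elapses between the output at a leaf $\ell_1$ and the output at the next leaf $\ell_2$ consists precisely of backtracking from $\ell_1$ up to $a$, their lowest common ancestor, followed by descending from $a$ to $\ell_2$; by the depth bound this path contains at most $2(n+1)$ nodes, and by the structural claim each time the DFS enters a new child it necessarily reaches $\ell_2$ without first traversing a leafless subtree, so only the nodes on this path are touched. With $O(\tw(G))$ work per node plus the $O(n\cdot\tw(G))$ printing cost, the delay is $O(n\cdot\tw(G))$, and the same bound covers the time to the first output and after the last. I expect the main obstacle to be precisely this structural claim that no leafless subtree is ever entered: it is what converts per-node cost into bounded delay, and it depends on \algname{IsExtendable} together with \algname{IncrementLabeling} pruning \emph{all} non-extendable one-step extensions, and doing so at cost $O(\tw(G))$ per candidate, rather than leaving any dead ends behind in the tree.
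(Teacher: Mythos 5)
Your proof is correct and is exactly the standard backtrack-search delay argument that the paper relies on for this proposition (which it states without an explicit proof): per-node work of $O(\tw(G))$ from hypotheses (1)--(2), plus the guarantee from Theorem~\ref{thm:correctness} that every recursive call carries an extendable labeling and hence every explored branch reaches a leaf, so consecutive outputs are separated by at most $O(n)$ nodes of the recursion tree, each costing $O(\tw(G))$. Your identification of the ``no leafless subtree is ever entered'' claim as the crux, and your derivation of it from the extendability invariant together with the completeness of \algname{IncrementLabeling}, matches the intended reasoning.
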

In the rest of the paper, we describe an order $Q=\langle v_1,\dots,v_n\rangle$, and a data structure $\M$ that meet the conditions of Proposition~\ref{prop:general}. We also show how $\M$ can be built in time $O(n|F|^{2\tw(G)})$. 
\def\charsigma{\enquote*{\text{$\sigma$}}}
\def\Icharsigma{\enquote*{\text{$\sigma_I$}}}
\def\charomega{\enquote*{\text{$\omega$}}}
\def\charrho{\enquote*{\text{$\rho$}}}

\renewcommand{\algorithmicrequire}{\textbf{Input:}}
\eat{
\begin{algserieswide}{H}{Algorithm for enumerating minimal dominating sets in FPL-delay when parameterized by treewidth. \label{fig:EnumDSOutline}}
	\begin{insidealgwide}{EnumDS}{$\theta^{i-1}$,$v_i$}		
		\REQUIRE{An extendable labeling $\theta^{i-1}:V_{i-1}\rightarrow F$, and $v_i$, the $i$th vertex in the order $Q$.}
		\IF{$i=(n+1)$}
		\STATE print $\nodes_\sigma(\theta^{i-1})$ \label{line:printS}	
		\RETURN
		\ENDIF
		\FORALL{$c_i \in F$}
		\STATE $\theta^i \gets \algname{IncrementLabeling}(\theta^{i-1},c_i)$
		\IF{$\algname{IsExtendable}(\M,\theta^i)$} \label{line:isExtendable}
		\STATE $\algname{EnumDS}(\theta^i,v_{i+1})$ \label{line:recurseShort}
		\ENDIF
		\ENDFOR
	\end{insidealgwide}
\end{algserieswide}
}

\begin{flushleft} 
	\noindent 
	\begin{minipage}[t]{0.42\textwidth}
		\begin{algseries}{H}{Algorithm for enumerating minimal dominating sets in FPL-delay when parameterized by treewidth. \label{fig:EnumDSOutline}}
			\begin{insidealg}{EnumDS}{$\theta^{i-1}$,$v_i$}		
				\footnotesize
				\REQUIRE{An extendable labeling $\theta^{i-1}\in \bTheta^{i-1}$, and $v_i$, the $i$th vertex in the order $Q$.}
				\IF{$i=(n+1)$}
				\STATE print $\nodes_\sigma(\theta^{i-1})$ \label{line:printS}	
				\RETURN
				\ENDIF
				\FORALL{$c_i \in F$}
				\FORALL{$\theta^i \in \algname{IncrementLabeling}(\theta^{i-1},c_i)$}
				\IF{$\algname{IsExtendable}(\M,\theta^i)$} \label{line:isExtendable}
				\STATE $\algname{EnumDS}(\theta^i,v_{i+1})$ \label{line:recurseShort}
				\ENDIF
				\ENDFOR
				\ENDFOR
			\end{insidealg}
		\end{algseries}
	\end{minipage}
	\hfill
	\begin{minipage}[t]{0.42\textwidth}
		\begin{algseries}{H}{Returns \texttt{true} if and only if $\theta^i\in \bTheta^i$ is extendable. \label{fig:IsExtendible}}
			\begin{insidealg}{IsExtendable}{$(\T,\jointreeMapFunction)$,$\theta^{i}$}		
				\footnotesize
				\REQUIRE{$(\T,\jointreeMapFunction)$: a nice DBTD that is the output of the preprocessing phase of Section~\ref{sec:PreprocessingForEnumeration}. \\ $\theta^i:V_i\rightarrow F$: output of $\algname{IncrementLabeling}(\theta^{i-1},c_i)$ where $\theta^{i-1}:V_{i-1}\rightarrow F$ is extendable and $c_i\in F$.}
				\eat{
					\IF{\sout{$\theta^i(v_i)=\bot$}} \label{line:ifReturnFalse}
					\RETURN \sout{\texttt{false}} \label{line:returnFalse}
					
				}
				\STATE $\FA_i \gets \theta^i[\jointreeMapFunction(B(v_i))]$  \label{line:buildProjections}
				\RETURN $\M_{B(v_i)}(\FA_i)$  \label{line:query}
			\end{insidealg}
		\end{algseries}
	\end{minipage}
\end{flushleft}

\subsection{Ordering Vertices for Enumeration}
\label{sec:ordering}
Let $\left(\T,\jointreeMapFunction\right)$ be 
a TD rooted at node $u_1$. Define $P\eqdef \langle u_1,\dots,u_{|\nodes(\T)|} \rangle$ to be a depth first order of $\nodes(\T)$. For every $i > 1$, $\parent(u_i)$ is a node $u_j\in\nodes(\T)$, with $j<i$ closer to the root $u_1$. Define $B:\nodes(G)\rightarrow \set{1,\dots,|\nodes(\T)|}$ to map every $v\in \nodes(G)$ to the earliest bag $u\in \nodes(\T)$ in the order $P$, such that $v \in \jointreeMapFunction(u)$. 
By the running intersection property of TDs, the mapping $B:\nodes(G)\rightarrow \set{1,\dots,|\nodes(\T)|}$ is well defined; every vertex $v\in \nodes(G)$ is assigned a single node in $\nodes(\T)$.
Let $Q=\langle v_1,\dots,v_n \rangle$ be a complete order of $\nodes(G)$ that is consistent with $B$, such that if $B(v_i)<B(v_j)$ then $v_i\prec_Q v_j$. There can be many such complete orders consistent with $B$, and we choose one arbitrarily.
With some abuse of notation, we denote by $B(v_i)$ both the identifier of the node in the TD, and the bag $\jointreeMapFunction(B(v_i))$. The enumeration algorithm \algname{EnumDS} in Figure~\ref{fig:EnumDSOutline} follows the order $Q$.
\begin{example}
	Consider the graph $G$ in Figure~\ref{fig:G}, and its nice TD in Figure~\ref{fig:niceTD}. Then $B(c)=u$, and $a\prec g\prec b\prec c \prec d \prec e$ in the order $Q$. Also, see the mapping $B$ applied to the vertices of $G$.
\end{example}
\def\orderNbrLem{
	Let $v_i$ be the $i$th vertex in $Q=\langle v_1,\dots,v_n\rangle$.
	Then $N(v_i)\cap \set{v_1,\dots,v_{i-1}}\subseteq B(v_i)$.
}
Lemma~\ref{lem:orderNbr} establishes an important property of the order $Q$, that is crucial for obtaining a delay of $O(nw)$, where $w$ is the width of the TD. Proof is deferred to Section~\ref{sec:orderingAppendix} of the Appendix.
\begin{lemma}
	\label{lem:orderNbr}
	\orderNbrLem
\end{lemma}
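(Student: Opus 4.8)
The plan is to combine the running-intersection property with the structure of the depth-first order $P$ of $\nodes(\T)$. For a vertex $v\in\nodes(G)$, write $T_v\eqdef\set{u\in\nodes(\T): v\in\jointreeMapFunction(u)}$; by item~(3) of Definition~\ref{def:TreeDecomposition} each $T_v$ induces a connected subtree of $\T$. Since $(v_i,v_j)\in\edges(G)$ with $j<i$, item~(2) of Definition~\ref{def:TreeDecomposition} supplies a node $u^\ast\in\nodes(\T)$ with $v_i,v_j\in\jointreeMapFunction(u^\ast)$, so $u^\ast\in T_{v_i}\cap T_{v_j}$. The first structural fact I would record is that, because $P$ is a depth-first traversal of $\T$ rooted at $u_1$, the node of a connected subtree $S\subseteq\nodes(\T)$ that appears earliest in $P$ is an ancestor (in $\T$ rooted at $u_1$) of every node of $S$; equivalently, $S$ is a subtree rooted at that node. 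Indeed, if $t$ is the $P$-first node of $S$ and some $s\in S$ were not a descendant of $t$, then the lowest common ancestor $a$ of $t$ and $s$ would be a proper ancestor of $t$ lying on the path from $t$ to $s$ in $\T$, so $a\in S$ by connectedness and $a$ precedes $t$ in $P$ — a contradiction. Applying this to $T_{v_i}$ and $T_{v_j}$ shows that $B(v_i)$ is an ancestor of every node of $T_{v_i}$ and $B(v_j)$ of every node of $T_{v_j}$; in particular, both $B(v_i)$ and $B(v_j)$ are ancestors of $u^\ast$.

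Next I would use the fact that $Q$ is consistent with $B$: from $v_j\prec_Q v_i$ and the defining property ``if $B(v)$ precedes $B(v')$ in $P$ then $v\prec_Q v'$'', it follows that $B(v_j)$ does not come strictly after $B(v_i)$ in $P$. If $B(v_j)=B(v_i)$, then $v_j\in\jointreeMapFunction(B(v_j))=\jointreeMapFunction(B(v_i))$ and we are done; so assume $B(v_j)$ strictly precedes $B(v_i)$ in $P$. Both $B(v_j)$ and $B(v_i)$, being ancestors of $u^\ast$, lie on the unique path from $u_1$ to $u^\ast$ in $\T$, and along any such root-to-node path the depth-first order increases monotonically away from the root (every node of $\T$ is visited before all of its descendants). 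Hence the one earlier in $P$, namely $B(v_j)$, is the closer to the root $u_1$; that is, $B(v_j)$ is an ancestor of $B(v_i)$.

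To conclude: $B(v_j)$ is an ancestor of $B(v_i)$, and $B(v_i)$ is an ancestor of $u^\ast$, so $B(v_i)$ lies on the path in $\T$ between $B(v_j)$ and $u^\ast$. Both endpoints of that path belong to the connected set $T_{v_j}$, hence so does $B(v_i)$; by definition of $T_{v_j}$ this means $v_j\in\jointreeMapFunction(B(v_i))$, i.e.\ $v_j\in B(v_i)$, which is exactly the claim. The only delicate part of the argument is the bookkeeping about the depth-first order $P$ — proving that the $P$-earliest node of a connected subtree is its ``root'' and that $P$-order along a root-to-node path coincides with the ancestor order. With those two observations in place, the remainder is a short chase on $\T$ using connectedness and the running intersection property, the sole case distinction being the trivial $B(v_j)=B(v_i)$.
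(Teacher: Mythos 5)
Your proposal is correct and follows essentially the same route as the paper's proof: both hinge on the observation that, under a depth-first order, the $P$-earliest bag containing a vertex is the root of the connected subtree of bags containing it, and both then conclude via the running intersection property applied to a path through $B(v_i)$ joining a bag containing both endpoints of the edge to the bag $B(v_j)$ (the paper's $B(v_k)$). Your write-up merely makes explicit the ``$P$-first node of a connected subtree is its root'' fact that the paper asserts implicitly, so no substantive difference.
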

\eat{
\begin{proof}
	Let $v_k\in N(v_i)\cap \set{v_1,\dots,v_{i-1}}$.
	Since $v_k \prec_Q v_i$, then $B(v_k)\leq B(v_i)$. If $B(v_k)= B(v_i)$, then $v_k\in B(v_i)$, and we are done. Otherwise,  $B(v_k)< B(v_i)$, and hence $B(v_k)\notin \nodes(\T_{B(v_i)})$.
	Since $v_i \in N(v_k)$, then by Definition~\ref{def:TreeDecomposition} there exists a node $l\in \nodes(\T)$ such that $v_i,v_k\in \jointreeMapFunction(l)$. Since $B(v_i)$ is the first bag (in the depth-first-order $P$) that contains $v_i$, then $l \geq B(v_i)$.
	By the running intersection property, $v_k$ appears in every bag on the path between nodes $B(v_k)$ and $l$ in $\T$, and all bags that contain $v_i$ are descendants of $B(v_i)$ in $\T$. Therefore, node $l$ belongs to $\T_{B(v_i)}$. Hence, $v_k\in \jointreeMapFunction(l)$ belongs to a bag $\jointreeMapFunction(l)$ in the subtree rooted at $B(v_i)$, and to a bag $B(v_k)\notin \nodes(\T_{B(v_i)})$. By the running intersection property, $v_k\in B(v_i)$.
\end{proof}
}
\def\orderNbrLemGreater{
	Let $(\T,\jointreeMapFunction)$ be a DBTD (Definition~\ref{def:disjointBranchTD}), and let $Q=\langle v_1,\dots,v_n\rangle$ be a complete vertex ordering compatible with $\T$.
	Let $v_i$ be the $i$th vertex in $Q$.
	Then $N(v_i)\cap \set{v_{i+1},\dots,v_n}= N(v_i)\cap (\nodes_{B(v_i)}{\setminus}B(v_i))$.
}
\eat{
An immediate corollary from Lemma~\ref{lem:orderNbr} is that if $(\T,\jointreeMapFunction)$ is a TD for $G$ with width $w$, then $|B(v_i)|\leq w+1$ for all $i\in \set{1,2,\dots,n}$, and hence $|N(v_i)\cap V_i|\leq |B(v_i)|-1\leq w$. By Proposition~\ref{prop:incrementruntime}, this means that when applied with the vertex-order $Q$ as defined above, \algname{IncrementLabeling} runs in time $O(\tw(G))$ as required by item (1) of Proposition~\ref{prop:general}. 
}

\eat{
The \e{degeneracy} of an undirected graph $G$ is the smallest integer $k$ such that every subgraph of $G$ contains a vertex with at most $k$ neighbors, and is a well-known measure of its sparsity.
An undirected graph $G$ has degeneracy $k$ if and only if its vertices $\nodes(G)$ can be ordered $\langle v_1,v_2,\dots,v_n \rangle$ such that $|N_G(v_i)\cap \set{v_1,\dots,v_{i-1}}|\leq k$, for every $i\in \set{1,\dots,n}$. It is well-known that the degeneracy of a graph is at most equal to its treewidth. Indeed, by Lemma~\ref{lem:orderNbr}, the order $Q$ described in this section exhibits this property.
}

\subsection{The \algname{IncrementLabeling} Procedure}
\label{sec:incrementLabeling}
Let $\theta^i: V_i \rightarrow F$ be an assignment to the first $i$ vertices of $\nodes(G)$ according to the complete order $Q$ over $\nodes(G)$ defined in Section~\ref{sec:ordering}. Recall that $\nodes_\sigma(\theta^i)$, $\nodes_\omega(\theta^i)$, and $\nodes_\rho(\theta^i)$ are the vertices in $V_i$ that are assigned a label in $F_\sigma$, $F_\omega$, and $F_\rho$ by $\theta^i$, respectively. Likewise, for every $a\in F$, we denote by $\nodes_{a}(\theta^i)$ the vertices in $V_i$ that are assigned label $a$ in $\theta^i$ (e.g., $\nodes_{\sigma_I}(\theta^i)$ and $\nodes_{[0]_\omega}(\theta^i)$  are the vertices in $V_i$  assigned labels $\sigma_I$ and $[0]_\omega$ respectively in $\theta^i$).
The following proposition follows almost immediately from Definition~\ref{def:induceLabels}, and the definition of an extendable assignment (Definition~\ref{def:consistentExtendable}). 
\def\incrementProposition{
	Let $\theta^i: V_i \rightarrow F$ be an extendable assignment. For every $v_j \in V_i$ it holds that:
	\begin{enumerate}
		\item If $\theta^i(v_j)=\sigma_I$, then $\theta^i(w)\in F_\rho$ for every $w\in N(v_j)\cap V_i$. 
		\item If $\theta^i(v_j)=[x]_\omega$ where $x\in \set{0,1}$, then $|N(v_j)\cap \nodes_{\sigma_I}(\theta^i)|=0$, and $|N(v_j)\cap \nodes_{\sigma}(\theta^i)|=1-x$.
		\item If $\theta^i(v_j)=[1]_\sigma$, then $|N(v_j)\cap \nodes_{\sigma_I}(\theta^i)|=0$, and$|N(v_j)\cap \nodes_{[1]_\omega}(\theta^i)|=0$.
		\item If $\theta^i(v_j)=[0]_\sigma$, then $|N(v_j)\cap \nodes_{\sigma_I}(\theta^i)|=0$, $|N(v_j)\cap \nodes_{[1]_\omega}(\theta^i)|=0$, and $|N(v_j)\cap \nodes_{[0]_\omega}(\theta^i)|{\geq} 1$.
		\item If $\theta^i(v_j)=[x]_\rho$ where $x\in \set{1,2}$, then $|N(v_j)\cap \nodes_{\sigma}(\theta^i)|=2-x$;  and if $\theta^i(v_j)=[0]_\rho$, then $|N(v_j)\cap \nodes_{\sigma}(\theta^i)|\geq 2$.
	\end{enumerate}
}
\begin{proposition}
\label{prop:increment}
\incrementProposition
\end{proposition}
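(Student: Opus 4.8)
The plan is to exploit extendability directly. Since $\theta^i$ is extendable, fix a minimal dominating set $D\in\sch(G)$ that is consistent with $\theta^i$ (Definition~\ref{def:consistentExtendable}); every clause is then obtained by reading off the value $\theta^i(v_j)$ through the way $D$ induces labels on $V_i$ (Definition~\ref{def:induceLabels}). First I would record, once and for all, three translations valid for any consistent $D$: $\nodes_\sigma(\theta^i)=D\cap V_i$, because the three labels of $F_\sigma$ are precisely those assigned to members of $D$; $\nodes_{\sigma_I}(\theta^i)$ is the set of $u\in D\cap V_i$ having no private neighbor in $\nodes(G){\setminus}D$, and by Definition~\ref{def:induceLabels}(2) (equivalently Proposition~\ref{prop:DSMin}) every such $u$ has $N(u)\cap D=\emptyset$; and $\nodes_{[0]_\omega}(\theta^i)$ (resp.\ $\nodes_{[1]_\omega}(\theta^i)$) is the set of private neighbors $p\in V_i{\setminus}D$ of $D$ whose unique neighbor in $D$ lies in $V_i$ (resp.\ in $\nodes(G){\setminus}V_i$). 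From these, two facts follow that I will use repeatedly: no neighbor of a vertex of $D$ can carry the label $\sigma_I$, and no neighbor of a vertex of $D\cap V_i$ can carry the label $[1]_\omega$.

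Then I would dispatch the items. For item~(1): if $\theta^i(v_j)=\sigma_I$ then $v_j\in D$ and $N(v_j)\cap D=\emptyset$, so each $w\in N(v_j)\cap V_i$ lies in $V_i{\setminus}D$; if $|N(w)\cap D|=1$ that single neighbor must be $v_j$, so $w$ would be a private neighbor of $v_j$, contradicting the description of $\nodes_{\sigma_I}$; hence $|N(w)\cap D|\ge 2$, i.e.\ $\theta^i(w)\in F_\rho$. For item~(2): if $\theta^i(v_j)=[x]_\omega$ then $v_j\in V_i{\setminus}D$ and $|N(v_j)\cap D|=1$; when $x=0$ that neighbor $u$ lies in $V_i$, so $|N(v_j)\cap\nodes_\sigma(\theta^i)|=|N(v_j)\cap D\cap V_i|=1=1-x$, and since $v_j$ is a private neighbor of $u$ the vertex $u$ is not labeled $\sigma_I$, giving $|N(v_j)\cap\nodes_{\sigma_I}(\theta^i)|=0$; when $x=1$ that neighbor lies outside $V_i$, so $N(v_j)\cap D\cap V_i=\emptyset$ and both counts equal $0=1-x$. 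For items~(3)--(4): if $\theta^i(v_j)\in\{[0]_\sigma,[1]_\sigma\}$ then $v_j\in D\cap V_i$, so the two facts above immediately give $|N(v_j)\cap\nodes_{\sigma_I}(\theta^i)|=0$ and $|N(v_j)\cap\nodes_{[1]_\omega}(\theta^i)|=0$; and for $[0]_\sigma$, Definition~\ref{def:induceLabels}(1) additionally supplies a private neighbor $p\in V_i{\setminus}D$ of $v_j$, which by the above carries label $[0]_\omega$, so $|N(v_j)\cap\nodes_{[0]_\omega}(\theta^i)|\ge 1$. For item~(5): if $\theta^i(v_j)=[x]_\rho$ then $v_j\in V_i{\setminus}D$ with $|N(v_j)\cap D|\ge 2$, and by Definition~\ref{def:induceLabels}(4) the index satisfies $x=2-\min\{2,|N(v_j)\cap V_i\cap D|\}$; combined with $\nodes_\sigma(\theta^i)=D\cap V_i$ this yields $|N(v_j)\cap\nodes_\sigma(\theta^i)|=2-x$ for $x\in\{1,2\}$ and $|N(v_j)\cap\nodes_\sigma(\theta^i)|\ge 2$ for $x=0$.

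I expect no genuine obstacle: every step is a mechanical translation through Definition~\ref{def:induceLabels} and Definition~\ref{def:pn}. The only points requiring care are the case distinctions in items~(2) and~(5), where one must track whether a given $D$-neighbor of $v_j$ lies inside or outside $V_i$, and read the $\rho$-index off Definition~\ref{def:induceLabels}(4) correctly so that it always lands in $\{0,1,2\}$ (matching $F_\rho$); one should also double-check that the boundary cases -- $|N(v_j)\cap D\cap V_i|=0$ in item~(2), and the split between $[0]_\rho$, $[1]_\rho$, and $[2]_\rho$ in item~(5) -- behave exactly as stated.
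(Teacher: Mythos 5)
Your proposal is correct and follows essentially the same route as the paper's proof: fix a minimal dominating set $D$ consistent with $\theta^i$ and translate each clause through Definition~\ref{def:induceLabels}; the paper only writes out item~(1) (by contradiction) and declares the remaining items analogous, whereas you carry out all five cases explicitly. The only point of divergence is that in item~(5) you read the index in Definition~\ref{def:induceLabels}(4) as $2-\min\{2,|N(v)\cap V_i\cap D|\}$, which is the evidently intended meaning (the paper's $\max$ there is a typo, since otherwise a vertex with no $D$-neighbors in $V_i$ would be labeled $[0]_\rho$).
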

\eat{
\begin{hproof}
	Since $\theta^i\in \bTheta^i$ is extendable, then there is a minimal dominating set $D\in \sch(G)$ that is consistent with $\theta^i$. By Definition~\ref{def:induceLabels}, it holds that $\nodes_\sigma(\theta^i)\subseteq D$, for every $v_j\in \nodes_{\sigma_I}(\theta^i)$ it holds that $N(v_j)\cap D=\emptyset$, for every $v_j\in \nodes_\omega(\theta^i)$ it holds that $|N(v_j)\cap D|=1$, and for every $v_j\in \nodes_\rho(\theta^i)$ that $|N(v_j)\cap D|\geq 2$.
	
	Suppose, by contradiction, that item~$(1)$ does not hold. Then there is a vertex $w\in N(v_j)\cap V_i$ such that $\theta^i(w)\in F_\omega \cup F_\sigma$. If $\theta^i(w)\in F_\sigma$, then $N(v_j)\cap D\neq \emptyset$, which means that $D$ is not consistent with $\theta^i$, a contradiction.
	If $\theta^i(w)\in F_\omega$, then by Definition~\ref{def:induceLabels}, it holds that $N(w)\cap D=\set{v_j}$. But this means that $v_j$ has a private neighbor in $\nodes(G){\setminus}D$, which again brings us to a contradiction. 
	The rest of the items of the proposition are proved in a similar fashion, as a direct consequent of Definition~\ref{def:induceLabels}, and are omitted.
\end{hproof}
}
\algname{IncrementLabeling} receives as input a labeling $\theta^{i-1}:V_{i-1}\rightarrow F$, which we assume to be extendable (see \algname{EnumDS} in Figure~\ref{fig:EnumDSOutline}), and a label $c_i\in F$. It generates at most two new assignments $\theta^i:V_i \rightarrow F$, where (1) $\theta^i(w)=\theta^{i-1}(w)$ for all $w\notin N(v_i)$, and (2) For all $w\in N(v_i)\cap V_i$ it holds that $\theta^{i-1}(w)\in \nodes_a(\theta^{i-1})$ if and only if $\theta^i(w)\in  \nodes_a(\theta^{i})$ where $a\in \set{\sigma_I,\sigma,\omega,\rho}$, and (3) $\theta^i(v_i)=c_i$. 
The procedure updates the labels of vertices $w\in N(v_i)\cap V_i$ based on the value $c_i$, so that the conditions of Proposition~\ref{prop:increment} are maintained in $\theta^i$. 
If the conditions of Proposition~\ref{prop:increment} cannot be maintained following the assignment of label $c_i$ to $v_i$, then \algname{IncrementLabeling} returns an empty set, indicating that $\theta^{i-1}$ cannot be incremented with the assignment $\theta^i(v_i)\gets c_i$ while maintaining its extendability. 
\def\incrementLabelingLemma{
		Procedure \algname{IncrementLabeling} receives as input an extendable assignment 
	$\theta^{i-1}: V_{i-1}\rightarrow F$ to the first $i-1$ vertices of $\nodes(G)$ according to the complete order $Q$, and a value $c_i \in F$. There exist at most two extendable assignments $\theta^i: V_i \rightarrow F$ where (1) $\theta^i(w)=\theta^{i-1}(w)$ for all $w\notin N(v_i)$, and (2) For all $w\in N(v_i)\cap V_i$ it holds that $\theta^{i-1}(w)\in \nodes_a(\theta^{i-1})$ if and only if $\theta^i(w)\in  \nodes_a(\theta^{i})$ where $a\in \set{\sigma_I,\sigma,\omega,\rho}$, and (3) $\theta^i(v_i)=c_i$, that will be returned by \algname{IncrementLabeling}. If no such extendable assignment exists, the procedure will return the empty set.
}

The pseudocode of \algname{IncrementLabeling} is deferred to Section~\ref{sec:AppendixOverview} in the Appendix. In what follows, we illustrate with an example.
If $c_i\in F_\sigma$, and $w\in N(v_i)\cap V_i$ where $\theta^{i-1}(w)=[1]_\rho$, then $w$'s label is updated to $\theta^i(w)=[0]_\rho$ (item (5) of Proposition~\ref{prop:increment}). If $c_i=[0]_\sigma$, and $\theta^{i-1}(w)=[1]_\omega$, then $w$'s label is updated to $\theta^i(w)=[0]_\omega$ ((2) of Proposition~\ref{prop:increment}). On the other hand, if $\theta^{i-1}(w)=[0]_\omega$, then by item (2) of Proposition~\ref{prop:increment}, it means that $|N(w)\cap \nodes_\sigma(\theta^{i-1})|=1$. Therefore, if $c_i\in F_\sigma$, then $|N(w)\cap \nodes_\sigma(\theta^{i})|= 2$, thus violating item (2) of Proposition~\ref{prop:increment}  with respect to $\theta^i$. In this case, the procedure willreturn an empty set, indicating that $\theta^{i-1}$ {\bf cannot} be augmented with $v_i \gets c_i$, while maintaining the conditions of Proposition~\ref{prop:increment}, which are required for the labeling to be extendable (Definition~\ref{def:consistentExtendable}). 
\begin{lemma}
	\label{lem:incrementLabeling}
\incrementLabelingLemma
\end{lemma}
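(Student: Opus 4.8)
The plan is to prove the lemma in two parts: first, that at most two assignments $\theta^i$ satisfy conditions (1)--(3) and are extendable, and that \algname{IncrementLabeling} outputs each of them; second, that when no such assignment exists, \algname{IncrementLabeling} outputs the empty set.

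The starting point is that conditions (1) and (3) already fix $\theta^i$ on all of $V_i$ except on $N(v_i)\cap V_{i-1}$: for $z\in V_{i-1}\setminus N(v_i)$ we must have $\theta^i(z)=\theta^{i-1}(z)$, and $\theta^i(v_i)=c_i$. This is both forced and consistent, because none of the three quantities that Definition~\ref{def:induceLabels} uses to label a vertex $z$ with $v_i\notin N(z)$---its membership in the dominating set, its count of dominating neighbours, and whether it is a private neighbour---changes when the single vertex $v_i$ is added to the partial view $V_{i-1}$. Hence the only freedom lies in the labels of the neighbours $w\in N(v_i)\cap V_{i-1}$, and by condition (2) each such $w$ must stay in its group among $\set{\sigma_I}$, $\set{[0]_\sigma,[1]_\sigma}$, $F_\omega$, and $F_\rho$. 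Since an extendable $\theta^i$ is consistent with some $D\in\sch(G)$, it must in addition satisfy Proposition~\ref{prop:increment}; the heart of the proof is that these necessary conditions, together with (1)--(3), determine each $\theta^i(w)$ up to at most one binary choice among the $w$'s, so that at most two candidates remain, and that \algname{IncrementLabeling} is exactly the procedure enumerating them.

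I would carry out the case analysis along the split ``$c_i\in F_\sigma$'' (so $v_i$ belongs to every consistent minimal dominating set) versus ``$c_i\in F_\omega\cup F_\rho$'' (so $v_i$ does not). When $v_i$ joins the solution, each $w\in N(v_i)\cap V_{i-1}$ gains exactly one dominating neighbour, which triggers a \emph{deterministic} update: an $F_\rho$-label is decremented ($[2]_\rho\mapsto[1]_\rho$, $[1]_\rho\mapsto[0]_\rho$, $[0]_\rho\mapsto[0]_\rho$), a label in $F_\omega$ or the label $\sigma_I$ can no longer satisfy (2) together with Proposition~\ref{prop:increment} and the step is rejected, and a label in $\set{[0]_\sigma,[1]_\sigma}$ is kept or ``downgraded'' to $[0]_\sigma$ according to whether $w$ still has a neighbour outside $V_i$---in each case a function of $(\theta^{i-1}(w),c_i)$ and the adjacency. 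When $v_i$ does not join the solution, the dominating-neighbour counts are unchanged, so $\theta^i(w)=\theta^{i-1}(w)$ is forced unless $v_i$ becomes a private neighbour of its unique dominating neighbour $u$ (which happens exactly when $c_i=[0]_\omega$); in that case $v_i$ is a new private neighbour of $u$ lying inside $V_i$, and a consistent minimal dominating set may or may not leave $u$ with a further private neighbour outside $V_i$, so both $\theta^i(u)=[0]_\sigma$ and $\theta^i(u)=[1]_\sigma$ are locally admissible when $\theta^{i-1}(u)=[1]_\sigma$, while every other situation for $u$ is infeasible. Crucially, by Proposition~\ref{prop:increment}(2) a vertex labelled in $F_\omega$ has exactly one dominating neighbour, so there is at most one such $u$ and every other neighbour of $v_i$ is handled deterministically; hence at most two tuples $(\theta^i(w))_{w\in N(v_i)\cap V_{i-1}}$ survive, giving the at most two candidate assignments $\theta^i$ that \algname{IncrementLabeling} returns, each satisfying (1)--(3) by construction, and the procedure returns $\emptyset$ exactly when a ``reject'' condition above fires.

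For completeness: any extendable $\theta^i$ satisfying (1)--(3) is consistent with some $D\in\sch(G)$ and hence satisfies Proposition~\ref{prop:increment}, so by the previous paragraph it coincides with one of the at most two candidates; this bounds the number of such assignments by two and shows each is returned. When none exists, in particular no candidate survives the necessary conditions of Proposition~\ref{prop:increment}, so $\emptyset$ is returned; a candidate that is locally consistent but not globally extendable is later discarded by \algname{IsExtendable} inside \algname{EnumDS}. The main obstacle is the neighbour-level case analysis summarised above---in particular, verifying that aside from the single $c_i=[0]_\omega$ situation every pairing of $c_i$ with $\theta^{i-1}(w)$ (for both $v_i\in D$ and $v_i\notin D$) yields a forced label or a rejection, so that the branching factor is genuinely two. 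This amounts to checking each line of the pseudocode of \algname{IncrementLabeling} in the Appendix against Definition~\ref{def:induceLabels} and every clause of Proposition~\ref{prop:increment}; it is routine but lengthy.
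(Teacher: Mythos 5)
Your overall strategy is the same as the paper's: a case analysis on $c_i$ showing that the label of every $w\in N(v_i)\cap V_{i-1}$ is updated deterministically as a function of $(\theta^{i-1}(w),c_i)$, except for the single binary choice that arises when $c_i=[0]_\omega$ and the unique dominating neighbour $u$ of $v_i$ carries $\theta^{i-1}(u)=[1]_\sigma$, which may become either $[0]_\sigma$ or $[1]_\sigma$. That is exactly the source of the bound ``at most two,'' and you identify it correctly, including the observation that $\theta^{i-1}(u)\in\set{[0]_\sigma,\sigma_I}$ is infeasible in that situation.

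There is, however, one concrete misstatement in your $c_i\in F_\sigma$ case that, taken literally, would make the enumeration incomplete: you assert that a neighbour $w$ with a label in $F_\omega$ forces a rejection. That is true only for $\theta^{i-1}(w)=[0]_\omega$ (where $w$ already has its single dominating neighbour inside $V_{i-1}$, so a second one is forbidden by Proposition~\ref{prop:increment}(2)). For $\theta^{i-1}(w)=[1]_\omega$ the single dominating neighbour of $w$ is promised to lie in $\nodes(G){\setminus}V_{i-1}$, and $v_i$ can be precisely that neighbour; the correct, and still deterministic, update is $\theta^i(w)\gets[0]_\omega$, not a rejection. Rejecting here would discard genuinely extendable assignments (e.g., whenever $v_i$ is the private neighbour's unique dominator), breaking the ``each of them is returned'' half of the lemma. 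Similarly, your ``kept or downgraded to $[0]_\sigma$'' for neighbours labelled in $\set{[0]_\sigma,[1]_\sigma}$ is more freedom than actually exists: since $c_i\in F_\sigma$ puts $v_i$ in the solution, $v_i$ cannot be anyone's private neighbour, so these labels are simply kept. Neither slip affects the counting argument, but both would surface when you carry out the line-by-line check against the pseudocode that you defer to.
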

\def\incrementLabelingRuntime{
	The runtime of procedure \algname{IncrementLabeling} with input $(\theta^{i-1},c_i)$ where $v_j \prec_Q v_i$ for all $j\in \set{1,\dots,i-1}$ is $O(\tw(G))$.
}
\begin{lemma}
	\label{lem:incrementLabelingRuntime}
	\incrementLabelingRuntime
\end{lemma}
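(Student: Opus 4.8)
The plan is to charge all the work done by \algname{IncrementLabeling} to the set $N(v_i)\cap V_i$ of neighbors of $v_i$ that precede it in the order $Q$, and then to invoke Lemma~\ref{lem:orderNbr} to bound the size of this set. Since the input satisfies $v_j\prec_Q v_i$ for all $j\in\set{1,\dots,i-1}$, we have $V_i=V_{i-1}\cup\set{v_i}$, and because $G$ has no self-loops, $N(v_i)\cap V_i=N(v_i)\cap V_{i-1}$. By Lemma~\ref{lem:orderNbr}, $N(v_i)\cap V_{i-1}\subseteq B(v_i)$, so $|N(v_i)\cap V_i|\leq |B(v_i)|-1$, which is bounded by the width of the tree decomposition underlying $Q$ (the disjoint-branch decomposition constructed in the preprocessing phase, whose width is $O(\tw(G))$ by Section~\ref{sec:convertToDBJT}). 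Hence $|N(v_i)\cap V_i|=O(\tw(G))$, and it suffices to show that \algname{IncrementLabeling} performs only $O(1)$ work per vertex of $N(v_i)\cap V_i$, plus $O(1)$ work for $v_i$ itself.

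First I would assume that during preprocessing we have computed, for every $v_i$, the explicit list $N(v_i)\cap V_{i-1}$; this is affordable because each such list lies inside the bag $B(v_i)$ and hence has size $O(\tw(G))$, for a total of $O(n\tw(G))$ over all vertices. Given this, \algname{IncrementLabeling}$(\theta^{i-1},c_i)$ proceeds in two passes over this list. In the first pass it looks up $\theta^{i-1}(w)$ for each neighbor $w$ and tallies the counts $|N(v_i)\cap \nodes_a(\theta^{i-1})|$ for the labels $a$ appearing in Proposition~\ref{prop:increment}; from these counts it decides, in $O(1)$, whether $\theta^i(v_i)\gets c_i$ can be completed to an extendable labeling and, if so, whether the assignment splits into one or two output labelings (for instance whether $v_i$ will acquire a private neighbor inside $V_i$ or only in $\nodes(G)\setminus V_i$). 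In the second pass it updates the label of each neighbor $w$ that Proposition~\ref{prop:increment} forces to change (e.g. $[1]_\rho\mapsto[0]_\rho$ or $[1]_\omega\mapsto[0]_\omega$), or detects a violation (e.g. $\theta^{i-1}(w)=[0]_\omega$ together with $c_i\in F_\sigma$) that makes it return the empty set. Each such update is $O(1)$, and there are $O(\tw(G))$ of them, so both passes together cost $O(\tw(G))$; producing at most two outputs only doubles this.

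The main obstacle in the write-up is making sure the construction of the output labelings $\theta^i$ does not secretly cost $\Theta(n)$: we cannot afford to copy an $n$-entry array. The fix, which I would state explicitly, is to represent a labeling implicitly — either as the parent labeling $\theta^{i-1}$ together with the short override list of the at most $|N(v_i)\cap V_i|+1=O(\tw(G))$ entries that changed (for the updated neighbors and for $v_i$), or, exploiting the depth-first structure of \algname{EnumDS}, as a single mutable array of size $n$ with an $O(\tw(G))$-entry undo log applied on backtracking. In either case, constructing $\theta^i$ from $\theta^{i-1}$ touches only $O(\tw(G))$ entries, and the only later accesses to $\theta^i$ are made by \algname{IncrementLabeling} and \algname{IsExtendable} at step $i+1$, which inspect entries inside $B(v_{i+1})$, so label look-ups remain $O(1)$ (worst case $O(\tw(G))$, which still suffices). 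With this representation in place, the bound follows by simply summing $O(1)$-time operations over a set of size $O(\tw(G))$, completing the proof together with Lemma~\ref{lem:incrementLabeling} (which already guarantees correctness and the at-most-two output bound).
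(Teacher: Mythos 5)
Your proof is correct and follows essentially the same route as the paper's: charge all work to $N(v_i)\cap V_i$, invoke Lemma~\ref{lem:orderNbr} to get $N(v_i)\cap V_i\subseteq B(v_i)$, and bound $|B(v_i)|$ by the width. Your additional discussion of representing $\theta^i$ implicitly (so that constructing it does not cost $\Theta(n)$) fills in an implementation detail that the paper's one-line observation leaves implicit, but it does not change the argument.
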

\eat{
\begin{proposition}
	\label{prop:incrementruntime}
The runtime of \algname{IncrementLabeling} with input $(\theta^{i-1},c_i)$ is $O(|N(v_i) \cap V_i|)$.
\end{proposition}
\begin{corollary}
	\label{corr:incrementruntimeQ}
The runtime of procedure \algname{IncrementLabeling} with input $(\theta^{i-1},c_i)$ where $v_j \prec_Q v_i$ for all $j\in \set{1,\dots,i-1}$ is $O(\tw(G))$.
\end{corollary}
\begin{proof}
If $V_i$ are the first $i$ vertices of $Q=\langle v_1,\dots,v_n\rangle$ defined in Section~\ref{sec:ordering}, then by Lemma~\ref{lem:orderNbr} it holds that $N(v_i) \cap V_i \subseteq B(v_i)$. Since $|B(v_i)|\leq \tw(G)+1$, then $|N(v_i)\cap B(v_i)|\leq \tw(G)$. The claim immediately follows from Proposition~\ref{prop:incrementruntime}.
\end{proof}
}
The proofs of Lemmas~\ref{lem:incrementLabeling} and~\ref{lem:incrementLabelingRuntime} are deferred to Section~\ref{sec:AppendixOverview} in the Appendix. 
Proposition~\ref{prop:general} defines two conditions sufficient for obtaining FPL-delay of $O(n\cdot \tw(G))$. In Section~\ref{sec:ordering} we introduced the ordering $Q=\langle v_1,\dots,v_n \rangle$, which guarantees that the runtime of \algname{IncrementLabeling} is in $O(\tw(G))$ (Lemma~\ref{lem:incrementLabelingRuntime}), thus meeting the first condition of Proposition~\ref{prop:general}. We now proceed to the second.

\tikzset{
	solid node/.style={circle,draw,inner sep=1.5,fill=black},
	hollow node/.style={circle,draw,inner sep=1.5}
	ex node/.style={circle,inner sep=1.5}
}

\begin{figure*}
	\begin{flushleft}
		\begin{subfigure}[t]{0.08\textwidth}
				\includegraphics[width=\textwidth]{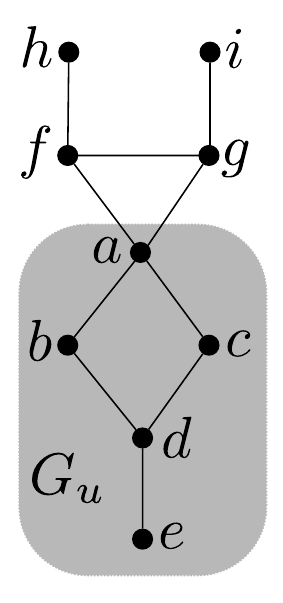}
				\caption{$G$,$G_u$}
				\label{fig:G}
		\end{subfigure}
	\hfill
	\begin{subfigure}[t]{0.55\textwidth} 
			\scalebox{0.8}{
		\begin{tikzpicture}[node distance=0.4cm,every node/.style={scale=0.475},x=1.28cm, y=1cm,font=\footnotesize]
			\tikzset{vertex/.style = {draw, circle, font=\fontsize{11}{12}}}
			\tikzset{doublevertex/.style = {draw, double, circle, font=\fontsize{11}{12}}}
			
			\node[vertex, label=above:$B(h)$] (h) at (0,0) {$h$};
			\node[vertex, label=above:$B(f)$] (hf) at (0.5,0) {$hf$};
			\node[vertex] (f) at (1,0) {$f$};
			\node[vertex, label=above:$B(a)$] (fa) at (1.5,0) {$fa$};
			\node[vertex, label=above:$B(g)$] (fag) at (2,0) {$fag$};
			\node[vertex, label=below:$t$] (ag) at (2.5,0) {$ag$};
			\node[vertex, label=below:$t_0$] (agg) at (3,-1) {$ag$};
			\node[vertex, label=below:$t_1$] (aga) at (3,1) {$ag$};

			\node[vertex] (g) at (3.5,-1) {$g$};
			\node[vertex, label=above:$B(i)$] (gi) at (4,-1) {$gi$};
			\node[vertex] (i) at (4.5,-1) {$i$};
			
			\node[vertex] (a) at (3.5,1) {$a$};
			\node[vertex, label=above:$B(b)$] (ab) at (4,1) {$ab$};
			\node[doublevertex, label=below:$u$, label=above:$B(c)$] (abc) at (4.5,1) {$abc$};
			\node[vertex] (bc) at (5,1) {$bc$};
			\node[vertex, label=above:$B(d)$] (bcd) at (5.5,1) {$bcd$};
			\node[vertex] (cd) at (6,1) {$cd$};
			\node[vertex] (d) at (6.5,1) {$d$};
			\node[vertex, label=above:$B(e)$] (de) at (7,1) {$de$};
			\node[vertex] (e) at (7.5,1) {$e$};
			\eat{
			\begin{scope}[on background layer]
				\node[draw=red,inner sep=10pt, ellipse,fit=(ab) (abc) (bc) (bcd) (cd) (d) (de) (e)] {};
			\end{scope}
		}
			
			\draw[-{Stealth[length=1mm]}] (h) -- (hf);
			\draw[-{Stealth[length=1mm]}] (hf) -- (f);
			\draw[-{Stealth[length=1mm]}] (f) -- (fa);
			\draw[-{Stealth[length=1mm]}] (fa) -- (fag);
			\draw[-{Stealth[length=1mm]}] (fag) -- (ag);
			
			\draw[-{Stealth[length=1mm]}] (ag) -- (agg);
			\draw[-{Stealth[length=1mm]}] (ag) -- (aga);
			
			\draw[-{Stealth[length=1mm]}] (agg) -- (g);
			\draw[-{Stealth[length=1mm]}] (g) -- (gi);
			\draw[-{Stealth[length=1mm]}] (gi) -- (i);
	
			\draw[-{Stealth[length=1mm]}] (aga) -- (a);
			\draw[-{Stealth[length=1mm]}] (a) -- (ab);
			\draw[-{Stealth[length=1mm]}] (ab) -- (abc);
			\draw[-{Stealth[length=1mm]}] (abc) -- (bc);
			\draw[-{Stealth[length=1mm]}] (bc) -- (bcd);
			\draw[-{Stealth[length=1mm]}] (bcd) -- (cd);
			\draw[-{Stealth[length=1mm]}] (cd) -- (d);
			\draw[-{Stealth[length=1mm]}] (d) -- (de);
			\draw[-{Stealth[length=1mm]}] (de) -- (e);
			\draw (4.45,0.5) -- (7.6,0.5) node [midway, fill=white] {$\T(u)$};
			
			\draw[thick, dotted, -{Stealth[length=1mm]}] (ag) -- (a);
			\draw[thick, dotted, -{Stealth[length=1mm]}] (ag) -- (g);
		\end{tikzpicture}
	}
		\caption{Disregarding the dotted edges, this is a nice TD $(\T,\jointreeMapFunction)$ for $G$. The nice disjoint branch TD for $G$ results from adding the dotted edges, and removing nodes $t_0$, $t_1$, and their adjacent edges.
		The subtree $\T_u$ corresponds to the induced graph $G_u\eqdef G[\nodes_u]$, where $\nodes_u=\set{a,b,c,d,e}$.  An order $Q$ corresponding to this nice TD is $Q=\langle h,f,a,g,i,b,c,d,e\rangle$.}
		\label{fig:niceTD}
	\end{subfigure}
	\hfill
	\begin{subfigure}[t]{0.3\textwidth} 
		\scalebox{0.6}{
		{\footnotesize
		\begin{forest}
			[$a$,circle,draw
				[{$[0]_\sigma$}	[$b$,circle,draw
												[{$[0]_\omega$}[$c$,circle,draw[{$[0]_\omega$}[{$\set{a,e}$}]]]]
												[{$[1]_\rho$}[$c$, circle, draw[{$[1]_\rho$}[{$\set{a,d}$}]]]]]]
				[{$[0]_\omega$}[$b$,circle,draw
												[{$[0]_\sigma$}[$c$,circle,draw[{$[1]_\omega$}[{$\set{b,d}$}]]]]
												[{$[1]_\omega$}[$c$,circle,draw[{$[0]_\sigma$}[{$\set{c,d}$}]][{$[1]_\omega$}[{$\set{d}$}]]]]]]
			]
		\end{forest}
	}
}
	\caption{Trie for factor $\M_u:F^{\jointreeMapFunction(u)}\rightarrow \set{0,1}$ for node $u$ in the TD in Figure~\ref{fig:niceTD}. \eat{The assignment $\FA_u$ where $\FA_u(a)=[0]_\sigma$, and $\FA_u(b)=\FA_u(c)=[1]_\rho$ appears in the trie and hence $\M_u(\FA_u)=1$.}The order used to construct the trie is $a\prec b \prec c$ which is compatible with the order $Q$ derived from the TD in Figure~\ref{fig:niceTD}.}
	\label{fig:trie}
\end{subfigure}
	\end{flushleft}
\caption{An undirected graph $G$, its nice TD, and example of a trie representing a factor of the TD.}
\end{figure*}

\eat{
\subsection{Labeling for \textsc{trans-enum} and proof of Corollary~\ref{corr:TansToDomEnum}}
\label{sec:encodingTransEnum}
In Section~\ref{sec:TransEnumReduction}, we reduced \textsc{trans-enum} on hypergraph $\H$ to \textsc{dom-enum} over the tripartite graph $B\eqdef B(\H)$, where $\nodes(B)=\set{v}\cup \nodes(\H) \cup \set{y_e: e\in\Hedges(\H)}$. Importantly, by Theorem~\ref{thm:domenum}, we are only interested in minimal dominating sets $D$ of $B$, where $D\subsetneq \set{v}\cup \nodes(\H)$, and $D\neq \nodes(\H)$. If no such dominating set exists, then $\nodes(\H)$ is the unique minimal dominating set of $B$ contained in $\set{v}\cup \nodes(\H)$, which in turn means that $\nodes(\H)$ is the unique minimal transversal of $\H$.

Vertices in $\set{y_e: e\in\Hedges(\H)}$ are never dominating, and can only be assigned labels in $F_\omega \cup F_\rho$. If $w\in \nodes(\H)$, then by construction $N(w)\subseteq \set{v}\cup \set{y_e: e\in \Hedges(\H)}$. Consequently, vertices $w\in \nodes(\H)$ can have at most one neighbor in the minimal dominating set; namely the vertex $v$. Therefore, vertices in $\nodes(\H)$ can only be assigned labels in $F_\sigma \cup F_\omega$. Finally, the only dominating set of $B$ that excludes $v$, is $\nodes(\H)$. Since we are interested only in minimal transversals that are strictly included in $\nodes(\H)$, then $v$ must be dominating, and can only be assigned labels in $F_\sigma$.
Since $|F_\sigma \cup F_\omega|=|F_\omega \cup F_\rho|=5$, and $|F_\sigma|=3$, we get that every vertex of $B$ can be assigned one of $5$ labels (as opposed to $8$ as in the general case). 

\blue{
	For every vertex $t\in \nodes(G)$, we denote by $F_t\subseteq F$ the set of labels that may be assigned to vertex $t$. In the general case, $F_t=F$ for all $t\in \nodes(G)$. For enumerating the minimal transversals of a hypergraph $\H$, we reduced  the problem to that of enumerating the minimal dominating sets of the tripartite graph $B$ where, for all $t\in \nodes(B)$ such that $t\in \nodes(\H)$, it holds that $F_t=F_\sigma \cup F_\omega$, for all $t\in \set{y_e: e\in\Hedges(\H)}$, it holds that $F_t=F_\omega \cup F_\rho$, and $F_v=F_\sigma$. 
	In Section~\ref{sec:AppendixProofsPreproc}, we provide the detailed description of the preprocessing phase, and show that it takes time $O(nws^k)$, where $k$ is the width of the disjoint branch TD generated from the nice TD of the graph (see Section~\ref{sec:convertToDBJT}), $n=|\nodes(G)|$, and $s=\max_{v\in \nodes(G)}|F_v|$. In the general case, when no restrictions are placed on the label-assignments to vertices, then $s=|F|=8$. In the case of \textsc{trans-enum}, we get that $s=5$. 
	In Section~\ref{sec:convertToDBJT}, we show that if $\tw(G)=w$, then $G$ has a nice disjoint-branch TD whose \e{effective width} is at most $k=2w$. This proves Corollary~\ref{corr:TansToDomEnum}. }
}

\eat{
\renewcommand{\algorithmicrequire}{\textbf{Input:}}
\renewcommand\algorithmicensure{\textbf{Output:}}
\begin{algseries}{H}{Algorithm for enumerating minimal dominating sets in FPL-delay when parameterized by treewidth. \label{fig:EnumDSOutline}}
	\begin{insidealg}{IncrementLabeling}{$\theta^{i-1}$,$c_i$}		
		\REQUIRE{A labeling $\theta^{i-1}:V_{i-1}\rightarrow F^{i-1}$, and $c_i \in F$}
		\ENSURE{A labeling $\theta^{i}:V_{i}\rightarrow F^{i}$}
		\STATE $K_i\gets N(v_i)\cap V_i$
		\STATE $N_i\gets K_i\cap \nodes_\sigma(\theta^{i-1})$, $W_i\gets K_i \cap \nodes_\omega(\theta^{i-1})$
		\STATE $\theta^i \gets(\theta^{i-1},c_i)$
		\IF{$c_i \in F_\sigma$}
			\FORALL{$v\in K_i$ s.t. $\theta^{i-1}(v)=[x]_\rho$}
				\STATE $\theta^{i}(v)\gets [\max\set{0,x-1}]_\rho$
			\ENDFOR
		\ENDIF
		\IF{$c_i=\sigma_I$ AND $(N_i\neq \emptyset \text{ OR } W_i\neq \emptyset)$}
			\STATE $\theta^i(v_i)\gets \bot$
		\ENDIF
		\IF{$c_i\in \set{[0]_\sigma,[1]_\sigma}$}
			\IF{$\exists w\in K_i$ s.t. $\theta^{i-1}(w)\in \set{\sigma_I,[0]_\omega}$}
				\STATE $\theta^i(v_i)\gets \bot$
			\ELSIF{$\left(c_i=[1]_\sigma \text{,}W_i\neq \emptyset\right)$ OR 		$\left(c_i=[0]_\sigma \text{,}W_i= \emptyset\right)$ }
					\STATE $\theta^i(v_i)\gets \bot$
			\ELSE
				\FORALL{$w\in K_i$ s.t. $\theta^{i-1}(w)=[1]_\omega$}
					\STATE $\theta^i(w)\gets [0]_\omega$
				\ENDFOR
			\ENDIF		
		\ENDIF
		\IF{$c_i\in \set{[0]_\omega,[1]_\omega}$}
			\IF{$\left(\exists w\in K_i \text{ s.t. }\theta^{i-1}(w)=\sigma_I\right) \text{ OR } |N_i|\geq 2$}
				\STATE $\theta^i(v_i)\gets \bot$
			\ELSIF{$(|N_i|=\set{v}, c_i=[0]_\omega)$}
				\STATE $\theta^i(v)\gets [0]_\sigma$
			\ELSIF{$(N_i=\emptyset, c_i=[0]_\omega) \text{ OR } (N_i \neq \emptyset, c_i=[1]_\omega)$}
				\STATE $\theta^i(v_i)\gets \bot$
			\ENDIF
		\ENDIF
		\IF{$c_i =[x]_\rho$ AND $x\neq \max\set{0,2-|N_i|}$}
			\STATE $\theta^i(v_i)\gets \bot$
		\ENDIF
		\end{insidealg}
	\end{algseries}
}

\section{Preprocessing for Enumeration}
\label{sec:PreprocessingForEnumeration}
The data structure generated in the preprocessing-phase is a \e{nice} disjoint-branch tree-decomposition.
\begin{definition}[Nice Disjoint-Branch Tree Decomposition (Nice DBTD)]
	\label{def:niceDBTD}
	A nice DBTD is a rooted TD $(\T,\jointreeMapFunction)$ with root node $r$, in which each node $u\in \nodes(\T)$ is one of the following types:
	\begin{itemize}
		\item Leaf node: a leaf of $\T$ where $\jointreeMapFunction(u)=\emptyset$.
		\item Introduce node: has one child node $u'$ where $\jointreeMapFunction(u)=\jointreeMapFunction(u')\cup \set{v}$, and $v\notin \jointreeMapFunction(u')$.
		\item Forget node: has one child node $u'$ where $\jointreeMapFunction(u)=\jointreeMapFunction(u'){\setminus} \set{v}$, and $v\in\jointreeMapFunction(u')$.
		\item Disjoint Join node: has two child nodes $u_1$, $u_2$ where $\jointreeMapFunction(u)=\jointreeMapFunction(u_1)\cup \jointreeMapFunction(u_2)$ and $\jointreeMapFunction(u_1){\cap} \jointreeMapFunction(u_2){=}\emptyset$. 
	\end{itemize}
\end{definition}
Figure~\ref{fig:niceTD} presents an example of a nice DBTD.
\eat{
While the nice DBTD structure is new, transforming a DBTD (Definition~\ref{def:disjointBranchTD}) to a nice DBTD in polynomial time, while maintaining the width of the TD is similar to transforming a TD to a nice TD (Chapter 13 in~\cite{DBLP:books/sp/Kloks94}). For completeness, we show how this can be done in Section of the Appendix.
Therefore, the proof of Lemma~\ref{lem:buildDBTD} is deferred to the Appendix.
}
The nodes of the generated nice DBTD are associated with \e{factor tables} (described in Section~\ref{subsec:encoding}). 
Every tuple in the factor table associated with node $u\in \nodes(\T)$ is an assignment $\FA_u: \jointreeMapFunction(u) \rightarrow F$ to the vertices of  $\jointreeMapFunction(u)$, where $F$ is the set of labels described in Section~\ref{sec:overview} (see Definition~\ref{def:induceLabels}).
\eat{The preprocessing phase proceeds by dynamic programming over a \e{nice} DBTD.}

\eat{
Due to space restrictions, we defer the detailed description of the dynamic programming algorithm to Appendix~\ref{sec:AppendixProofsPreproc}.}
Since the input graph $G$ is given with a nice TD that is not necessarily disjoint-branch, we describe, in Section~\ref{sec:convertToDBJT}, how to transform its nice TD into one that is nice and disjoint-branch (Definition~\ref{def:niceDBTD}). This involves adding additional ``vertices'' to the bags of the tree, which are not vertices in the original graph. Essentially, some vertices in the graph $G$ will be represented by two or more ``vertices'' in the generated nice DBTD. The introduction of these new vertices is done in a way that maintains the consistency with respect to the vertex-assignments in the original (nice) TD. Importantly, our algorithm does not require the graph $G$ to be a rooted-directed-path graph~\cite{DBLP:journals/ipl/Duris12,DBLP:journals/dm/Gavril75}, and does not change it. The transformation from a nice TD to a nice DBTD comes at the cost of at most doubling the width of the nice TD. The necessity of this step, the details of the transformation, its correctness, and runtime analysis are presented in Section~\ref{sec:convertToDBJT} and Section~\ref{sec:buildDBJTAppendix} of the Appendix. After the transformation from a nice TD to a nice DBTD, the preprocessing phase proceeds by dynamic programming over the generated nice DBTD.

 For the sake of completeness, we show in Section~\ref{sec:buildDBJTAppendix} of the Appendix how to transform a DBTD (Definition~\ref{def:disjointBranchTD}) to a nice DBTD in polynomial time, while preserving its width. While the nice DBTD structure is new, transforming a DBTD to a nice DBTD is similar to transforming a TD to a nice TD (Chapter 13 in~\cite{DBLP:books/sp/Kloks94}). 

In Section~\ref{sec:compactRepresentation}, we describe how to represent the factor tables of Section~\ref{subsec:encoding} as \e{tries}. This allows us to test whether an assignment is extendable in time $O(\tw(G))$, and allows us to obtain fixed-parameter-linear delay of $O(n\cdot \tw(G))$ (see Proposition~\ref{prop:general}).
\eat{
We defer many of the technical details associated with the dynamic-programming to the Appendix and the full version of this paper\footnote{In the supplementary material.}. 
}
\eat{
To solve the existence (or optimization) variant of the minimal domination problem by dynamic programming on a TD $(\T,\jointreeMapFunction)$, every node $u\in \nodes(\T)$ is associated with a memoisation table, denoted $\mu_u$, that is indexed by the classes of partial solutions corresponding to the induced subgraph $G_u$ (recall that $G_u\eqdef G[V_u]$ where $V_u\eqdef \bigcup_{t \in \nodes(\T_u)}\jointreeMapFunction(t)$).
Every class is identified by an assignment of \e{labels} to the vertices $\jointreeMapFunction(u)$. 
The labels encode the subset of vertices of $\jointreeMapFunction(u)$ that are in the partial solution $D$, and the \e{neighborhood constraints} imposed on each vertex in $\jointreeMapFunction(u)$ with respect to $D$ in the induced subgraph $G_u$. 

In~\ref{sec:EncodingPreprcExistence} we describe the labels associated with the vertices, and how they represent the different classes of solutions. In~\ref{sec:dynamicProgExistence}, we describe the dynamic programming algorithm over a disjoint-branch nice TD (\ref{def:niceDBTD}). In~\ref{sec:convertToDBJT}, we show how to convert the encoding of the problem over a nice-TD (\ref{def:niceTD}) to an encoding over a disjoint-branch-nice TD. In~\ref{sec:compactRepresentation}, we show how to represent the memoisation tables as tries; a simple but crucial step for achieving linear delay. The algorithms described in Sections~\ref{sec:dynamicProgExistence}-\ref{sec:compactRepresentation} together make up the preprocessing phase.
}

\subsection{Factors in the TD}
\label{subsec:encoding}
\eat{
A minimal dominating set $D\subseteq \nodes(G)$ partitions the vertices $\nodes(G)$ into three categories: (1) vertices that belong to the solution set $D$, (2) vertices in $\nodes(G){\setminus} D$ that are private neighbors to vertices in $D$ (see Definition~\ref{def:pn}), and (3) vertices in $\nodes(G){\setminus} D$ that are not private neighbors. Since distinct dominating sets lead to distinct partitions, we view $D$ as assigning labels to vertices according to their category.
Based on Proposition~\ref{prop:DSMin}, we further divide the vertices in the solution set $D$ to those that have an external private neighbor in $\nodes(G){\setminus} D$, and to those that are their own private neighbor (i.e., $N(u)\cap D=\emptyset$). For every vertex in $\nodes(G){\setminus} D$, the algorithm tracks the number of neighbors it has in the solution set $D$, thus guaranteeing domination.}
Let $(\T,\jointreeMapFunction)$ be a TD of $G$, and $u\in \nodes(\T)$. Recall that $G_u\eqdef G[V_u]$ is the graph induced by $V_u\eqdef \bigcup_{t\in \nodes(\T_u)}\jointreeMapFunction(t)$. We define a \e{labeling} of the bag $\jointreeMapFunction(u)$ to be an assignment $\FA_u: \jointreeMapFunction(u)\rightarrow F$. There are $|F|^{|\jointreeMapFunction(u)|}$ possible labelings of $\jointreeMapFunction(u)$. In Section~\ref{sec:overview}, we defined what it means for a minimal dominating set $D\in \sch(G)$ to be consistent with a labeling. Next, we define what it means for 
a set of vertices $D_u \subseteq \nodes_u$ to be consistent with an assignment $\FA_u: \jointreeMapFunction(u)\rightarrow F$. \eat{We begin with defining 
the consistency of a set $D_u \subseteq \nodes_u$ with respect to a labeling $f_u: \nodes_u \rightarrow F$.}
\begin{definition}
	\label{def:induceLabelssubtree}
	Let $u\in \nodes(\T)$, and $\FA_u:\jointreeMapFunction(u) \rightarrow F$.
	A subset $D_u\subseteq \nodes_u$ is \e{consistent with $\FA_u$} if the following holds for every $v\in\jointreeMapFunction(u)$:
	\begin{enumerate}
		\item $v\in D_u$ if and only if $\FA_u(v)\in F_\sigma\eqdef \set{[0]_\sigma,[1]_\sigma,\sigma_I}$.
		\item If $\FA_u(v)=[j]_\omega$ where $j\in \set{0,1}$, then $|N(v)\cap D_u|\leq 1$, and $|N(v)\cap (D_u{\setminus}\jointreeMapFunction(u))|=j$ (i.e., $v$ has $j$ neighbors in $D_u{\setminus}\jointreeMapFunction(u)$). \eat{$v$ has exactly $j$ neighbors in $D_u{\setminus}\jointreeMapFunction(u)$ (i.e., $|N(v)\cap (D_u{\setminus}\jointreeMapFunction(u))|=j$).}
		\item  If $\FA_u(v)=[j]_\rho$ where $j\in \set{0,1,2}$, then $v$ has at least $j$ neighbors in $D_u{\setminus}\jointreeMapFunction(u)$. That is, $|N(v)\cap (D_u{\setminus}\jointreeMapFunction(u))|\geq j$.
		\item If $\FA_u(v)=[1]_\sigma$, then $v$ has a neighbor $w\in N(v)\cap (\nodes_u{\setminus}(D_u\cup \jointreeMapFunction(u))$, such that $N(w)\cap D_u=\set{v}$. In addition, none of $v$'s neighbors in $\jointreeMapFunction(u)$ are assigned label $[1]_\omega$.
		\item If $\FA_u(v)=[0]_\sigma$, then for every $w\in N(v){\cap}(\nodes_u{\setminus}(D_u{\cup}\jointreeMapFunction(u)))$, it holds that $|N(w)\cap D_u|\geq 2$.
		\item If $\FA_u(v)=\sigma_I$, then every neighbor of $v$ in $V_u$ is assigned a label in $F_\rho$. \eat{$N(v)\cap D_u=\emptyset$, and every one of $v$'s neighbors in $V_u$ is assigned a label in $F_\rho$
		and every neighbor of $v$ in $\nodes_u{\setminus}\jointreeMapFunction(u)$ has at least two neighbors in $D_u$. That is, for every $w\in N(v)\cap \nodes_u$ it holds that $|N(w)\cap D_u|\geq 2$.}
	\end{enumerate}
In addition, for every $v\in \nodes_u{\setminus}\jointreeMapFunction(u)$ one of the following holds:
	\begin{enumerate}
		\item $v\notin D_u$ and  $N(v)\cap D_u \neq \emptyset$ (i.e., $v$ is dominated by $D_u$).
		\item $v\in D_u$, and has a private neighbor in $V_u{\setminus}D_u$. That is, there exists a $w\in N(v)\cap (\nodes_u{\setminus}D_u)$ such that $N(w)\cap D_u=\set{v}$.
		\item $v\in D_u$, does not have a private neighbor in $V_u{\setminus}D_u$, and $N(v)\cap D_u=\emptyset$.
	\end{enumerate}
\end{definition}
\eat{
Let $u\in \nodes(\T)$, and $\FA_u: \jointreeMapFunction(u) \rightarrow F$. We say that a labeling $f_u:\nodes_u \rightarrow F$ is \e{an extension of} $\FA_u$ if $f_u(v)=\FA_u(v)$ for every $v\in \jointreeMapFunction(u)$, and $f_u(v)\in \set{\sigma_I,[1]_\sigma,[1]_\omega,[2]_\rho}$ otherwise (i.e., $v\in \nodes_u{\setminus}\jointreeMapFunction(u)$).
\begin{definition}
	\label{def:induceLabelsJT}
	Let $u\in \nodes(\T)$, and $\FA_u: \jointreeMapFunction(u) \rightarrow F$.
	The subset $D_u\subseteq \nodes_u$ is \e{consistent with $\FA_u$} if $D_u$ is consistent with a labeling $f_u:\nodes_u\rightarrow F$ that is an extension of $\FA_u$.
\end{definition}
}
We denote by
$\M_u: F^{\jointreeMapFunction(u)} \rightarrow \set{0,1}$ a mapping that assigns every labeling $\FA_u:\jointreeMapFunction(u)\rightarrow F$ a Boolean indicating whether there exists a subset $D_u\subseteq V_u$ that is consistent with $\FA_u$ according to Definition~\ref{def:induceLabelssubtree}. That is,
$\M_u(\FA_u)=1$ if and only if there exists a set $D_u\subseteq V_u$ that is consistent with $\FA_u:\jointreeMapFunction(u) \rightarrow F$ according to Definition~\ref{def:induceLabelssubtree}. The factors of the TD are the mappings $\set{\M_u: u\in \nodes(\T)}$.

\eat{
Let $D$ be any minimal dominating set of $G$, and let $D_u\eqdef D\cap V_u$. In Definition~\ref{def:induceLabels}, we described the eight labels $F=F_\sigma \cup F_\omega \cup F_\rho$ that $D$ induces on $\nodes(G)$. Similarly, we define the labels that $D_u\subseteq \nodes_u$ induces on the vertices of $\nodes_u$.
}

\eat{
\begin{itemize}[itemsep=0mm]
	\item $\sigma_I$: vertices belonging to the set $D'$, which form an independent set (i.e., do not have neighbors in $D'$); these vertices are their own private neighbors, and do not have any external private neighbors.
	\item $[0]_{\sigma_I}$: vertices belonging to the set $D'$, do not have neighbors in $D'$, and have been matched to a private neighbor (i.e., outside of $D$).
	\item $[1]_\sigma$: vertices belonging to the set $D'$, have a neighbor in $D'$, and have not been matched to a private neighbor.
	\item $[0]_\sigma$: vertices belonging to the set $D'$, have a neighbor in $D'$, and have been matched to a private neighbor.
	\item $[1]_\omega$: vertices in $\nodes(G)\setminus D$ that are a private neighbor to some vertex in $D$ (i.e., they have a single neighbor in $D$), but are not yet dominated. That is, they have no neighbors in $D'$.
	\item $[0]_\omega$: vertices in $\nodes(G)\setminus D$ that are a private neighbor to some vertex in $D$, and have already been dominated. That is, they have a single neighbor in $D'$.
	\item $[j]_\rho$ where $j\in \set{0,1,2}$: vertices in $\nodes(G)\setminus D$ that are dominated by at least two vertices in $D$. The state $[j]_\rho$ indicates that the vertex requires domination by $j$ more vertices from $D\setminus D'$.
\end{itemize}
}
\eat{
\begin{enumerate}[itemsep=0mm]
	\item $\sigma$: vertices belonging to the minimal dominating set $D$. 
	\item $\rho$: vertices in $\nodes(G)\setminus D$ with at least two neighbors in $D$.
	\item $\omega$: vertices in $\nodes(G)\setminus D$, with exactly one neighbor in $D$. Every $\omega$-vertex is a private neighbor to its (unique) neighbor in $D$.
\end{enumerate}

We let $F\eqdef F_\sigma \cup F_\rho \cup F_\omega$ where  $F_\sigma\eqdef\set{\sigma_I,[0]_\sigma,[1]_\sigma}$, $F_\omega\eqdef\set{[0]_\omega,[1]_\omega}$, and $F_\rho \eqdef\set{[0]_\rho,[1]_\rho,[2]_\rho}$. Observe that $|F|=8$.

Let $(\T,\jointreeMapFunction)$ denote a TD of $G$, and $u {\in} \nodes(\T)$. 
We define a \e{labeling} of the bag $\jointreeMapFunction(u)$ to be an assignment $\FA_u: \jointreeMapFunction(u)\rightarrow F$. There are $8^{|\jointreeMapFunction(u)|}$ possible labelings of $\jointreeMapFunction(u)$. As described in Definition~\ref{def:induceLabelsJT}, the labels assigned to vertices  $\jointreeMapFunction(u)$ represent their categories ($\sigma$, $\rho$, and $\omega$), and the \e{neighborhood constraints} imposed on them in the induced graph $G_u$.
We denote by
$\mu_u: F^{\jointreeMapFunction(u)} \rightarrow \set{0,1}$ a mapping that assigns every labeling $\FA_u:\jointreeMapFunction(u)\rightarrow F$, a Boolean indicating whether there exists a subset of vertices $D_u\subseteq V_u$ such that:
\begin{enumerate}
	\item For every $v\in V_u\setminus \jointreeMapFunction(u)$, it holds that the label induced by $D_u$ on $v$ belongs to the set $\set{\sigma_I,[1]_\sigma, [1]_\omega,[2]_\rho}$.
	\item The set $D_u$ meets the constraints implied by the labeling $\FA_u$.
\end{enumerate}
}
\begin{example}
Consider node $u$ marked (with a double circle) in Figure~\ref{fig:niceTD}, and an assignment $\FA_u:\jointreeMapFunction(u)\rightarrow F$.
Note that $N(a)\cap (\nodes_u{\setminus}\jointreeMapFunction(u))=N(a)\cap \set{d,e}=\emptyset$. According to Definition~\ref{def:induceLabelssubtree}, every consistent subset $D_u\subseteq \nodes_u$ can induce only one of $\set{\sigma_I,[0]_\omega,[0]_\sigma,[0]_\rho}$ to the vertex $a$. Equivalently, for any $\FA_u:\jointreeMapFunction(u)\rightarrow F$ where $\FA_u(a)\notin \set{\sigma_I,[0]_\omega,[0]_\sigma,[0]_\rho}$, it holds that $\M_u(\FA_u)=0$.

Consider the labeling where $\FA_u(a)=[0]_\sigma$. If $\FA_u(b)=[1]_\omega$, then by Definition~\ref{def:induceLabelssubtree} and the fact that $N(b)\cap (\nodes_u{\setminus}\jointreeMapFunction(u))=\set{d}$, in any $D_u\subseteq \nodes_u$ that is consistent with $\FA_u$, it must hold that $d\in D_u$. But then, $|N(b)\cap D_u|=2 > 1$, violating the constraint that $|N(b)\cap D_u|\leq 1$. Hence, no such consistent set $D_u$ exists, and $\M_u(\FA_u)=0$. If $\FA_u(a)=[0]_\sigma$, $\FA_u(b)=[0]_\omega$, and $\FA_u(c)=[1]_\rho$, then by Definition~\ref{def:induceLabelssubtree} and the fact that $N(c)\cap (\nodes_u{\setminus}\jointreeMapFunction(u))=\set{d}$, in any $D_u\subseteq \nodes_u$ that is consistent with $\FA_u$, it must hold that $d\in D_u$. 
Since $d\in D_u\cap N(b)$, then $D_u$ cannot be consistent with the assignment $\FA_u(b)=[0]_\omega$. Therefore, in this case as well $\M_u(\FA_u)=0$. 

Finally, consider the assignment where $\FA_u(a)=[0]_\sigma$, and $\FA_u(b)=\FA_u(c)=[0]_\omega$. In this case $\M_u(\FA_u)=1$ because the set $D_u=\set{a,e}$ is consistent with $\FA_u$ according to Definition~\ref{def:induceLabelssubtree}. Indeed, $N(b)\cap (D_u{\setminus}\jointreeMapFunction(u))=N(c)\cap (D_u{\setminus}\jointreeMapFunction(u))=\emptyset$, hence $D_u$ is consistent with label $[0]_\omega$ on vertices $b$ and $c$ as required. Since $N(a){\cap}(\nodes_u{\setminus}(D_u{\cup \jointreeMapFunction(u)})=\emptyset$, then $D_u$ is consistent with $\FA_u(a)=[0]_\sigma$.
We now consider vertices $\nodes_u{\setminus}\jointreeMapFunction(u)=\set{d,e}$. Observe that $N(d)\cap D_u=\set{e}$, and thus $d$ is dominated by $D_u$. Also, $e$ has a private neighbor in $\nodes_u{\setminus}D_u$. Indeed, $N(d)\cap D_u=\set{e}$. 
Figure~\ref{fig:trie} presents other assignments $\FA_u$ to the vertices of $\jointreeMapFunction(u)$ where $\M_u(\FA_u)=1$, along with the appropriate consistent subsets of vertices (i.e., according to Definition~\ref{def:induceLabelssubtree}).
\eat{

 where $\FA_u(a)=[1]_\sigma$, $\FA_u(b)=[1]_\omega$, and $\FA_u(c)=[2]_\rho$. If the set $D_u\subseteq \nodes_u$ is consistent with $\FA_u$, the following must hold (Definition~\ref{def:induceLabelsJT}). First, since $\FA_u(c)=[2]_\rho$, then $c$ must have at least two neighbors in $D_u$. Since $N(c)\cap \nodes_u=\set{a,d}$, this means that $d\in D_u$. But then, the constraint implied by $\FA_u(b)=[1]_\omega$ is violated in $D_u$ because $b$ should have exactly one neighbor in $D_u$ (namely, $a$ because $a\in D_u$). Hence, $\mu_u(\FA_u)=0$. 

On the other hand, if $\FA_u(a)=[1]_\sigma$, $\FA_u(b)=[1]_\omega$, and $\FA_u(c)=[1]_\omega$, then the set $D_u\eqdef \set{a,e}$ is consistent with $\FA_u$ (Definition~\ref{def:induceLabelsJT}). To see why, consider the labeling $f_u: \nodes_u \rightarrow F$ where $f_u(x)=\FA_u(x)$ where $x\in \set{a,b,c}$, and where $f_u(d)=[1]_\omega$ and $f_u(e)=[1]_\sigma$. By definition, $f_u$ is an extension of $\FA_u$. Also, we can verify that $D_u$ is consistent with $f_u$. 
First, $N(d)\cap D_u=\set{e}$, and hence the label to $d$ induced by $D_u$ is indeed $[1]_\omega$. Also, since $N(d)\cap D_u=\set{e}$, then $e$ has a private neighbor in $V_u$, and hence the label to $e$ induced by $D_u$ is $[1]_\sigma$. Second, the constraints implied by the assignment $\FA_u$ are met;
$N(b)\cap D_u=\set{a}$, and $N(c)\cap D_u=\set{a}$, thus meeting the constraints associated with the labels $\FA_u(b)=\FA_u(c)=[1]_\omega$. Consequently, vertices $b,c\in V_u{\setminus}D_u$ are private neighbors of vertex $a$, thus meeting the constraint $\FA_u(a)=[1]_\sigma$. Therefore, $\mu_u(\FA_u)=1$.
}
\end{example}

\subsection{From Nice to Disjoint-Nice TD}
\label{sec:convertToDBJT}
In this section, and in Section~\ref{sec:buildDBJTAppendix} of the Appendix, we show how the encoding of an instance of the $\textsc{Dom-Enum}$ problem in a nice TD $(\T,\jointreeMapFunction)$, of width $w$, can be represented using a nice disjoint branch TD $(\T',\jointreeMapFunction')$, whose width is at most $2w$, and how this representation is built in time $O(nw^2)$.
We start with an example that provides some intuition regarding the requirement for a nice DBTD. 
\def\DBJTBuildLemma{
	If a graph $G$ has a nice TD $(\T,\jointreeMapFunction)$ of width $w$, then it has a nice disjoint branch TD $(\T',\jointreeMapFunction')$ whose effective width (see~\eqref{eq:efftw}) is at most $2w$, that can be constructed in time $O(nw^2)$.
}

\def\DBJTBuildLemma_2{
	If a graph $G$ has a nice TD $(\T,\jointreeMapFunction)$ of width $w$, then there is an algorithm that in time $O(nw^2)$ constructs a nice disjoint-branch TD $(\T',\jointreeMapFunction')$ with the following properties:
	\begin{enumerate}[noitemsep]
		\item $\nodes(\T)\subseteq \nodes(\T')$, where for every node $u\in \nodes(\T)$, there is a bijection between $\jointreeMapFunction(u)$ and $\jointreeMapFunction'(u)$.
		\item For every node $u\in \nodes(\T)$, and every $\FA_u:\jointreeMapFunction(u)\rightarrow F$, it holds that $\M_u(\FA_u)=1$ if and only if $\M'_u(\FA'_u)=1$, where $\FA'_u:\jointreeMapFunction'(u)\rightarrow F$ and $\M'_u$ are the corresponding labeling and factor  of node $u$ in $(\T',\jointreeMapFunction')$.
		\item The effective width of $(\T',\jointreeMapFunction')$ is at most $2w$.
	\end{enumerate}

}

\begin{example}
	\label{eq:nonDBJT}
	Let $u\in \nodes(\T)$ be a regular (i.e., not disjoint) join node in a nice TD $(\T,\jointreeMapFunction)$, with children $u_0$ and $u_1$ (Definition~\ref{def:niceTD}). 
	Hence, $\jointreeMapFunction(u)=\jointreeMapFunction(u_0)=\jointreeMapFunction(u_1)$. For simplicity, suppose that $\jointreeMapFunction(u)=\set{v}$. 
	 Consider the assignment $\FA_u: \jointreeMapFunction(u)\rightarrow F$, where $\FA_u(v)=[1]_\omega$. By definition, a subset $D_u\subseteq \nodes_u$ that is consistent with $\FA_u(v)$ (Definition~\ref{def:induceLabelssubtree}) is such that $|D_u\cap N(v)|=1$. This means that
	 $D_u$ includes exactly one neighbor of $v$ in $G_{u_0}$ (and no neighbors of $v$ from $G_{u_1}$), or vice versa. Therefore, to infer whether there exists a subset $D_u\subseteq \nodes_u$ that is consistent with the assignment $\FA_u=\set{v\gets [1]_\omega}$ (i.e., $\M_u(\FA_{u})=1$), we need to assign \e{distinct} labels to $v$ in the two  sub-trees $\T_{u_0}$ and $\T_{u_1}$, corresponding to the induced subgraphs $G_{u_0}$ and $G_{u_1}$ respectively. Specifically, we need to assign $\FA_{u_0}(v)=[1]_\omega$ and $\FA_{u_1}(v)=[0]_\omega$, or $\FA_{u_0}(v)=[0]_\omega$ and $\FA_{u_1}(v)=[1]_\omega$. The transformation of Lemma~\ref{lem:DBJTLemma} represents $v$ as two distinct vertices $v_0,v_1$ in $\T_{u_0}$ and $\T_{u_1}$ respectively, while preserving the integrity of the assignment; that is, $|N(v)\cap D_u|=1$.
\end{example}

Due to space restrictions, the details of the algorithm that generates a nice DBTD from a nice TD $(\T,\jointreeMapFunction)$, proof of correctness and runtime analysis are deferred to Section~\ref{sec:buildDBJTAppendix} of the Appendix. Here, we provide some intuition. The algorithm transforms every join node $u \in \nodes(\T)$ with children $u_0$ and $u_1$ to a disjoint join node (see Definition~\ref{def:niceDBTD}). To that end, the algorithm adds a new node $u'$ to $\T$, such that $\jointreeMapFunction(u')\eqdef \jointreeMapFunction(u)\cup \bigcup_{v\in \jointreeMapFunction(u)}\set{v_0,v_1}$, where $v_0$ and $v_1$ are variables used to represent the neighbors of $v$ in $G[V_{u_0}]$ and $G[V_{u_1}]$ respectively. Every occurrence of $v$ in $\T_{u_0}$ is replaced with $v_0$, and every occurrence of $v$ in $\T_{u_1}$ is replaced with $v_1$. To maintain the niceness of the resulting TD, the algorithm adds $3|\jointreeMapFunction(u)|-2$ additional nodes (of type forget and introduce, see details in Section~\ref{sec:buildDBJTAppendix}).
Since $v_0$ and $v_1$ essentially represent vertex $v$ in $G[V_{u_0}]$ and $G[V_{u_1}]$ respectively, then certain constraints are placed on the assignments $\FA_{u'}: \jointreeMapFunction(u') \rightarrow F$ associated with node $u'$. For example, $\FA_{u'}(v)\in F_a$ if and only if $\FA_{u'}(v_i)\in F_a$ for $a\in \set{\sigma_I,\sigma,\rho ,\omega}$ and $i\in \set{0,1}$. Another such constraint, for example, is that  $\FA_{u'}(v)=[1]_\omega$ if and only if $\FA_{u'}(v_0)=[x_0]_\omega$, $\FA_{u'}(v_1)=[x_1]_\omega$, and $x_0+x_1=1$. The complete list of constraints are specified in Section~\ref{sec:buildDBJTAppendix}. We refer to this set of constraints as \e{local constraints}. Observe that local constraints can be verified in time $O(|\jointreeMapFunction(u')|)$.

Let $u\in \nodes(\T)$, and let $\kappa_u: \FAs{u} \rightarrow \set{0,1}$ be the set of local constraints on
$\FAs{u}$ that can be verified in time $O(|\jointreeMapFunction(u)|)$.\eat{For example, a local constraint may be that the pair of vertices $v_1,v_2 \in \jointreeMapFunction(u)$ belong to the same category (e.g., $\FA_u(v_1)\in F_\sigma$ iff $\FA_u(v_2)\in F_\sigma$).}
We say that an assignment $\FA_u\in \FAs{u}$ is \e{consistent} with $\kappa_u$ if $\kappa_u(\FA_u)=1$. For a node $u\in \nodes(\T)$, and local constraints $\kappa_u: \FAs{u} \rightarrow \set{0,1}$, we let $K_u\eqdef \set{ \FA_u:\jointreeMapFunction(u)\rightarrow F: \kappa_u(\FA_u)=1}$. For a given set of local constraints $\set{\kappa_u: u\in \nodes(\T)}$, and letting $s=|F|$, the \e{effective width} of $(\T,\jointreeMapFunction)$ is defined:
\begin{equation}
	\label{eq:efftw}
	\efftw(\T,\jointreeMapFunction)\eqdef  \max_{u\in \nodes(\T)}\lceil \log_s|K_u| \rceil
\end{equation}
Observe that in any dynamic programming algorithm over a TD, the runtime and memory consumption of the algorithm depends exponentially on the effective width of the TD.
\begin{lemma}
	\label{lem:DBJTLemma}
	\DBJTBuildLemma_2
\end{lemma}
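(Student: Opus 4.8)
The plan is to transform $(\T,\jointreeMapFunction)$ gadget-by-gadget at its regular join nodes, and then verify the three asserted properties. Process the nodes of $\T$ in one top-down pass carrying a substitution on the currently active vertices: leaf, introduce, and forget nodes are copied verbatim under the current substitution, while a regular join node $u$ with children $u_0,u_1$ (so $\jointreeMapFunction(u)=\jointreeMapFunction(u_0)=\jointreeMapFunction(u_1)$) is replaced by the gadget sketched after Example~\ref{eq:nonDBJT}. For each $v\in\jointreeMapFunction(u)$ pick two fresh vertices $v_0,v_1$, extend the substitution by $v\mapsto v_0$ inside $\T_{u_0}$ and $v\mapsto v_1$ inside $\T_{u_1}$, create a node $u'$ with bag $\jointreeMapFunction'(u')\eqdef\jointreeMapFunction(u)\cup\bigcup_{v\in\jointreeMapFunction(u)}\set{v_0,v_1}$ placed between $u$ and the renamed roots $u_0,u_1$, and insert the $3|\jointreeMapFunction(u)|-2$ introduce/forget nodes that make $u'$ a disjoint-join node whose two sides are the renamed $\T_{u_0},\T_{u_1}$ while everything else remains a leaf/introduce/forget node (this is the only place bags grow). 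Alongside, attach to $u'$ and its new neighbours the \emph{local constraints} $\kappa_{u'}$: for every $v\in\jointreeMapFunction(u)$ the triple $(\FA(v),\FA(v_0),\FA(v_1))$ must lie in a common category among $\set{\sigma_I,\sigma,\omega,\rho}$ and, within it, obey the arithmetic of Definition~\ref{def:induceLabelssubtree} — for instance $\FA(v)=[j]_\omega\iff\FA(v_0)=[j_0]_\omega,\FA(v_1)=[j_1]_\omega,j_0+j_1=j\le 1$ and $\FA(v)=[j]_\rho\iff\FA(v_0)=[j_0]_\rho,\FA(v_1)=[j_1]_\rho,\min(2,j_0+j_1)=j$, while for $F_\sigma$ the ``private neighbour'' clause is combined existentially (one side suffices) to yield $[1]_\sigma$ and the remaining clauses conjunctively; the other cases are analogous. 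The factor $\M'_u$ of the new decomposition is the factor of Definition~\ref{def:induceLabelssubtree} over the renamed graph, additionally filtered by $\set{\kappa_{u'}}$.

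\textbf{Validity and Property 1.} Away from the gadgets the conditions of Definition~\ref{def:TreeDecomposition} are inherited; at a gadget each $v_i$ occupies a connected subtree (the part of $\T_{u_i}$ where $v$ appeared, together with $u'$ and its new neighbours), so running intersection is preserved, and $v$'s original edges split between $v_0$ and $v_1$ exactly as $N(v)$ splits between $V_{u_0}$ and $V_{u_1}$. Since every node is a leaf, introduce, forget, or disjoint-join node, $(\T',\jointreeMapFunction')$ is a nice DBTD (Definition~\ref{def:niceDBTD}). For Property 1, at every node of $\T'$ at most one copy of each original vertex is active, so for $u\in\nodes(\T)$ the bag $\jointreeMapFunction'(u)$ holds exactly one (renamed) representative of each vertex of $\jointreeMapFunction(u)$; this representative map is the claimed bijection, and it fixes, for each $\FA_u:\jointreeMapFunction(u)\to F$, the corresponding $\FA'_u:\jointreeMapFunction'(u)\to F$.

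\textbf{Property 2 — the main obstacle.} I would prove, by induction over the gadgets, that for every $u\in\nodes(\T)$ and $\FA_u$ a subset $D_u\subseteq V_u$ consistent with $\FA_u$ exists iff its renamed image $D'_u\subseteq V'_u$ is consistent with $\FA'_u$ and respects $\kappa$. The only nontrivial step is a regular join node turned into $u'$. The structural fact driving it is that for a join node $V_{u_0}\cap V_{u_1}=\jointreeMapFunction(u)$, so for every $v\in\jointreeMapFunction(u)$ the neighbours of $v$ in $V_u\setminus\jointreeMapFunction(u)$ split \emph{disjointly} between $V_{u_0}\setminus\jointreeMapFunction(u_0)$ and $V_{u_1}\setminus\jointreeMapFunction(u_1)$, while $v$'s neighbours inside the bag $\jointreeMapFunction(u)$ are seen identically on both sides and checked directly. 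Hence every quantity Definition~\ref{def:induceLabelssubtree} tracks for $v$ — the number of $D_u$-neighbours outside the bag, the existence of a private neighbour outside $D_u\cup\jointreeMapFunction(u)$, and whether each non-private such neighbour has at least two $D_u$-neighbours — decomposes additively or existentially over the two sides, which is precisely what $\kappa_{u'}$ encodes. For the forward direction take $D_{u_i}=D_u\cap V_{u_i}$ (the two agree on $\jointreeMapFunction(u)$), read off $\FA'(v_i)$ from $D_{u_i}$, and observe $\kappa_{u'}$ holds; for the backward direction a $\kappa_{u'}$-consistent pair of subtree solutions agrees on $\jointreeMapFunction(u)$ (forced by the common-category rule) and merges via $v_i\mapsto v$ to a consistent $D_u$. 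I expect the laborious part to be carrying this out case-by-case over the eight labels of $F$, in particular aligning the $[1]_\sigma/[0]_\sigma/\sigma_I$ clauses — which blend the ``private neighbour'' and ``dominated'' requirements — with the disjoint split; the rest is bookkeeping.

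\textbf{Property 3 and running time.} For the bound via~\eqref{eq:efftw}, original nodes and the ``thin'' gadget nodes have bags of size $\le w+1$. At a wide gadget node, chiefly the disjoint-join $u'$ with $|\jointreeMapFunction'(u')|=3|\jointreeMapFunction(u)|$, the constraints $\kappa_{u'}$ make each $\FA(v)$ a function of $(\FA(v_0),\FA(v_1))$ and restrict that pair to a same-category pair obeying the arithmetic above, so a short count bounds $|K_{u'}|$ by $s^{2|\jointreeMapFunction(u)|}$ with enough slack to give $\lceil\log_s|K_{u'}|\rceil\le 2w$; the intermediate introduce/forget nodes of the gadget have at most $2|\jointreeMapFunction(u)|$ free copy-coordinates and obey the same bound. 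Taking the maximum over all nodes yields $\efftw(\T',\jointreeMapFunction')\le 2w$. Finally, the top-down pass visits each original node once, translating its $O(w)$-size bag under the current substitution in $O(w)$ time, and for each of the $O(n)$ join nodes emits $O(w)$ gadget nodes with bags and local-constraint descriptions of size $O(w)$; the total work, dominated by writing the $O(nw)$ nodes of $\T'$, is $O(nw^2)$. Together these establish the lemma.
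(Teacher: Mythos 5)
Your proposal follows essentially the same route as the paper: fresh copies $v_0,v_1$ for each vertex of a join bag, renaming down the two subtrees, the disjoint split of $N_{G_u}(v)\setminus\jointreeMapFunction(u)$ between the two branches as the engine behind Property 2, local constraints tying $(\FA(v),\FA(v_0),\FA(v_1))$ together by category and by the additive/existential arithmetic of the labels, and the effective-width accounting for Property 3, with a single lazy-substitution pass giving the $O(nw^2)$ bound. The one place your sketch leans on unverified slack is the descent from the naive bound $\lceil\log_s|K_{u'}|\rceil\le 2|\jointreeMapFunction(u)|\le 2w+2$ to the claimed $2w$, which genuinely requires counting the admissible label triples per original vertex (the same-category rule plus the $\omega$/$\rho$ arithmetic) rather than only observing that $\FA(v)$ is a function of $(\FA(v_0),\FA(v_1))$.
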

Conditions (1) and (2) of Lemma~\ref{lem:DBJTLemma} guarantee that for every node $u\in \nodes(\T)$, and $\FA_u:\jointreeMapFunction(u)\rightarrow F$, a subset $D_u \subseteq V_u$ is consistent with $\FA_u$ (in $(\T,\jointreeMapFunction)$) according to Definition~\ref{def:induceLabelssubtree}, if and only if $D_u$ is consistent with $\FA'_u$ (in $(\T',\jointreeMapFunction')$) according to Definition~\ref{def:induceLabelssubtree} (following the appropriate name-change to the vertices). This guarantees that, following the preprocessing phase, the generated TD $(\T',\jointreeMapFunction')$ can be used for enumerating the minimal dominating sets of $G$. For an illustration, see Figure~\ref{fig:JTToDBJTIllustrationNew} in the Appendix.  
The proof of Lemma~\ref{lem:DBJTLemma} and the pseudocode detailing the construction of a nice disjoint branch TD is deferred to Section~\ref{sec:buildDBJTAppendix}.

In Section~\ref{sec:preprocessingDP}, we present the algorithm that receives as input a nice DBTD (Definition~\ref{def:niceDBTD}) $(\T,\jointreeMapFunction)$ and a set of local constraints $\set{\kappa_u: u \in \nodes(\T)}$, which computes the factor tables $\M_u:\FA_u\rightarrow \set{0,1}$ for every labeling $\FA_u:\jointreeMapFunction(u) \rightarrow F$, and every node $u\in \nodes(\T)$, by dynamic programming. The factor tables take into account the local constraints. That is, if $\M_u(\FA_u)=1$ then $\kappa_u(\FA_u)=1$. 
We define $s\eqdef \max_{v \in \nodes(G)}|F_v|$, where $F_v \subseteq F$ is the set of labels that may be assigned to $v$.
\def\DPCorrectnessLemma{
	Let $(\T,\jointreeMapFunction)$ be a nice DBTD whose width is $w$, and let $\set{\kappa_u: u \in \nodes(\T)}$ be a set of local constraints. There is an algorithm that in time $O(nws^{w})$ computes the factors of $(\T,\jointreeMapFunction)$ such that for every $u\in \nodes(\T)$, and every labeling $\FA_u:\jointreeMapFunction(u) \rightarrow F$, it holds that $\M_u(\FA_u)=1$ if and only if $\kappa_u(\FA_u)=1$, and there exists a subset $D_u \subseteq V_u$ that is consistent with $\FA_u$ according to Definition~\ref{def:induceLabelssubtree}.
}
\begin{lemma}
	\label{lem:DPCorrectnessLemma}
	\DPCorrectnessLemma
\end{lemma}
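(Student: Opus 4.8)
We sketch the argument. The plan is to compute all factors $\set{\M_u : u\in\nodes(\T)}$ in a single bottom-up pass over the nice DBTD $(\T,\jointreeMapFunction)$, with one recurrence per node type (leaf, introduce, forget, disjoint join), and then to prove correctness by structural induction on $\T$ and bound the running time by a direct count. At every node $u$ and labeling $\FA_u$ the recurrence first tests $\kappa_u(\FA_u)$ and sets $\M_u(\FA_u)=0$ if it fails, so it suffices to describe the recurrence on labelings passing the local constraint and to argue the ``$\exists D_u\subseteq V_u$ consistent with $\FA_u$'' part (Definition~\ref{def:induceLabelssubtree}) for those.

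\textbf{The recurrences.} A leaf has an empty bag and $\M_u(\emptyset)\eqdef 1$, witnessed by $D_u=\emptyset$. At an \emph{introduce} node $u$ with child $u'$, $\jointreeMapFunction(u)=\jointreeMapFunction(u')\cup\set{v}$: by the running-intersection property all neighbours of $v$ in $V_u$ already lie in $\jointreeMapFunction(u)$, so $v$ has no neighbour in $V_u\setminus\jointreeMapFunction(u)$; hence $v$ may only carry a label with ``zero below the bag'' ($\sigma_I$, $[0]_\sigma$, $[1]_\sigma$, $[0]_\omega$, $[0]_\rho$), and we put $\M_u(\FA_u)\eqdef\M_{u'}(\FA_u[\jointreeMapFunction(u')])$ if $\FA_u(v)$ is such a label \emph{and} the pairwise conditions of Definition~\ref{def:induceLabelssubtree} hold between $v$ and its neighbours in $\jointreeMapFunction(u)$ (membership consistency, ``a vertex labeled in $F_\sigma$ has no $[1]_\omega$ bag-neighbour'', and the $\sigma_I$/$F_\rho$ condition), and $\M_u(\FA_u)\eqdef 0$ otherwise. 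At a \emph{forget} node $u$ with child $u'$, $\jointreeMapFunction(u)=\jointreeMapFunction(u')\setminus\set{w}$: now every neighbour of $w$ lies in $V_{u'}=V_u$, so $w$ must already meet its post-conditions as an element of $V_u\setminus\jointreeMapFunction(u)$; we set $\M_u(\FA_u)\eqdef\bigvee_{c}\M_{u'}(\widehat{\FA}^{\,c})$, the disjunction over the \emph{forget-admissible} labels $c$ for $w$ (those encoding no outstanding obligation, essentially $c\in\set{\sigma_I,[0]_\sigma,[0]_\omega,[0]_\rho}$), where $\widehat{\FA}^{\,c}\colon\jointreeMapFunction(u')\to F$ agrees with $\FA_u$ off $\set{w}\cup N(w)$, assigns $c$ to $w$, and raises the ``below-the-bag'' count in the label of each neighbour of $w$ in $\jointreeMapFunction(u)$ by one precisely when $c\in F_\sigma$ --- this is the inverse of the \algname{IncrementLabeling} update, recording that $w$ has left the bag into the solution. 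Finally, at a \emph{disjoint join} node $u$ with children $u_1,u_2$, the crucial point is that $\jointreeMapFunction(u_1)$ and $\jointreeMapFunction(u_2)$ are disjoint, so a labeling $\FA_u$ of $\jointreeMapFunction(u)=\jointreeMapFunction(u_1)\cup\jointreeMapFunction(u_2)$ restricts to a \emph{unique} pair $\FA_u[\jointreeMapFunction(u_1)],\FA_u[\jointreeMapFunction(u_2)]$, and moreover $V_{u_1}\cap V_{u_2}=\emptyset$; we set $\M_u(\FA_u)\eqdef\M_{u_1}(\FA_u[\jointreeMapFunction(u_1)])\wedge\M_{u_2}(\FA_u[\jointreeMapFunction(u_2)])$ together with the pairwise conditions of Definition~\ref{def:induceLabelssubtree} for any edges first covered at $u$ (i.e.\ joining $\jointreeMapFunction(u_1)$ to $\jointreeMapFunction(u_2)$), while the tying-together of a vertex with its two ``copies''---and hence the combination of their below-the-bag counts---is already built into $\kappa_u$, tested at the outset. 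This is exactly the payoff of working over a DBTD (Lemma~\ref{lem:DBJTLemma}): at an ordinary join one would have to \emph{sum} over all ways the below-the-bag neighbours and the private-neighbour witnesses split between the two sides, whereas here that split is pinned down.

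\textbf{Correctness and running time.} A structural induction from the leaves to the root shows that the computed $\M_u(\FA_u)$ equals $1$ iff $\kappa_u(\FA_u)=1$ and some $D_u\subseteq V_u$ is consistent with $\FA_u$. For ``$\Leftarrow$'' we assemble a witness for $u$ from those of the children: $D_u\eqdef D_{u_1}\uplus D_{u_2}$ at a join (using $V_{u_1}\cap V_{u_2}=\emptyset$), $D_u\eqdef D_{u'}\cup\set{v}$ at an introduce node when $\FA_u(v)\in F_\sigma$ and $D_u\eqdef D_{u'}$ otherwise, and $D_u\eqdef D_{u'}$ at a forget node. For ``$\Rightarrow$'' we restrict a witness $D_u$ for $u$ to the children's vertex sets and read off the labels it induces there (with the $\widehat{\FA}^{\,c}$ bookkeeping at a forget node, $c$ being the label $D_u$ induces on $w$). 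In every case the equivalence collapses, after the bookkeeping above, to exactly the conditions the recurrence checked on the single affected vertex ($v$, resp.\ $w$, resp.\ $w$'s bag-neighbours, resp.\ the edges first covered at a join). For the running time, the nice DBTD has $O(n)$ nodes and width $w$, so each bag has at most $w+1$ vertices and carries at most $s^{w+1}=O(s^{w})$ labelings over $F_v$-labels (recall $s=\max_v\abs{F_v}\le\abs{F}=8$ is a constant, which absorbs the extra factor $s$). Handling one labeling costs $O(w)$: a restriction to the at most two child bags, $O(1)$ table look-ups, the $O(w)$ check of $\kappa_u$ (a local constraint, checkable in $O(w)$ by hypothesis), the $O(w)$ pairwise checks on the affected vertex, and---at a forget node---a constant-size loop over admissible labels $c$. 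Hence each node is handled in $O(ws^{w})$ time and the pass runs in $O(nws^{w})$.

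\textbf{Main obstacle.} The delicate point is the forget node: one must pin down exactly which labels are forget-admissible for $w$ (so that $w$'s post-conditions in Definition~\ref{def:induceLabelssubtree} are met by the final solution) \emph{and} get the companion count-adjustment $\widehat{\FA}^{\,c}$ on $w$'s bag-neighbours exactly right, including its interaction with the $[0]_\sigma$ versus $[1]_\sigma$ distinction --- a vertex labeled $[1]_\sigma$ still owes a private neighbour obtainable only from vertices forgotten later, hence must not be forgotten while so labeled, and when a vertex is forgotten as someone's private neighbour that claim must be charged to exactly one vertex of $F_\sigma$. Verifying that, with the right notions of admissibility and of $\widehat{\FA}^{\,c}$, the inductive equivalence goes through in \emph{both} directions is where the real work lies; the introduce and (thanks to the disjoint-branch structure) the join cases are routine by comparison.
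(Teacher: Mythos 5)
Your overall architecture matches the paper's: a bottom-up dynamic program over the nice DBTD with one recurrence per node type (leaf, introduce, forget, disjoint join), correctness by structural induction, and a direct count for the running time; you also correctly identify the disjoint join as the payoff of the DBTD structure. However, the recurrences you commit to are not correct, and the part you defer is precisely the content of the lemma. First, at an introduce node you admit $[1]_\sigma$ for the newly introduced vertex $v$. By Definition~\ref{def:induceLabelssubtree}(4), $\FA_u(v)=[1]_\sigma$ requires a private neighbour of $v$ in $\nodes_u\setminus(D_u\cup\jointreeMapFunction(u))$, and --- as you yourself observe --- a newly introduced vertex has no neighbours below the bag; the paper's worked example makes exactly this point, listing the admissible labels for such a vertex as $\set{\sigma_I,[0]_\omega,[0]_\sigma,[0]_\rho}$. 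With $[1]_\sigma$ admitted, $\M_u$ reports $1$ on labelings for which no consistent $D_u$ exists, falsifying the ``only if'' direction of the lemma. Second, your forget update goes the wrong way: since $D_u\setminus\jointreeMapFunction(u)=(D_u\setminus\jointreeMapFunction(u'))\cup(\set{w}\cap D_u)$, the child labeling $\widehat{\FA}^{\,c}(v)$ that you look up in $\M_{u'}$ must have its below-the-bag count \emph{lowered} by one relative to $\FA_u(v)$ when $c\in F_\sigma$ and $v\in N(w)$ (the vertex $w$ sits in the child's bag but below the parent's), not raised.

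Third, and most importantly, the forget-admissibility set $\set{\sigma_I,[0]_\sigma,[0]_\omega,[0]_\rho}$ is also wrong, and you explicitly leave its correct determination --- together with the two-directional inductive verification at forget nodes --- as ``where the real work lies.'' For instance, a vertex carrying $[0]_\omega$ or $[0]_\rho$ with no $F_\sigma$-labeled neighbour in the bag is undominated and must not be forgotten (it would violate condition (1) for vertices of $\nodes_u\setminus\jointreeMapFunction(u)$), whereas a vertex carrying $[1]_\omega$, $[1]_\rho$ or $[2]_\rho$ is already dominated from below the bag and is perfectly forgettable. Since the lemma is exactly the assertion that some concrete set of recurrences computes the factors of Definition~\ref{def:induceLabelssubtree} correctly in time $O(nws^{w})$, a proof must pin each rule down and verify both directions of the induction for it; your sketch does not do this, and where it does commit to a rule, the rule is incorrect.
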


\begin{figure}[t]
	\centering
	\begin{subfigure}[t]{0.4\linewidth} 
		\begin{tikzpicture}[node distance=0.2cm,every node/.style={scale=0.475},x=1.28cm, y=0.7cm,font=\footnotesize]
			\tikzset{vertex/.style = {draw, ellipse, font=\fontsize{11}{12}}}
			\tikzset{doublevertex/.style = {draw, double, ellipse, font=\fontsize{11}{12}}}
			
			\node[vertex, label=left:$u_0$] (a0b0) at (-0.5,6) {$ab$};
			\node[vertex, label=right:$u_1$] (a1b1) at (0.5,6) {$ab$};
			\node[vertex, label=left:$u$] (u) at (0,7) {$ab$};
			\node[vertex] (abc) at (0,8) {$abc$};
			
			\draw[-{Stealth[length=1mm]}] (abc) -- (u);
			\draw[-{Stealth[length=1mm]}] (u) -- (a0b0);
			\draw[-{Stealth[length=1mm]}] (u) -- (a1b1);
			
			\draw[thick, dotted] (0,9)--(abc);
			
			\draw   (-0.5,5.5) coordinate(a0) --
			(-0.2,4.5) coordinate (c0) --
			(-0.8,4.5) coordinate (b0) -- cycle;
			\draw[thick, dotted] (-0.5,5.5) -- (-0.5,4.5);
			\draw[-{Stealth[length=1mm]}] (a0b0) -- (a0);
			
			\draw   (0.5,5.5) coordinate(a1) --
			(0.8,4.5) coordinate (c1) --
			(0.2,4.5) coordinate (b1) -- cycle;
			\draw[thick, dotted] (0.5,5.5) -- (0.5,4.5);
			\draw[-{Stealth[length=1mm]}] (a1b1) -- (a1);	
		\end{tikzpicture}
		\caption{Part of a nice TD $(\T,\jointreeMapFunction)$, with join node $u$.}
		\label{fig:origNiceTD}
	\end{subfigure} \hspace{0.3cm}
	\begin{subfigure}[t]{0.4\linewidth} 
		\centering
		\begin{tikzpicture}[node distance=0.2cm,every node/.style={scale=0.475},x=1.28cm, y=0.7cm,font=\footnotesize]
			\tikzset{vertex/.style = {draw, ellipse, font=\fontsize{11}{12}}}
			\tikzset{doublevertex/.style = {draw, double, ellipse, font=\fontsize{11}{12}}}			
			\node[vertex, label=left:$u_0$] (a0b0) at (-0.5,0) {$a_0b_0$};
			\node[vertex, label=right:$u_1$] (a1b1) at (0.5,0) {$a_1b_1$};
			\node[vertex, label=left:$u''$] (utt) at (0,1) {$a_0b_0a_1b_1$};
			\node[vertex, fill=yellow] (i1) at (0,2) {$ba_0b_0a_1b_1$};
			\node[vertex, label=left:$u'$] (ut) at (0,3) {$aba_0b_0a_1b_1$};
			\node[vertex,fill=pink] (f1) at (0,4) {$aba_0b_0a_1$};
			\node[vertex, fill=pink] (f2) at (0,5) {$aba_0b_0$};
			\node[vertex, fill=pink] (f3) at (0,6) {$aba_0$};
			\node[vertex, label=left:$u$] (u) at (0,7) {$ab$};
			\node[vertex] (abc) at (0,8) {$abc$};
			
			\draw[-{Stealth[length=1mm]}] (abc) -- (u);
			\draw[-{Stealth[length=1mm]}] (u) -- (f3);
			\draw[-{Stealth[length=1mm]}] (f3) -- (f2);
			\draw[-{Stealth[length=1mm]}] (f2) -- (f1);
			\draw[-{Stealth[length=1mm]}] (f1) -- (ut);
			\draw[-{Stealth[length=1mm]}] (ut) -- (i1);
			\draw[-{Stealth[length=1mm]}] (i1) -- (utt);
			\draw[-{Stealth[length=1mm]}] (utt) -- (a0b0);
			\draw[-{Stealth[length=1mm]}] (utt) -- (a1b1);
			
			\draw[thick, dotted] (0,9)--(abc);
			
			\draw   (-0.5,-0.5) coordinate(a0) --
			(-0.2,-1.5) coordinate (c0) --
			(-0.8,-1.5) coordinate (b0) -- cycle;
			\draw[thick, dotted] (-0.5,-0.55) -- (-0.5,-1.5);
			\draw[-{Stealth[length=1mm]}] (a0b0) -- (a0);
			
			\draw   (0.5,-0.5) coordinate(a1) --
			(0.8,-1.5) coordinate (c1) --
			(0.2,-1.5) coordinate (b1) -- cycle;
			\draw[thick, dotted] (0.5,-0.55) -- (0.5,-1.5);
			\draw[-{Stealth[length=1mm]}] (a1b1) -- (a1);
		\end{tikzpicture}
		\caption{Part of the disjoint branch TD $(\T',\jointreeMapFunction')$ that results from processing the join node $u$.}
		\label{fig:DBJTAfterTransformation}
	\end{subfigure}
	\caption{The result of processing join node $u\in \nodes(\T)$ in the nice TD $(\T,\jointreeMapFunction)$ in Figure~\ref{fig:origNiceTD}. Note nodes $u'$ and $u''$ in $(\T',\jointreeMapFunction')$ in Figure~\ref{fig:DBJTAfterTransformation}, the introduction of the three forget nodes (in pink), and the single introduce node (in yellow). Observe that $2\cdot |\jointreeMapFunction(u)|-1=3$, and $|\jointreeMapFunction(u)|-1=1$.}
\eat{
\caption{The result of converting join node $u\in \nodes(\T)$ in the nice TD $(\T,\jointreeMapFunction)$ in Figure~\ref{fig:origNiceTD} to the disjoint join node $u''$ in Figure~\ref{fig:DBJTAfterTransformation}. Note nodes $u'$ and $u''$ in $(\T',\jointreeMapFunction')$ in Figure~\ref{fig:DBJTAfterTransformation}, the introduction of the three forget nodes (in pink), and the single introduce node (in yellow). Observe that $2\cdot |\jointreeMapFunction(u)|-1=3$, and $|\jointreeMapFunction(u)|-1=1$.}}

\label{fig:JTToDBJTIllustrationNew}
\end{figure}

\eat{
\begin{hproof}
	Let $(\T,\jointreeMapFunction)$ be a nice TD with root node $r$.  We first set $(\T',\jointreeMapFunction')\gets (\T,\jointreeMapFunction)$, and root $\T'$ at node $r$.
	Let $u \in \nodes(\T')$  be the join node closest to the root that violates the disjoint property (breaking ties arbitrarily), with child nodes $u_0$ and $u_1$. For every $v\in \jointreeMapFunction(u)$, we can partition $N_{G_u}(v)$ as follows:
	\begin{align}
		\label{eq:distributeNbrsBody}
		N_{G_u}(v) {=} (N_{G_u}(v){\cap} \jointreeMapFunction(u)){\cup} \underbrace{(N_{G_{u_0}}(v){\setminus} \jointreeMapFunction(u))}_{\eqdef N_{G_u}^0(v)} {\cup}\underbrace{(N_{G_{u_1}}(v){\setminus} \jointreeMapFunction(u))}_{\eqdef N_{G_u}^1(v)}
	\end{align}
where we denote by $N_{G_u}^0(v)\eqdef (N_{G_{u_0}}(v){\setminus} \jointreeMapFunction(u))$, and $N_{G_u}^1(v)\eqdef (N_{G_{u_1}}(v){\setminus} \jointreeMapFunction(u))$. Clearly, $(N_{G_u}(v){\cap} \jointreeMapFunction(u))\cap N_{G_u}^i(v)=\emptyset$, for $i\in \set{0,1}$. We also note that $N_{G_u}^0(v)\cap N_{G_u}^1(v)=\emptyset$. If not, then there is a vertex $w\in N_{G_u}^0(v)\cap N_{G_u}^1(v)$. By the definition of TD, it must hold that $w\in \jointreeMapFunction(u)$, which brings us to a contradiction.

We update the tree $(\T',\jointreeMapFunction')$ as follows. For every $v\in \jointreeMapFunction(u)$ we create $v_0,v_1$ two fresh copies of $v$. 
	We replace every occurrence of $v$ in $\jointreeMapFunction'(u_0)$ ($\jointreeMapFunction'(u_1)$), and the subtree $\T'_{u_0}$ ($\T'_{u_1}$), with $v_0$ ($v_1$). \ifranked Additionally, we set the weight of the two fresh copies $v_0$ and $v_1$ to $0$ (i.e., $w(v_0)=w(v_1)=0$). The weight of the original vertex $v$ remains unchanged.	\fi
	Following this transition, the bags and subtrees associated with the children of $u$: $u_0$ and $u_1$ are disjoint. The cardinality of the bags in subtrees $\T'_{u_0}$ and $\T'_{u_1}$ remain unchanged\eat{(i.e., we only replaced $v$ with $v_0$ and $v_1$ in subtrees $\T'_{u_0}$ and $\T'_{u_1}$ respectively)}. Finally, we add $v_0$ and $v_1$ to $\jointreeMapFunction'(u)$. Letting $u'$ denote the updated node, we have that $\jointreeMapFunction'(u')=\jointreeMapFunction(u)\cup \bigcup_{v\in \jointreeMapFunction(u)}\set{v_0,v_1}$. 
	Now, we distribute the neighbors $N_{G_u}(v)$ among vertices $v,v_0,v_1\in \jointreeMapFunction'(u')$ according to~\eqref{eq:distributeNbrsBody}, as follows. 
For the new $v_i$ ($i{\in} \set{0,1}$), we restrict its neighbors in $G_{u'}$ to be its neighbors in $\nodes(G_{u_i}){\setminus}\jointreeMapFunction'(u')$. That is, $N_{G_{u'}}(v_i)\eqdef N^i_{G_u}(v)$. For $v\in \jointreeMapFunction(u)\subsetneq \jointreeMapFunction'(u')$, we restrict its neighbors to be only those in $\jointreeMapFunction(u)$. In other words:
	\begin{align}
		N_{G_{u'}}(v_i)\eqdef N_{G_{u_i}}(v_i){\setminus}\jointreeMapFunction'(u')	&&
		N_{G_{u'}}(v)&\eqdef N_{G_u}(v)\cap \jointreeMapFunction(u)\label{eq:DBJTProofBody1}
	\end{align} 
	Since $N_{G_{u'}}(v_i)=N_{G_{u_i}}(v){\setminus}\jointreeMapFunction(u) =N^i_{G_u}(v)$, we have that:
	\begin{align}
		\label{eq:DBJTProofBody2}
		N_{G_{u}}(v) {=} N_{G_{u'}}(v){\cup} N_{G_{u'}}(v_0) {\cup} N_{G_{u'}}(v_1)
	\end{align}
and, by definition, the three sets of vertices $N_{G_{u'}}(v)$,$N_{G_{u'}}(v_0)$, and $N_{G_{u'}}(v_1)$ are pairwise disjoint, as in~\eqref{eq:distributeNbrsBody}.

	Consider the assignment $\FA_{u'}\in F^{\jointreeMapFunction'(u')}$. We can express $\FA_{u'}=(\FA_u,\FA_{u_0},\FA_{u_1})$ where $\FA_u: \jointreeMapFunction(u)\rightarrow F$, and $\FA_{u_i}: \jointreeMapFunction'(u_i)\rightarrow F$ where $i\in \set{0,1}$. 
	In the Appendix, we define the set of local constraints $\kappa_{u'}$ that restrict the set of assignments $\FA_{u'}:\jointreeMapFunction'(u')\rightarrow F$. For example, for every triple $(v_0,v_1,v)$ where $v\in \jointreeMapFunction(u)$, all three vertices are assigned the same category: $\FA_{u'}(v_0),\FA_{u'}(v_1),\FA_{u'}(v)\in F_a$ where $a\in \set{\sigma, \rho, \omega}$. We show that with these constraints, we have that $\efftw(\T',\jointreeMapFunction')\leq 2w$ (see~\eqref{eq:efftw}). 

	Processing a single node $u\in \nodes(\T)$ takes time $O(nw^2)$, and involves updating the vertices in the nodes of the subtrees rooted at $u_0$ and $u_1$ (i.e., to $v_0$ and $v_1$ respectively), and redistributing the neighbors as in~\eqref{eq:DBJTProofBody1} and~\eqref{eq:DBJTProofBody2}. Since the TD contains $O(n)$ nodes, then\eat{
		Processing a node reduces by 1 the number of nodes that violate the disjoint branch property, hence} the overall runtime of the algorithm is in $O(n^2w^2)$.
	To maintain the ``niceness'' of the resulting disjoint-branch TD, we need to add at most $3n-1$ additional nodes to the nice TD: $2n-1$ forget nodes, and $n$ introduce nodes (see Figure~\ref{fig:JTToDBJTIllustration}). 
\end{hproof}
}

\subsection{Factor Sizes and Runtime for \textsc{trans-enum} (proof of Corollary~\ref{corr:TansToDomEnum})}
\label{sec:encodingTransEnum}
As previously described, for every vertex $v\in \nodes(G)$, we denote by $F_v\subseteq F$ the domain of $v$. That is, $F_v$ is 
the set of labels that may be assigned to $v$ in any labeling $\FA_u:\jointreeMapFunction(u)\rightarrow F$ where $u\in \nodes(\T)$ and $v\in \jointreeMapFunction(u)$. 
In Lemma~\ref{lem:DPCorrectnessLemma}, we establish that the dynamic programming over the nice DBTD takes time $O(nws^w)$ where $w$ is the width of the DBTD, $n=|\nodes(G)|$, and $s=\max_{v\in \nodes(G)}|F_v|$. 
\eat{
The factor $s^w$ in the runtime stems from the fact that for every $u\in \nodes(\T)$, it holds that $|\jointreeMapFunction(u)|\leq w+1$ (Definition~\ref{def:TreeDecomposition}), and that every vertex $v\in \jointreeMapFunction(u)$ can be assigned any one of the $s$ labels in $F$. Therefore, the factor of every node $u\in \nodes(\T)$ represents at most $|F|^{w+1}$ distinct labelings $\FA_u: \jointreeMapFunction(u)\rightarrow F$.}
In Section~\ref{sec:convertToDBJT}, we show that if $\tw(G)=k$, then $G$ has an embedding to a disjoint-branch TD whose \e{effective width} is at most $w=2k$. This allows us to prove the first part of Theorem~\ref{thm:DomEnumResult}.
Next, we show that for \textsc{trans-enum}, every vertex can be assigned only one of $s=5$ labels (i.e., as opposed to $|F|=8$ labels), leading to better runtimes of the preprocessing phase, thus proving Corollary~\ref{corr:TansToDomEnum}.

In Section~\ref{sec:TransEnumReduction}, we reduced \textsc{trans-enum} on hypergraph $\H$ to \textsc{dom-enum} over the tripartite graph $B\eqdef B(\H)$, where $\nodes(B)=\set{v}\cup \nodes(\H) \cup \set{y_e: e\in\Hedges(\H)}$. By Theorem~\ref{thm:domenum}, we are interested in minimal dominating sets $D\in \sch(B)$, where $D\subset \set{v}\cup \nodes(\H)$, and $D\neq \nodes(\H)$. 
In particular, $D\in \sch(B)$ where $D\cap \set{y_e: e\in\Hedges(\H)}=\emptyset$. Therefore, vertices in $\set{y_e: e\in\Hedges(\H)}$ can only be assigned labels in $F_\omega \cup F_\rho$. If $w\in \nodes(\H)$, then by construction $N(w)\subseteq \set{v}\cup \set{y_e: e\in \Hedges(\H)}$. Consequently, vertices $w\in \nodes(\H)$ can have at most one neighbor in the minimal dominating set; namely the vertex $v$. Therefore, vertices in $\nodes(\H)$ can only be assigned labels in $F_\sigma \cup F_\omega$. Finally, the only dominating set of $B$ that excludes $v$ and $\set{y_e: e\in\Hedges(\H)}$, is $\nodes(\H)$. Since we are interested only in minimal transversals that are strictly included in $\nodes(\H)$, then $v$ must be dominating, and can only be assigned labels in $F_\sigma$.
Overall, we get that vertices in $\set{y_e: e\in \Hedges(\H)}$ can only be assigned labels from $F_\omega \cup F_\rho$, vertices in $\nodes(\H)$ can only be assigned labels in $F_\sigma \cup F_\omega$, and the vertex $v$ can only be assigned labels in $F_\sigma$. 
Since $|F_\sigma \cup F_\omega|=|F_\omega \cup F_\rho|=5$, and $|F_\sigma|=3$, every vertex of $B$ can be assigned one of at most $5$ labels (as opposed to $8$ as in the  case of \textsc{dom-enum}). Therefore, for \textsc{trans-enum} the runtime of the preprocessing phase is in $O(nw5^w)$. This proves Corollary~\ref{corr:TansToDomEnum}.

\subsection{Compact Representation with Tries}
\label{sec:compactRepresentation}
A trie, also known as a prefix tree, is a tree-based data structure used for efficient retrieval of strings. It organizes the strings by storing characters as tree-nodes, with common prefixes shared among multiple strings. Each node represents a character, and its outgoing edges represent possible next characters. The root node represents an empty string, and each leaf node represents a complete string. 

In Lemma~\ref{lem:DPCorrectnessLemma} we establish the correctness and runtime guarantees of the algorithm that receives as input a nice DBTD $(\T,\jointreeMapFunction)$, and a set of local constraints $\set{\kappa_u: u\in \nodes(\T)}$, which computes the factor tables $\set{\M_u: u\in \nodes(\T)}$ where $\M_u: \FA_u \rightarrow \set{0,1}$.
We represent $\M_u$ as a trie where every root-to-leaf path in $\M_u$ represents an assignment $\FA_u: \jointreeMapFunction(u) \rightarrow F$, or a word in $F^{\jointreeMapFunction(u)}$, such that $\M_u(\FA_u)=1$. 
Let $\jointreeMapFunction(u)=\set{v_1,\dots,v_k}$ where the indices represent a complete order over $\jointreeMapFunction(u)$.
We view every assignment $\FA_u: \jointreeMapFunction(u)\rightarrow F$ as a string: $\FA_u(v_1)\FA_u(v_2)\cdots \FA_u(v_k)$ over the alphabet $F$.
In our construction, every string in the trie has precisely $k$ characters from $F$, where the first character represents the assignment to $v_1$, the second character the assignment to $v_2$, etc. The advantage of this representation is that given an assignment $\FA_u:\jointreeMapFunction(u) \rightarrow F$, we can check if $\M_u(\FA_u)=1$ in $O(k)$ time by traversing the trie from root to leaf to check whether the string $\FA_u(v_1)\cdots \FA_u(v_k)$ appears in the trie $\M_u$. In the tabular representation, this requires traversing all assignments in the table, which, in the worst case, takes time $O(|F|^k)$.
The order used to generate the tries is derived from the complete order $Q$ defined in Section~\ref{sec:ordering}. As we later explain, this is required to guarantee the FPL-delay of the enumeration. See Figure~\ref{fig:trie} for an example of the trie data structure.
\section{Correctness and Analysis of The Enumeration Algorithm}
\label{sec:enuemration}
We prove correctness of algorithm \algname{EnumDS} presented in Section~\ref{sec:overview}, and show that its delay is $O(nw)$ where $n=|\nodes(G)|$ and $w=\tw(G)$. By Proposition~\ref{prop:general}, this requires showing that both \algname{IncrementLabeling} and \algname{IsExtendable} are correct and run in time $O(w)$. The former is established in Lemmas~\ref{lem:incrementLabeling} and~\ref{lem:incrementLabelingRuntime}. In this section, we present the \algname{IsExtendable} procedure (Figure~\ref{fig:IsExtendible}), prove its correctness, and show that it runs in time $O(w)$.
\eat{
In Section~\ref{sec:ordering}, we defined a complete order $Q=\langle v_1,\dots,v_n\rangle$ over the vertices $\nodes(G)$ that guides the enumeration algorithm \algname{EnumDS}; recall that $V_i\eqdef \set{v_1,\dots,v_i}$ denotes the first $i$ vertices in $Q$. In Section~\ref{sec:incrementLabeling}, we presented the \algname{IncrementLabeling} procedure, and showed that it runs in time $O(w)$ when called with the pair $(\theta^{i-1},c_i)$ where $\theta^{i-1}: V_{i-1}\rightarrow F$, and $c_i\in F$. In this section, we present the \algname{IsExtendable} procedure (Figure~\ref{fig:IsExtendible}), prove its correctness, and show that it runs in time $O(w)$.}The input to \algname{IsExtendable} is the pair $(\M,\theta^i)$ where $\M=(\T,\jointreeMapFunction)$ is the nice DBJT (Definition~\ref{def:niceDBTD}) that is the output of the preprocessing phase described in Section~\ref{sec:PreprocessingForEnumeration}, and $\theta^i:V_i\rightarrow F$ is an assignment which meets the conditions of Proposition~\ref{prop:increment}. \eat{results from calling \algname{IncrementLabeling} on an extendable assignment $\theta^{i-1}:V_{i-1}\rightarrow F$, and a label $c_i\in F$.}In Lemmas~\ref{lem:DBJTLemma} and~\ref{lem:DPCorrectnessLemma} we established that $\M=(\T,\jointreeMapFunction)$ has the property that 
every node $u\in \nodes(\T)$ is associated with a factor $\M_u: F^{\jointreeMapFunction(u)} \rightarrow \set{0,1}$ where $\M_u(\FA_u)=1$ if and only if there exists a subset $D_u \subseteq \nodes_u$ that is consistent with the assignment $\FA_u:\jointreeMapFunction(u) \rightarrow F$ according to Definition~\ref{def:induceLabelssubtree}. In Section~\ref{sec:compactRepresentation}, we showed how to check whether $\M_u(\FA_u)=1$ in time $O(|\jointreeMapFunction(u)|)=O(w)$.
\eat{
to represent the factors $\M_u: F^{\jointreeMapFunction(u)} \rightarrow \set{0,1}$ as tries, such that checking whether $\M_u(\FA_u)=1$ for an assignment $\FA_u: \jointreeMapFunction(u)\rightarrow F$ takes time $O(|\jointreeMapFunction(u)|)=O(w)$.
The pseudo-code of \algname{IsExtendable} is presented in Figure~\ref{fig:IsExtendible}.} Recall that $B:\nodes(G)\rightarrow \set{1,\dots,|\nodes(\T)|}$ maps every $v\in \nodes(G)$ to the first bag in $\T$, with respect to depth-first order of $\nodes(\T)$, that contains it.

\eat{
\renewcommand{\algorithmicrequire}{\textbf{Input:}}
\begin{algserieswide}{t}{Returns \texttt{true} if and only if $\theta^i\in \bTheta^i$ is extendable. \label{fig:IsExtendible}}
	\begin{insidealgwide}{IsExtendable}{$(\T,\jointreeMapFunction)$,$\theta^{i}$}		
		\REQUIRE{$(\T,\jointreeMapFunction)$: a nice DBTD that is the output of the preprocessing phase of Section~\ref{sec:PreprocessingForEnumeration}. \\ $\theta^i:V_i\rightarrow F$: output of $\algname{IncrementLabeling}(\theta^{i-1},c_i)$ where $\theta^{i-1}:V_{i-1}\rightarrow F$ is extendable and $c_i\in F$.}
		\IF{$\theta^i(v_i)=\bot$} \label{line:ifReturnFalse}
			\RETURN \texttt{false} \label{line:returnFalse}
		\ENDIF
		\STATE $\FA_i \gets \theta^i[\jointreeMapFunction(B(v_i))]$  \label{line:buildProjections}
		\RETURN $\M_{B(v_i)}(\FA_i)$  \label{line:query}
	\end{insidealgwide}
\end{algserieswide}
}
\def\thmCorrectness{
\algname{IsExtendable} returns \texttt{true} iff $\theta^i\in \bTheta^i$ is extendable, and runs in time $O(\tw(G))$.
}
\begin{theorem}
	\label{thm:mainCorrectness}
	\thmCorrectness
\end{theorem}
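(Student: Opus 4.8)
The proof has two halves, the running‑time bound and the correctness statement, and the plan is to dispatch the former immediately and reduce the latter to a single fact about the factor stored at the bag $B(v_i)$. For the running time: line~\ref{line:buildProjections} forms $\FA_i\eqdef\theta^i[\jointreeMapFunction(B(v_i))]$ by reading the labels of $\theta^i$ on the bag $\jointreeMapFunction(B(v_i))$, which has at most $\efftw(\T,\jointreeMapFunction)+1=O(\tw(G))$ vertices by Lemma~\ref{lem:DBJTLemma}, and line~\ref{line:query} is a single root‑to‑leaf descent in the trie representing $\M_{B(v_i)}$ (Section~\ref{sec:compactRepresentation}), costing $O(|\jointreeMapFunction(B(v_i))|)=O(\tw(G))$; together this is $O(\tw(G))$. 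For correctness, Lemmas~\ref{lem:DBJTLemma} and~\ref{lem:DPCorrectnessLemma} tell us that $\M_u(\FA)=1$ iff $\FA$ respects the local constraints and some $D_u\subseteq V_u$ is consistent with $\FA$ in the sense of Definition~\ref{def:induceLabelssubtree}. Hence, writing $u\eqdef B(v_i)$, it suffices to prove that $\theta^i\in\bTheta^i$ is extendable if and only if $\M_{u}(\FA_i)=1$, and I would do this by induction on $i$, so that the inductive hypothesis together with the control flow of \algname{EnumDS} guarantees that the assignment $\theta^{i-1}$ fed to \algname{IncrementLabeling} is itself extendable.

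Next I would establish the structural facts that make $\jointreeMapFunction(u)$ the correct ``interface''. Since at most one vertex of $G$ is introduced at each node of a nice (disjoint‑branch) TD, every $w\in\jointreeMapFunction(u)\setminus\set{v_i}$ has $B(w)$ strictly earlier than $u$ in the depth‑first order $P$, hence $w\prec_Q v_i$; thus $\jointreeMapFunction(u)\subseteq V_i$ and $\FA_i$ is well defined. Here the disjoint‑branch property is essential: it forces the set of bags containing any fixed vertex to form a path, so $\T_u$ meets the rest of $\T$ only through $\jointreeMapFunction(u)$, making $\jointreeMapFunction(u)$ a separator of $G$ between $V_u\setminus\jointreeMapFunction(u)$ and $\nodes(G)\setminus V_u$. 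Combining this with Lemma~\ref{lem:orderNbr} and its disjoint‑branch companion yields $V_i\cap V_u=\jointreeMapFunction(u)$ and, for every $x\in\jointreeMapFunction(u)$, a partition of $N(x)$ into a \emph{settled‑past} part $N(x)\cap(V_i\setminus\jointreeMapFunction(u))$, an \emph{interface} part $N(x)\cap\jointreeMapFunction(u)$, and a \emph{subtree} part $N(x)\cap(V_u\setminus\jointreeMapFunction(u))$; moreover every $x\in V_u\setminus\jointreeMapFunction(u)$ has all of its neighbours inside $V_u$.

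For the ``only if'' direction, take $D\in\sch(G)$ consistent with $\theta^i$ and put $D_u\eqdef D\cap V_u$. Solution‑membership on $\jointreeMapFunction(u)$ matches $\FA_i$ because $D$ is consistent with $\theta^i$. For $x\in\jointreeMapFunction(u)$, the label $\theta^i(x)$ records (Definition~\ref{def:induceLabels}) exactly how many neighbours $x$ has in $D$ among $V_i$ and what $x$ still owes to $\nodes(G)\setminus V_i$ — a private neighbour, or one or two more dominators — and, by the previous paragraph, that owed portion of $N(x)$ sits in $V_u\setminus\jointreeMapFunction(u)$, so $\theta^i(x)$ converts verbatim into the Definition~\ref{def:induceLabelssubtree} requirement on $N(x)\cap(D_u\setminus\jointreeMapFunction(u))$; this is a short case check over the eight labels in $F$. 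For $x\in V_u\setminus\jointreeMapFunction(u)$ all of $N(x)$ is inside $V_u$, so minimality of $D$ (Proposition~\ref{prop:DSMin}) yields the second list of Definition~\ref{def:induceLabelssubtree}. One also checks that $\FA_i$ satisfies the local constraints, so $\M_u(\FA_i)=1$. For the ``if'' direction, let $D_u\subseteq V_u$ be consistent with $\FA_i$ and let $D^-\in\sch(G)$ be consistent with $\theta^{i-1}$ (extendable by the inductive hypothesis), and glue $D\eqdef(D^-\setminus V_u)\cup D_u$. Using that \algname{IncrementLabeling} never changes a vertex's category (so $\nodes_\sigma(\theta^i)\cap V_{i-1}=\nodes_\sigma(\theta^{i-1})$), one verifies that $D$ agrees with $D_u$ on $V_u$, with $D^-$ off $V_u$, and with $\theta^i$ on $V_i$ regarding solution‑membership; the separator property ensures that swapping the interior of $V_u$ from $D^-$ to $D_u$ while holding the interface $\jointreeMapFunction(u)$ fixed disturbs neither domination nor private‑neighbour witnesses of vertices outside $V_u$, while Definition~\ref{def:induceLabelssubtree}‑consistency of $D_u$ supplies them inside $V_u$; a final check, using Proposition~\ref{prop:increment} for the settled‑past vertices and the label translation above for the interface, shows $D$ induces exactly $\theta^i$ on $V_i$, so $\theta^i$ is extendable and $D\in\sch(G)$ witnesses it.

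The step I expect to be the main obstacle is this gluing together with the label book‑keeping it rests on: carefully matching the ``global'' labels of Definition~\ref{def:induceLabels} against the ``subtree'' labels of Definition~\ref{def:induceLabelssubtree} across the separator $\jointreeMapFunction(B(v_i))$, and in particular proving that the disjoint‑branch structure really does confine each vertex's domination and private‑neighbour obligations to a single side of that separator — precisely the feature that fails for an arbitrary, non‑disjoint tree decomposition (cf.\ Example~\ref{eq:nonDBJT}), and the reason the preprocessing of Section~\ref{sec:convertToDBJT} is needed.
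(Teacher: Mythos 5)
Your proposal is correct and follows essentially the same route as the paper's proof: the same $O(\tw(G))$ accounting for the bag projection and trie lookup, the same use of niceness to get $B(v_i)\subseteq V_i$ (Lemma~\ref{lem:BisInjective}) and of the disjoint-branch property to confine the future neighbours of interface vertices to the subtree below $B(v_i)$ (Lemma~\ref{lem:useDBTD}), the restriction $D\cap\nodes_{B(v_i)}$ for the forward direction, and the gluing $(S\setminus\nodes_{B(v_i)})\cup D_i$ with the same three-case verification for the converse. The only cosmetic differences are your explicit induction on $i$ (the paper instead takes extendability of $\theta^{i-1}$ as an input precondition justified by Theorem~\ref{thm:correctness}) and your attribution of the separator property of $\jointreeMapFunction(B(v_i))$ to disjoint-branchness, whereas that part holds for any tree decomposition and the disjoint-branch structure is needed only for the confinement of future neighbours.
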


\eat{
\begin{theorem}
	\algname{IsExtendable} runs in time $O(w)$ where $w=\tw(G)$.
\end{theorem}
\begin{proof}
	The test in line~\ref{line:ifReturnFalse} takes $O(1)$. The construction of $\FA_i$ in line~\ref{line:buildProjections} takes time $O(|B(v_i)|)=O(w)$ because, by definition, $|B(v_i)|\leq w+1$. Since $\M_{B(v_i)}$ is represented as a trie, then by the properties of this data structure described in Section~\ref{sec:compactRepresentation}, querying for $\M_{B(v_i)}(\FA_i)$ in line~\ref{line:query} takes time $O(|B(v_i)|)=O(w)$.
\end{proof}
}
The proof of Theorem~\ref{thm:mainCorrectness} relies on Lemma~\ref{lem:BisInjective}, which uses the requirement that $(\T,\jointreeMapFunction)$ is nice, and on Lemma~\ref{lem:useDBTD}, which makes use of the fact that  $(\T,\jointreeMapFunction)$ is disjoint-branch. These are the main ingredients of the proof, that is deferred to Section~\ref{sec:AppendixEnumeration} of the Appendix. 

\def\lemBisInjective{
		Let $(\T,\jointreeMapFunction)$ be a nice TD, and let $v\in \nodes(G)$. For every $w \in B(v){\setminus}\set{v}$, it holds that $B(w) < B(v)$, and hence $B$ is injective.
}
\begin{lemma}
	\label{lem:BisInjective}
	\lemBisInjective
\end{lemma}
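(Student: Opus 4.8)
The plan is first to pin down what $B(v)$ is structurally. I would show that $B(v)$ is the unique node of $\T$ of minimum depth (i.e.\ closest to the root $r=u_1$) whose bag contains $v$. This follows from the running intersection property: the set $T_v\eqdef\set{u\in\nodes(\T):v\in\jointreeMapFunction(u)}$ induces a connected subtree of the rooted tree $\T$, and any connected subtree of a rooted tree has a unique minimum-depth node, which is moreover an ancestor of every node in the subtree — if two nodes of $T_v$ had incomparable ancestry, their least common ancestor would be a node of strictly smaller depth lying on the connecting path, hence in $T_v$, contradicting minimality; the same argument forbids two distinct minimum-depth nodes. Since a depth-first traversal visits every node before all of its descendants, this minimum-depth node is precisely the $P$-earliest node containing $v$, so it equals $B(v)$. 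In particular, ``$B(w)<B(v)$ in $P$'' is equivalent to ``$B(w)$ is a proper ancestor of $B(v)$ in $\T$''.

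The core of the argument is then a short case analysis on the parent of $u\eqdef B(v)$. Since a nice TD has $\jointreeMapFunction(r)=\emptyset$ and $v\in\jointreeMapFunction(u)$, we have $u\neq r$, so $p\eqdef\parent(u)$ exists; and since $u$ is the topmost node containing $v$, necessarily $v\notin\jointreeMapFunction(p)$. Running through the node types of Definition~\ref{def:niceTD} applied to $p$: $p$ is not a leaf node (it has $u$ as a child); $p$ is not an introduce node (then $\jointreeMapFunction(p)=\jointreeMapFunction(u)\cup\set{x}\supseteq\jointreeMapFunction(u)\ni v$); and $p$ is not a (disjoint-)join node (then $\jointreeMapFunction(p)\supseteq\jointreeMapFunction(u)\ni v$, which covers both the join node of Definition~\ref{def:niceTD} and the disjoint-join node of Definition~\ref{def:niceDBTD}). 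Hence $p$ is a forget node with unique child $u$, so $\jointreeMapFunction(p)=\jointreeMapFunction(u){\setminus}\set{z}$ for some $z\in\jointreeMapFunction(u)$; since $v\in\jointreeMapFunction(u){\setminus}\jointreeMapFunction(p)$, this forces $z=v$, and therefore $\jointreeMapFunction(u){\setminus}\set{v}=\jointreeMapFunction(p)$. In words: every vertex of the bag $B(v)$ other than $v$ already occurs in the parent node $p$.

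To finish, for any $w\in B(v){\setminus}\set{v}=\jointreeMapFunction(p)$ we have $p\in T_w$, so $B(w)$ — the topmost node of $T_w$ — is $p$ itself or a proper ancestor of $p$; either way $B(w)$ is a proper ancestor of $u=B(v)$, hence $B(w)<B(v)$. Injectivity of $B$ is then immediate: if $B(v)=B(w)$ with $v\neq w$, then $v,w\in\jointreeMapFunction(B(v))$, so $w\in B(v){\setminus}\set{v}$, and the bound just proved gives $B(w)<B(v)$, a contradiction.

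The one point I expect to need care is the reliance on the convention $\jointreeMapFunction(r)=\emptyset$ for nice tree decompositions: without it the statement is false, since a root bag of size at least two already breaks injectivity. I would therefore state this assumption explicitly and note that it holds without loss of generality for the nice TDs produced by the standard construction (Chapter~13 of~\cite{DBLP:books/sp/Kloks94}). Everything else is routine: identifying $B(v)$ with the highest node of $T_v$ uses only the running intersection property and elementary facts about rooted trees, and the case analysis on $p$ is mechanical.
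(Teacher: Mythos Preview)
Your proof is correct and follows essentially the same approach as the paper's: both argue that $\parent(B(v))$ cannot be an introduce or join node (since those would contain $v$), hence is a forget node, so $\jointreeMapFunction(B(v)){\setminus}\set{v}=\jointreeMapFunction(\parent(B(v)))$, which gives $B(w)<B(v)$ for every $w\in B(v){\setminus}\set{v}$. Your version is more careful in two respects the paper leaves implicit: you explicitly justify why ``earliest in the DFS order $P$'' coincides with ``minimum depth'' via the running intersection property, and you correctly flag that the argument needs $\jointreeMapFunction(r)=\emptyset$ so that $B(v)$ always has a parent---a convention the paper's Definition~\ref{def:niceTD} does not state but its proof silently assumes.
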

\eat{
\begin{proof}
	Since $B(v)$ is the node, closest to the root, that contains $v$, then by definition $v\notin \parent(B(v))$. Since $(\T,\jointreeMapFunction)$ is a nice TD, the node $\parent(B(v))$ can be neither a join node, nor an introduce node because in both of these cases $\jointreeMapFunction(\parent(B(v)))\supseteq \jointreeMapFunction(B(v))$. Consequently, $\parent(B(v))$ must be  a forget node, and hence $\jointreeMapFunction(B(v)){\setminus}\set{v}\subseteq \jointreeMapFunction(\parent(B(v)))$. This means that for every $w\in B(v){\setminus}\set{v}$, it holds that $B(w)\leq \parent(B(v))<B(v)$, and thus the only vertex that is mapped to $B(v)$ is $v$.
\end{proof}
}
\def\propertyDBDT{
		Let $(\T_r,\jointreeMapFunction)$ be a disjoint branch TD. Let $v\in \nodes(G)$. For any pair of distinct nodes $u_1,u_2\in \nodes(\T_r)$ such that $v\in \jointreeMapFunction(u_1)\cap \jointreeMapFunction(u_2)$, it holds that $u_1 \in \nodes(\T_{u_2})$ or $u_2 \in \nodes(\T_{u_1})$.
}
\eat{
\begin{proposition}
	\label{prop:propertyDBDT}
	\propertyDBDT
\end{proposition}

\begin{proof}
	Suppose that neither of the conditions hold. Let $u$ be the least common ancestor of $u_1$ and $u_2$ in $\T_r$. By the assumption, $u\notin\set{u_1,u_2}$, and contains two distinct child nodes $u_1'$ and $u_2'$ such that $u_1\in \nodes(\T_{u_1'})$ and $u_2\in \nodes(\T_{u_2'})$. By the running intersection property, $v\in \jointreeMapFunction(u)\cap \jointreeMapFunction(u_1')\cap \jointreeMapFunction(u_2')$, contradicting the fact that $(\T_r,\jointreeMapFunction)$ is disjoint branch.
\end{proof}
}
\def\lemuseDBTD{
		Let $(\T,\jointreeMapFunction)$ be a nice DBTD (Definition~\ref{def:niceDBTD}), and 
	let $v_i$ be the $i$th vertex in the order $Q=\langle v_1,\dots,v_n \rangle$. For every $v_j\in B(v_i)$, it holds that $N(v_j)\cap \set{v_{i+1},\dots,v_n}= N(v_j)\cap (\nodes_{B(v_i)}{\setminus}B(v_i))$.
}
\begin{lemma}
	\label{lem:useDBTD}
	\lemuseDBTD
\end{lemma}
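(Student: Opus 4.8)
The plan is to fix the node $u\eqdef B(v_i)$ of $\T$ (writing $\jointreeMapFunction(u)$ for its bag and $V_u\eqdef\bigcup_{t\in\nodes(\T_u)}\jointreeMapFunction(t)$), fix $v_j\in\jointreeMapFunction(u)$, and prove the two inclusions of the claimed identity, observing that $\nodes_{B(v_i)}{\setminus}B(v_i)$ is exactly $V_u{\setminus}\jointreeMapFunction(u)$ under the usual abuse identifying the TD node $B(v_i)$ with its bag. Two facts will be used throughout. First, the order $P$ of $\nodes(\T)$ defining $B$ is a preorder, so a proper descendant of $u$ in $\T$ occurs \emph{after} $u$ in $P$ and a proper ancestor occurs \emph{before} $u$; since $Q$ is consistent with $B$ (if $B(v_a)$ precedes $B(v_b)$ in $P$ then $v_a\prec_Q v_b$), this lets me convert tree-position statements into statements about $\prec_Q$. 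Second, by Lemma~\ref{lem:BisInjective} (its proof applies verbatim to a nice DBTD, a disjoint-join node being, like an introduce node, unable to sit directly above a node $B(v)$ since it contains the bags of its children), every $w\in\jointreeMapFunction(u){\setminus}\set{v_i}$ satisfies $B(w)$ before $B(v_i)$ in $P$, hence $w\prec_Q v_i$; together with $v_i\in\jointreeMapFunction(u)$ this gives the preliminary containment $\jointreeMapFunction(u)\subseteq\set{v_1,\dots,v_i}$.

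I would first dispatch the easy inclusion, namely $V_u{\setminus}\jointreeMapFunction(u)\subseteq\set{v_{i+1},\dots,v_n}$ (from which $\supseteq$ follows by intersecting with $N(v_j)$). Take $w\in V_u{\setminus}\jointreeMapFunction(u)$. The set of nodes of $\T$ whose bag contains $w$ is connected (running intersection); it meets $\nodes(\T_u)$ because $w\in V_u$, and misses $u$ because $w\notin\jointreeMapFunction(u)$, so — since any $\T$-path from inside $\nodes(\T_u)$ to outside it must pass through $u$ — this connected set lies inside $\nodes(\T_u){\setminus}\set{u}$. Hence $B(w)$ is a proper descendant of $u=B(v_i)$, so $B(v_i)$ precedes $B(w)$ in $P$, and therefore $v_i\prec_Q w$. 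This direction uses only that $(\T,\jointreeMapFunction)$ is a rooted TD and $Q$ follows $P$; neither niceness nor the disjoint-branch property is needed.

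The other inclusion is the crux, and is where both extra hypotheses enter. Let $w\in N(v_j)$ with $v_i\prec_Q w$; I must show $w\in V_u{\setminus}\jointreeMapFunction(u)$. That $w\notin\jointreeMapFunction(u)$ is immediate from $\jointreeMapFunction(u)\subseteq\set{v_1,\dots,v_i}$, so it remains to prove $w\in V_u$. As $(v_j,w)\in\edges(G)$, some node $l\in\nodes(\T)$ has $\set{v_j,w}\subseteq\jointreeMapFunction(l)$; by running intersection applied to $v_j$ (which lies in $\jointreeMapFunction(u)$ and $\jointreeMapFunction(l)$), $v_j$ lies in every bag along the $u$–$l$ path in $\T$. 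I then split on where $l$ sits relative to $u$: (i) if $l=u$, then $w\in\jointreeMapFunction(u)$, contradicting $w\notin\jointreeMapFunction(u)$; (ii) if $l$ is a proper ancestor of $u$, then $w\in\jointreeMapFunction(l)$ forces $B(w)$ to occur no later than $l$ in $P$, hence strictly before $u=B(v_i)$, giving $w\prec_Q v_i$, contradicting $v_i\prec_Q w$; (iii) if $l$ is neither an ancestor nor a descendant of $u$, let $a$ be the least common ancestor of $u$ and $l$ — then $a\notin\set{u,l}$, and $a$ has two \emph{distinct} children $c\ne c'$ with $u\in\nodes(\T_c)$ and $l\in\nodes(\T_{c'})$, so the $u$–$l$ path runs through $c$, $a$, $c'$ and therefore $v_j\in\jointreeMapFunction(c)\cap\jointreeMapFunction(c')$, contradicting that $(\T,\jointreeMapFunction)$ is disjoint branch (Definition~\ref{def:disjointBranchTD}). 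The only surviving possibility is that $l$ is a proper descendant of $u$, i.e.\ $l\in\nodes(\T_u)$, and then $w\in\jointreeMapFunction(l)\subseteq V_u$, as required.

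I expect case (iii) to be the only genuinely delicate point: one must verify that the least common ancestor $a$ is distinct from both $u$ and $l$ and that $v_j$, being in $\jointreeMapFunction(u)$ and $\jointreeMapFunction(l)$, is carried by running intersection into the bags of the two distinct children $c,c'$ of $a$ — which is precisely where disjoint-branchness is contradicted, exactly as Example~\ref{eq:nonDBJT} foreshadows. Everything else is a routine unwinding of the definitions of $B$, of the preorder $P$, and of the compatibility of $Q$ with $P$. Finally, specializing the statement to $v_j=v_i$ recovers the analogous neighbourhood split for $v_i$ itself; the strengthening to \emph{all} vertices of the home bag $B(v_i)$ is exactly what the correctness analysis of \algname{IsExtendable} will use, since that procedure queries the factor attached to the bag $B(v_i)$ and must know that every still-``unseen'' neighbour of a bag-vertex lies in the subtree below $B(v_i)$.
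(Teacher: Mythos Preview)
Your proof is correct and follows essentially the same approach as the paper's. The only structural difference is that the paper abstracts your case~(iii) into a standalone proposition (Proposition~\ref{prop:propertyDBDT}: in a DBTD, any two bags sharing a vertex are in an ancestor--descendant relationship), invokes it to reduce to the dichotomy ``$l$ is a descendant of $u$'' vs.\ ``$u$ is a descendant of $l$'', and then rules out the latter via the ordering; you instead carry out the full four-way case split on the position of $l$ and inline the LCA argument directly.
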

\eat{
\begin{proof}	
	Let $v_k\in N(v_j)\cap \set{v_{i+1},\dots,v_n}$. By Lemma~\ref{lem:BisInjective}, we have that $v_j \prec_Q v_i$ for every $v_j \in B(v_i){\setminus}\set{v_i}$, and hence $B(v_i) \subseteq \set{v_1,\dots,v_i}$. Therefore, $v_k \notin B(v_i)$,
	and there exists a node $u\in \nodes(\T)$, distinct from $B(v_i)$, such that $v_j,v_k\in \jointreeMapFunction(u)$. 
	Since $(\T,\jointreeMapFunction)$ is a DBTD where $v_j\in \jointreeMapFunction(u)\cap \jointreeMapFunction(B(v_i))$, then by Proposition~\ref{prop:propertyDBDT} either $u\in \nodes(\T_{B(v_i)})$ or $B(v_i)\in \T_{u}$. Since $k\geq i+1$ and $v_k\in \jointreeMapFunction(u)$, the latter is impossible. Therefore, $u\in \nodes(\T_{B(v_i)})$, and hence $v_k \in \jointreeMapFunction(u)\subseteq \nodes_{B(v_i)}{\setminus}B(v_i)$.
	
	Let $v_k\in \nodes_{B(v_i)}{\setminus}B(v_i)$. We claim that $k \geq i+1$. If not, then since $v_k\notin B(v_i)$, then $k\neq i$, and hence $v_k \in V_{i-1}$. By the running intersection property, if $v_k \in V_{i-1}\cap \nodes_{B(v_i)}$, then $v_k\in B(v_i)$, which is a contradiction. Therefore, $v_k\notin V_i$, and $v_k\in \set{v_{i+1},\dots,v_n}$.
\end{proof}
}

\eat{
\begin{hproof}
	Suppose that $\theta^i:V_i \rightarrow F$ is extendable according to Definition~\ref{def:consistentExtendable}. In particular, this means that $\theta^i(v_i)\neq \bot$, and hence \algname{IsExtendable} will not return with \texttt{false} in line~\ref{line:returnFalse}.
	
	From Lemma~\ref{lem:BisInjective}, we have that $B(w)<B(v_i)$, and by definition, $w\prec_Q v_i$ for every $w\in B(v_i)$. Therefore, $B(v_i)\subseteq V_i$. In particular, this means that $\theta^i[B(v_i)]$ assigns a label in $F$ to every vertex in $B(v_i)$. Hence, $\theta^i[B(v_i)]: B(v_i) \rightarrow F$. 
	Since $\theta^i$ is extendable, there exists a minimal dominating set $D\in \sch(G)$ that is consistent with $\theta^i$, and induces the appropriate labels to vertices $V_i$ according to Definition~\ref{def:induceLabels}. We show that $\M_{B(v_i)}(\theta^i[B(v_i)])=1$, thereby proving the claim.	
	To that end, we show that $D_i\eqdef D\cap \nodes_{B(v_i)}$ is consistent with $\theta^i[B(v_i)]$ according to Definition~\ref{def:induceLabelssubtree}. By Definition~\ref{def:induceLabelssubtree}, $D_i$ is consistent with $\theta^i[B(v_i)]$ if, for every $w\in B(v_i)$, the label induced by $D_i$ on $w$ is $\theta^i[B(v_i)](w)=\theta^i(w)$. The label that $D_i$ induces on $w$ is determined by $N(w) \cap (D_i{\setminus}B(v_i))$. 
	From Lemma~\ref{lem:useDBTD}, we have that $N(w)\cap (\nodes_{B(v_i)}{\setminus}B(v_i))=N(w)\cap \set{v_{i+1},\dots,v_n}=N(w)\cap (\nodes(G){\setminus}V_i)$. In particular, $N(w) \cap (D_i{\setminus}B(v_i))=N(w)\cap (D_i{\setminus}V_i)$. Consequently, if $D$ is consistent with $\theta^i$, then  $D_i$ is consistent with $\theta^i[B(v_i)](w)$ for all $w\in B(v_i)$. Now, if $w\in \nodes_{B(v_i)}{\setminus}B(v_i)$, then by the running intersection property, $N(w)\subseteq \nodes_{B(v_i)}$. Therefore $N(w)\cap D=N(w)\cap (D\cap \nodes_{B(v_i)})=N(w)\cap D_i$. Overall, we get that if $D$ is consistent with $\theta^i$, then $D_i$ is consistent with $\theta^i[B(v_i)]$ according to Definition~\ref{def:induceLabelssubtree}.
	
	For the other direction, suppose that \algname{IsExtendable} returns \texttt{true}. We show that there is a minimal dominating set $D\in \sch(G)$ such that $D$ is consistent with $\theta^i$ according to Definition~\ref{def:induceLabels}. We first note that, by our assumption, $\theta^i$ is the result of calling \algname{IncrementLabeling} on the pair $(\theta^{i-1},c_i)$, where $\theta^{i-1}$ is extendable. Hence, there exists a set $S\in \sch(G)$ that is consistent with $\theta^{i-1}$. We let $S'\eqdef S\cap (\nodes(G){\setminus}\nodes_{B(v_i)})$. Since \algname{IsExtendable} returns \texttt{true}, then there is a subset $D_i\subseteq \nodes_{B(v_i)}$ such that $D_i$ is consistent with $\theta^i[B(v_i)]$. We claim that $D\eqdef S'\cup D_i$ is a minimal dominating set that is consistent with $\theta^i$.
	To prove the claim we show, in Section~\ref{sec:AppendixEnumeration} that $D$ meets the conditions of Proposition~\ref{prop:DSMin}, and is consistent with $\theta^i$.
\end{hproof}

}
\eat{
\begin{proof}
	Suppose that $\theta^i:V_i \rightarrow F$ is extendable according to Definition~\ref{def:consistentExtendable}. In particular, this means that $\theta^i(v_i)\neq \bot$, and hence \algname{IsExtendable} will not return with \texttt{false} in line~\ref{line:returnFalse}.
	
	From Lemma~\ref{lem:BisInjective}, we have that $B(w)<B(v_i)$, and by definition, $w\prec_Q v_i$ for every $w\in B(v_i)$. Therefore, $B(v_i)\subseteq V_i$. In particular, this means that $\theta^i[B(v_i)]$ assigns a label in $F$ to every vertex in $B(v_i)$. Hence, $\theta^i[B(v_i)]: B(v_i) \rightarrow F$. 
	Since $\theta^i$ is extendable, there exists a minimal dominating set $D\in \sch(G)$ that is consistent with $\theta^i$, and induces the appropriate labels to vertices $V_i$ according to Definition~\ref{def:induceLabels}. We show that $\M_{B(v_i)}(\theta^i[B(v_i)])=1$, thereby proving the claim.	
	To that end, we show that $D_i\eqdef D\cap \nodes_{B(v_i)}$ is consistent with $\theta^i[B(v_i)]$ according to Definition~\ref{def:induceLabelssubtree}.
	By Definition~\ref{def:induceLabelssubtree}, $D_i$ is consistent with $\theta^i[B(v_i)]$ if, for every $w\in B(v_i)$, the label induced by $D_i$ on $w$ is $\theta^i[B(v_i)](w)=\theta^i(w)$. The label that $D_i$ induces on $w$ is determined by $N(w) \cap (D_i{\setminus}B(v_i))$. 
	From Lemma~\ref{lem:useDBTD}, we have that $N(w)\cap (\nodes_{B(v_i)}{\setminus}B(v_i))=N(w)\cap \set{v_{i+1},\dots,v_n}=N(w)\cap (\nodes(G){\setminus}V_i)$. In particular, $N(w) \cap (D_i{\setminus}B(v_i))=N(w)\cap (D_i{\setminus}V_i)$. Consequently, if $D$ is consistent with $\theta^i$, then  $D_i$ is consistent with $\theta^i[B(v_i)](w)$ for all $w\in B(v_i)$. Now, if $w\in \nodes_{B(v_i)}{\setminus}B(v_i)$, then by the running intersection property, $N(w)\subseteq \nodes_{B(v_i)}$. Therefore $N(w)\cap D=N(w)\cap (D\cap \nodes_{B(v_i)})=N(w)\cap D_i$. Overall, we get that if $D$ is consistent with $\theta^i$, then $D_i$ is consistent with $\theta^i[B(v_i)]$ according to Definition~\ref{def:induceLabelssubtree}.

	For the other direction, suppose that \algname{IsExtendable} returns \texttt{true}. We show that there is a minimal dominating set $D\in \sch(G)$ such that $D$ is consistent with $\theta^i$ according to Definition~\ref{def:induceLabels}. We first note that, by our assumption, $\theta^i$ is the result of calling \algname{IncrementLabeling} on the pair $(\theta^{i-1},c_i)$, where $\theta^{i-1}$ is extendable and $c_i \in F$. Hence, there exists a set $S\in \sch(G)$, such that $S$ is consistent with $\theta^{i-1}$. We let $S'\eqdef S\cap (\nodes(G){\setminus}\nodes_{B(v_i)})$. Since \algname{IsExtendable} returns \texttt{true}, then there exists a subset $D_i\subseteq \nodes_{B(v_i)}$ such that $D_i$ is consistent with $\theta^i[B(v_i)]$. We claim that $D\eqdef S'\cup D_i$ is a minimal dominating set that is consistent with $\theta^i$.
	To prove the claim we show that $D$ meets the conditions of Proposition~\ref{prop:DSMin}, and is consistent with $\theta^i$.
	
	Let us start with a vertex $w\notin \nodes_{B(v_i)}$. Therefore, $w\in D$ if and only if $w\in S' \subseteq S$. By definition of $B(v_i)$, it holds that $N(v_i)\subseteq \nodes_{B(v_i)}$, and hence $v_i\notin N(w)$. Since only vertices in $N(v_i) \cap V_i$ are affected by the assignment of $v_i \gets c_i$ in \algname{IncrementLabeling}, then $\theta^{i-1}(w)=\theta^i(w)$. Since $S$ is consistent with $\theta^{i-1}(w)$, and since $D\cap N(w)=S\cap N(w)$, then $D$ is consistent with $\theta^i(w)$.
	
	Now, let $w\in \nodes_{B(v_i)}{\setminus}B(v_i)$. By the running intersection property, $N(w)\subseteq \nodes_{B(v_i)}$, and hence $N(w)\cap S'=\emptyset$.
	We claim that $w\notin V_i$. Otherwise, by our assumption that $w \notin B(v_i)$, then $w\neq v_i$, and hence $w\in V_{i-1}$. But then, since $w\in \nodes_{B(v_i)}$, then by the running intersection property it must hold that $w\in B(v_i)$, which is a contradiction. Therefore, $w\in \set{v_{i+1},\dots,v_n}$ and $N(w)\subseteq \nodes_{B(v_i)}$.
	Labels to the vertex-set $\set{v_{i+1},\dots,v_n}$ are not specified by $\theta^i$. Therefore, the set $D$ vacuously induces the appropriate label on $w$. Since $D_i$ is consistent with $\theta^i[B(v_i)]$, then $w$ is either dominated by $D_i\subseteq D$, has a private neighbor in $\nodes_{B(v_i)}{\setminus}D_i=\nodes_{B(v_i)}{\setminus}D$, or $N(w)\cap D_i=N(w)\cap D=\emptyset$. 
	
	Finally, assume that $w\in B(v_i)$. From Lemma~\ref{lem:useDBTD}, we have that $N(w)\cap \set{v_{i+1},\dots,v_n}= N(w)\cap (\nodes_{B(v_i)}{\setminus}B(v_i))$. By our assumption, $D_i$ is consistent with $\theta^i(w)$. Since $N(w)\cap \set{v_{i+1},\dots,v_n} \cap D = N(w)\cap D \cap (\nodes_{B(v_i)}{\setminus}B(v_i))$, then $N(w)\cap \set{v_{i+1},\dots,v_n} \cap D=N(w)\cap (D_i {\setminus}B(v_i))$.
	 Therefore, if $D_i$ is consistent with $\theta^i(w)$, then $D$ is consistent with $\theta^i(w)$. Overall, we showed that $D$ is consistent with $\theta^i$. 
\end{proof}
}
\eat{

\section{Enumeration}
\label{sec:nonRankedEnumeration}
Recall from Section~\ref{subsec:encoding}, that every minimal dominating set $D\subseteq \nodes(G)$ induces a labeling to the vertices $\nodes(G)$. The label assigned to a vertex depends on whether it is in $D$, in which case it is assigned a label in $F_\sigma=\set{\sigma_I,[0]_\sigma,[1]_\sigma}$, or a private neighbor, in which case it is assigned a label in $F_\omega=\set{[0]_\omega, [1]_\omega}$, or a dominated vertex, in which case it is assigned a label in $F_\rho=\set{[j]_\rho: j\in [0,2]}$. Accordingly, for a set $D\subseteq \nodes(G)$, we define the following notation.
\begin{align}
	\label{eq:nodesOfD}
	&\nodes_\omega(D)\eqdef\set{v \in \nodes(G){\setminus} D : |N(v)\cap D|=1} \\
	& \nodes_\rho(D)\eqdef\set{v \in \nodes(G){\setminus} D : |N(v)\cap D|\geq 2}\\
	& \nodes_{\sigma_I}(D)\eqdef\set{v \in D : |N(v)\cap D|=0, |N(v)\cap \nodes_\omega(D)|=0}
\end{align}
The idea behind the enumeration algorithm is that it enumerates assignments $F^{\nodes(G)}$ that represent minimal dominating sets. To make this formal, we assume a complete order $Q=(v_1,v_2,\dots,v_n)$ over the vertices $\nodes(G)$. The properties of this order are crucial for obtaining the FPL delay, and we show how to generate $Q$ and prove its properties in Section~\ref{sec:vertexOrdering}.
In what follows, the set $\set{v_1,\dots,v_i}$ represents the first $i$ vertices in the vertex-order $Q$.
\begin{definition}[Extendable Assignment]
	\label{def:extendable}
Let $\set{v_1,\dots,v_i}$ be the first $i$ vertices in the order $Q$. We say that the assignment:
$\theta^i: \set{v_1,\dots,v_i}\rightarrow F$ is \e{extendable} if there exists a minimal dominating set $D$ such that the following holds for every $v_j\in \set{v_1,\dots,v_i}$:
\begin{enumerate}
	\item $\theta^i(v_j)\in F_\sigma$ iff $v_j\in D$.
	\item $\theta^i(v_j)=\sigma_I$ iff $N(v_j)\cap D=\emptyset$ and $N(v_j)\cap \nodes_\omega(D)=\emptyset$.
	\item $\theta^i(v_j)=[p]_\sigma$ iff $|N(v_j)\cap \nodes_\omega(D)|\geq 1$ and $|N(v_j)\cap \nodes_\omega(D)\cap \set{v_{i+1},\dots,v_n}|\geq p$ where $p\in \set{0,1}$.
	\item $\theta^i(v_j)=[p]_\omega$ iff $|N(v_j)\cap D|=1$ and $|N(v_j)\cap D\cap \set{v_{i+1},\dots,v_n}|= p$ where $p\in \set{0,1}$.
	\item $\theta^i(v_j)=[p]_\rho$ iff $|N(v_j)\cap D|\geq 2$ and $|N(v_j)\cap D\cap \set{v_{i+1},\dots,v_n}|\geq p$ where $p\in \set{0,1,2}$.
\end{enumerate}
\end{definition}
We denote by $\theta^0$ the empty assignment, which is vacuously extendable.
For every $i\in [1,n]$, we denote by $\bTheta^i$ the set of extendable assignments $\theta^i:\set{v_1,\dots,v_i}\rightarrow F$. Formally:
\begin{align}
	\label{eq:extendable}
	\bTheta^i \eqdef \set{\theta^i: \set{v_1,\dots,v_i}\rightarrow F | \theta^i \text{ is extendable}} && \forall i\in [0,n]
\end{align}
\begin{example}
	Consider the simple path $v_1-v_2-v_3$. The assignment $\theta^1:\set{ v_1 \rightarrow \sigma_I}$ is extendable because the minimal dominating set $D_1=\set{v_1,v_3}$ meets the constraints of the assignment; $v_1 \in D_1$, $|N(v_1)\cap D_1|=0$, and $|N(v_1)\cap \nodes_\omega(D_1)|=|N(v_1)\cap \emptyset|=0$. On the other hand, the assignment $\theta^2: \set{ v_1\gets [0]_\sigma, v_2\gets [0]_\omega}$ is not extendable because the only minimal dominating set that contains $v_1$, and excludes $v_2$, must contain $v_3$, thus violating the constraint that $|N(v_2)\cap D \cap \set{v_3}|=0$ (see Definition~\ref{def:extendable} item 4). In fact, also $\theta^1: \set{v_1\gets [1]_\sigma}$ is not extendable for the same reason as above; $v_1$ must contain an external private neighbor, which can only be $v_2$. Since $v_2$ is not part of the solution, then $v_3$ must be included in it. But then, $v_2$ is not longer a private neighbor because both its neighbors are dominating.
\end{example}
For an assignment $\theta^i{:} \set{v_1,\dots,v_i}{\rightarrow} F$, and label $a\in F$, we denote:
\begin{align}
	\nodes_a(\theta^i)\eqdef\set{v\in \set{v_1,\dots,v_i}: \theta^i(v)=a}&& \forall a\in F
\end{align}
Also, we let $\nodes_\sigma(\theta^i)\eqdef \nodes_{[0]_\sigma}(\theta^i)\cup \nodes_{[1]_\sigma}(\theta^i)\cup \nodes_{\sigma_I}(\theta^i)$. 
\begin{lemma}
	\label{lem:assignmentsAreDS}
An assignment $\theta^n:\set{v_1,\dots,v_n}\rightarrow F$ is extendable if and only if $\nodes_\sigma(\theta^n)$ is a minimal dominating set.
\end{lemma}
\begin{proof}
If $\theta^n$ is extendable, then it expresses only constraints that can be met with respect to the empty set of vertices. Therefore, for every $v_i$, it holds that $\theta^n(v_i)\in \set{[0]_\rho,[0]_\omega,[0]_\sigma,\sigma_I}$. By Proposition~\ref{prop:DSMin}, it follows that $\nodes_\sigma(\theta^n)$ is a minimal dominating set.
If $\nodes_\sigma(\theta^n)$ is a minimal dominating set, then again, the claim follows from Proposition~\ref{prop:DSMin}.
\end{proof}

Hence, we enumerate the set $\set{\nodes_\sigma(\theta^n): \theta^n\in \bTheta^n}$ (see~\eqref{eq:extendable}). To avoid repetitions, we prove (in the Appendix) that there is a bijection between the set of minimal dominating sets of $G$, and the set of extendable assignments $\bTheta^n$.
\def\distinctnessLemma{
	Let $\theta_1^n,\theta_2^n\in \bTheta_n$. Then $\theta_1^n\neq \theta_2^n$ iff $\nodes_\sigma(\theta_1^n)\neq \nodes_\sigma(\theta_2^n)$.  
}
\begin{lemma}
	\label{lemma:distinctness}
	\distinctnessLemma
\end{lemma}

\subsection{Vertex Ordering for Enumeration}
\label{sec:vertexOrdering}
Let $\left(\T,\jointreeMapFunction\right)$ be 
a nice disjoint branch TD (Definition~\ref{def:niceDBTD}) that has undergone the preprocessing phase of Section~\ref{sec:PreprocessingForEnumeration}, and let $u_1\in \nodes(\T)$ be its root. Define $P=(u_1,\dots,u_{|\nodes(\T)|})$ to be a depth first order of $\nodes(\T)$. Thus, 
for every $i > 1$, $\parent(u_i)$ in $\T$ is a node $u_j\in\nodes(\T)$, with $j<i$ closer to the root $u_1$. Define $B:\nodes(G)\rightarrow \set{1,\dots,|\nodes(\T)|}$ to map every $v\in \nodes(G)$ to the earliest bag $u\in \nodes(\T)$ in the order $P$, such that $v \in \jointreeMapFunction(u)$. Let $Q=(v_1,\dots,v_n)$ be a complete ordering of $\nodes(G)$ consistent with $B$, such that whenever $B(v_i)<B(v_j)$, then $v_i\prec_Q v_j$; ties are broken arbitrarily.  With some abuse of notation, we denote by $B(v_i)$ both the identifier of the node in the TD, and the bag $\jointreeMapFunction(B(v_i))$.
\begin{example}
	Consider the TD in Figure~\ref{fig:TDG}. Then $B(a)=B(b)=B(c)=u$, and $v \prec_Q d$ for all $v\in \set{a,b,c}$.
\end{example}
The order $Q$ serves two purposes. First, to construct the tries  associated with the tree nodes (Section~\ref{sec:compactRepresentation}). Hence, querying a trie $A_u$ in an order compatible with $Q$ takes time $O(|\jointreeMapFunction(u)|)$. Second, as we explain, the enumeration algorithm will follow the order $Q=(v_1,\dots,v_n)$.

\def\orderNbrLem{
		Let $v_i$ be the $i$th vertex in the order $Q=(v_1,\dots,v_n)$.
	Then $N(v_i)\cap \set{v_1,\dots,v_{i-1}}\subseteq B(v_i)$.
}
\begin{lemma}
	\label{lem:orderNbr}
\orderNbrLem
\end{lemma}
\def\necessarythm{
		Let $\theta^i\in \bTheta^i$ (see~\eqref{eq:extendable}) be an extendable assignment, $c_{i+1}\in F$, and $K_{i+1}\eqdef N(v_{i+1})\cap \set{v_1,\dots,v_i}\cap B(v_{i+1})$. If the assignment $\theta^{i+1}=(\theta^i,c_{i+1})$ is extendable, then the following holds:
	\begin{align*}
		\text{if }  c_{i+1}&{=}[p]_\omega  \quad \text{ then} && |K_{i+1}{\cap} \nodes_\sigma(\theta^i)|=1{-}p \\
		\text{if }  c_{i+1}&{=}\sigma_I\quad\quad  \text{ then} &&|K_{i+1}{\cap} \nodes_\sigma(\theta^i)|=|K_{i+1}{\cap} \nodes_\omega(\theta^i)|=0 \\
		\text{if } c_{i+1}&{=}[q]_\rho \quad \text{ then} && |K_{i+1}{\cap} \nodes_{\sigma}(\theta^i)| \geq 2{-}q\\
		\text{if }  c_{i+1}&{=}[p]_\sigma \quad
		\text{ then} && |K_{i+1}{\cap} \nodes_{[0]_\omega}(\theta^i)|{=}0, |K_{i+1}{\cap} \nodes_{[1]_\omega}(\theta^i)| {\geq} 1{-}p
	\end{align*}
	in which case we say that $c_{i+1}$ is an \e{admissible} label to $v_{i+1}$.
}
The following theorem defines a necessary condition for an assignment to be extendable. Proof is deferred to the Appendix.
\begin{theorem}
	\label{thm:necessary}
\necessarythm
\end{theorem}

Theorem~\ref{thm:necessary} establishes a necessary but insufficient condition for the assignment $\theta^{i+1}=(\theta^i,c_{i+1})$ to be extendable (Definition~\ref{def:extendable}).

\noindent\textbf{Neighborhood Constraints with Respect to the Vertex Order.}
We recall that for every node $u\in \nodes(\T)$, the assignment $\FA_u:\jointreeMapFunction(u)\rightarrow F$ corresponds to a set of neighborhood constraints imposed on the vertices with respect to the solution-set (see Section~\ref{subsec:encoding}). For example, if $v_i\in \jointreeMapFunction(u)$ is the $i$th vertex in $Q$, then $\FA_u(v_i)=[1]_\rho$ implies that $|D\cap N(v_i)\cap \nodes(G_u)|\geq 1$ in every solution $D$.
For the purpose of enumeration, we wish to define these neighborhood constraints with respect to the vertices that follow $v_i$ in the order $Q$. 
For example,  the assignment $\FA_u(v_i)=[1]_\rho$ implies that $|D\cap \nodes(G_u)\cap  N(v_i)\cap \set{v_{i+1},\dots,v_n}|\geq 1$ (i.e., instead of $|D{\cap} \nodes(G_u){\cap}  N(v_i)|{\geq} 1$).
\begin{lemma}
\label{lemma:NBRConstraints}
Let $v_i$ be the $i$th vertex in the order $Q$, where $v_i{\in} \jointreeMapFunction(u)$, and $u{\in} \nodes(\T)$. Then $N(v_i){\cap} \nodes(G_u){\cap} \set{v_1,\dots,v_{i-1}}\subseteq \jointreeMapFunction(u)$.
\end{lemma}
\begin{proof}
Let $v_j \in N(v_i)\cap \nodes(G_u)\cap \set{v_1,\dots,v_{i-1}}$. By Lemma~\ref{lem:orderNbr}, it holds that $v_j\in B(v_i)$. Since $v_i \in \jointreeMapFunction(u)$, then by definition of the vertex-order $P$, $u \succeq_P B(v_i)$. Since $v_j \in \nodes(G_u)$, then $v_j$ is contained in some bag $t\in \T_u$, and by the running intersection property, $v_j$ is in every bag on the path from $B(v_i)$ to $t$. In particular, $v_j \in \jointreeMapFunction(u)$.
\end{proof}
We wish to transform the assignments $\FA_u(v_i)$ such that they represent neighborhood constraints with respect to $\nodes(G_u)\cap \set{v_{i+1},\dots,v_n}$. By Lemma~\ref{lemma:NBRConstraints}, we have that $$N(v_i)\cap \nodes(G_u) \cap \set{v_1,\dots,v_{i-1}}\subseteq N(v_i)\cap \jointreeMapFunction(u).$$
Therefore, we adjust $\FA_u(v_i)$ based on the states of $v_i$'s neighbors in $N(v_i)\cap \jointreeMapFunction(u)\cap \set{v_1,\dots,v_{i-1}}$ as follows. For $a\in F$, let  
\begin{equation*}
K_{i,a}\eqdef N(v_i)\cap \nodes_a(\FA_u) \cap \set{v_1,\dots,v_{i-1}}
\end{equation*}
(see definition of $V_a(\FA_u)$ in~\eqref{eq:AssignedaInFA_u}).
Therefore, based on $v_i$'s neighbors in $N(v_i)\cap \jointreeMapFunction(u)\cap \set{v_1,\dots,v_{i-1}}$, the label $\FA_u(v_i)$ can be one of:
	\begin{align}
	\label{eq:nbrConsts}
		&[0]_\omega &&\text{ if }&&|K_{i,\sigma}|=1, \text{ or } \nonumber\\
		&[1]_\omega &&\text{ if }&&|K_{i,\sigma}|=0, \text{ or } \nonumber \\
		&\sigma_I &&\text{ if }&&|K_{i,\sigma}|{=}|K_{i,\omega}|{=}0, \text{ or }\\
		&	[\max\set{0,1{-}|K_{i,[1]_\omega}|}]_\sigma &&\text{ if }&&|K_{i,[0]_\omega}|=0, \text{ or }\nonumber\\
		&[\max\set{0,2{-}|K_{i,\sigma}|}]_\rho&& \nonumber
\end{align}
in which case, we say that the assignment $\FA_u(v_i)$ is \e{admissible}.

\subsection{Algorithm Description}
The enumeration algorithm $\algname{EnumDS}$ is presented in Figure~\ref{fig:EnumDS}. The input at each stage $i$ of the algorithm is: (1) $v_i$: the $i$th vertex in the order $Q$, and (2) the assignment $\theta^{i-1}: \set{v_1,\dots,v_{i-1}}\rightarrow F$ representing the neighborhood constraints of vertices $\set{v_1,\dots,v_{i-1}}$ with respect to $\set{v_i,\dots,v_n}$ (see Definition~\ref{def:extendable}). 
During the $i$th iteration, the assignment $\theta^{i-1}$ is extended to $\theta^i:\set{v_1,\dots,v_i}\rightarrow F$ 
by assigning an admissible value $\theta^i(v_i):v_i \rightarrow F$ to $v_i$ (see Theorem~\ref{thm:necessary}). If $v_i$ is assigned a label in $F_\sigma$ or $F_\omega$ in the $i$th iteration, then the neighborhood constraints of $N(v_i)\cap \set{v_1,\dots,v_{i-1}}$ are updated to reflect the constraints with respect to $\set{v_{i+1},\dots,v_n}$. 
\begin{example}
Let $v_j,v_i\in \nodes(G)$ where $v_j\prec_Q v_i$. Consider the call to \algname{EnumDS} with $v_i$ and $\theta^{i-1}$ where $\theta^{i-1}(v_j)=[2]_\rho$.	If $v_i\in N(v_j)$ is added to the solution (i.e., assigned a label in $F_\sigma$), then the state of $v_j$ is updated in $\theta^i$ to $\theta^{i}(v_j)=[1]_\rho$, effectively relaxing the neighborhood constraint on $v_j$ from having at least two neighbors in the solution set among $\set{v_i,\dots,v_n}$, to having at least one neighbor in the solution set among $\set{v_{i+1},\dots,v_n}$.
On the other hand, if $\theta^{i-1}(v_j)=[1]_\omega$, then adding $v_i$ to the solution effectively forbids the addition of any vertex in $N(v_j)\cap \set{v_{i+1},\dots,v_n}$ to the solution. That is, the state of $v_j$ is updated to $\theta^i(v_j)=[0]_\omega$.
\end{example}
Let $\theta^{i-1}{:}\set{v_1,\dots,v_{i-1}}{\rightarrow} F$. We formalize the updates to the neighborhood constraints of $N(v_i){\cap}\set{v_1,\dots,v_{i-1}}$ following the assignment $v_i{\gets} c_i$ ($c_i\in F$) of an admissible label $c_i$ (see Theorem~\ref{thm:necessary}). We let $\theta^i\eqdef(\underline{\theta}^{i-1},c_i)$
where $\underline{\theta}^{i-1}:\set{v_1,\dots,v_{i-1}}\rightarrow F$ is derived from the pair $(\theta^{i-1},c_i)$ as follows. For every $j\in \set{1,\dots,i-1}$:
\begin{equation}
	\label{eq:gj}
	\underline{\theta}^{i-1}(v_j){\gets} 
	\begin{cases}
		\theta^{i-1}(v_j) & \substack{c_i{\in} F_\rho \text{, or } v_j {\notin} N(v_i), \\\text{ or }  c_i{\in} F_\sigma \text{ and } \theta^{i-1}(v_j)\in \set{[0]_\sigma,[1]_\sigma}, \\ \text{ or } c_i{\in} F_\omega \text{ and } \theta^{i-1}(v_j)\notin F_\sigma} \\	
		[0]_\sigma & c_i{\in} F_\omega\text{ and } \theta^{i-1}(v_j){\in} \set{[0]_\sigma,[1]_\sigma}\\	
		[0]_\omega & c_i{\in} F_\sigma, \theta^{i-1}(v_j)=[1]_\omega\\	
		[\max\set{0,\ell{-}1}]_\rho & c_i{\in} F_\sigma, \theta^{i-1}(v_j)=[\ell]_\rho\\
		\bot & \text{ otherwise}
		\end{cases}
\end{equation}
where $\bot\notin F$ means that the assignment $\theta^i(v)\gets c_i$ violates the neighborhood constraints of $v_j$. Specifically, this happens if $c_i\in F_\sigma$ and $\theta^{i-1}(v_j)=[0]_\omega$, or if $c_i\in F_\sigma\cup F_\omega$ and $\theta^{i-1}(v_j)=\sigma_I$.
We denote by $\theta_{|B(v_i)}$  the assignment $\theta$ restricted to the vertices $\jointreeMapFunction(B(v_i))$.
\def\lemcorrectness{
	Let $\theta^{i-1}\in \bTheta^{i-1}$ be an extendable assignment, $c_i \in F$ be an admissible value as in Theorem~\ref{thm:necessary}, and $K_i\eqdef B(v_i)\cap \set{v_1,\dots,v_i}$. The assignment $\theta^i=(\underline{\theta}^{i-1},c_i)$ (see~\eqref{eq:gj}) is extendable if and only if $A_{B(v_i)}(\theta^i_{|K_i})=\true$ (or the corresponding node appears in the trie $A_{B(v_i)}$).
}
\begin{lemma}
	\label{lem:correctness}
\lemcorrectness
\end{lemma}

Algorithm \algname{EnumDS} is initially called with $v_1$, and the empty assignment $\theta^0$, which is vacuously extendable. 

If $i=n+1$, then the algorithm prints the minimal dominating set $\nodes_\sigma(\theta^n)$ (line~\ref{line:printS}).
Since the algorithm processes the vertices by order $Q$, then $\nodes_\sigma(\theta^{i-1})\cup \nodes_\omega(\theta^{i-1}) \subseteq \set{v_1,\dots,v_{i-1}}$.
By lemma~\ref{lem:orderNbr}, $N(v_i)\cap \set{v_1,\dots,v_{i-1}}\subseteq B(v_i)$. Therefore,
$n_i\eqdef|L_i \cap \nodes_\sigma(\theta^{i-1})|=|N(v_i)\cap \nodes_\sigma(\theta^{i-1})|$,
where $L_i\eqdef B(v_i)\cap \set{v_1,\dots,v_i}\cap N(v_i)$ (see line~\ref{line:EnumKi}).
Similarly, $m_i=|L_i \cap \nodes_\omega(\theta^{i-1})|$.
Since $|B(v_i)| \leq w$ for all $i\in [1,n]$, the calculation in lines~\ref{line:EnumKi}-\ref{line:Enummi} takes time $O(w)$, where $w=\tw(G)$.
Then, the algorithm identifies the set of admissible labels for $v_i$ according to Theorem~\ref{thm:necessary}. This happens in lines~\ref{line:cSigmaI},\ref{line:cSigma},\ref{line:cOmega}, and~\ref{line:cRho}.
For example, by Theorem~\ref{thm:necessary}, if $v_i$ is assigned label $\sigma_I$, then it must hold that $n_i=m_i=0$ (line~\ref{line:cSigmaI}).
Following this step, the algorithm verifies that the 
resulting assignment $\theta^i=(\underline{\theta}^{i-1},a)$ is extendable (Definition~\ref{def:extendable}). 
By Lemma~\ref{lem:correctness}, this is if and only if $A_{B(v_i)}(\theta^i_{|K_i})=\true$ (line~\ref{line:testBeforeRecurse}). 
For that, the algorithm makes use of the memoisation trie $A_{B(v_i)}$ (Section~\ref{sec:compactRepresentation}), and recurses with the updated assignment $\theta^i$ in line~\ref{line:recurse}.

\subsection{Proof of Correctness and Runtime Analysis}
\label{sec:ProofCorrectness}
We show that the algorithm prints the minimal dominating sets of $G$, denoted $\sch(G)$, in succession, without repetition, with delay $O(nw)$.
\iffull
If the dynamic-programming algorithm of the preprocessing phase established that $\sch(G)=\emptyset$, then we simply do not activate the enumeration algorithm. So, assume that $\sch(G)\neq \emptyset$. 
\fi

\def\corretnessOnlyDS{
Let $G$ be a graph where $\sch(G)\neq \emptyset$. For every $i\in [1,n+1]$, $\algname{EnumDS}$ is called with the pair $(v_i,\theta^{i-1})$ if and only if $\theta^{i-1} \in \bTheta^{i-1}$ (i.e., $\theta^{i-1}$ is extendable).
}
\begin{lemma}
\label{lem:corretnessOnlyDS}
\corretnessOnlyDS
\end{lemma}
\begin{prf}
	By induction on $i$. 
	Since $\theta^{i-1}$ is extendable, then by Theorem~\ref{thm:necessary}, we have that each one of the states $c_i$ considered in line~\ref{line:forall} is admissible. Since $\theta^{i-1}$ is extendable, then the assignment $\theta^i\eqdef(\underline{\theta}^{i-1},c_i)$ is extendable for at least one such admissible label $c_i$.
	By Lemma~\ref{lem:correctness}, the assignment $\theta^i\eqdef(\underline{\theta}^{i-1},c_i)$ generated according to~\eqref{eq:gj} is extendable if and only if the condition in line~\ref{line:testBeforeRecurse} holds, in which case the algorithm indeed recurses in line~\ref{line:recurse}. Consequently, we have that $\theta^i\in \bTheta^i$. In other words, if the algorithm recurses with assignment $\theta^i$ in line~\ref{line:recurse}, then $\theta^i$ is extendable.
	
	By the induction hypothesis, the algorithm is called with the pair $(\theta^{i-1},v_i)$ for every $\theta^{i-1}\in \bTheta^{i-1}$.  
	By Theorem~\ref{thm:necessary}, if the assignment $\theta^i\eqdef (\underline{\theta}^{i-1},c_i)$ is extendable, then $c_i$ is admissible. The algorithm attempts to extend $\theta^{i-1}$ with all possible admissible labels (line~\ref{line:forall}). Hence, it recurses with all assignments $\theta^i \in \bTheta^i$.
\end{prf}

\begin{theorem}
	$\algname{EnumDS}$ prints the set $S \subseteq \nodes(G)$ iff $S\in \sch(G)$, and it prints the vertex-sets in $\sch(G)$ without repetition.
\end{theorem}
\begin{prf}
By Lemma~\ref{lem:corretnessOnlyDS}, $\algname{EnumDS}$ is called with $(v_{n+1},\theta^n)$ exactly once for every $\theta^n \in \bTheta_n$. The result follows from Lemmas~\ref{lem:assignmentsAreDS} and~\ref{lemma:distinctness}, which establish the bijection between $\sch(G)$ and $\set{\nodes_\sigma(\theta^n): \theta^n \in \bTheta^n}$.
\end{prf}
\renewcommand{\algorithmicrequire}{\textbf{Input:}}
\begin{algseries}{H}{Algorithm for enumerating minimal dominating sets in FPL-delay when parameterized by treewidth. \label{fig:EnumDS}}
	\begin{insidealg}{EnumDS}{$v_i$, $\theta^{i-1}$}		
		\REQUIRE{$v_i$: the $i$th vertex in the order $Q$, and $\theta^{i-1}:\set{v_1,\dots,v_{i-1}\rightarrow F}$}
		\IF{$i=(n+1)$}
			\STATE print $\nodes_\sigma(\theta^{i-1})$ \label{line:printS}	
			\RETURN
		\ENDIF
		\STATE $L_i\gets B(v_i)\cap \set{v_1,\dots,v_{i-1}}\cap N(v_i)$ \label{line:EnumKi}
		\STATE $n_i \gets |L_i\cap \nodes_\sigma(\theta^{i-1})|$ \label{line:Enumni}
		\STATE $m_i \gets |L_i \cap \nodes_\omega(\theta^{i-1})|$ \label{line:Enummi}
		\STATE $c_\omega \gets \bot$, $c_\sigma \gets \bot$, $c_{\sigma_I}\gets \bot$
		\IF{$n_i=0$ and $m_i=0$}
			\STATE $c_{\sigma_I}\gets \sigma_I$ \label{line:cSigmaI}
		\ENDIF
		\IF{$|L_i \cap \nodes_{\sigma_I}(\theta^{i-1})|{=}0$ and $|L_i\cap \nodes_{[0]_\omega}(\theta^{i-1})|{=}0$}
			\STATE $c_\sigma \gets [\max\set{0,1-m_i}]_\sigma$  \label{line:cSigma}
		\ENDIF
		\IF{$|L_i \cap \nodes_{\sigma_I}(\theta^{i-1})|=0$ and $n_i\leq 1$}
			\STATE $c_\omega \gets [1-n_i]_\omega$ \label{line:cOmega}
		\ENDIF
		\STATE $c_\rho \gets [\max\set{0,2-n_i}]_\rho$
		\label{line:cRho}
		\STATE $K_i \gets B(v_i)\cap \set{v_1,\dots,v_{i}}$
		\FORALL{$a \in \set{c_\sigma,c_\rho,c_\omega,c_{\sigma_I}}\setminus \set{\bot}$} \label{line:forall}
			\STATE update $\theta^{i-1}$ to $\underline{\theta}^{i-1}$ according to~\eqref{eq:gj}
			 \label{line:update}
			 \STATE $\theta^i\gets (\underline{\theta}^{i-1},v_i \gets a)$
			\IF{$A_{B(i)}(\theta^i_{|K_i})$} \label{line:testBeforeRecurse}
				\STATE $\algname{EnumDS}(v_{i+1},\theta^i)$ \label{line:recurse}
			\ENDIF
		\ENDFOR	 	
	\end{insidealg}
\end{algseries}

\begin{proposition}
\label{prop:runtime}
Given a preprocessed disjoint branch TD $(\T,\jointreeMapFunction)$, the delay until the output of the first solution, and between the output of consecutive minimal dominating sets, is $O(nw)$.
\end{proposition}
\begin{prf}
	All operations performed in lines~\ref{line:EnumKi}-\ref{line:Enummi} take time $O(w)$.
Updating $\theta^{i-1}$ to $\underline{\theta}^{i-1}$ according to~\eqref{eq:gj} in line~\ref{line:update}, takes time $O(w)$ because, by Lemma~\ref{lem:orderNbr}, $N(v_i)\cap \set{v_1,\dots,v_{i-1}}\subseteq B(v_i)$, and $|B(v_i)|\leq w$. For every $i\in [1,n]$, the trie $A_{B(v_i)}$ is queried at most four times: when assigning $v_i$ a label from the set $\set{c_\sigma,c_\rho,c_\omega,c_{\sigma_I}}$ (line~\ref{line:forall}). Each such query takes time $O(w)$ (see Section~\ref{sec:compactRepresentation}). 
\end{prf}

}

\section{Conclusion}
\label{sec:conclusion}
We present an efficient algorithm for enumerating minimal hitting sets parameterized by the treewidth of the hypergraph. We introduced the nice disjoint-branch TD suited for this task and potentially useful for other problems involving the enumeration of minimal or maximal solutions with respect to some property, common in data management.
As future work, we intend to publish an implementation of our algorithm and empirically evaluate it on datasets for data-profiling problems~\cite{DBLP:series/synthesis/2018Abedjan,DBLP:journals/vldb/KossmannPN22}.\eat{, including the discovery of Unique Column Combinations (UCCs) and functional dependencies.} We also plan to investigate ranked enumeration of minimal hitting sets parameterized by treewidth.

\newpage


\clearpage
\balance
\bibliography{main}

\newpage
\appendix

\section{Proofs From Section~\ref{sec:Preliminaries}}
\begin{repproposition}{\ref{prop:transversalCover}}
	\transversalCover
\end{repproposition}
\begin{proof}
	Recall that $\nodes(\H^d)\eqdef\set{y_e: e\in \Hedges(\H)}$, and $\Hedges(\H^d)\eqdef \set{f_v: v\in \nodes(\H)}$, where $f_v \eqdef \set{y_e\in \nodes(\H^d): v\in e}$.
	Let $f_v\in \Hedges(\H^d)$. Since $\D$ is an edge cover of $\H$, then there exists a hyperedge $e\in \D$ such that $v\in e$. By definition, this means that $y_e\in f_v$, and hence $y_e\in f_v\cap \set{y_e: e\in \D}$. Therefore, the set $\set{y_e: e\in \D}$ is a transversal of $\H^d$. We now show that $\set{y_e: e\in \D}$ is a minimal transversal of $\H^d$. Suppose it is not, and that $\set{y_e: e\in \D'}$ is a transversal of $\H^d$, where $\D'\eqdef \D{\setminus}\set{e}$  for some $e\in \D$. That is, $f_v\cap \set{y_e: e\in \D'}\neq \emptyset$ for all $v\in \nodes(\H)$. But this means that $\bigcup_{e\in \D'}e=\nodes(\H)$, and hence $\D'\subset \D$ is an edge cover; a contradiction.
	
	For the other direction, we first show that $\D$ is an edge cover of $\H$. Let $v\in \nodes(\H)$. Since $\set{y_e: e\in \D}$ is a transversal of $\H^d$, then $f_v \cap \set{y_e: e\in \D} \neq \emptyset$. Therefore, there exists a vertex $y_e\in\set{y_e: e\in \D}$ such that $y_e\in f_v$. By construction, this means that $v\in e$. In particular, $v\in \bigcup_{e\in \D}e$, and hence $\D$ is an edge cover of $\H$. If $\D$ is not minimal, then there exists a $\D'\subset \D$ that is an edge cover of $\H$. By the previous, this means that $\set{y_e: e\in \D'}\subset \set{y_e: e\in \D}$ is a transversal of $\H^d$. But this is a contradiction to the minimality of the transversal $\set{y_e: e\in \D}$.
\end{proof}

\section{Proofs From Section~\ref{sec:TransEnumReduction}}
\begin{reptheorem}{\ref{thm:domenum}}
\thmdomenum
\end{reptheorem}
\begin{proof}	
	If $\M$ is a transversal, then 
	every vertex in $\set{y_e : e \in \Hedges(\H)}$ is adjacent to at least one vertex in $\M$. By the assumption of the Theorem, $\nodes(\H)\setminus \M\neq \emptyset$. Since the vertices in $\nodes(\H)\setminus \M$ are dominated by $v$, we conclude that $\M \cup \set{v}$ is a dominating set of $B(\H)$. Since $\M$ is a minimal transversal, then for any $\M' \subset \M$, it holds that $e \cap \M'=\emptyset$ for some hyperedge $e \in \Hedges(\H)$. But this means that $N(y_e) \cap \M'=\emptyset$ for $y_e \in \nodes(B(\H))$. Hence, $\M\cup \set{v}$ is a minimal dominating set of $\B(\H)$.
	
	If $\M\cup \set{v}$ is a minimal dominating set of $B(\H)$, then $\M$ must be a minimal transversal of $\H$. If not, then let $\M'\subset \M$ be a transversal of $\H$. But then, by the previous, $\M' \cup \set{v}\subset \M\cup\set{v}$ is a dominating set of $B(\H)$; a contradiction.
\end{proof}

\begin{replemma}{\ref{lemma:addvertexwidth}}
	\lemmaaddvertexwidth
\end{replemma}
\begin{proof}
	Let $(\T,\jointreeMapFunction')$ be the tree that results from $(\T,\jointreeMapFunction)$ by adding vertex $v$ to all bags of $(\T,\jointreeMapFunction)$. In other words, for every $u\in \nodes(\T)$, we let $\jointreeMapFunction'(u)\eqdef\jointreeMapFunction(u)\cup \set{v}$. Since $(\T,\jointreeMapFunction)$ is a tree-decomposition for $G$, then clearly $(\T,\jointreeMapFunction')$ meets the three conditions in Definition~\ref{def:TreeDecomposition} with respect to $G'$. Therefore, $(\T,\jointreeMapFunction')$ is a tree-decomposition for $G'$. Since the size of every bag increased by exactly one vertex, the width of $(\T,\jointreeMapFunction')$ is $k+1$.
\end{proof}
\section{Missing Details from Section~\ref{sec:overview}}
\label{sec:AppendixOverview}

\begin{repproposition}{\ref{prop:DSMin}}
	\propDSMin
\end{repproposition}
\begin{proof}
	Let $D$ be a minimal dominating set, and let $u\in D$. Since $D'\eqdef D{\setminus}\set{u}$ is no longer dominating, there is a vertex $v\in \nodes(G){\setminus} D'$ such that $N(v)\cap D'=\emptyset$. If $v=u$, then $N(u)\cap D'= N(u)\cap D=\emptyset$. If $v\neq u$, then $v\in \nodes(G){\setminus} D$. Since $N(v)\cap D'=\emptyset$, and $N(v)\cap D\neq \emptyset$, we conclude that $N(v)\cap D=\set{u}$.
	
	Now, let $D\subseteq \nodes(G)$ be a dominating set for which the conditions of the proposition hold, and let $D'\eqdef D{\setminus} \set{u}$ for a vertex $u\in D$. If $D'$ is a dominating set, then $N(u)\cap D'=N(u)\cap D\neq \emptyset$. Also, $|N(v)\cap D'|\geq 1$ for every $v\in N(u){\setminus}D'$. In particular, $|N(v)\cap D|\geq 2$ for every $v\in N(u){\setminus}D$. But this brings us to a contradiction.  
\end{proof}

\begin{reptheorem}{\ref{thm:correctness}}
	\correctnessthm
\end{reptheorem}
\begin{proof}
	By induction on $i$. 
	The algorithm is initially called with the pair $(\theta^0,v_1)$, where $\theta^0$ is the empty labeling. Since $\bTheta^0\eqdef \set{\theta^0}$, the claim follows.
	
	Let $i\geq 2$. We now assume the claim holds for all $k\leq i-1$, and prove the claim for $i$. 
	We first show that if the algorithm is called with the pair $(\theta^i,v_{i+1})$ in line~\ref{line:recurseShort}, then $\theta^i\in \bTheta^i$. This immediately follows from the fact that the algorithm is called with the pair $(\theta^i,v_{i+1})$ in line~\ref{line:recurseShort} only if the procedure $\algname{IsExtendable}(\M,\theta^i)$ returned \texttt{true} in line~\ref{line:isExtendable}. Therefore, $\theta^i\in \bTheta^i$.
	
	We now show that \algname{EnumDS} is called with the pair $(\theta^i,v_{i+1})$, for every $\theta^i\in \bTheta^i$. So let $\theta^i\in \bTheta^i$. By definition of extendable assignment, there exists a minimal dominating set $D\in \sch(G)$ such that $D$ is consistent with $\theta^i$. That is, for every $v\in V_i$, the label that $D$ induces on $v$ is $\theta^i(v)$ (Definition~\ref{def:induceLabels}). Now, let $\theta^{i-1}:V_{i-1}\rightarrow F$ denote the assignment that $D$ induces on the vertex set $V_{i-1}=\set{v_1,\dots,v_{i-1}}$. By definition, $D\in \sch(G)$ is consistent with $\theta^{i-1}$, and hence $\theta^{i-1}\in \bTheta^{i-1}$. By the induction hypothesis, \algname{EnumDS} is called with the labeling $\theta^{i-1}\in \bTheta^{i-1}$. Let $c_i\eqdef \theta^i(v_i)$. Since $D$ is consistent with $\theta^i$, and with $\theta^{i-1}$, and since for every $j\leq n$, the assignment $\theta^j$ that $D$ induces on $V_j$ is unique, then it must hold that $\theta^i\in \algname{IncrementLabeling}(\theta^{i-1},c_i)$.
	\eat{
		
		***************************************
		Since $\theta^{i-1}$ is extendable, then by definition, there exists a label $c_i \in F$ such that $\theta^i\in \bTheta^i$ where $\theta^i\gets \algname{IncrementLabeling}(\theta^{i-1},c_i)$, and $\theta^u(v_i)=c_i$. 
		
		by Theorem~\ref{thm:necessary}, we have that each one of the states $c_i$ considered in line~\ref{line:forall} is admissible. Since $\theta^{i-1}$ is extendable, then the assignment $\theta^i\eqdef(\underline{\theta}^{i-1},c_i)$ is extendable for at least one such admissible label $c_i$.
		By Lemma~\ref{lem:correctness}, the assignment $\theta^i\eqdef(\underline{\theta}^{i-1},c_i)$ generated according to~\eqref{eq:gj} is extendable if and only if the condition in line~\ref{line:testBeforeRecurse} holds, in which case the algorithm indeed recurses in line~\ref{line:recurse}. Consequently, we have that $\theta^i\in \bTheta^i$. In other words, if the algorithm recurses with assignment $\theta^i$ in line~\ref{line:recurse}, then $\theta^i$ is extendable.
		
		By the induction hypothesis, the algorithm is called with the pair $(\theta^{i-1},v_i)$ for every $\theta^{i-1}\in \bTheta^{i-1}$.  
		By Theorem~\ref{thm:necessary}, if the assignment $\theta^i\eqdef (\underline{\theta}^{i-1},c_i)$ is extendable, then $c_i$ is admissible. The algorithm attempts to extend $\theta^{i-1}$ with all possible admissible labels (line~\ref{line:forall}). Hence, it recurses with all assignments $\theta^i \in \bTheta^i$.
	}
\end{proof}

\subsection{Proofs from Section~\ref{sec:ordering}}
\label{sec:orderingAppendix}
\begin{replemma}{\ref{lem:orderNbr}}
	\orderNbrLem
\end{replemma}
\begin{proof}
	Let $v_k\in N(v_i)\cap \set{v_1,\dots,v_{i-1}}$.
	Since $v_k \prec_Q v_i$, then $B(v_k)\leq B(v_i)$. If $B(v_k)= B(v_i)$, then $v_k\in B(v_i)$, and we are done. Otherwise,  $B(v_k)< B(v_i)$, and hence $B(v_k)\notin \nodes(\T_{B(v_i)})$.
	Since $v_i \in N(v_k)$, then by Definition~\ref{def:TreeDecomposition} there exists a node $l\in \nodes(\T)$ such that $v_i,v_k\in \jointreeMapFunction(l)$. Since $B(v_i)$ is the first bag (in the depth-first-order $P$) that contains $v_i$, then $l \geq B(v_i)$.
	By the running intersection property, $v_k$ appears in every bag on the path between nodes $B(v_k)$ and $l$ in $\T$, and all bags that contain $v_i$ are descendants of $B(v_i)$ in $\T$. Therefore, node $l$ belongs to $\T_{B(v_i)}$. Hence, $v_k\in \jointreeMapFunction(l)$ belongs to a bag $\jointreeMapFunction(l)$ in the subtree rooted at $B(v_i)$, and to a bag $B(v_k)\notin \nodes(\T_{B(v_i)})$. By the running intersection property, $v_k\in B(v_i)$.
\end{proof}

\subsection{Proofs from Section~\ref{sec:incrementLabeling}}
\begin{repproposition}{\ref{prop:increment}}
	\incrementProposition
\end{repproposition}
\begin{proof}
	Since $\theta^i\in \bTheta^i$ is extendable, then there is a minimal dominating set $D\in \sch(G)$ that is consistent with $\theta^i$. By Definition~\ref{def:induceLabels}, it holds that $\nodes_\sigma(\theta^i)\subseteq D$, for every $v_j\in \nodes_{\sigma_I}(\theta^i)$ it holds that $N(v_j)\cap D=\emptyset$, for every $v_j\in \nodes_\omega(\theta^i)$ it holds that $|N(v_j)\cap D|=1$, and for every $v_j\in \nodes_\rho(\theta^i)$ that $|N(v_j)\cap D|\geq 2$.
	
	Suppose, by contradiction, that item~$(1)$ does not hold. Then there is a vertex $w\in N(v_j)\cap V_i$ such that $\theta^i(w)\in F_\omega \cup F_\sigma$. If $\theta^i(w)\in F_\sigma$, then $N(v_j)\cap D\neq \emptyset$, which means that $D$ is not consistent with $\theta^i$, a contradiction.
	If $\theta^i(w)\in F_\omega$, then by Definition~\ref{def:induceLabels}, it holds that $N(w)\cap D=\set{v_j}$. But this means that $v_j$ has a private neighbor in $\nodes(G){\setminus}D$, which again brings us to a contradiction. 
	The rest of the items of the proposition are proved in a similar fashion, as a direct consequent of Definition~\ref{def:induceLabels}, and are omitted.
\end{proof}

We present the procedure \algname{IncrementLabeling} in Figure~\ref{fig:IncrementLabeling}.
We remind the reader that for an assignment $\theta^i:V_i\rightarrow F$, we denote by $\nodes_\sigma(\theta^i)\eqdef \set{v\in V_i:\theta^i(v) \in F_\sigma}$, and by $\nodes_\omega(\theta^i)\eqdef \set{v\in V_i:\theta^i(v) \in F_\omega}$.
In line~\ref{line:initThetai}, the algorithm initializes the assignment $\theta^i$ by combining the assignment $\theta^{i-1}:V_{i-1}\rightarrow F$ and $c_i\in F$, provided as input. That is, $\theta^i(v_i)=c_i$, and $\theta^i(v_j)=\theta^{i-1}(v_j)$ if $j<i$. The procedure then updates the labels of the vertices in $N(v_i)\cap V_i$ according to the label $c_i$. For example, if $c_i\in F_\sigma$, and $v_j\in N(v_i)\cap V_i$ where $\theta^{i-1}(v_j)=[1]_\rho$, then $v_j$'s label is updated to $[0]_\rho$ in line~\ref{line:updaterho}.
If $\theta^i(v_j)=[1]_\omega$, it means that $N(v_j)\cap \nodes_\sigma(V_{i-1})=\emptyset$. If, for example, $c_i=[0]_\sigma$, then $v_j\in N(v_i)$ is a private neighbor of $v_i$. Therefore, the label of $v_j$ is updated to $\theta^i(v_j)=[0]_\omega$ in line~\ref{line:update0Omega}, indicating that $v_j$ should not have any neighbors that belong to the solution set among $\nodes(G){\setminus}V_i$. This maintains the constraint that $v_j$ have exactly one neighbor in the solution set represented by the assignment $\theta^i$.
In lines~\ref{line:bot1},\ref{line:bot2},\ref{line:bot3}, and~\ref{line:bot4} the procedure returns $\emptyset$ indicating that one of the conditions in Proposition~\ref{prop:increment} has been violated. This means that assigning $\theta^i(v_i)\gets c_i$ will necessarily result in an assignment that is not extendable. 

\begin{replemma}{\ref{lem:incrementLabeling}}
	\incrementLabelingLemma
\end{replemma}
\begin{proof}
	Let $\theta^{i-1}: V_{i-1}\rightarrow F$ be an extendable assignment, and $c_i \in F$. If $c_i \in F_\rho$, then the only extendable assignment $\theta^{i}: V_{i}\rightarrow F$ is the one where $c_i=\max\set{0, 2-|N(v_i)\cap \nodes_\sigma(\theta^{i-1})|}$, and $\theta^i(v_j)=\theta^{i-1}(v_j)$ for all $j<i$. 
	
	If $c_i \in F_\sigma$, then the only extendable assignment $\theta^{i}: V_{i}\rightarrow F$ is the one where (1) $\theta^i(w)=\theta^{i-1}(w)$ for all $w\notin N(v_i)$,  (2) $\theta^i(w)=\theta^{i-1}(w)$ for all vertices $w\in N(v_i)$ where $\theta^{i-1}(w)\in \set{[0]_\sigma, [1]_\sigma}$, (3) $\theta^i(w)=[0]_\omega$ for all vertices $w\in N(v_i)$ where $\theta^{i-1}(w)= [1]_\omega$, and (4) $\theta^i(w)=[\max\set{0,x-1}]_\rho$ for all vertices $w\in N(v_i)$ where $\theta^{i-1}(w)= [x]_\rho$.
	If $c_i = [1]_\omega$, then the only extendable assignment $\theta^{i}: V_{i}\rightarrow F$ is the one where (1) $\theta^{i-1}(w)\notin F_\sigma$ for all $w\in N(v_i)$, and (2) $\theta^i(w)=\theta^{i-1}(w)$ for all $w\in 
	V_i$.
	If $c_i = [0]_\omega$, and $\theta^{i-1}$ is extendable with $c_i = [0]_\omega$, then there exists a single vertex $w\in N(v_i)\cap V_i$ such that $\theta^{i-1}(w)=[1]_\sigma$. In this case, there exist two extendable assignments $\theta^{i}, f^i: V_{i}\rightarrow F$. In the first, $\theta^i(w)=[0]_\sigma$ (line~\ref{line:update0Sigma}), and in the second $f^i(w)=[1]_\sigma$ (line~\ref{line:update1Sigma}). Also, for every $v\in V_i{\setminus}\set{w}$, it holds that $\theta^i(v)=f^i(v)=\theta^{i-1}(v)$.	

\end{proof}

\begin{replemma}{\ref{lem:incrementLabelingRuntime}}
	\incrementLabelingRuntime
\end{replemma}
\begin{proof}
	Observe that the runtime of \algname{IncrementLabeling} with input $(\theta^{i-1},c_i)$ is on $O(|N(v_i)\cap V_i|)$.
	If $V_i$ are the first $i$ vertices of $Q=\langle v_1,\dots,v_n\rangle$ defined in Section~\ref{sec:ordering}, then by Lemma~\ref{lem:orderNbr} it holds that $N(v_i) \cap V_i \subseteq B(v_i)$. Since $|B(v_i)|\leq \tw(G)+1$, then $|N(v_i)\cap B(v_i)|\leq \tw(G)$, and the claim follows.
\end{proof}

\renewcommand{\algorithmicrequire}{\textbf{Input:}}
\renewcommand\algorithmicensure{\textbf{Output:}}
\begin{algserieswide}{h}{Algorithm for incrementing a labeling $\theta^{i-1}\in \bTheta^{i-1}$, to a set of at most two labelings $\theta^i$, where $\theta^i(v_i)=c_i$, if possible. Otherwise, return an empty set.\label{fig:IncrementLabeling}}
	\begin{insidealgwide}{IncrementLabeling}{$\theta^{i-1}$,$c_i$}		
		\REQUIRE{A labeling $\theta^{i-1}:V_{i-1}\rightarrow F$, and $c_i \in F$}
		\ENSURE{A set of at most two labelings $\theta^i, f^i: V_i \rightarrow F$}
		\STATE $K_i\gets N(v_i)\cap V_i$
		\STATE $N_i\gets K_i\cap \nodes_\sigma(\theta^{i-1})$, $W_i\gets K_i \cap \nodes_\omega(\theta^{i-1})$
		\STATE $\theta^i \gets(\theta^{i-1},c_i)$ \label{line:initThetai}
		\IF{$c_i \in F_\sigma$}
			\FORALL{$v\in K_i$ s.t. $\theta^{i-1}(v)=[x]_\rho$}
				\STATE $\theta^{i}(v)\gets [\max\set{0,x-1}]_\rho$ \label{line:updaterho}
			\ENDFOR
		\ENDIF
		\IF{$c_i=\sigma_I$ AND $(N_i\neq \emptyset \text{ OR } W_i\neq \emptyset)$}
			 \RETURN $\emptyset$
			 \label{line:bot1}
		\ENDIF
		\IF{$c_i\in \set{[0]_\sigma,[1]_\sigma}$}
			\IF{$\left(\exists w\in K_i \text{ s.t. }\theta^{i-1}(w)\in \set{\sigma_I,[0]_\omega}\right)$ OR $\left(c_i=[0]_\sigma \text{,}W_i= \emptyset\right)$ }
				\RETURN $\emptyset$
				\label{line:bot2}
			\ELSE
				\FORALL{$w\in K_i$ s.t. $\theta^{i-1}(w)=[1]_\omega$}
					\STATE $\theta^i(w)\gets [0]_\omega$ \label{line:update0Omega}
				\ENDFOR
			\ENDIF		
		\ENDIF
		\IF{$c_i =[x]_\rho$ AND $x\neq \max\set{0,2-|N_i|}$}
			\RETURN $\emptyset$
			\label{line:bot3}
		\ENDIF
		\STATE $f^i \gets \theta^i$
		\IF{$c_i\in \set{[0]_\omega,[1]_\omega}$}
			\IF{$\left(\exists w\in K_i \text{ s.t. }\theta^{i-1}(w)=\sigma_I\right) \text{ OR } (|N_i|{\geq} 2)$ \\ $ \text{ OR }(N_i{=}\emptyset, c_i{=}[0]_\omega)  \text{ OR } (N_i {\neq} \emptyset, c_i{=}[1]_\omega)$} \label{line:omegaIfCondition}
				\RETURN $\emptyset$
				\label{line:bot4}
			\ELSIF{$c_i=[0]_\omega$}
				\STATE $\set{v}\gets N_i$ \COMMENT{In this case $|N_i|=1$}
				\IF{$\theta^i(v) = [0]_\sigma$}
					\RETURN $\emptyset$
					\label{line:bot5}
				\ELSE
					\STATE $\theta^i(v)\gets [0]_\sigma$ \label{line:update0Sigma}
					\STATE $f^i(v)\gets [1]_\sigma$ \label{line:update1Sigma}
				\ENDIF
			\ENDIF
		\ENDIF
		\RETURN $\set{\theta^i,f^i}$
	\end{insidealgwide}
\end{algserieswide}

\section{Details and Proofs from Section~\ref{sec:PreprocessingForEnumeration}}
\label{sec:AppendixProofsPreproc}
\eat{
\subsection{Preprocessing Phase: Dynamic Programming over Nice-Disjoint-Branch TDs}
We show how to compute $\M_r(\FA_r)$ for every $\FA_r \in \FAs{r}$ and every $r\in \nodes(\T)$ by dynamic programming over the nice disjoint branch TD $(\T,\jointreeMapFunction)$. 
\blue{
	The preprocessing phase receives as input a nice disjoint-branch TD $(\T,\jointreeMapFunction)$, and a set of local constraints $\set{\kappa_u: u\in \nodes(\T)}$ defined in items~\eqref{item:localConstraints1}-\eqref{item:localConstraints5} in the proof of Lemma~\ref{lem:DBJTLemma} in Section~\ref{sec:buildDBJTAppendix}. Specifically, $\kappa_u: F^{\jointreeMapFunction(u)}\rightarrow \set{0,1}$ assigns zero to all assignments $\FA_u:\jointreeMapFunction(u) \rightarrow F$ that do not meet the constraints generated during the transformation to a nice disjoint-branch TD.
}

Let $A\subset \jointreeMapFunction(u)$, and $B=\jointreeMapFunction(u){\setminus}A$. Notation-wise, for an assignment $\FA_u\in \FAs{u}$, we sometimes write $\FA_u=(\FA_A,\FA_B)$ where $\FA_A$ and $\FA_B$ are the restriction of assignment $\FA_u$ to the disjoint vertex-sets $A$ and $B$ respectively, where $A{\cup}B=\jointreeMapFunction(u)$.

Let $u\in \nodes(\T)$, $\FA_u:\jointreeMapFunction(u)\rightarrow F$. We recall that
\begin{align}
	\label{eq:nodecategoriesAssignment}
	\nodes_a(\FA_u)\eqdef \set{w\in \jointreeMapFunction(u): \FA_u(w)=a} && \text{ for every }a\in F\\
	\nodes_\sigma(\FA_u)\eqdef \set{w\in \jointreeMapFunction(u): \FA_u(w)\in F_\sigma} \\
	\nodes_\omega(\FA_u)\eqdef \set{w\in \jointreeMapFunction(u): \FA_u(w)\in F_\omega}
\end{align}
Now, let $v\in \jointreeMapFunction(u)$. For every $a\in F$, we denote by $k_{v,a}(\FA_u)\eqdef |N(v)\cap \nodes_a(\FA_u)|$. Also, $k_{v,\sigma}(\FA_u)\eqdef |N(v)\cap \nodes_\sigma(\FA_u)|$, and $k_{v,\omega}(\FA_u)\eqdef |N(v)\cap \nodes_\omega(\FA_u)|$.

\noindent \textbf{Leaf Node}: $r$ is a leaf node. Since $\jointreeMapFunction(r)=\emptyset$, the only possible solution for this subtree is $\emptyset$, and hence $\M_r\equiv 1$.

\noindent \textbf{Introduce node}: Let $r \in \nodes(\T)$ be an introduce node with child node $u$, and let $\jointreeMapFunction(r)=\jointreeMapFunction(u) \cup \set{v}$, where $v\notin \jointreeMapFunction(u)$. Therefore, for $\FA_r\in \FAs{r}$, we write $\FA_r=(\FA_u,c_v)$, where $\FA_u\in \FAs{u}$ and $c_v \in F$. The label $\FA_r(w)$ of every $w\in \jointreeMapFunction(r)$ puts a constraint on the labels of $w$'s neighbors in $N(w)\cap (\nodes_r{\setminus}\jointreeMapFunction(r))$. However, since $\nodes_r=\nodes_u\cup \set{v}$, then for every $w\in \jointreeMapFunction(r)$, it holds:
\begin{align*}
N(w)\cap (\nodes_r{\setminus}\jointreeMapFunction(r))&=N(w)\cap ((\nodes_u\cup \set{v}){\setminus} (\jointreeMapFunction(u)\cup\set{v}))\\
&=N(w)\cap (\nodes_u{\setminus}\jointreeMapFunction(u)).
\end{align*}
Therefore, $\FA_r(w)=\FA_u(w)$ for every $w\in \jointreeMapFunction(u)$.  Since $v\notin \nodes_u$, then $N(v)\cap (\nodes_r{\setminus}\jointreeMapFunction(r))=N(v)\cap (\nodes_u{\setminus}\jointreeMapFunction(u))=\emptyset$. Therefore, if any of the following assertions fail for $\FA_r(v)$, then $\M_r(\FA_r)\gets 0$.
\begin{enumerate}
	\item \label{enum:assertionsIntroduce1} $\FA_r(v) \in \set{\sigma_I, [0]_\rho, [0]_\omega, [0]_\sigma}$.
	\item If $\FA_r(v)=\sigma_I$ then $k_{v,\sigma}(\FA_r)=k_{v,\omega}(\FA_r)=0$.
	\item  \label{enum:assertionsIntroduce3} If $\FA_r(v)\in \set{[0]_\sigma,[0]_\omega}$, then $k_{v,\sigma_I}=0$.
\end{enumerate} 
Then, we compute $\M_r(\FA_r)$ as follows:
\begin{align*}
	\M_r(\FA_r)=\M_r(\FA_u,c_v)=\begin{cases}
		\M_u(\FA_u)\textcolor{blue}{\wedge \kappa_r(\FA_r)} & \substack{\FA_r(v)  \text{ meets conditions} \\ \text{(1)-(3) above}}\\
		0 & \text{otherwise}
		\end{cases}
\end{align*}
\blue{The conjunction with $\kappa_r(\FA_r)$ accounts for the local constraints derived from the transformation to a disjoint-branch TD.}
\noindent \textbf{Disjoint Join Node}:
Let $r\in \nodes(\T)$ be a disjoint join node with child nodes $u_0$ and $u_1$, where $\jointreeMapFunction(r)=\jointreeMapFunction(u_0)\cup \jointreeMapFunction(u_1)$ and $\jointreeMapFunction(u_0)\cap \jointreeMapFunction(u_1)=\emptyset$. This means that  $\nodes_{u_i}\cap \jointreeMapFunction(u_{1-i})=\emptyset$ for $i\in \set{0,1}$.
Let $\FA_r=(\FA_{u_0},\FA_{u_1})$. Let $w\in \jointreeMapFunction(r)$, and suppose wlog that $w\in \jointreeMapFunction(u_0)$. The label $\FA_r(w)$ puts a constraint on the labels of $w$'s neighbors in $N(w)\cap (\nodes_r{\setminus}\jointreeMapFunction(r))$. Now, if wlog $w\in \jointreeMapFunction(u_0)$, then by definition of disjoint join node, $w\notin \jointreeMapFunction(u_1)$, and hence $N(w)\cap (\nodes_r{\setminus}\jointreeMapFunction(r))=N(w)\cap (\nodes_{u_0}{\setminus}\jointreeMapFunction(u_0))$.
Therefore, similar to the case of the introduce node, we have that $\FA_r[\jointreeMapFunction(u_i)]=\FA_{u_i}$ for $i\in \set{0,1}$. Therefore, to compute $\M_r(\FA_r)$, we simply take the conjunction of $\M_{u_0}(\FA_{u_0})\wedge \M_{u_1}(\FA_{u_1})$. \blue{To account for the local constraints $\kappa_u$ derived from the transformation to a disjoint-branch TD, we get that:}
$$\M_r(\FA_r)=\M_r(\FA_{u_0},\FA_{u_1})=\M_{u_0}(\FA_{u_0})\wedge \M_{u_1}(\FA_{u_1}) \textcolor{blue}{\wedge \kappa_r(\FA_r)}.$$ 
\noindent \textbf{Forget Node}:
Let $r \in \nodes(\T)$ be a forget node with child node $u$ where $\jointreeMapFunction(u)=\jointreeMapFunction(r)\cup \set{v}$. 
We are going to compute $\M_r(\FA_r)$ as follows. 
\begin{align}
	\label{eq:newForgetNode}
	\M_r(\FA_r) =\textcolor{blue}{\kappa_r(\FA_r)\wedge}\left[\bigvee_{c_v\in F}\M_u(\FA_{r,c_v},c_v)\right]
\end{align}
\blue{The conjunction with $\kappa_r(\FA_r)$ accounts for the local constraints derived from the transformation to a disjoint-branch TD.}

We define $\FA_{r,c_v}:\jointreeMapFunction(r)\rightarrow F$ for every $c_v\in F$ as a function of $\FA_r\in F^{\jointreeMapFunction(r)}$.
Before defining the mappings $\FA_{r,c_v}:\jointreeMapFunction(r)\rightarrow F$ for every $c_v\in F$, we observe that $u$ is the first node (in depth first order) that contains the vertex $v$. Therefore, the possible labels for $v$ in every category (i.e., $F_\sigma$, $F_\omega$, and $F_\rho$) are determined by the assignment to its neighbors in $N(v)\cap \jointreeMapFunction(u)$. Therefore, before computing $\M_r(\FA_r)$ according to~\eqref{eq:newForgetNode}, we make the following assertions to $\FA_u$. In case any one of the assertions does not hold, we set $\M_u(\FA_u)\gets 0$.
\begin{enumerate}
	\item If $\FA_u(v)=\sigma_I$ then $k_{v,\omega}(\FA_u)=k_{v,\sigma}(\FA_u)=0$.
	\item If $\FA_u(v)=[j]_\rho$ then $j= \max\set{0,2-k_{v,\sigma}(\FA_u)}$.
	\item If $\FA_u(v)=[1]_\omega$ then $k_{v,\sigma}(\FA_u)=0$.
	\item If $\FA_u(v)=[0]_\omega$ then $k_{v,\sigma_I}(\FA_u)=k_{v,[1]_\sigma}(\FA_u)=0$ and $k_{v,[0]_\sigma}(\FA_u)=1$.
	\item If $\FA_u(v)=[1]_\sigma$ then $k_{v,\sigma_I}(\FA_u)=k_{v,\omega}(\FA_u)=0$.
	\item If $\FA_u(v)=[0]_\sigma$ then $k_{v,\sigma_I}(\FA_u)=k_{v,[1]_\omega}(\FA_u)=0$, and $k_{v,[0]_\omega}(\FA_u)\geq 1$.
\end{enumerate}
We now define the mappings $\FA_{r,c_v}:\jointreeMapFunction(r)\rightarrow F$ for every $c_v\in F$, as a function of $\FA_r\in F^{\jointreeMapFunction(r)}$.
\begin{align}
	\FA_{r,\sigma_I}(w)\eqdef \begin{cases}
			\FA_r(w) & w\notin N(v) \\
			[\max\set{0,(j-1)}]_\rho & \FA_r(w)=[j]_\rho \\
			\bot & \text{ otherwise }
		\end{cases}
\end{align}

\begin{align}
	\FA_{r,[0]_\sigma}(w)\eqdef \begin{cases}
		\FA_r(w) & \substack{w\notin N(v)\text{ or }\\\FA_r(w)\in \set{[0]_\sigma,[1]_\sigma}} \\
		[\max\set{0,(j-1)}]_\rho & \FA_r(w)=[j]_\rho \\
		[0]_\omega &  \FA_r(w)=[1]_\omega \\
		\bot & \text{ otherwise }
	\end{cases}
\end{align}

\begin{align}
	\FA_{r,[1]_\sigma}(w)\eqdef \begin{cases}
		\FA_r(w) & \substack{w\notin N(v)\text{ or }\\\FA_r(w)\in \set{[0]_\sigma,[1]_\sigma}} \\
		[\max\set{0,(j-1)}]_\rho & \FA_r(w)=[j]_\rho \\
		\bot & \text{ otherwise }
	\end{cases}
\end{align}

\begin{align}
	\FA_{r,[1]_\omega}(w)\eqdef \begin{cases}
		\FA_r(w) & w\notin N(v)\text{ or }\FA_r(w)\notin F_\sigma \\
		\bot & \text{ otherwise }
	\end{cases}
\end{align}

\begin{align}
	\FA_{r,[0]_\omega}(w)\eqdef \begin{cases}
		\FA_r(w) & w\notin N(v)\text{ or }\FA_r(w)\notin F_\sigma \\
		[\max\set{0,j-1}]_\sigma & \FA_r(w)=[j]_\sigma \\
		\bot & \text{ otherwise }
	\end{cases}
\end{align}
\eat{
We express every $\FA_u: \jointreeMapFunction(u) \rightarrow F$ as a pair $(\FA_r, c_v)$ where $\FA_r:\jointreeMapFunction(r)\rightarrow F$ and $c_v\in  F$. Therefore, we compute $\mu_r(\FA_r)$:
\begin{align}
	\label{eq:forgetNodeExistence}
	\mu_r(\FA_r)=\bigvee_{\substack{c_v \in
			\set{[1]_\sigma, \sigma_I,[1]_\omega,[2]_\rho}		}}\mu_u(\FA_r,c_v) 
\end{align}
Observe that since $v\notin \jointreeMapFunction(r)$, then by the running intersection property, vertex $v$ can only belong to bags in $\T_u$. Since $N(v)\subseteq \nodes(G_u)$, then when node $u$ is processed, the only possible labels for $v$ in $\FA_u$ are $\FA_u(v) \in \set{[1]_\sigma, \sigma_I,[1]_\omega,[2]_\rho}$.
}
\noindent \textbf{Root Node}: The root node $r$ is a forget node where $\jointreeMapFunction(r)=\emptyset$, its only child $u$ contains a single vertex $\jointreeMapFunction(u)=\set{v}$, and $G_r=G_u=G$. Hence, $\M_r$ is computed as in~\eqref{eq:newForgetNode}. \eat{
	Note that the same restrictions to the assignment of a label to $v$ apply.
	we can compute $\mu_r$: does there exist a $[\sigma,\rho]$ dominating set in $G$ ? For example, if $\rho$ is co-finite and $\min\set{\rho}=r$, and $\sigma$ is finite with range $[0,\ell_1]$ then:
	\begin{align}	
		\mu_r&=\mu_u([r]_\omega)\vee \mu_u([{\geq}(r+1)]_\rho)\vee \mu_u([{\leq}\ell_1,{\geq}1]_\sigma) \label{eq:orRoot}\\ \mu_r&=\arg\min\set{\mu_u([r]_\omega).\second,\mu_u([{\geq}(r+1)]_\rho).\second,\mu_u([{\leq}\ell_1,{\geq}1]_\sigma.\second} \label{eq:minRoot}
	\end{align}
	Since all the neighbors of $u$ belong to $\nodes(G_u)$, then in~\eqref{eq:orRoot} we take the disjunction of the following predicates: (1) There exists a $[\sigma,\rho]$-DS $D$ where $u\notin D$, but is a private neighbor to exactly $r$ members of $D$, (2) There exists a $[\sigma,\rho]$-DS $D$ where $u\notin D$ and $|N(u)\cap D|\geq r+1$, and (3) There exists a $[\sigma,\rho]$-DS $D$ where $u\in D$, $|N(u)\cap D|\leq \max\set{\sigma}=\ell_1$. Since we want $D$ to be minimal then it must have at least one private neighbor. The case for the optimization variant in~\eqref{eq:minRoot} is analogous. 
}\eat{
	we sum all configurations $c_v\in C_{\sigma_1}$ to assert the fact that $u$ is not redundant and has at least one private neighbor. Finally, in~\eqref{eq:sumRho} we sum all valid configurations $c_v\in C_\rho$.
}

\noindent \textbf{Runtime.} The dynamic programming algorithm over a disjoint branch TD takes time $O(nws^w)$ where $s=|F|$\eat{is the number of states}, $w$ is the width of the (disjoint branch) TD, and $n=|\nodes(G)|$.
}

\subsection{From Nice to Disjoint-Branch-Nice TD}
\label{sec:buildDBJTAppendix}

In what follows, we denote by ${\circ}$ the concatenation operator on strings. For example, letting $a$ and $b$ refer to characters in some alphabet, by $a{\circ}b$, we refer to the string $ab$. For a rooted TD $(\T,\jointreeMapFunction)$ with $m\eqdef |\nodes(\T)|$ nodes, and node $u\in \nodes(\T)$, we remind the reader that $\parent(u)$ is the unique parent of $u$ in $\T$; if $u$ is the root node of $\T$, then $\parent(u)\eqdef \texttt{nil}$. Let $P\eqdef \langle u_1,\dots,u_m \rangle$ be a depth-first order of $\nodes(\T)$. If $u$ is a join node of $(\T,\jointreeMapFunction)$, then it has exactly two child nodes $u_0$ and $u_1$, where wlog we assume that $u_0 \prec_P u_1$. We refer to $u_0$ and $u_1$ as the left and right children of $u$, respectively; in notation, $\leftchild(u)\eqdef u_0$ and $\rightchild(u)\eqdef u_1$.

Let $(\T,\jointreeMapFunction)$ be a nice TD for the graph $G$. We associate with every node $u\in \nodes(\T)$ a string, denoted $\branch(u)$ over the alphabet $\set{0,1}$ as follows. If $u$ is the root node, then $\branch(u)\eqdef \varepsilon$, where $\varepsilon$ represents the empty string. If $u$ is the unique child of its parent $\parent(u)$ in the TD, then $\branch(u)\eqdef \branch(\parent(u))$. Otherwise, since $(\T,\jointreeMapFunction)$ is a nice TD, it holds that $\parent(u)$ is a join node in $(\T,\jointreeMapFunction)$. We then set:
\begin{align*}
	\branch(u)\eqdef \begin{cases}
			\branch(\parent(u)) {\circ}0 & \text{if } u=\leftchild(\parent(u)) \\
			\branch(\parent(u)) {\circ}1 & \text{if } u=\rightchild(\parent(u)) \\
		\end{cases}
\end{align*}

Algorithm \algname{TransformToDBJT} in Figure~\ref{alg:TDToDBJT} receives as input a nice TD $(\T,\jointreeMapFunction)$ for the graph $G$, and transforms it to a nice DBJT (Definition~\ref{def:niceDBTD}) $(\T',\jointreeMapFunction')$ with the following property. 
Every node $u\in \nodes(\T)$ is also part of the new TD $(\T',\jointreeMapFunction')$, but with the following modification:
$v\in \jointreeMapFunction(u)$ if and only if $v{\circ} \branch(u) \in \jointreeMapFunction'(u)$; that is we update the name of the vertex in $\jointreeMapFunction(u)$ according to $\branch(u)$. 
If $u$ is a join node in $(\T,\jointreeMapFunction)$, then $\T'$ contains two additional nodes $u'$ and $u''$, that are both descendants of $u$ in $\T'$, where $\jointreeMapFunction'(u'')\eqdef \bigcup_{v\in \jointreeMapFunction'(u)}\set{v{\circ}0,v{\circ}1}$, and $\jointreeMapFunction'(u')\eqdef \jointreeMapFunction'(u)\cup \jointreeMapFunction'(u'')$. The path between $u$ and $u'$ in $\T'$ contains $2|\jointreeMapFunction(u)|-1$ new forget nodes, and the path between $u'$ and $u''$ in $\T'$ contains $|\jointreeMapFunction(u)|-1$ new introduce nodes. See Figure~\ref{fig:JTToDBJTIllustrationNew} for an illustration. For brevity, we denote $v{\circ}0$ ($v{\circ}1$) by $v_0$ ($v_1$).

Let $u\in \nodes(\T)$ be a join node of the nice TD $(\T,\jointreeMapFunction)$. For every vertex $v\in \jointreeMapFunction(u)$, algorithm \algname{TransformToDBJT} generates two new vertices $v{\circ}0$ and $v{\circ}1$ that are used to represent $v$'s neighbors in $\nodes(G_{u_0}){\setminus}\jointreeMapFunction(u)$ and $\nodes(G_{u_1}){\setminus}\jointreeMapFunction(u)$, respectively.  To maintain consistency with the original graph (and the original TD $(\T,\jointreeMapFunction)$), the three vertices $v,v{\circ}0$, and $v{\circ}1$, are placed in the same bag; namely $v,v{\circ}0,v{\circ}1 \in \jointreeMapFunction'(u')$, and the factor for node $u'$ is adjusted accordingly. For the precise details, see the pseudocode of algorithm \algname{TransformToDBJT} in Figure~\ref{alg:TDToDBJT}, and the proof of Lemma~\ref{lem:DBJTLemma}.
\eat{
\begin{figure}[t]
		\centering
		\begin{subfigure}[t]{0.4\linewidth} 
			\begin{tikzpicture}[node distance=0.2cm,every node/.style={scale=0.475},x=1.28cm, y=0.7cm,font=\footnotesize]
				\tikzset{vertex/.style = {draw, ellipse, font=\fontsize{11}{12}}}
				\tikzset{doublevertex/.style = {draw, double, ellipse, font=\fontsize{11}{12}}}
				
				\node[vertex, label=left:$u_0$] (a0b0) at (-0.5,6) {$ab$};
				\node[vertex, label=right:$u_1$] (a1b1) at (0.5,6) {$ab$};
				\node[vertex, label=left:$u$] (u) at (0,7) {$ab$};
				\node[vertex] (abc) at (0,8) {$abc$};
				
				\draw[-{Stealth[length=1mm]}] (abc) -- (u);
				\draw[-{Stealth[length=1mm]}] (u) -- (a0b0);
				\draw[-{Stealth[length=1mm]}] (u) -- (a1b1);
				
				\draw[thick, dotted] (0,9)--(abc);
				
				\draw   (-0.5,5.5) coordinate(a0) --
				(-0.2,4.5) coordinate (c0) --
				(-0.8,4.5) coordinate (b0) -- cycle;
				\draw[thick, dotted] (-0.5,5.5) -- (-0.5,4.5);
				\draw[-{Stealth[length=1mm]}] (a0b0) -- (a0);
				
				\draw   (0.5,5.5) coordinate(a1) --
				(0.8,4.5) coordinate (c1) --
				(0.2,4.5) coordinate (b1) -- cycle;
				\draw[thick, dotted] (0.5,5.5) -- (0.5,4.5);
				\draw[-{Stealth[length=1mm]}] (a1b1) -- (a1);	
			\end{tikzpicture}
			\caption{Part of the nice TD $(\T,\jointreeMapFunction)$, with join node $u$.}
		\label{fig:origNiceTD}
	\end{subfigure} \hspace{0.3cm}
    \begin{subfigure}[t]{0.4\linewidth} 
    	\centering
		\begin{tikzpicture}[node distance=0.2cm,every node/.style={scale=0.475},x=1.28cm, y=0.7cm,font=\footnotesize]
			\tikzset{vertex/.style = {draw, ellipse, font=\fontsize{11}{12}}}
			\tikzset{doublevertex/.style = {draw, double, ellipse, font=\fontsize{11}{12}}}			
			\node[vertex, label=left:$u_0$] (a0b0) at (-0.5,0) {$a_0b_0$};
			\node[vertex, label=right:$u_1$] (a1b1) at (0.5,0) {$a_1b_1$};
			\node[vertex, label=left:$u''$] (utt) at (0,1) {$a_0b_0a_1b_1$};
			\node[vertex, fill=yellow] (i1) at (0,2) {$ba_0b_0a_1b_1$};
			\node[vertex, label=left:$u'$] (ut) at (0,3) {$aba_0b_0a_1b_1$};
			\node[vertex,fill=pink] (f1) at (0,4) {$aba_0b_0a_1$};
			\node[vertex, fill=pink] (f2) at (0,5) {$aba_0b_0$};
			\node[vertex, fill=pink] (f3) at (0,6) {$aba_0$};
			\node[vertex, label=left:$u$] (u) at (0,7) {$ab$};
			\node[vertex] (abc) at (0,8) {$abc$};
			
			\draw[-{Stealth[length=1mm]}] (abc) -- (u);
			\draw[-{Stealth[length=1mm]}] (u) -- (f3);
			\draw[-{Stealth[length=1mm]}] (f3) -- (f2);
			\draw[-{Stealth[length=1mm]}] (f2) -- (f1);
			\draw[-{Stealth[length=1mm]}] (f1) -- (ut);
			\draw[-{Stealth[length=1mm]}] (ut) -- (i1);
			\draw[-{Stealth[length=1mm]}] (i1) -- (utt);
			\draw[-{Stealth[length=1mm]}] (utt) -- (a0b0);
			\draw[-{Stealth[length=1mm]}] (utt) -- (a1b1);
			
			\draw[thick, dotted] (0,9)--(abc);
			
			\draw   (-0.5,-0.5) coordinate(a0) --
			(-0.2,-1.5) coordinate (c0) --
			(-0.8,-1.5) coordinate (b0) -- cycle;
			\draw[thick, dotted] (-0.5,-0.55) -- (-0.5,-1.5);
			\draw[-{Stealth[length=1mm]}] (a0b0) -- (a0);
			
			\draw   (0.5,-0.5) coordinate(a1) --
			(0.8,-1.5) coordinate (c1) --
			(0.2,-1.5) coordinate (b1) -- cycle;
			\draw[thick, dotted] (0.5,-0.55) -- (0.5,-1.5);
			\draw[-{Stealth[length=1mm]}] (a1b1) -- (a1);
		\end{tikzpicture}
		\caption{Part of the disjoint branch TD $(\T',\jointreeMapFunction')$ that results from processing the join node $u$.}
	\label{fig:DBJTAfterTransformation}
	\end{subfigure}
	\caption{The result of processing join node $u\in \nodes(\T)$ in the nice TD $(\T,\jointreeMapFunction)$ in Figure~\ref{fig:origNiceTD}. Note nodes $u'$ and $u''$ in $(\T',\jointreeMapFunction')$ in Figure~\ref{fig:DBJTAfterTransformation}, the introduction of the three forget nodes (in pink), and the single introduce node (in yellow). Observe that $2\cdot |\jointreeMapFunction(u)|-1=3$, and $|\jointreeMapFunction(u)|-1=1$.}
		\label{fig:JTToDBJTIllustrationNew}
\end{figure}
}
\begin{replemma}{\ref{lem:DBJTLemma}}
	\DBJTBuildLemma_2
\end{replemma}
\begin{prf}
	Let $(\T,\jointreeMapFunction)$ be a nice TD with root node $r$.  We first set $(\T',\jointreeMapFunction')\gets (\T,\jointreeMapFunction)$, and root $\T'$ at node $r$. Let $u\in \nodes(\T)$. From lines \eqref{line:item1Bijection1}-\eqref{line:item1Bijection2}, it is apparent that $u\in \nodes(\T')$, and that $\jointreeMapFunction'(u)=\set{v{\circ}\branch(u):v\in \jointreeMapFunction(u)}$. This proves that $(\T',\jointreeMapFunction')$ possesses the first property of the Lemma; that $\nodes(\T)\subseteq \nodes(\T')$, and that there is a bijection between $\jointreeMapFunction(u)$ and $\jointreeMapFunction'(u)$, for every $u\in \nodes(\T)$.
	
	Let $u \in \nodes(\T')$  be the join node closest to the root (e.g., that violates the disjoint property, breaking ties arbitrarily) with child nodes $u_0$ and $u_1$. 
	By Definition~\ref{def:induceLabelssubtree}, the label assigned to a vertex $v\in \jointreeMapFunction'(u)$ represents constraints to the labels of $v$'s neighbors in $\nodes(G_u){\setminus}\jointreeMapFunction'(u)$. Since $\nodes(G_u){\setminus}\jointreeMapFunction'(u)=(\nodes(G_{u_0}){\setminus}\jointreeMapFunction'(u))\cup (\nodes(G_{u_1}){\setminus}\jointreeMapFunction'(u))$, then for every 
	$v\in \jointreeMapFunction'(u)$, we can partition $N_{G_u}(v){\setminus}\jointreeMapFunction'(u)$ as follows:
	\begin{align}
	\label{eq:distributeNbrsBody}
		N_{G_u}(v){\setminus}\jointreeMapFunction'(u) {=} \underbrace{(N_{G_{u_0}}(v){\setminus} \jointreeMapFunction'(u))}_{\eqdef N_{G_u}^0(v)} {\cup}\underbrace{(N_{G_{u_1}}(v){\setminus} \jointreeMapFunction'(u))}_{\eqdef N_{G_u}^1(v)}
	\end{align}
	where $N_{G_u}^0(v)\eqdef N_{G_{u_0}}(v){\setminus} \jointreeMapFunction'(u)$, and $N_{G_u}^1(v)\eqdef N_{G_{u_1}}(v){\setminus} \jointreeMapFunction'(u)$. \eat{Clearly, $(N_{G_u}(v){\cap} \jointreeMapFunction(u))\cap N_{G_u}^i(v)=\emptyset$, for $i\in \set{0,1}$.} Observe that $N_{G_u}^0(v)\cap N_{G_u}^1(v)=\emptyset$. If not, then there is a vertex $w\in N_{G_u}^0(v)\cap N_{G_u}^1(v)$. By the running intersection property of TDs, it must hold that $w\in \jointreeMapFunction'(u)$, which brings us to a contradiction.

	We update the tree $(\T',\jointreeMapFunction')$ as follows. For every $v\in \jointreeMapFunction'(u)$ we create $v_0,v_1$ two fresh copies of $v$ (these are referred to as $v{\circ}0$ and $v{\circ}1$ respectively in the pseudocode in Figure~\ref{alg:TDToDBJT}). 
	Every occurrence of $v$ in $\jointreeMapFunction'(u_0)$ ($\jointreeMapFunction'(u_1)$) is replaced with $v_0$ ($v_1$); this happens in the recursive calls in lines~\ref{line:recCall1} and~\ref{line:recCall2}.
	Following this transition, the bags $\jointreeMapFunction'(u_0)$ and $\jointreeMapFunction'(u_1)$ are disjoint. \eat{The cardinality of the bags in subtrees $\T'_{u_0}$ and $\T'_{u_1}$ remain unchanged.}
	We create a new node $u'$, and define its bag to be $\jointreeMapFunction'(u')\eqdef \jointreeMapFunction'(u)\cup \bigcup_{v\in \jointreeMapFunction'(u)}\set{v_0,v_1}$. 
	Now, we distribute the neighbors $N_{G_u}(v){\setminus}\jointreeMapFunction'(u)$ among vertices $v_0,v_1\in \jointreeMapFunction'(u')$ according to~\eqref{eq:distributeNbrsBody}, as follows. 
	For the new $v_i$ ($i{\in} \set{0,1}$), we define:
	\begin{align}
		N_{G_{u'}}(v_i)\eqdef N_{G_{u_i}}(v_i){\setminus}\jointreeMapFunction'(u')=N_{G_{u_i}}(v_i){\setminus}\jointreeMapFunction'(u)=N^i_{G_u}(v) \label{eq:DBJTProofBody1}
	\end{align} 
   and hence:
	\begin{align}
		\label{eq:DBJTProofBody2}
		N_{G_{u}}(v){\setminus}\jointreeMapFunction'(u) {=} N_{G_{u'}}(v_0) {\cup} N_{G_{u'}}(v_1)
	\end{align}
	and, by definition, the two sets of vertices $N_{G_{u'}}(v_0)$, and $N_{G_{u'}}(v_1)$ are disjoint, as in~\eqref{eq:distributeNbrsBody}.

	Consider the assignment $\FA_{u'}: \jointreeMapFunction'(u') \rightarrow F$. We can express $\FA_{u'}=(\FA_u,\FA_{u_0},\FA_{u_1})$ where $\FA_u: \jointreeMapFunction'(u)\rightarrow F$, and $\FA_{u_i}: \jointreeMapFunction'(u_i)\rightarrow F$ where $i\in \set{0,1}$. 
	We define the set of local constraints $\kappa_{u'}$ by restricting the set of assignments
	$\FA_{u'}: \jointreeMapFunction'(u') \rightarrow F$, such that the following holds:
	\begin{enumerate}
		\eat{
		\item \textcolor{red}{\sout{For every triple $(v_0,v_1,v)$ where $v\in \jointreeMapFunction'(u)$, all three vertices are assigned the same category: $\FA_{u'}(v_0),\FA_{u'}(v_1),\FA_{u'}(v)\in F_a$ where $a\in \set{\sigma, \rho, \omega}$. \label{item:localConstraints1}}}
		\item \textcolor{red}{\sout{$\FA_{u'}(v)=[k]_\sigma$, where $k\in \set{0,1}$, if and only if $\FA_{u'}(v_0)=[k_0]_\sigma$ and  $\FA_{u'}(v_1)=[k_1]_\sigma$ where $k_0 + k_1=k$.}}
			\item  \textcolor{red}{\sout{$\FA_{u'}(v)=[k]_\rho$, where $k\in \set{0,1,2}$ if and only if $\FA_{u'}(v_0)=[k_0]_\rho$ and  $\FA_{u'}(v_1)=[k_1]_\rho$, where $\min \{ 2, k_0 + k_1\} =k$.  \label{item:localConstraints5}}}
	}
		\item $\FA_{u'}(v)=\sigma_I$ if and only if $\FA_{u'}(v_0){=}\FA_{u'}(v_1){=}\sigma_I$. \label{item:localConstraints1}
		\item  $\FA_{u'}(v)=[0]_\sigma$ if and only if $\FA_{u'}(v_0)= \FA_{u'}(v_1)=[0]_\sigma$.
		\item  $\FA_{u'}(v)=[1]_\sigma$ if and only if ($\FA_{u'}(v_0)=[1]_\sigma$ and  $\FA_{u'}(v_1)=[1]_\sigma$) or ($\FA_{u'}(v_0)=[1]_\sigma$ and  $\FA_{u'}(v_1)=[0]_\sigma$) or ( $\FA_{u'}(v_0)=[0]_\sigma$ and  $\FA_{u'}(v_1)=[1]_\sigma$).
		\item $\FA_{u'}(v)=[k]_\omega$, where $k\in \set{0,1}$, if and only if $\FA_{u'}(v_0)=[k_0]_\omega$ and  $\FA_{u'}(v_1)=[k_1]_\omega$ where $k_0 + k_1=k$.
		\item $\FA_{u'}(v)=[0]_\rho$ if and only if $\FA_{u'}(v_0)= \FA_{u'}(v_1)=[0]_\rho$.
		\item $\FA_{u'}(v)=[1]_\rho$ if and only if ($\FA_{u'}(v_0)=[0]_\rho$ and  $\FA_{u'}(v_1)=[1]_\rho$) or ($\FA_{u'}(v_0)=[1]_\rho$ and  $\FA_{u'}(v_1)=[0]_\omega$).
		\item $\FA_{u'}(v)=[2]_\rho$ if and only if ($\FA_{u'}(v_0)= \FA_{u'}(v_1)=[1]_\rho$) or ($\FA_{u'}(v_0)=[2]_\rho$ and  $\FA_{u'}(v_1)=[0]_\omega$) or ($\FA_{u'}(v_0)=[0]_\omega$ and  $\FA_{u'}(v_1)=[2]_\rho$). \label{item:localConstraintsLast}
	\end{enumerate}
	In other words, $\kappa_{u'}: F^{\jointreeMapFunction'(u')}\rightarrow \set{0,1}$, where $\kappa(\FA_{u'})=1$ if and only if $\FA_{u'}$ meets the constraints~\eqref{item:localConstraints1}-\eqref{item:localConstraintsLast}.	Hence, for all $v\in \jointreeMapFunction'(u)$, it holds that $\FA_{u'}(v)$ is determined by $\FA_{u'}(v_0)$ and $\FA_{u'}(v_1)$.\eat{,and $\nodes_{\sigma,v}(\FA_u)$.} So, while $|\jointreeMapFunction'(u')|=|\jointreeMapFunction'(u_0)|+|\jointreeMapFunction'(u_1)|+|\jointreeMapFunction'(u)|=3|\jointreeMapFunction'(u)|=3|\jointreeMapFunction(u)|$, the label of every vertex $v\in \jointreeMapFunction(u)$ is uniquely determined by vertices $v_0\in \jointreeMapFunction'(u_0)$ and $v_1\in \jointreeMapFunction'(u_1)$. Hence, we have that $\efftw(\T',\jointreeMapFunction')\leq 2w$. 
	
	Setting the neighbors of $v_0$ and $v_1$ as in~\eqref{eq:DBJTProofBody1}, for all vertices $v\in \nodes(G)$, can be done in one bottom-up traversal of the TD, and takes time $O(nw)$. To maintain the ``niceness'' of the resulting disjoint-branch TD, we need to add $|\jointreeMapFunction(u)|-1$ introduce nodes, and $2|\jointreeMapFunction(u)|-1$ forget nodes (see Algorithm \algname{TransformToDBJT} in Figure~\ref{alg:TDToDBJT}, and Figure~\ref{fig:JTToDBJTIllustrationNew}). The set of vertices in these new nodes are, by construction, strict subsets of the vertices $\jointreeMapFunction'(u')=\jointreeMapFunction'(u)\cup \bigcup_{v\in \jointreeMapFunction'(u)}\set{v_0,v_1}$ described above (see Figure~\ref{fig:JTToDBJTIllustrationNew}). Therefore, the local constraints associated with these new nodes are simply a projection of the local constraints $\kappa_{u'}: F^{\jointreeMapFunction'(u')}\rightarrow \set{0,1}$.

	Since $|\jointreeMapFunction(u)|\leq w$, then processing a single node takes time $O(w^2)$. 
	Since the TD contains $O(n)$ nodes, then the overall runtime of the algorithm is in $O(nw)+O(nw^2)=O(nw^2)$.
	From the above construction, and due to~\eqref{eq:DBJTProofBody2}, we get that the resulting structure possesses property (2) of the lemma (see also illustration in Figure~\ref{fig:DBJTAfterTransformation}). 
\end{prf}

\renewcommand{\algorithmicrequire}{\textbf{Input:}}
\renewcommand{\algorithmicensure}{\textbf{Output:}}
\begin{algserieswide}{t}{Transforms a nice TD $(\T,\jointreeMapFunction)$  to a DBJT $(\T',\jointreeMapFunction')$ representing the same neighborhoods for all $v\in \nodes(G)$.  \label{alg:TDToDBJT}}
	\begin{insidealgwide}{TransformToDBJT}{$(\T,\jointreeMapFunction)$,$u\in \nodes(\T)$}
		\REQUIRE{$(\T,\jointreeMapFunction)$: a nice TD \\$\qquad\quad \, \, \, \, u$: the current node being processed}
		\ENSURE{A Disjoint-Branch Data Structure $(\T',\jointreeMapFunction')$ corresponding to $(\T,\jointreeMapFunction)$.}
		\IF{$u$ is a leaf node}
			\RETURN
		\ENDIF
		\STATE $p \gets \parent(u)$
		\IF{$p= \texttt{nil}$} 
			\STATE $\branch(u)\gets \varepsilon$  \COMMENT{$u$ is the root node}
			\eat{
			\IF{$u$ is a join node}
				\STATE \algname{TransformToDBJT}$((\T,\jointreeMapFunction), \leftchild(u))$ 
				\STATE \algname{TransformToDBJT}$((\T,\jointreeMapFunction), \rightchild(u))$ 
			\ELSE
				\STATE \algname{TransformToDBJT}$((\T,\jointreeMapFunction), \child(u))$ 
			\ENDIF
			\RETURN}
		\ELSIF{$p$ is a join node}
			\IF{$u=\leftchild(p)$}
				\STATE $\branch(u)\gets \branch(p){\circ}0$
			\ELSE
				\STATE $\branch(u)\gets \branch(p){\circ}1$
			\ENDIF
		\ELSE
			\STATE $\branch(u)\gets \branch(p)$
		\ENDIF
		\STATE $\jointreeMapFunction'(u)\gets \emptyset$ \COMMENT{Init $\jointreeMapFunction'(u)$}
		\FORALL{$v\in \jointreeMapFunction(u)$} \label{line:item1Bijection1}
				\STATE $\jointreeMapFunction'(u)\gets \jointreeMapFunction'(u)\cup \set{v{\circ} \branch(u)}$ \label{line:item1Bijection2}
		\ENDFOR
		\IF{$u$ is a join node}
			\STATE $u_0{\gets} \leftchild(u)$, $u_1{\gets} \rightchild(u)$
			\STATE \algname{TransformToDBJT}($(\T,\jointreeMapFunction')$,$u_0$) \label{line:recCall1}
			\STATE \algname{TransformToDBJT}($(\T,\jointreeMapFunction')$,$u_1$)  \label{line:recCall2}
			\STATE Define $u''$ to be a node where $\jointreeMapFunction'(u'')\eqdef \jointreeMapFunction'(u_0) \cup\jointreeMapFunction'(u_1)$
			\STATE Define $u'$ to be a node where $\jointreeMapFunction'(u')\eqdef \jointreeMapFunction'(u) \cup \jointreeMapFunction'(u'')$
			\STATE set $\parent(u_0)\gets u''$, $\parent(u_1)\gets u''$
			\STATE Add $|\jointreeMapFunction(u)|-1$ introduce nodes between $u'$ and $u''$ \COMMENT{e.g., see Figure~\ref{fig:DBJTAfterTransformation}}
			\STATE Add $2|\jointreeMapFunction(u)|-1$ forget nodes between $u'$ and $u$.
			\STATE For all nodes $t$ on the path between $u$ and $u''$, set $\branch(t)\gets \branch(u)$
			\FORALL{$v\in  \jointreeMapFunction'(u)$}
				\STATE $N_{G_{u'}}(v_0) \eqdef N^0_{G_u}(v)$ \COMMENT{see~\eqref{eq:distributeNbrsBody} and~\eqref{eq:DBJTProofBody1}}
				\STATE $N_{G_{u'}}(v_1) \eqdef N^1_{G_u}(v)$
			\ENDFOR
		\ELSE
			\STATE \algname{TransformToDBJT}($(\T,\jointreeMapFunction')$,$\child(u)$)
		\ENDIF
	\end{insidealgwide}
\end{algserieswide}

\def\buildNiceDBTD{
	If graph $G$ has a DBTD of width $k$, then it has a nice DBTD of the same width which has at most $3kn$ nodes and can be constructed in time $O(k^2n)$.
}

\begin{lemma}
	\label{lem:buildDBTD}
	\buildNiceDBTD
\end{lemma}
\begin{proof}
	Let $(\T,\jointreeMapFunction)$ be a DBTD of $G$ rooted at node $r$. We modify $(\T,\jointreeMapFunction)$ in order for it to have the desired properties of Definition~\ref{def:niceDBTD}. Throughout, we maintain the invariant that $(\T,\jointreeMapFunction)$ is a DBTD of width $k$.
	First, to every leaf $\ell$ of $\T$ we add, as a single child, another leaf node $\ell'$ and set $\jointreeMapFunction(\ell')=\emptyset$. At this point $(\T,\jointreeMapFunction)$ is a DBTD of width $k$ that satisfies the first property.
	
	As long as $(\T,\jointreeMapFunction)$ has a node $u\in \nodes(\T)$ with more than two children, we do the following. Let $\set{u_1,\dots,u_p}$ be the children of $u$. We make a new node $u'$ and configure it as follows. We set $\jointreeMapFunction(u')\eqdef \jointreeMapFunction(u)\cap \left(\bigcup_{i=2}^p \jointreeMapFunction(u_i)\right)$, and we make $u_2,\dots,u_p$ children of $u'$ (i.e., instead of $u$). Finally, we set $u'$ as a child of $u$. Since $(\T,\jointreeMapFunction)$ is a DBTD, then $\jointreeMapFunction(u_i)\cap \jointreeMapFunction(u_j)=\emptyset$ for every pair of nodes $u_i,u_j \in \set{u_1,\dots,u_p}$. In particular, $\jointreeMapFunction(u_1)\cap \jointreeMapFunction(u')=\emptyset$. Therefore, this operation maintains the property that $(\T,\jointreeMapFunction)$ is a DBTD of width $k$. Furthermore, now $u$ has two children namely $u_1$ and $u'$, and the problem has shifted from node $u$ to node $u'$ whose degree is one less that the degree of $u$ before this step. Repeating this procedure for every node in $\nodes(\T)$ will result in the addition of at most $O(n)$ nodes, and will take time $O(nk)$.
	
	We now transform the DBTD so that for every node $u\in \nodes(\T)$ with two children $u_0$ and $u_1$, it holds that $\jointreeMapFunction(u)=\jointreeMapFunction(u_0)\cup \jointreeMapFunction(u_1)$ and $\jointreeMapFunction(u_0)\cap \jointreeMapFunction(u_1)=\emptyset$. Let $u\in \nodes(\T)$ where $u$ has two children $u_0$ and $u_1$. Since  $(\T,\jointreeMapFunction)$ is a DBTD, then $\jointreeMapFunction(u_0)\cap \jointreeMapFunction(u_1)=\emptyset$. If $\jointreeMapFunction(u) \neq \jointreeMapFunction(u_0)\cup \jointreeMapFunction(u_1)$, we do the following. 
	We define two new nodes $y_0$ and $y_1$ and configure them as follows: $\jointreeMapFunction(y_0)\eqdef \jointreeMapFunction(u) \cap \jointreeMapFunction(u_0)$ and $\jointreeMapFunction(y_1)\eqdef \jointreeMapFunction(u) \cap \jointreeMapFunction(u_1)$. We define a new node $y$ and set $\jointreeMapFunction(y)\eqdef \jointreeMapFunction(y_0)\cup \jointreeMapFunction(y_1)$. We set $y_0$ and $y_1$ as children of $y$, $u_0$ as a child of $y_0$ (i.e., instead of $u$), and $u_1$ as a child of $y_1$ (i.e., instead of $u$). Finally, we set $y$ as the unique child of $u$. If $\jointreeMapFunction(u)=\jointreeMapFunction(y)$ or  $\jointreeMapFunction(y_0)=\jointreeMapFunction(u_0)$ or $\jointreeMapFunction(y_1)=\jointreeMapFunction(u_1)$, we remove one of the redundant nodes. See Figure~\ref{fig:illustrationToNiceDBTD} for an illustration of this step.
	At the completion of this step the number of nodes that violate the fourth property of Definition~\ref{def:niceDBTD} is reduced by one. Hence, after at most $2n$ iterations of this step, each taking time $O(k)$, we get a DBTD that satisfies the first and fourth properties of Definition~\ref{def:niceDBTD}. This step adds at most $3n$ nodes to the DBTD.
	
	Now, we transform $(\T,\jointreeMapFunction)$ so that it satisfies the second and third properties of Definition~\ref{def:niceDBTD}. Let $u\in \nodes(\T)$ with a single child node $u_1$, such that $\jointreeMapFunction(u)\subset \jointreeMapFunction(u_1)$, and $|\jointreeMapFunction(u_1)\setminus \jointreeMapFunction(u)| > 1$. We introduce a new vertex $u'$ as a child of $u$ and make $u'$ the parent of $u_1$. Let $v\in \jointreeMapFunction(u_1)\setminus \jointreeMapFunction(u)$. We set $\jointreeMapFunction(u')\eqdef \jointreeMapFunction(u)\cup \set{v}$. Following this step, $u$ is a forget node that satisfies the third property of Definition~\ref{def:niceDBTD}, but $u'$ may violate it. However, the difference $|\jointreeMapFunction(u_1)\setminus \jointreeMapFunction(u')|=|\jointreeMapFunction(u_1)\setminus \jointreeMapFunction(u)|-1$. Hence, after adding at most $k$ nodes, all the nodes on the path between $u$ and $u_1$ are forget nodes that satisfy the third property of Definition~\ref{def:niceDBTD}.
	
	Finally, let $u\in \nodes(\T)$ with child node $u_1$ where $\jointreeMapFunction(u_1) \subset \jointreeMapFunction(u)$ and $|\jointreeMapFunction(u)\setminus \jointreeMapFunction(u_1)|>1$. We introduce a new node $u'$ as the single child of $u$ and make $u'$ the parent of $u_1$. Let $v\in \jointreeMapFunction(u)\setminus \jointreeMapFunction(u_1)$. We set $\jointreeMapFunction(u')\eqdef \jointreeMapFunction(u)\setminus \set{v}$. Following this step, $u$ is an introduce node that satisfies the second property of Definition~\ref{def:niceDBTD}, but $u'$ may violate it. By adding at most $k$ nodes, all the nodes on the path between $u$ and $u_1$ will be introduce nodes that satisfy the second property of Definition~\ref{def:niceDBTD}.
	
	The number of nodes in the new DBTD is at most $3kn$, which is also the total number of steps required to transform $(\T, \jointreeMapFunction)$ to a nice DBTD. Each step takes time $O(k)$ and hence, we can transform a DBTD to a nice DBTD in time $O(k^2n)$.
	
\end{proof}

\begin{figure}
	\centering
	\begin{subfigure}[t]{0.48\textwidth} 
		\begin{tikzpicture}[node distance=0.4cm,every node/.style={scale=0.475},x=1.28cm, y=1cm,font=\footnotesize]
			\tikzset{vertex/.style = {draw, circle, font=\fontsize{11}{12}}}
			\tikzset{doublevertex/.style = {draw, double, circle, font=\fontsize{11}{12}}}
			
			\node[vertex, label=left:$u$] (u) at (0,1) {$abcdg$};
			\node[vertex, label=left:$u_0$] (u0) at (-0.5,0) {$abe$};
			\node[vertex, label=left:$u_1$] (u1) at (0.5,0) {$cdf$};
			
			\draw[-{Stealth[length=1mm]}] (u) -- (u0);
			\draw[-{Stealth[length=1mm]}] (u) -- (u1);
		\end{tikzpicture}
		\caption{Node $u$ in DBTD.}
		\label{fig:illustrationToNiceDBTDBefore}
	\end{subfigure}
	\hfill
	\begin{subfigure}[t]{0.48\textwidth} 
		\begin{tikzpicture}[node distance=0.4cm,every node/.style={scale=0.475},x=1.28cm, y=1cm,font=\footnotesize]
			\tikzset{vertex/.style = {draw, circle, font=\fontsize{11}{12}}}
			\tikzset{doublevertex/.style = {draw, double, circle, font=\fontsize{11}{12}}}
			
			\node[vertex, label=left:$u$] (u) at (0,3) {$abcdg$};
			\node[vertex, label=left:$y$] (y) at (0,2) {$abcd$};
			
			\node[vertex, label=left:$y_0$] (y0) at (-0.5,1) {$ab$};
			\node[vertex, label=left:$y_1$] (y1) at (0.5,1) {$cd$};
			\node[vertex, label=left:$u_0$] (u0) at (-0.5,0) {$abe$};
			\node[vertex, label=left:$u_1$] (u1) at (0.5,0) {$cdf$};
			
			\draw[-{Stealth[length=1mm]}] (u) -- (y);
			\draw[-{Stealth[length=1mm]}] (y) -- (y0);
			\draw[-{Stealth[length=1mm]}] (y) -- (y1);
			\draw[-{Stealth[length=1mm]}] (y0) -- (u0);
			\draw[-{Stealth[length=1mm]}] (y1) -- (u1);
		\end{tikzpicture}
		\caption{After adding the disjoint join node.}
		\label{fig:illustrationToNiceDBTDAfter}
	\end{subfigure}
	\caption{Illustration for the process of transforming a DBTD to a nice DBTD.}
	\label{fig:illustrationToNiceDBTD}
\end{figure}

\subsection{Preprocessing Phase: Dynamic Programming over Nice-Disjoint-Branch TDs}
\label{sec:preprocessingDP}
We show how to compute $\M_r(\FA_r)$ for every $\FA_r \in \FAs{r}$ and every $r\in \nodes(\T)$ by dynamic programming over the nice disjoint branch TD $(\T,\jointreeMapFunction)$. 
	The preprocessing phase receives as input a nice disjoint-branch TD $(\T,\jointreeMapFunction)$, and a set of local constraints $\set{\kappa_u: u\in \nodes(\T)}$ defined in items~\eqref{item:localConstraints1}-\eqref{item:localConstraintsLast} in the proof of Lemma~\ref{lem:DBJTLemma} in Section~\ref{sec:buildDBJTAppendix}. Specifically, $\kappa_u: F^{\jointreeMapFunction(u)}\rightarrow \set{0,1}$ assigns zero to all assignments $\FA_u:\jointreeMapFunction(u) \rightarrow F$ that do not meet the constraints generated during the transformation to a nice disjoint-branch TD.

Let $A\subset \jointreeMapFunction(u)$, and $B=\jointreeMapFunction(u){\setminus}A$. Notation-wise, for an assignment $\FA_u\in \FAs{u}$, we sometimes write $\FA_u=(\FA_A,\FA_B)$ where $\FA_A$ and $\FA_B$ are the restriction of assignment $\FA_u$ to the disjoint vertex-sets $A$ and $B$ respectively, where $A{\cup}B=\jointreeMapFunction(u)$.

Let $u\in \nodes(\T)$, $\FA_u:\jointreeMapFunction(u)\rightarrow F$. We recall that
\begin{align}
	\label{eq:nodecategoriesAssignment}
	\nodes_a(\FA_u)\eqdef \set{w\in \jointreeMapFunction(u): \FA_u(w)=a} && \text{ for every }a\in F\\
	\nodes_\sigma(\FA_u)\eqdef \set{w\in \jointreeMapFunction(u): \FA_u(w)\in F_\sigma} \\
	\nodes_\omega(\FA_u)\eqdef \set{w\in \jointreeMapFunction(u): \FA_u(w)\in F_\omega}
\end{align}
Now, let $v\in \jointreeMapFunction(u)$. For every $a\in F$, we denote by $k_{v,a}(\FA_u)\eqdef |N(v)\cap \nodes_a(\FA_u)|$. Also, $k_{v,\sigma}(\FA_u)\eqdef |N(v)\cap \nodes_\sigma(\FA_u)|$, and $k_{v,\omega}(\FA_u)\eqdef |N(v)\cap \nodes_\omega(\FA_u)|$.

\noindent \textbf{Leaf Node}: $r$ is a leaf node. Since $\jointreeMapFunction(r)=\emptyset$, the only possible solution for this subtree is $\emptyset$, and hence $\M_r\equiv 1$.

\noindent \textbf{Introduce node}: Let $r \in \nodes(\T)$ be an introduce node with child node $u$, and let $\jointreeMapFunction(r)=\jointreeMapFunction(u) \cup \set{v}$, where $v\notin \jointreeMapFunction(u)$. Therefore, for $\FA_r\in \FAs{r}$, we write $\FA_r=(\FA_u,c_v)$, where $\FA_u\in \FAs{u}$ and $c_v \in F$. The label $\FA_r(w)$ of every $w\in \jointreeMapFunction(r)$ puts a constraint on the labels of $w$'s neighbors in $N(w)\cap (\nodes_r{\setminus}\jointreeMapFunction(r))$. However, since $\nodes_r=\nodes_u\cup \set{v}$, then for every $w\in \jointreeMapFunction(r)$, it holds:
\begin{align*}
	N(w)\cap (\nodes_r{\setminus}\jointreeMapFunction(r))&=N(w)\cap ((\nodes_u\cup \set{v}){\setminus} (\jointreeMapFunction(u)\cup\set{v}))\\
	&=N(w)\cap (\nodes_u{\setminus}\jointreeMapFunction(u)).
\end{align*}
Therefore, $\FA_r(w)=\FA_u(w)$ for every $w\in \jointreeMapFunction(u)$.  Since $v\notin \nodes_u$, then $N(v)\cap (\nodes_r{\setminus}\jointreeMapFunction(r))=N(v)\cap (\nodes_u{\setminus}\jointreeMapFunction(u))=\emptyset$. Therefore, if any of the following assertions fail for $\FA_r(v)$, then $\M_r(\FA_r)\gets 0$.
\begin{enumerate}
	\item \label{enum:assertionsIntroduce1} $\FA_r(v) \in \set{\sigma_I, [0]_\rho, [0]_\omega, [0]_\sigma}$.
	\item If $\FA_r(v)=\sigma_I$ then $k_{v,\sigma}(\FA_r)= k_{v,[1]_\omega}(\FA_r)=0$.
	\item  \label{enum:assertionsIntroduce3} If $\FA_r(v) = [0]_\sigma$, then $k_{v,\sigma_I}=k_{v,[1]_\omega}=0$.
\end{enumerate} 
\eat{
If, in addition, $r$ is the root node of $(\T,\jointreeMapFunction)$, then also the following must hold: 
\begin{enumerate}
	\item If $\FA_r(v)=[0]_\rho$, then $k_{v,\sigma}(\FA_r)\geq 2$.
	\item If $\FA_r(v)=[0]_\omega$, then $k_{v,[0]_\sigma}(\FA_r)=1$, and $k_{v,\sigma}(\FA_r)=1$.
	\item If $\FA_r(v)=[0]_\sigma$, then $k_{v,[0]_\omega}(\FA_r)\geq 1$.
\end{enumerate}
}
Then, we compute $\M_r(\FA_r)$ as follows:
\begin{align*}
	\M_r(\FA_r)=\M_r(\FA_u,c_v)=\begin{cases}
		\M_u(\FA_u)\wedge \kappa_r(\FA_r) & \substack{\FA_r(v)  \text{ meets conditions} \\ \text{(1)-(3) above}}\\
		0 & \text{otherwise}
	\end{cases}
\end{align*}

The conjunction with $\kappa_r(\FA_r)$ accounts for the local constraints derived from the transformation to a disjoint-branch TD.

\noindent \textbf{Disjoint Join Node}:
Let $r\in \nodes(\T)$ be a disjoint join node with child nodes $u_0$ and $u_1$, where $\jointreeMapFunction(r)=\jointreeMapFunction(u_0)\cup \jointreeMapFunction(u_1)$ and $\jointreeMapFunction(u_0)\cap \jointreeMapFunction(u_1)=\emptyset$. This means that  $\nodes_{u_i}\cap \jointreeMapFunction(u_{1-i})=\emptyset$ for $i\in \set{0,1}$.
Let $\FA_r=(\FA_{u_0},\FA_{u_1})$. Let $w\in \jointreeMapFunction(r)$, and suppose wlog that $w\in \jointreeMapFunction(u_0)$. The label $\FA_r(w)$ puts a constraint on the labels of $w$'s neighbors in $N(w)\cap (\nodes_r{\setminus}\jointreeMapFunction(r))$. Now, if wlog $w\in \jointreeMapFunction(u_0)$, then by definition of disjoint join node, $w\notin \jointreeMapFunction(u_1)$, and hence $N(w)\cap (\nodes_r{\setminus}\jointreeMapFunction(r))=N(w)\cap (\nodes_{u_0}{\setminus}\jointreeMapFunction(u_0))$.
Therefore, similar to the case of the introduce node, we have that $\FA_r[\jointreeMapFunction(u_i)]=\FA_{u_i}$ for $i\in \set{0,1}$. Therefore, to compute $\M_r(\FA_r)$, we simply take the conjunction of $\M_{u_0}(\FA_{u_0})\wedge \M_{u_1}(\FA_{u_1})$. To account for the local constraints $\kappa_u$ derived from the transformation to a disjoint-branch TD, we get that:
$$\M_r(\FA_r)=\M_r(\FA_{u_0},\FA_{u_1})=\M_{u_0}(\FA_{u_0})\wedge \M_{u_1}(\FA_{u_1}) \wedge \kappa_r(\FA_r).$$
\noindent \textbf{Forget Node}:
Let $r \in \nodes(\T)$ be a forget node with child node $u$ where $\jointreeMapFunction(u)=\jointreeMapFunction(r)\cup \set{v}$. \eat{\blue{Define $F^* = F_\sigma \cup F_\rho \cup \{ [1]_\omega, [0]_{\omega}^{0},  [0]_{\omega}^{1}\}$}}
We are going to compute $\M_r(\FA_r)$ as follows.
\begin{align}
	\label{eq:newForgetNode}
	\M_r(\FA_r) =\kappa_r(\FA_r)\wedge\left[\bigvee_{c_v\in F{\setminus}\set{[0]_\omega}}\M_u(\FA_{r,c_v},c_v) \bigvee \left(\M_u(\FA^0_{r,[0]_\omega},[0]_\omega)\vee\M_u(\FA^1_{r,[0]_\omega},[0]_\omega)\right)\right]
\end{align}
The conjunction with $\kappa_r(\FA_r)$ accounts for the local constraints derived from the transformation to a disjoint-branch TD.
\eat{ \blue{Note that for $i=0,1$ we consider $\M_u(\FA_{r,[0]_{\omega}^{i}},[0]_{\omega}^{i}) = \M_u(\FA_{r,[0]_{\omega}^{i}},[0]_{\omega})$. Furthermore, when we forget a vertex in $[0]_\omega$ the bag, $\jointreeMapFunction(u)$ ,must contain exactly one neighbor in $[1]_\sigma$. In this case, we check exactly two extendable assignments for $r$ that differ only in the label of that one vertex.}}

We define $\FA_{r,c_v}:\jointreeMapFunction(r)\rightarrow F$ for every $c_v\in F{\setminus}\set{[0]_\omega}$ as a function of $\FA_r\in F^{\jointreeMapFunction(r)}$. When $c_v=[0]_\omega$, we define $\FA^0_{r,[0]_\omega}:\jointreeMapFunction(r)\rightarrow F$, and $\FA^1_{r,[0]_\omega}:\jointreeMapFunction(r)\rightarrow F$ as a function of $\FA_r\in F^{\jointreeMapFunction(r)}$ as well.
Before defining the mappings $\FA_{r,c_v}:\jointreeMapFunction(r)\rightarrow F$ for every $c_v\in F$, we observe that $u$ is the first node (in depth first order) that contains the vertex $v$. Therefore, the possible labels for $v$ in every category (i.e., $F_\sigma$, $F_\omega$, and $F_\rho$) are determined by the assignment to its neighbors in $N(v)\cap \jointreeMapFunction(u)$. Therefore, before computing $\M_r(\FA_r)$ according to~\eqref{eq:newForgetNode}, we make the following assertions to $\FA_u$. In case any one of the assertions do not hold, we set $\M_u(\FA_u)\gets 0$.
\begin{enumerate}
	\item If $\FA_u(v)=\sigma_I$ then $k_{v,\omega}(\FA_u)=k_{v,\sigma}(\FA_u)=0$.
	\item If $\FA_u(v)=[j]_\rho$ then $j + k_{v,\sigma}(\FA_u) \geq 2$.
	\item If $\FA_u(v)=[1]_\omega$ then $k_{v,\sigma}(\FA_u)=0$.
	\item If $\FA_u(v)=[0]_\omega$ then $k_{v,\sigma_I}(\FA_u)=0$ and $k_{v,\sigma}(\FA_u)=1$.
	\item If $\FA_u(v)=[1]_\sigma$ then $k_{v,\sigma_I}(\FA_u)=k_{v,[1]_{\omega}}(\FA_u)=0$.
	\item If $\FA_u(v)=[0]_\sigma$ then $k_{v,\sigma_I}(\FA_u)=k_{v,[1]_\omega}(\FA_u)=0$, and $k_{v,[0]_\omega}(\FA_u)\geq 1$.
\end{enumerate}
We now define the mappings $\FA_{r,c_v}:\jointreeMapFunction(r)\rightarrow F$ for every $c_v\in F$, as a function of $\FA_r\in F^{\jointreeMapFunction(r)}$.
\begin{align}
	\FA_{r,\sigma_I}(w)\eqdef \begin{cases}
		\FA_r(w) & w\notin N(v) \\
		[\max\set{0,(j-1)}]_\rho & \FA_r(w)=[j]_\rho \\
		\bot & \text{ otherwise }
	\end{cases}
\end{align}

\begin{align}
	\FA_{r,[0]_\sigma}(w)\eqdef \begin{cases}
		\FA_r(w) & \substack{w\notin N(v)\text{ or }\\\FA_r(w)\in \set{[0]_\sigma,[1]_\sigma}} \\
		[\max\set{0,(j-1)}]_\rho & \FA_r(w)=[j]_\rho \\
	[0]_{\omega}&  \FA_r(w)=[1]_\omega\\
		\bot & \text{ otherwise }
	\end{cases}
\end{align}

\begin{align}
	\FA_{r,[1]_\sigma}(w)\eqdef \begin{cases}
		\FA_r(w) & \substack{w\notin N(v)\text{ or }\\\FA_r(w)\in \set{[0]_\sigma,[1]_\sigma}} \\
		[\max\set{0,(j-1)}]_\rho & \FA_r(w)=[j]_\rho \\
		[0]_\omega & \FA_r(w)=[1]_\omega \\
	\eat{	\textcolor{blue}{[1]_\omega} & \textcolor{blue}{\FA_r(w)=[0]_\omega} \\}
		\bot & \text{ otherwise }
	\end{cases}
\end{align}

\begin{align}
	\FA_{r,[1]_\omega}(w)\eqdef \begin{cases}
		\FA_r(w) & w\notin N(v)\text{ or }\FA_r(w)\notin F_\sigma \\
		\bot & \text{ otherwise }
	\end{cases}
\end{align}

\begin{align}
	\FA_{r,[0]_{\omega}}^0(w)\eqdef \begin{cases}
		\FA_r(w) & w\notin N(v)\text{ or }\FA_r(w)\notin F_\sigma \\
		\bot & \FA_r(w)=[0]_\sigma \\
		 [0]_\sigma		& \FA_r(w)=[1]_\sigma \\
		\bot & \text{ otherwise }
	\end{cases}
\end{align}
\begin{align}
	\FA_{r,[0]_{\omega}}^1(w)\eqdef \begin{cases}
		\FA_r(w) & w\notin N(v)\text{ or }\FA_r(w)\notin F_\sigma \\
		\bot & \FA_r(w)=[0]_\sigma \\
		[1]_\sigma	& \FA_r(w)=[1]_\sigma \\
		\bot & \text{ otherwise }
	\end{cases}
\end{align}

\eat{
\noindent \textbf{Root Node}: The root node $r$ is a forget node where $\jointreeMapFunction(r)=\emptyset$, its only child $u$ contains a single vertex $\jointreeMapFunction(u)=\set{v}$, and $G_r=G_u=G$. Hence, $\M_r$ is computed as in~\eqref{eq:newForgetNode}.
}
\begin{replemma}{\ref{lem:DPCorrectnessLemma}}
	\DPCorrectnessLemma
\end{replemma}
\begin{proof}
	We prove by induction on the height of the nice TD. If the height of the tree is one, then the root node $r$ has a single child that is a leaf node. By Definition of nice DBTD, this means that $r$ is an introduce node, and hence $\jointreeMapFunction(r)=\set{v}$ where $v\in \nodes(G)$. By Definition~\ref{def:induceLabelssubtree}, it holds that $\M_r(\FA_r)=\M_r(c_v)=\kappa_r(c_v)$ if $c_v \in \set{\sigma_I,[0]_\rho,[0]_\omega,[0]_\sigma}$, and $0$ otherwise. 
	
	We now assume that the claim holds for all nice DBTDs whose height is at most $h \geq 1$, and prove for the case where the height of the nice DBTD is $h+1$. So, let $r$ be the root of the nice DBTD whose height is $h+1$. Since the height of each one of $r$'s children is at most $h$, then by the induction hypothesis, for every child node $u$ of $r$, and every labeling $\FA_u:\jointreeMapFunction(u) \rightarrow F$, it holds that  $\M_u(\FA_u)=1$ if and only if $\kappa_u(\FA_u)=1$ and there exists a subset $D_u \subseteq V_u$ that is consistent with $\FA_u$ according to Definition~\ref{def:induceLabelssubtree}.

	If $r$ is an introduce node then $\jointreeMapFunction(r)=\jointreeMapFunction(u)\cup \set{v}$ where $u$ is the single child of $r$. 
	Therefore, we express $\FA_r=(\FA_u,c_v)$ where $c_v\in F$ is the assignment to vertex $v$ in $\FA_r$. By Definition~\ref{def:induceLabelssubtree}, vertex $v$ can only receive a label in $\set{\sigma_I,[0]_\sigma,[0]_\omega,[0]_\rho}$. If $\FA_r(v)=\sigma_I$, then by Definition~\ref{def:induceLabelssubtree} (item 5), it must hold that $k_{v,\sigma}=k_{v,[1]_\omega}=0$. If $\FA_r(v)\in \set{[0]_\sigma, [0]_\omega}$, then by Definition~\ref{def:induceLabelssubtree} (item 5) $v$ cannot have any neighbors assigned the label $\sigma_I$, and hence $k_{v,\sigma_I}=0$. If both of these conditions hold, then by the induction hypothesis, we have that $\M_r(\FA_r)=\M_u(\FA_u)\wedge \kappa_r(\FA_r)$.
	
	If $r$ is a disjoint join node, then by the induction hypothesis, and the fact that the label $\FA_r(w)$ places a constraint on the labels of $w$'s neighbors in $N(w)\cap (\nodes_r{\setminus}\jointreeMapFunction(r))$, then we have that $\M_r(\FA_r)=\M_r(\FA_{u_0},\FA_{u_1})=\M_{u_0}(\FA_{u_0})\wedge \M_{u_1}(\FA_{u_1})\wedge \kappa_r(\FA_r)$ where $u_0$ and $u_1$ are the children of $r$ in $\T$.
	
	The case when $r$ is a forget node follows the same reasoning, except we need to account for all possible labels that can be assigned to vertex $v\in \jointreeMapFunction(u){\setminus}\jointreeMapFunction(r)$. 
	
	For every $v\in \nodes(G)$, let $F_v\subseteq F$ denote the domain of $v$ (i.e., the set of labels it can receive). The processing of a single node takes time $O(ws^w)$, where $s=\max_{v\in \nodes(G)}|F_v|$, and $w$ is the width of the nice DBTD. Therefore, the total runtime is $O(nws^w)$
\end{proof}

\section{Missing Details from Section~\ref{sec:enuemration}}
\label{sec:AppendixEnumeration}
\begin{replemma}{\ref{lem:BisInjective}}
	\lemBisInjective
\end{replemma}
\begin{proof}
	Since $B(v)$ is the node, closest to the root, that contains $v$, then by definition $v\notin \parent(B(v))$. Since $(\T,\jointreeMapFunction)$ is a nice TD, the node $\parent(B(v))$ can be neither a join node, nor an introduce node because in both of these cases $\jointreeMapFunction(\parent(B(v)))\supseteq \jointreeMapFunction(B(v))$. Consequently, $\parent(B(v))$ must be  a forget node, and hence $\jointreeMapFunction(B(v)){\setminus}\set{v}= \jointreeMapFunction(\parent(B(v)))$. This means that for every $w\in B(v){\setminus}\set{v}$, it holds that $B(w)\leq \parent(B(v))<B(v)$, and thus the only vertex that is mapped to $B(v)$ is $v$.
\end{proof}
\eat{
\begin{repproposition}{\ref{prop:propertyDBDT}}
	\propertyDBDT
\end{repproposition}
}
\begin{proposition}
	\label{prop:propertyDBDT}
	\propertyDBDT
\end{proposition}
\begin{proof}
	Suppose that neither of the conditions hold. Let $u$ be the least common ancestor of $u_1$ and $u_2$ in $\T_r$. By the assumption, $u\notin\set{u_1,u_2}$, and contains two distinct child nodes $u_1'$ and $u_2'$ such that $u_1\in \nodes(\T_{u_1'})$ and $u_2\in \nodes(\T_{u_2'})$. By the running intersection property, $v\in \jointreeMapFunction(u)\cap \jointreeMapFunction(u_1')\cap \jointreeMapFunction(u_2')$, contradicting the fact that $(\T_r,\jointreeMapFunction)$ is disjoint branch.
\end{proof}

\begin{replemma}{\ref{lem:useDBTD}}
	\lemuseDBTD
\end{replemma}
\begin{proof}	
	Let $v_k\in N(v_j)\cap \set{v_{i+1},\dots,v_n}$. By Lemma~\ref{lem:BisInjective}, we have that $v_j \prec_Q v_i$ for every $v_j \in B(v_i){\setminus}\set{v_i}$, and hence $B(v_i) \subseteq \set{v_1,\dots,v_i}$. Therefore, $v_k \notin B(v_i)$,
	and there exists a node $u\in \nodes(\T)$, distinct from $B(v_i)$, such that $v_j,v_k\in \jointreeMapFunction(u)$. 
	Since $(\T,\jointreeMapFunction)$ is a DBTD where $v_j\in \jointreeMapFunction(u)\cap \jointreeMapFunction(B(v_i))$, then by Proposition~\ref{prop:propertyDBDT} either $u\in \nodes(\T_{B(v_i)})$ or $B(v_i)\in \T_{u}$. Since $k\geq i+1$ and $v_k\in \jointreeMapFunction(u)$, the latter is impossible. Therefore, $u\in \nodes(\T_{B(v_i)})$, and hence $v_k \in \jointreeMapFunction(u)\subseteq \nodes_{B(v_i)}{\setminus}B(v_i)$.
	
	Let $v_k\in \nodes_{B(v_i)}{\setminus}B(v_i)$. We claim that $k \geq i+1$. If not, then since $v_k\notin B(v_i)$, then $k\neq i$, and hence $v_k \in V_{i-1}$. By the running intersection property, if $v_k \in V_{i-1}\cap \nodes_{B(v_i)}$, then $v_k\in B(v_i)$, which is a contradiction. Therefore, $v_k\notin V_i$, and $v_k\in \set{v_{i+1},\dots,v_n}$.
\end{proof}

\begin{reptheorem}{\ref{thm:mainCorrectness}}
	\thmCorrectness
\end{reptheorem}
\begin{proof}
	We first show that \algname{IsExtendable} runs in time $O(w)$.
	The construction of $\FA_i$ in line~\ref{line:buildProjections} takes time $O(|B(v_i)|)=O(w)$ because, by definition, $|B(v_i)|\leq w+1$. Since $\M_{B(v_i)}$ is represented as a trie, then by the properties of this data structure described in Section~\ref{sec:compactRepresentation}, querying for $\M_{B(v_i)}(\FA_i)$ in line~\ref{line:query} takes time $O(|B(v_i)|)=O(w)$.
	
	Suppose that $\theta^i:V_i \rightarrow F$ is extendable according to Definition~\ref{def:consistentExtendable}.\eat{ In particular, this means that $\theta^i(v_i)\neq \bot$, and hence \algname{IsExtendable} will not return with \texttt{false} in line~\ref{line:returnFalse}.
} From Lemma~\ref{lem:BisInjective}, we have that $B(w)<B(v_i)$, and by definition, $w\prec_Q v_i$ for every $w\in B(v_i)$. Therefore, $B(v_i)\subseteq V_i$. In particular, this means that $\theta^i[B(v_i)]$ assigns a label in $F$ to every vertex in $B(v_i)$. Hence, $\theta^i[B(v_i)]: B(v_i) \rightarrow F$. 
	Since $\theta^i$ is extendable, there exists a minimal dominating set $D\in \sch(G)$ that is consistent with $\theta^i$, and induces the appropriate labels to vertices $V_i$ according to Definition~\ref{def:induceLabels}. We show that $\M_{B(v_i)}(\theta^i[B(v_i)])=1$, thereby proving the claim.	
	To that end, we show that $D_i\eqdef D\cap \nodes_{B(v_i)}$ is consistent with $\theta^i[B(v_i)]$ according to Definition~\ref{def:induceLabelssubtree}.
	By Definition~\ref{def:induceLabelssubtree}, $D_i$ is consistent with $\theta^i[B(v_i)]$ if, for every $w\in B(v_i)$, the label induced by $D_i$ on $w$ is $\theta^i[B(v_i)](w)=\theta^i(w)$. The label that $D_i$ induces on $w$ is determined by $N(w) \cap (D_i{\setminus}B(v_i))$. 
	From Lemma~\ref{lem:useDBTD}, we have that $N(w)\cap (\nodes_{B(v_i)}{\setminus}B(v_i))=N(w)\cap \set{v_{i+1},\dots,v_n}=N(w)\cap (\nodes(G){\setminus}V_i)$. In particular, $N(w) \cap (D_i{\setminus}B(v_i))=N(w)\cap (D_i{\setminus}V_i)$. Consequently, if $D$ is consistent with $\theta^i$, then  $D_i$ is consistent with $\theta^i[B(v_i)](w)$ for all $w\in B(v_i)$. Now, if $w\in \nodes_{B(v_i)}{\setminus}B(v_i)$, then by the running intersection property, $N(w)\subseteq \nodes_{B(v_i)}$. Therefore $N(w)\cap D=N(w)\cap (D\cap \nodes_{B(v_i)})=N(w)\cap D_i$. Overall, we get that if $D$ is consistent with $\theta^i$, then $D_i$ is consistent with $\theta^i[B(v_i)]$ according to Definition~\ref{def:induceLabelssubtree}.
	
	For the other direction, suppose that \algname{IsExtendable} returns \texttt{true}. We show that there is a minimal dominating set $D\in \sch(G)$ such that $D$ is consistent with $\theta^i$ according to Definition~\ref{def:induceLabels}. We first note that, by our assumption, $\theta^i$ is the result of calling \algname{IncrementLabeling} on the pair $(\theta^{i-1},c_i)$, where $\theta^{i-1}$ is extendable and $c_i \in F$. Hence, there exists a set $S\in \sch(G)$, such that $S$ is consistent with $\theta^{i-1}$. We let $S'\eqdef S\cap (\nodes(G){\setminus}\nodes_{B(v_i)})$. Since \algname{IsExtendable} returns \texttt{true}, then there exists a subset $D_i\subseteq \nodes_{B(v_i)}$ such that $D_i$ is consistent with $\theta^i[B(v_i)]$. We claim that $D\eqdef S'\cup D_i$ is a minimal dominating set that is consistent with $\theta^i$.
	To prove the claim we show that $D$ meets the conditions of Proposition~\ref{prop:DSMin}, and is consistent with $\theta^i$. To that end, we show that $D$ is consistent with $\theta^i(w)$, for every $w\in \nodes(G)$. 
	
	Let us start with a vertex $w\notin \nodes_{B(v_i)}$. Therefore, $w\in D$ if and only if $w\in S' \subseteq S$. By definition of $B(v_i)$, it holds that $N(v_i)\subseteq \nodes_{B(v_i)}$, and hence $v_i\notin N(w)$. Since only vertices in $N(v_i) \cap V_i$ are affected by the assignment of $v_i \gets c_i$ in \algname{IncrementLabeling}, then $\theta^{i-1}(w)=\theta^i(w)$. Since $S$ is consistent with $\theta^{i-1}(w)$, and since $D\cap N(w)=S\cap N(w)$, then $D$ is consistent with $\theta^i(w)$.
	
	Now, let $w\in \nodes_{B(v_i)}{\setminus}B(v_i)$. By the running intersection property, $N(w)\subseteq \nodes_{B(v_i)}$, and hence $N(w)\cap S'=\emptyset$.
	We claim that $w\notin V_i$. Otherwise, by our assumption that $w \notin B(v_i)$, then $w\neq v_i$, and hence $w\in V_{i-1}$. But then, since $w\in \nodes_{B(v_i)}$, then by the running intersection property it must hold that $w\in B(v_i)$, which is a contradiction. Therefore, $w\in \set{v_{i+1},\dots,v_n}$ and $N(w)\subseteq \nodes_{B(v_i)}$.
	Labels to the vertex-set $\set{v_{i+1},\dots,v_n}$ are not specified by $\theta^i$. Therefore, the set $D$ vacuously induces the appropriate label on $w$. Since $D_i$ is consistent with $\theta^i[B(v_i)]$, then $w$ is either dominated by $D_i\subseteq D$, has a private neighbor in $\nodes_{B(v_i)}{\setminus}D_i=\nodes_{B(v_i)}{\setminus}D$, or $N(w)\cap D_i=N(w)\cap D=\emptyset$. 
	
	Finally, assume that $w\in B(v_i)$. From Lemma~\ref{lem:useDBTD}, we have that $N(w)\cap \set{v_{i+1},\dots,v_n}= N(w)\cap (\nodes_{B(v_i)}{\setminus}B(v_i))$. By our assumption, $D_i$ is consistent with $\theta^i(w)$. Since $N(w)\cap \set{v_{i+1},\dots,v_n} \cap D = N(w)\cap D \cap (\nodes_{B(v_i)}{\setminus}B(v_i))$, then $N(w)\cap \set{v_{i+1},\dots,v_n} \cap D=N(w)\cap (D_i {\setminus}B(v_i))$.
	Therefore, if $D_i$ is consistent with $\theta^i(w)$, then $D$ is consistent with $\theta^i(w)$. Overall, we showed that $D$ is consistent with $\theta^i$. 
\end{proof}
\section{Comparison With \e{Courcelle}-based Approaches}
\label{sec:courcelle}
The seminal meta-theorem by Courcelle~\cite{DBLP:books/el/leeuwen90/Courcelle90,DBLP:journals/jal/ArnborgLS91} states that every decision and optimization graph problem definable in Monadic Second Order Logic (MSO) is FPT
(and even linear in the input size) when parameterized by the treewidth of the input
structure. MSO is the extension of First Order logic that allows quantification (i.e., $\forall,\exists$) over sets. In our setting, we require only existential quantification over vertex-sets.
 Bagan~\cite{DBLP:conf/csl/Bagan06}, Courcelle~\cite{DBLP:journals/dam/Courcelle09}, and Amarilli et al.~\cite{DBLP:conf/icalp/AmarilliBJM17} extended this result to enumeration problems, and showed that the delay between successive solutions to problems defined in MSO logic is fixed-parameter linear, when parameterized by the treewidth of the input structure.

In MSO logic, we can express the statement that $X \subseteq \nodes(G)$ is a minimal dominating set in an undirected graph $G$ with edge relation $E \subseteq V\times V$ as follows:
\begin{align}
	D(X)\eqdef  ~& \forall v. \left(v\in X \vee \exists y. \left(y\in X \wedge E(v,y)\right)\right) \bigwedge \label{eq:MSODS} \\
	& \forall x.\left( \forall y. \left( \left. x\in X \wedge y\in X \implies \neg E(x,y)\right.  \right) \vee  \right. \label{eq:MSOSigmaI} \\
	&~~~~~~~~~\left. \exists y. \left(\neg (y\in X) \wedge E(x,y) \wedge \forall z. \left(z\in X \wedge E(z,y) \implies z=x\right)\right)\right) \label{eq:MSOSigmaOne}
\end{align}
Eq.~\eqref{eq:MSODS} represents the constraint that $X$ is a dominating set of $G$. That is, every vertex $v\in \nodes(G)$ either belongs to $X$ or has a neighbor in $X$. 
By Proposition~\ref{prop:DSMin}, if $X$ is a minimal dominating set, then for every $x\in X$ at least one of the following holds: (1) $x$ has no neighbors in $X$, expressed in~\eqref{eq:MSOSigmaI} (note that we assume that $G$ does not have self-loops), or (2) there is a vertex $y\notin X$ such that $N_G(y)\cap X=\set{x}$; this constraint is represented in~\eqref{eq:MSOSigmaOne}. 

Courcelle-based algorithms for model-checking, counting, and enumerating the assignments satisfying an MSO formula $\varphi$ have the following form. Given a TD $(\T,\jointreeMapFunction)$ of a graph $G$, it is first converted to a labeled binary tree $T'$ over a fixed alphabet that depends on the width of the TD. 
The binary tree $T'$ has the property that $G\models \varphi$ if and only if $T' \models \varphi'$, where $\varphi'$ is derived from $\varphi$ (e.g., by the algorithm in~\cite{DBLP:journals/jacm/FlumFG02}). Then, the standard MSO-To-FTA conversion~\cite{DBLP:journals/mst/ThatcherW68,DBLP:journals/jcss/Doner70} is applied to $\varphi'$ to construct a deterministic FTA that accepts the binary tree $T'$ if and only if $T'\models \varphi'$. Since simulating the deterministic FTA on $T'$ takes linear time, this proves that Courcelle-based algorithms run in time $O(n\cdot f(||\varphi||, w))$, where $n=|\nodes(G)|$, $w$ is the width of the TD $(\T,\jointreeMapFunction)$, and $||\varphi||$ is the length of $\varphi$.

The main problem with Courcelle-based algorithms is that the function $f(||\varphi||, w)$ is an iterated exponential of height $O(||\varphi||)$. Specifically, the problem stems from the fact that every quantifier alternation in the MSO formula (i.e., $\exists \forall$ and $\forall \exists$) induces a power-set construction during the generation of the FTA. Therefore, an MSO formula with $k$ quantifier alternations, will result in a deterministic FTA whose size is a $k$-times iterated exponential. Previous work has shown that this blowup is unavoidable~\cite{DBLP:conf/dagstuhl/Reinhardt01} even for model-checking over trees. According to Gottlob, Pichler and Wei~\cite{DBLP:journals/ai/GottlobPW10}, the problem is even more severe on bounded treewidth structures because the original (possibly simple) MSO-formula $\varphi$ is first transformed
into an equivalent MSO-formula $\varphi'$ over trees, which is 
more complex than the original formula $\varphi$, and in general will have additional quantifier alternations.

The ``private neighbor'' condition (Definition~\ref{def:pn}) establishes that if $X\subseteq \nodes(G)$ is a minimal dominating set, then for every $x\in X$ where $x$ has a neighbor in $X$, it holds that there exists a $y\in \nodes(G){\setminus}X$, such that $x$ is $y$'s unique neighbor in $X$. In the language of MSO (see~\eqref{eq:MSOSigmaOne}), this is represented as:
\begin{equation}
	\label{eq:MSO_PN}
	\forall x. \left( \exists y. \left( \forall z. \left(z\in X \wedge E(z,y) \implies z=x \right) \right) \right)
\end{equation}
Observe that the MSO representation of the private neighbor condition in~\eqref{eq:MSO_PN} contains 2 quantifier alternations. In the automata theoretic framework, every quantifier alternation translates to an application of the powerset construction
for the tree automata. 
Consequently, the size of a tree automata for~\eqref{eq:MSO_PN} will be at least doubly exponential (for model-checking over trees, and probably worse for bounded-treewidth graphs~\cite{DBLP:journals/ai/GottlobPW10}). For illustration, even for a modest treewidth of $w=5$, the difference between $2^{2^5}$ and $5^{2w}$ is more than three orders of magnitude, and for $w=6$ this jumps to $11$ orders of magnitude\footnote{$2^{2^5}>4\cdot 10^9$, while $5^{10}< 10^7$; $2^{2^6}>4\cdot 10^{19}$, while $5^{12}< 3\cdot 10^8$}. 
One may argue that these blowups do not occur in practice, and that simple MSO formulas encountered in practice do lend themselves to efficient algorithms. However, Langer et al.~\cite{DBLP:journals/csr/LangerRRS14} and Langer and Kneis~\cite{DBLP:journals/entcs/KneisL09} report, based on the results of Weyer~\cite{WeyerThesis} who studied the growth function of the FTA as a function of the number of nested quantifier alternations of the input MSO-formula, that the exponential blowup is materialized even for simple MSO formula such as~\eqref{eq:MSOSigmaOne}. In particular, the algorithm of Amarilli et al.~\cite{DBLP:conf/icalp/AmarilliBJM17} for enumerating the satisfying assignments to an MSO formula by constructing a d-DNNF circuit, also proceeds by translating the MSO formula $\phi$ to a deterministic tree automaton $A$ using the result of Thatcher and Wright~\cite{DBLP:journals/mst/ThatcherW68}. The size of the resulting d-DNNF circuit is $O^*(|A|)$ (cf. Theorem 7.3 in~\cite{DBLP:conf/icalp/AmarilliBJM17}). Due to the lower bounds previously discussed, the size of the FTA $A$ is an iterated exponential. Thus, the $d$-DNNF circuit constructed will have prohibitive size. It is not hard to see that using the techniques of~\cite{DBLP:conf/ecai/PipatsrisawatD10}, the disjoint-branch structure generated in the prepossessing phase of our algorithm can be converted to a $d$-DNNF whose size has a single-exponential dependence on the treewidth.

One may hope to circumvent the price incurred by Courcelle-based algorithms by avoiding the MSO-to-FTA conversion. However, Frick and Grohe~\cite{FRICK20043} prove that any general model checking algorithm for MSO has a non-elementary lower bound regardless of whether the MSO-to-FTA transition is employed.
\begin{citedtheorem}{\cite{FRICK20043}}
	\label{thm:FrickAndGrohe}
	Assume that $P\neq NP$. There is no algorithm that, given an MSO sentence $\varphi$ and a tree $T$ decides whether $T\models \varphi$ in time $f(||\varphi||)n^{O(1)}$, where $f$ is an elementary function and $n=|\nodes(T)|$.
\end{citedtheorem}
The result by Frick and Grohe~\cite{FRICK20043} proves that no efficient runtime bounds can be proved for Courcelle's Theorem, and no general algorithm can perform better than the FTA based approach~\cite{DBLP:journals/csr/LangerRRS14}.
Consequently, most known algorithms with practical runtimes directly encode the dynamic programming strategy to solve the problem on the tree decomposition. This includes, for example, the manual creation of \e{monadic Datalog Programs} derived from an MSO-formula~\cite{DBLP:journals/ai/GottlobPW10}, and other algorithms for various NP-Hard problems~\cite{DBLP:journals/siamcomp/LokshtanovMS18}.

The expressive power of MSO logic combined with the versatility of Courcelle’s Theorem has motivated the development of general algorithms for optimization and enumeration~\cite{DBLP:conf/csl/Bagan06,DBLP:conf/icalp/AmarilliBJM17}, as well as implementations of model-checking algorithms based on Courcelle's Theorem~\cite{DURAND200529,DBLP:conf/wia/KlarlundMS00,DBLP:journals/ai/GottlobPW10}. However, the non-elementary dependence on treewidth, and the complexities involved in automata constructions and translations contribute to the difficulty in analyzing and implementing efficient algorithms based on Courcelle's theorem. In contrast, specialized algorithms that operate directly on the tree decomposition have not only lead to breakthroughs for important problems~\cite{DBLP:journals/siamcomp/LokshtanovMS18,DBLP:conf/focs/CyganNPPRW11}, but are also transparent in the sense that they are amenable to exact analysis, and the establishment of optimality guarantees for important problems~\cite{DBLP:journals/siamcomp/LokshtanovMS18,DBLP:conf/soda/BasteST20,DBLP:journals/tcs/DubloisLP22}. For this reason, several textbooks on parameterized algorithms have clearly stated that approaches based on Courcelle’s Theorem are restricted to classifying \e{which} problems are FPT, and cannot be used for designing efficient algorithms~\cite{ParameterizedAlgs,Grohe99,DBLP:books/ox/Niedermeier06}. Importantly, optimizing the dependence on treewidth (i.e., obtaining a small-growing function $f(w)$) is an active area of research in parameterized complexity~\cite{DBLP:conf/icalp/BodlaenderCKN13,DBLP:conf/soda/BasteST20,DBLP:journals/tcs/DubloisLP22,DBLP:conf/focs/CyganNPPRW11} (see also Section~\ref{sec:lowerBounds} of the Appendix). 
\eat{

To summarize, the impracticality of algorithms based on Courcelle's meta-theorem have been discussed in textbooks in parameterized algorithms~\cite{ParameterizedAlgs,Grohe99}, stating that approaches based on Courcelle’s Theorem are restricted to classifying \e{which} problems are FPT, and cannot be used for designing efficient algorithms. Similar statements
can be found in the text-book by Niedermeier~\cite{DBLP:books/ox/Niedermeier06}, and in other papers~\cite{DBLP:journals/tocl/GottlobPW10,DBLP:journals/csr/LangerRRS14,Pichler2010}. 
Consequently, general algorithms that rely on the rewriting of the MSO formula $\varphi$ to a deterministic tree automaton~\cite{DBLP:conf/csl/Bagan06,DBLP:conf/icalp/AmarilliBJM17,DBLP:journals/dam/Courcelle09} suffer from the exponential blowup previously described, and cannot be expected to handle even simple MSO formulae~\cite{DBLP:journals/csr/LangerRRS14}.
Importantly, optimizing the dependence on treewidth (i.e., obtaining a small-growing function $f(w)$) is an active area of research in parameterized complexity~\cite{DBLP:conf/icalp/BodlaenderCKN13,DBLP:conf/soda/BasteST20,DBLP:journals/tcs/DubloisLP22,DBLP:conf/focs/CyganNPPRW11} (see also Section~\ref{sec:lowerBounds} of the Appendix). 
}
\section{Lower Bounds and More Related Work}
\label{sec:lowerBounds}
In previous work, Dublois et al.~\cite{DBLP:journals/tcs/DubloisLP22} presented a dynamic-programming algorithm that finds the minimal dominating set of maximum size that runs in time $O^*(6^{pw})$ ($O^*$ hides poly-logarithmic factors), when parameterized by the \e{pathwidth} of the input graph; they call this problem \textsc{upper dominating set}. They also showed that under the Strong Exponential Time Hypothesis (SETH), the base of the exponent (i.e., $6$) cannot be improved. 

Every path-decomposition is, by definition, also a Disjoint-Branch TD. It is well-known that if the treewidth of a graph $G$ is $w$, then the pathwidth of $G$ is in $O(w \log |\nodes(G)|)$, and that this bound is tight~\cite{BODLAENDER19981}. We show in Section~\ref{sec:convertToDBJT} and Section~\ref{sec:AppendixProofsPreproc}, that with the introduction of the appropriate constraints, the width of the disjoint branch tree decomposition needs to grow only by a factor of $2$.

Our interest in the current paper is that of enumeration of minimal dominating sets. The delay of our algorithm is fixed-parameter-linear $O(nw)$ where $w$ is the treewidth of the input graph, following a preprocessing phase that takes time $O^*(8^w)$. Central to our approach, is the idea that the algorithm enumerates label-assignments to the vertices. To ensure that every solution is printed exactly once by the algorithm without repetitions, we need to ensure that there is a one-to-one bijection between the set of assignments produced by the enumeration algorithm, which is derived from the encoding presented in Section~\ref{sec:overview}, and the set of minimal dominating sets $\mathcal{S}(G)$; the bijection was established in Lemma~\ref{lem:bijection}. 

The encoding presented in the algorithm of Dublois et al.~\cite{DBLP:journals/tcs/DubloisLP22} does not meet this criteria. While this does not pose a problem for returning the (single) largest minimal dominating set, it does pose a problem for enumeration.
In particular, consider the graph $b-a-c$. In the encoding of Dublois et al.~\cite{DBLP:journals/tcs/DubloisLP22}, the minimal dominating set $\set{a}$ can be represented by two distinct mappings: in one, $b$ is the ``private neighbor'' of $a$ (i.e., assigned label ``P'') and $c$ is not (i.e., assigned label ``O''), and in the other assignment the roles of vertices $b$ and $c$ are reversed. In our setting, the minimal dominating set $\set{a}$ induces the unique assignment where both $b$ and $c$ are assigned the label $[0]_\omega$. The question of whether the base of the exponent of the preprocessing algorithm of Section~\ref{sec:PreprocessingForEnumeration}, and Section~\ref{sec:AppendixProofsPreproc} in the Appendix, can be made lower (i.e., than $8$) remains open at this point.

\end{document}